\newtheorem{theorem}{Theorem}[subsection]
\newtheorem{proposition}[theorem]{Proposition}
\newtheorem{lemma}[theorem]{Lemma}
\newtheorem{corollary}[theorem]{Corollary}
\theoremstyle{remark}
\newtheorem{remark}[theorem]{Remark}
\theoremstyle{definition}
\newtheorem{definition}[theorem]{Definition}
\newcommand{\spleen}{\stackrel{*}{\frown}}
\newcommand{\mscomm}[1]{}
\newcommand{\varedit}[3]{#2}
\newcommand{\red}[1]{#1}
\newcommand{\edit}[2]{#2}
\newcommand{\remove}[2]{}
\newcommand{\move}[2]{#2}
\newcommand{\fix}[1]{#1}
\newcommand{\gram}[1]{#1}
\newcommand{\clarity}[1]{#1}
\long\def\comment#1\endcomment{}
\long\def\arxiv#1\endarxiv{#1}
\newcommand{\ifarxiv}[2]{#1}
\begin{document}


\title{Discrete field theory: symmetries and conservation laws}

\author{M. Skopenkov}
\date{}

\maketitle

\begin{abstract}
We present a general algorithm constructing a discretization of a classical field theory from a Lagrangian. We prove a new discrete Noether theorem relating symmetries to conservation laws and an energy conservation theorem not based on any symmetry. This gives exact conservation laws for several \varedit{R12P3}{theories, e.g., lattice}{-0.4} electrodynamics and gauge theory.\remove{R12P3}{} In particular, we construct a conserved discrete energy-momentum tensor, approximating the continuum one at least for free fields. The theory is stated in topological terms, such as coboundary and products of cochains.
\smallskip

\noindent{\bf Keywords}:\,discrete\,field\,theory,\,discrete\,differential\, geometry,\,conservation\,law,\,Noether's\,theorem


\noindent{\bf 2010 MSC}: 49M25, 49S05, 55N45, 81T25 
\arxiv \vspace{-1cm} \endarxiv
\end{abstract}

\footnotetext[0]{The publication was prepared within the framework of the Academic Fund Program at the National Research University Higher School of Economics (HSE) in 2018-2019 (grant N18-01-0023) and by the Russian Academic Excellence Project ``5-100''. The author has also received support from the Simons--IUM fellowship.
}

{
\tableofcontents
}

\section{Introduction}\label{s:intro}


This work is a try to build a general \emph{discrete field theory}. This has the following motivation:
\begin{itemize}
\item
getting effective numeric algorithms for field theory;
\item
putting field theory to a mathematically rigorous basis;
\item
creating an alternative candidate for a fundamental field theory.
\end{itemize}

Numerous discretizations of particular field theories are known \cite{Arnold-etal-10,Courant-friedrichs-Lewy-28,Creuz-70,
Dimakis-etal-94,Desbrun-etal-08,Gawlik-Mullen-Pavlov-Marsden-Desbrun-10,
Gross-Kotiuga-04,Kron-44}. Our aim is \emph{not} to invent new discretizations but to extract and study the best among the known ones. Discretizations exhibiting exact (not just approximate) conservation laws have been proved to be most successful for computational purposes \cite{Gawlik-Mullen-Pavlov-Marsden-Desbrun-10}.
This leads us to the following \emph{principles of discretization}:
\begin{itemize}
\item
keep approximation of continuum theory;
\item
keep conservation laws exact;
\item
drop spatial symmetries easily.
\end{itemize}


These principles have a built-in difficulty: we have to drop most continuous symmetries, but usually, conservation laws are obtained just from such symmetries using the Noether theorem. We 
develop a new 
method to get discrete conservation laws. \remove{clarity}{} 
Compared to 
\cite{Dorodnitsyn-04,Gawlik-Mullen-Pavlov-Marsden-Desbrun-10,Hydon-Mansfield-04,
Kraus-15,Mardsen-etal-98},
it allows to write 
the conservation laws
explicitly as one-line formulae
(using 
standard topological notation) in numerous
examples.

The following basic warm-up results of discrete field theory are obtained in the present paper:
\begin{itemize}
\item
discretization of several field theories in a similar fashion keeping conservation laws exact (\S\ref{sec-examples});
\item
a new discrete Noether theorem relating symmetries to conservation laws (Theorems~\ref{th-Noether},\ref{th-Noether-covar});
\item
a new discrete energy conservation theorem not based on symmetry (Theorems~\ref{th-energy-conservation} and~\ref{cor-free}). \remove{grammar}{}
\end{itemize}

\vspace{-0.4cm}

%
%
%
%
%

\subsection{Quick start} \label{ssec-quick}

We start with an elementary and \remove{clarity}{}
informal description of one result (Theorem~\ref{cor-free}), in the simplest unknown particular case. It is an energy conservation theorem for lattice electrodynamics 
in $2$ spatial and $1$ time dimensions. For these small dimensions, we just \emph{draw} everything. The more realistic case of $3$ spatial and $1$ time dimensions is analogous; see~\S\ref{ssec-Electrodynamics}, where we state the result precisely.

\begin{wrapfigure}{r}{2.6cm}
\vspace{-0.4cm}
\includegraphics[width=2.6cm]{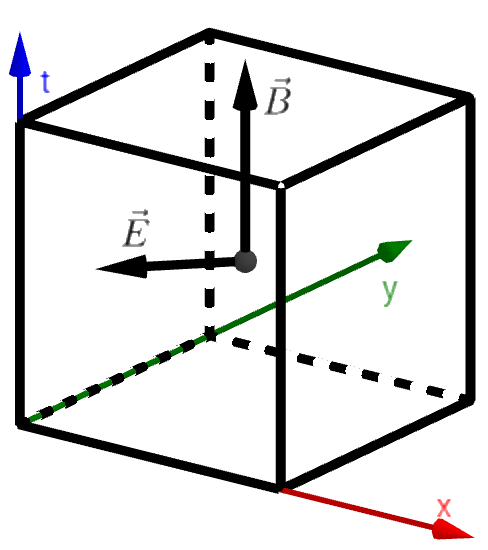}
\caption{Cube}
\label{fig-coordinates}
\vspace{-0.4cm}
\end{wrapfigure}
Recall briefly the energy conservation theorem in \emph{continuum} electrodynamics (the Poynting theorem). Let $x,y,t$ be the Cartesian coordinates in space; see Figure~\ref{fig-coordinates}. \emph{Electric} and  \emph{magnetic fields} are arbitrary smooth vector-valued functions $\vec {\mathrm{E}}(x,y,t)$ and $\vec{\mathrm{B}}(x,y,t)$ 
such that $\vec{\mathrm{E}}\perp Ot$ and $\vec {\mathrm{B}}\parallel Ot$. The \emph{energy density} and the \emph{energy flux} (\emph{the Poynting vector}) are the functions $\tfrac{1}{2}(\vec{\mathrm{E}}^2+\vec{\mathrm{B}}^2)$ and $\vec {\mathrm{E}}\times\vec{\mathrm{B}}$. 
The Poynting theorem asserts that under \emph{Maxwell's equations} (where $\vec{\mathrm{E}}=:(0,\mathrm{E}_x,\mathrm{E}_y)$ and $\vec{\mathrm{B}}=:(\mathrm{B}_t,0,0)$)
\begin{align*}
 \frac{\partial \mathrm{B}_t}{\partial t}
+\frac{\partial \mathrm{E}_y}{\partial x}
-\frac{\partial \mathrm{E}_x}{\partial y}
&=0;
&\frac{\partial \mathrm{E}_x}{\partial x}
+\frac{\partial \mathrm{E}_y}{\partial y}
&=0;
&\frac{\partial \mathrm{B}_t}{\partial x}
+\frac{\partial \mathrm{E}_y}{\partial t}
&=0;
&\frac{\partial \mathrm{B}_t}{\partial y}
-\frac{\partial \mathrm{E}_x}{\partial t}
&=0;
\end{align*}
the following identity holds for each cube with the edges parallel to the coordinate axes:
$$
  \int\limits_{\includegraphics[width=0.6cm]{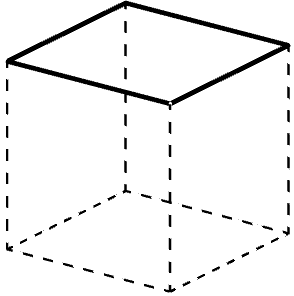}}
  \frac{\vec{\mathrm{E}}^2+\vec{\mathrm{B}}^2}{2}\,\mathrm{dA} -\int\limits_{\includegraphics[width=0.6cm]{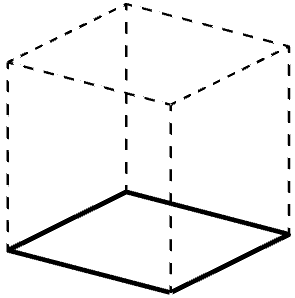}}
  \frac{\vec{\mathrm{E}}^2+\vec{\mathrm{B}}^2}{2}\,\mathrm{dA}=
  \int\limits_{\includegraphics[width=0.6cm]{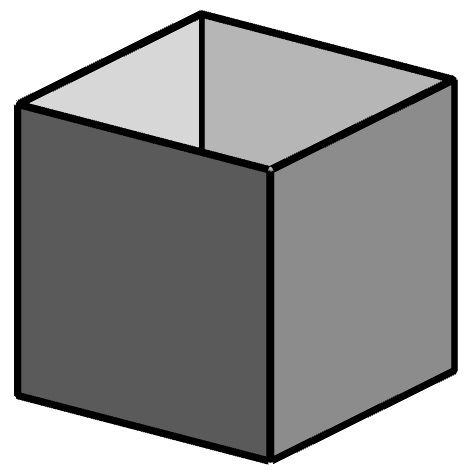}} \vec {\mathrm{E}}\times\vec{\mathrm{B}}\,\mathrm{d}\vec{\mathrm{n}}.
$$
Here the cube is shown by dotted lines, and the faces which a particular integral is taken over are in bold.  The first two integrals mean the total energy contained in the same square in the $Oxy$ plane at two different moments of time $t$. The third integral means the total inward energy flux through the boundary between these two moments. Thus the equation means energy conservation. 

Let us discretize. Dissect the unit cube into $N\times N\times N$ equal cubes. \clarity{Throughout this subsection}
by \emph{cubes} we mean the latter 
cubes, by \emph{faces} and \emph{edges} --- their faces and edges. A discrete \emph{electromagnetic field} $F$ is any real-valued function on the set of faces. Informally, its values
\clarity{$F(\includegraphics[width=0.6cm]{0+0+1+2s.png})$, $F(\includegraphics[width=0.6cm]{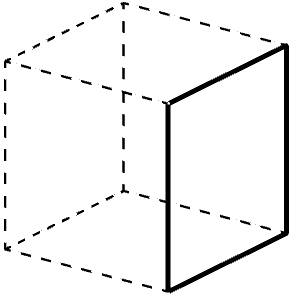})$,
$F(\includegraphics[width=0.6cm]{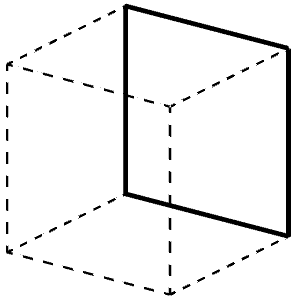})$}
discretize $-\mathrm{B}_t$, $\mathrm{E}_y$, $\mathrm{E}_x$ respectively, depending on 
face direction \edit{R2P2}{(for exterior-calculus fans: $F$ itself discretizes the
\emph{electromagnetic field}
$\mathrm{F}=-\mathrm{B}_t\,\mathrm{dx\wedge dy}+\mathrm{E}_x\,\mathrm{dt\wedge dx}+
\mathrm{E}_y\,\mathrm{dt\wedge dy}$).} \remove{clarity}{}
Hereafter a particular face \gram{at} which the function is evaluated
is in bold, and one of the adjacent 
cubes is shown by dotted lines to identify the face position.
The well-known discrete 
\emph{Maxwell's equations} are \varedit{R2P3}{}{2}
\begin{equation} \label{eq-Maxwell-1}
\hspace{-0.8cm}
\begin{aligned}
F(\includegraphics[width=0.6cm]{1+2s.png})- F(\includegraphics[width=0.6cm]{0+0+1+2s.png})-
F(\includegraphics[width=0.6cm]{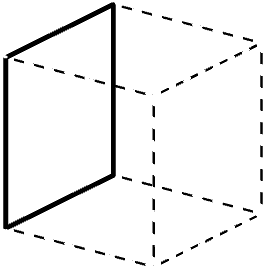})+ F(\includegraphics[width=0.6cm]{0+1+1+2s.png})+
F(\includegraphics[width=0.6cm]{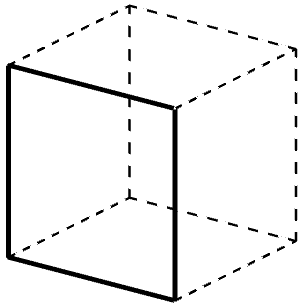})- F(\includegraphics[width=0.6cm]{0+1+2+2s.png})&=0;
\\
F(\hspace{-0.3cm}\begin{tabular}{c}\vspace{-0.0cm}
       \includegraphics[width=0.9cm]{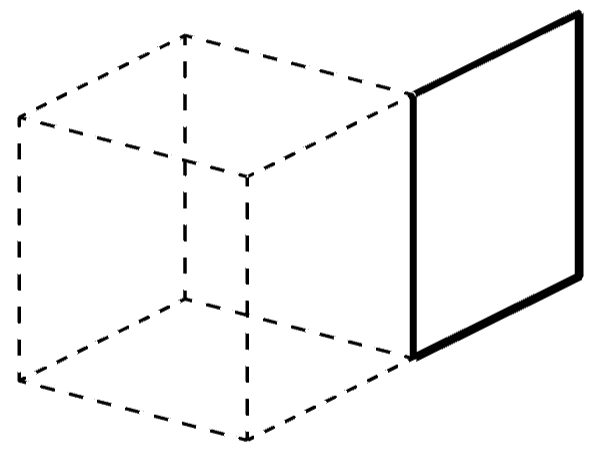}
       \end{tabular}\hspace{-0.2cm})-
F(\includegraphics[width=0.6cm]{0+1+1+2s.png})+
F(\hspace{-0.3cm}\begin{tabular}{c}\vspace{-0.0cm}
       \includegraphics[width=0.9cm]{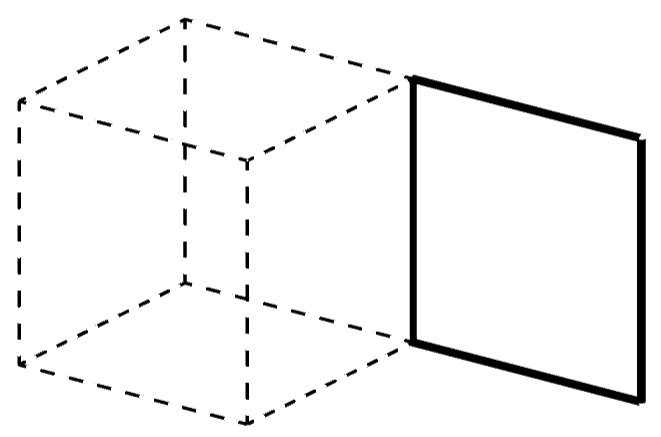}
       \end{tabular}\hspace{-0.2cm})-
F(\includegraphics[width=0.6cm]{0+1+2+2s.png}) &=0;
\\
F(\includegraphics[width=0.6cm]{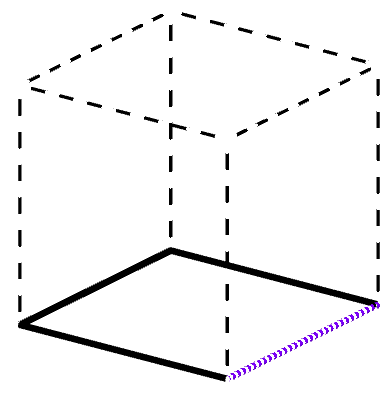})-
F(\hspace{-0.3cm}\begin{tabular}{c}\vspace{-0.0cm}
       \includegraphics[width=0.9cm]{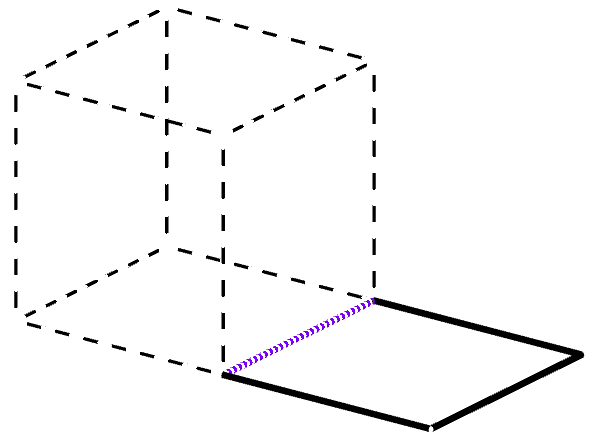}
       \end{tabular}\hspace{-0.2cm})-
F(\hspace{-0.3cm}\begin{tabular}{c}\vspace{-0.2cm}
       \includegraphics[width=0.6cm]{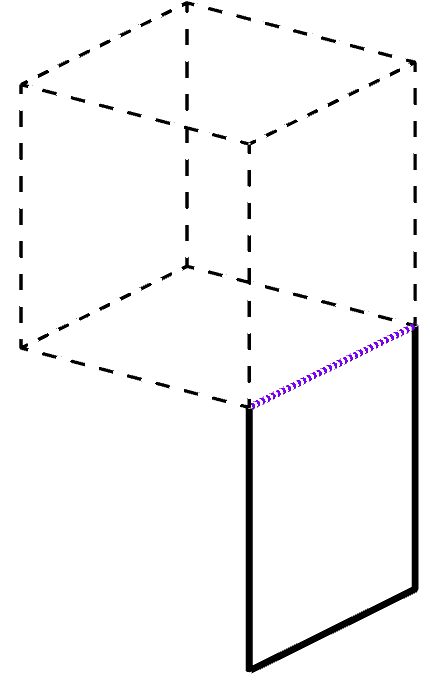}
       \end{tabular}\hspace{-0.2cm})+
F(\includegraphics[width=0.6cm]{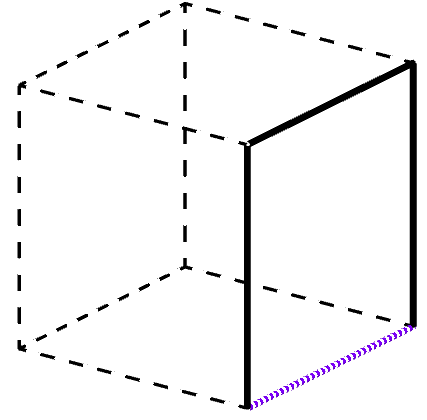})
&=0;
\\
F(\includegraphics[width=0.6cm]{1+2s.png})-
  F(\hspace{-0.3cm}\begin{tabular}{c}\vspace{-0.0cm}
       \includegraphics[width=0.9cm]{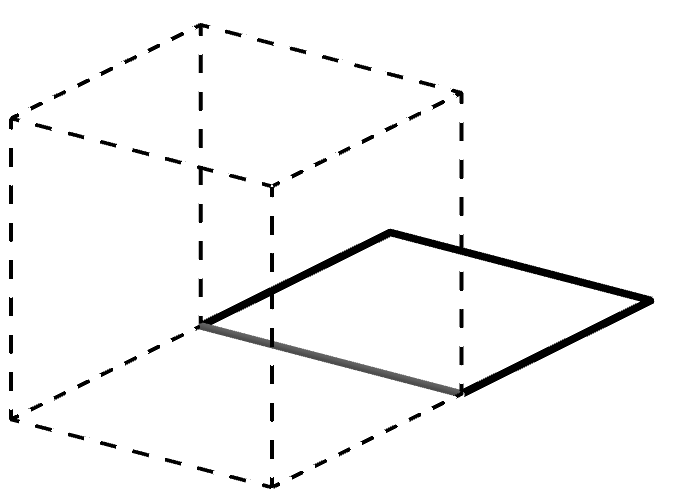}
       \end{tabular}\hspace{-0.2cm})+
  F(\hspace{-0.3cm}\begin{tabular}{c}\vspace{-0.2cm}
       \includegraphics[width=0.6cm]{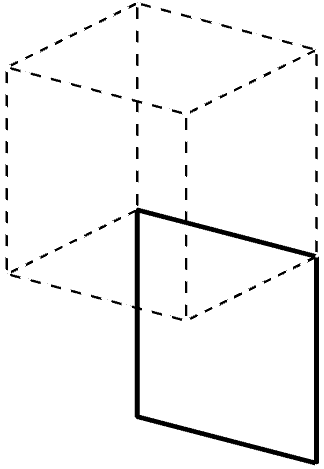}
       \end{tabular}\hspace{-0.2cm})-
  F(\includegraphics[width=0.6cm]{0+1+2+2s.png})&=0.
\end{aligned}
\end{equation}
Here we sum the values of $F$ at the faces of a particular cube (in the first equation) or the faces containing a particular edge (in the other equations), with appropriate signs. \remove{grammar}{}
We write one equation per cube and one per \gram{non-boundary} edge and impose no boundary conditions.

It's time for our new definition. Let $T$ be the function on the set of \red{non-boundary} faces given by
\begin{align*}
T(\includegraphics[width=0.6cm]{1+2s.png})
&=
\frac{1}{2}\left[
F(\includegraphics[width=0.6cm]{1+2s.png})\cdot F(\includegraphics[width=0.6cm]{1+2s.png})+
F(\hspace{-0.3cm}\begin{tabular}{c}\vspace{-0.1cm}
\includegraphics[width=0.6cm]{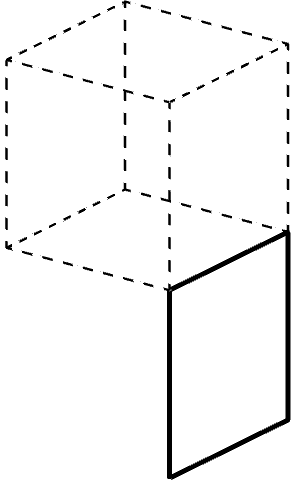}
\end{tabular}\hspace{-0.2cm})\cdot F(\includegraphics[width=0.6cm]{0+1+1+2s.png})+
F(\hspace{-0.2cm}\begin{tabular}{c}
\includegraphics[width=0.6cm]{_0+1+2+2s.png}\end{tabular}\hspace{-0.2cm})\cdot F(\includegraphics[width=0.6cm]{0+1+2+2s.png})\right]\red{;}
\\
T(\includegraphics[width=0.6cm]{0+2s.png})
&=
\frac{1}{2}\left[
F(\includegraphics[width=0.6cm]{1+2s.png})\cdot F(\includegraphics[width=0.6cm]{0+2s.png})+
F(\includegraphics[width=0.6cm]{0+0+1+2s.png})\cdot F(\includegraphics[width=0.6cm]{0+2s.png})\right]\red{;}
\\
T(\includegraphics[width=0.6cm]{0+1s.png})
&=
\frac{1}{2}\left[
F(\includegraphics[width=0.6cm]{1+2s.png})\cdot F(\includegraphics[width=0.6cm]{0+1s.png})+
F(\includegraphics[width=0.6cm]{0+0+1+2s.png})\cdot F(\includegraphics[width=0.6cm]{0+1s.png})\right]\red{.}\\[-1.0cm]
\end{align*}
For instance, the latter equality expresses the value of $T$ at a vertical face parallel to the $x$-axis through the values of $F$ at the same face and the two horizontal faces right behind it.
The value of $T$ at a horizontal (respectively, vertical) face discretizes energy density (respectively, flux). 
Proposition~\ref{th-Maxwell-approximation} below asserts that under a natural choice of $F$ we have uniform convergence as $N\to\infty$:
\begin{equation}\label{eq-quick-start-approximation}
T(\includegraphics[width=0.6cm]{1+2s.png})
      \rightrightarrows N^2\int\limits_{\includegraphics[width=0.6cm]{1+2s.png}}
      \tfrac{1}{2}(\vec{\mathrm{E}}^2+\vec {\mathrm{B}}^2)\,\mathrm{dA},\quad
      T(\includegraphics[width=0.6cm]{0+2s.png})
      \rightrightarrows -N^2\int\limits_{\includegraphics[width=0.6cm]{0+2s.png}}
      \vec{\mathrm{E}}\times\vec {\mathrm{B}}\,\mathrm{d}\vec{\mathrm{n}},\quad
      T(\includegraphics[width=0.6cm]{0+1s.png})
      \rightrightarrows N^2\int\limits_{\includegraphics[width=0.6cm]{0+1s.png}}
      \vec{\mathrm{E}}\times\vec {\mathrm{B}}\,\mathrm{d}\vec{\mathrm{n}}.
\end{equation}
\vspace{-0.4cm}

The desired discrete Poynting theorem (particular case of Theorem~\ref{cor-free} below) asserts that assuming only Maxwell's equations~\eqref{eq-Maxwell-1}, we have the following identity for each \gram{non-boundary} cube:
\begin{equation}\label{eq-discrete-Poynting}
T(\includegraphics[width=0.6cm]{0+0+1+2s.png})-
T(\includegraphics[width=0.6cm]{1+2s.png})-
T(\includegraphics[width=0.6cm]{0+1+1+2s.png})+
T(\includegraphics[width=0.6cm]{0+2s.png})+
T(\includegraphics[width=0.6cm]{0+1+2+2s.png})-
T(\includegraphics[width=0.6cm]{0+1s.png})=0.\\[-0.2cm]
\end{equation}
Properties~\eqref{eq-quick-start-approximation}--\eqref{eq-discrete-Poynting} are exactly what one requests from a discretization of energy density and flux according to the above discretization principles; it is nontrivial to satisfy both properties simultaneously. A proof \emph{in pictures} is in \S\ref{ssec-proofs-global}. 
And we proceed to a systematic discussion of discrete field theory.

\subsection{Background}

\remove{grammar}{}
Discrete field theory is actually at least as old as the continuum one. In 1847 G.~Kirchhoff stated the laws of an electrical network, which is 
the simplest model of the theory; see~\S\ref{ssec-Networks}.
In the continuum limit, the laws approximate the Laplace equation; thus the model perfectly serves for \gram{the} numerical solution of the latter. Remarkable approximation theorems were proved by L.~Lusternik \cite{Lusternik-26}, R.~Courant--K.~Friedrichs--H.~Lewy \cite{Courant-friedrichs-Lewy-28} in 1920s  and later generalized, e.g., in \edit{R12P1}{} \cite{Ciarlet-78,Chelkak-Smirnov-08,Wilson-08,Bobenko-Skopenkov-13,Werness-14}. Planar networks lead to the discretization of complex analysis having applications in statistical physics (e.g., obtained in \gram{the} 2010s by S.Smirnov\,et\,al.\,\cite{Chelkak-Smirnov-08}) and even computer graphics~\cite{Desbrun-etal-08}.

Discrete field theory was closely related to topology from the youth of both subjects. The Kirchhoff laws are naturally stated in terms of the \emph{boundary} and the \emph{coboundary} operators; see \S\ref{ssec-Networks} for an elementary introduction. 
Such formulation is usually attributed to H.~Weyl; 
see \cite[\S1F, p.~31]{Gross-Kotiuga-04} for an elaborate historical survey. In \gram{the} 1930s G.~de~Rham established \gram{a} correspondence between these operators and the exterior derivative and its dual;
see \cite{Arnold-etal-10} for a survey and \cite{Tonti-13} for general philosophy.
This lead to the above discrete Maxwell equations~\eqref{eq-Maxwell-1}; see also  \S\ref{ssec-Electrodynamics} and~
\cite{Bossavit-03,Gross-Kotiuga-04,Kron-44,Teixeira-13}.

The next major step was done by A.~Kolmogorov and J.~Alexander in \gram{the} 1930s, who invented a product discretizing the exterior product in a sense. Kolmogorov commented that such discretization was his original motivation. The construction was soon modified by H.~Whitney \cite{Whitney-38} and others to give the famous \red{\emph{cup product.}}\move{grammar}{} The original product was anticommutative, whereas the \red{cup product} was associative. One cannot get both properties simultaneously (this fact is crucial for rational homotopy theory). This reflects a general phenomenon that not all properties survive under discretization. We choose the associative \red{cup product} as a discretization of the exterior product, in contrast to~\edit{R12P1}{}\cite{Desbrun-etal-08, Wilson-08, Berbatov-22}. This requires \emph{vertices ordering} in 
\red{discrete field} theory,
a structure introduced for the first time.

Later there appeared discrete models for other classical fields: e.g.,  \emph{Feynman checkerboard} from \gram{the} 1940s and \emph{Regge calculus} from \red{the} 1960s for the Dirac and the gravitational field respectively; see \cite{Skopenkov-Ustinov-20} for an elementary introduction and survey of the former model.

In \gram{the} 1970s F.~Wegner and K.~Wilson introduced \emph{lattice gauge theory} as a computational tool for gauge theory describing all known interactions; 
see \cite{Maldacena-16} or \S\ref{ssec-gauge}
for an elementary introduction and \cite{Creuz-70} for details.
This culminated in determining the proton mass theoretically with an error $<2\%$. 

In \gram{the} 1980s A.~Connes developed a formalism, dealing (to some extent) uniformly with continuous and discrete 
geometries \cite{Connes}.
Using it, A.~Dimakis et al.~discretized the Yang-Mills equations \cite[Eq.~(4.15)]{Dimakis-etal-94}. Corollary~\ref{cor-gauge} extends their result by adding sources and the crucial unitarity constraint. Compare \gram{this} with the efforts put to achieve the gauge covariance in the remarkable survey \cite[\S9]{bender-etal-94}.

\edit{R12P1}{In the 1990s R.~Forman \cite{Forman-02} introduced a different discretization of differential forms, using cochain maps rather than cochains. 
S.~Wilson \cite{Wilson-08} and K.~Berbatov et al.~\cite{Berbatov-22} equipped them with the cup product (again, non-associative) to discretize Riemann surfaces and diffusion processes respectively. Although their setup is different, generalized Forman's forms appear in our energy conservation law.}

In \gram{the late} 1990s J.~Marsden et al.~discretized basic general theorems of field theory: the Euler--Lagrange equations and the Noether theorem on a $2$-dimensional grid; see \cite[Eq.~(5.2) and~(5.7)]{Mardsen-etal-98}, 
cf.~\cite[Eq.~(60) and~\red{(72)}]{Kraus-15}, 
\cite[Theorem~5.2.37]{Hydon-Mansfield-04},
\edit{R12P1}{\cite[Theorem~5.5]{Mansfield-etal-17}},
\cite[\red{Theorems~7.1 and~8.1 in Ch.~III}]{Dorodnitsyn-04}. %
These results \gram{extended} the ones obtained earlier for $1$-dimensional difference equations; see \cite{Hydon-Mansfield-04} for references. The discrete Euler--Lagrange equations in~\S\ref{ssec-statements} are straightforward generalizations of the known ones. \red{B}ut Discrete Noether Theorem~\ref{th-Noether} is different:
\edit{R12P1}{we construct a \emph{conserved current}
through edges just as in the Kirchhoff law, whereas
in previous works,
the current was defined on vertices \cite
{Dorodnitsyn-04},
pairs (triangle, its vertex) \cite
{Mardsen-etal-98}, and pairs (square, its vertex) \cite
{Kraus-15}. In
~\cite{Mansfield-etal-17}, the conservation law was stated in a global form, bypassing a construction of a current. This all led to rather technical statements of the conservation laws, 
to our knowledge, never applied to a particular field theory; see~\cite{Mansfield-etal-17} for a survey.} 
%
For the first time we use vertices ordering and \gram{\emph{cap product}}, making the statement, proof, and applications of the \red{discrete Noether} theorem particularly simple.

In \gram{the} 2010s M.~Kraus et al.~have stepped beyond the Lagrangian formulation \cite{Kraus-15}.
A discretization of hydrodynamics was introduced by E.~Gawlik et al. 
\cite[\S4]{Gawlik-Mullen-Pavlov-Marsden-Desbrun-10}. They derived general Euler--Poincare equations and Kelvin--Noether theorem \cite[\S3]{Gawlik-Mullen-Pavlov-Marsden-Desbrun-10}. Their approach was based on \red{the} discretization of the diffeomorphism group, thus \red{applied to a} rather specific class of models.
\remove{R12P1}{}

There was a folklore belief that no conserved discrete energy-momentum tensor exists in this framework. E.g., in 2016 D.~Chelkak, A.~Glazman, and S.~Smirnov introduced a ``halfway'' conserved tensor \cite[Corollary~2.12(1)]{Chelkak-Smirnov-16}; cf.~\cite{Suzuki-16}.
Even the notion of a rank~2 symmetric tensor itself is hard to discretize \cite[\S7]{Arnold-etal-10}. But in \gram{the} 2000s V.~Dorodnitsyn \red{discretized} energy and momentum conservation in some particular cases \cite[Example in \S8 \red{of Ch.~III}]{Dorodnitsyn-04}. \edit{R1P7}{His construction, like the other known ones, was based on moving the points of a $2$-dimensional lattice.} \edit{R12P1}{We extend 
it using a new approach not relying on any continuous motion or symmetry. As a result,
in 
Theorem~\ref{cor-free}} we construct an exactly conserved discrete energy-momentum tensor, approximating the continuum one at least for free fields.


\gram{The} great success of discrete models forces \red{us} to search for a general discretization method and even to build the whole field theory starting from discrete rather than continuous space and time \cite{bender-etal-94}.
\vspace{-0.2cm}

\subsection{Main idea}
\label{ssec-main-tools}

We propose the following discretization algorithm for field theories:
\begin{enumerate}
  \item
%
Take a continuum Lagrangian written in terms of exterior calculus operations from Table~\ref{tab-translation}.
\item
Replace the exterior calculus operations \gram{with} cochain operations using Table~\ref{tab-translation} \emph{literally}.
\item
Get equations of motions/conservation laws from discrete Euler--Lagrange/Noether theorems.
\end{enumerate}
%
This idea is well-known but \gram{the} realization is new. In 
\clarity{Tables~\ref{tab-translation}--\ref{tab-discretization},}
in contrast to the rest of the paper, we assume familiarity with the basics of exterior calculus and continuum field theory.

\edit{R12P2}{}\gram{}
\begin{table}[!t]
  \caption{Correspondence between continuum and discrete notions}
  \label{tab-translation}
  \hspace{-0.3cm}
  \begin{tabular}{lllll}
    \toprule
    \multicolumn{2}{c}{\textbf{Continuum}}&
    \multicolumn{2}{c}{\textbf{Discrete}}&
    \hspace{-0.5cm}\textbf{Definition}\\
    \midrule
    \multicolumn{4}{c}{\emph{Algorithmic part I. Replacement in Lagrangian and action}:}& \\[0.3cm]
    differentiable manifold (spacetime) & $\mathrm{M}$ &
    simplicial or cubical complex & ${M}$ & \ref{def-cochain},
    \\[-0.1cm]
    & & with a fixed vertices ordering & & \ref{def-spacetime} 
    \\
    $k$-form, $\mathbb{R}$- or $\mathbb{C}^{m\times n}$-valued & $\phiup$ & $k$-cochain, $\mathbb{R}$- or $\mathbb{C}^{m\times n}$-valued  & $\phi$ & \ref{def-cochain} \\
    exterior derivative & $\mathrm{d}$ &
    coboundary & $\delta$ & \ref{def-coboundary} \\
    exterior product & $\wedge$ &
    \red{cup product} & $\smile$ & \ref{def-covar-coboundary-gauge} 
    \\
    interior product & $\lrcorner$ &
    \red{cap product} & $\frown$ & \ref{def-covar-coboundary-gauge} 
    \\
    connection\,1-form,Lie-algebra-valued & $\mathrm{A}$ &
    connection,not\,Lie-algebra-valued & $A$ & \ref{def-connection} \\
    curvature 2-form, Lie-algebra-valued & $\mathrm{F}$ &
    curvature, not Lie-algebra-valued & $F$ & \ref{def-curvature} \\
    covariant exterior derivative & $\mathrm{D}_\mathrm{A}$ &
    covariant coboundary & $D_A$ & 
    \ref{def-covar-coboundary-gauge} \\
    raising all indices & $\sharp$ &
    sharp-operator (new notion) & $\#$ & \ref{def-sharp} 
    \\
    function on $\mathbb{R}$ or $\mathbb{C}^{m\times n}$ (e.g., $\log$ or $\mathrm{Tr}$) & $\mathrm{f}$ &
    the same function on $\mathbb{R}$ or $\mathbb{C}^{m\times n}$ & $\mathrm{f}$ & --- \\
    spacetime integration of a 0-form & $\int_{\mathrm{M}}\!\mathrm{dV}\cdot$ & sum of the values of a 0-chain & $\epsilon$ & \ref{def-epsilon} \\[0.3cm]
    \multicolumn{4}{c}{\emph{Informal part II. Correspondence in equations of motion and conservation laws}:}& \\[0.3cm]
    codifferential, $\sharp$-conjugated & $\sharp\deltaup\flat$ &
    boundary & $\partial$ & \ref{def-boundary} \\
    covariant\,codifferential,$\sharp$-conjugated & \hspace{-0.1cm} $\sharp \mathrm{D}^*\!\!\!_\mathrm{A}\flat\!$ &
    covariant boundary & $D^*_A$ & 
    \ref{def-covar-boundary-gauge} \\
    interior product & $\llcorner$ & \red{cop product} (new notion) & $\spleen$ & \ref{def-cap-general} \\
    tensor product over $C^\infty(M)$ & $\otimes$ & chain-cochain \red{cross product} & $\times$ & \ref{def-tensor} \\
    type $(1,1)$ tensor & $\mathrm{T}$ & type $(1,1)$ tensor (new notion) & $T$ & \ref{def-tensor} \\
    integration of its $k$-th component & $\int_{\red{\mathrm{h}}}\!\mathrm{T}_          k$ &
    flux (new notion) & \hspace{-0.2cm}$\langle T,\red{h}\rangle_k\hspace{-0.2cm}$ & \ref{def-flow}
    \\[0.1cm]
    integration of a $k$-form & $\int_{\red{\mathrm{h}}}\!\phiup$ &
    pairing & \hspace{-0.2cm}$\langle \phi,\red{h}\rangle\hspace{-0.2cm}$ & \ref{def-pairing} \\[0.1cm]
    \bottomrule
  \end{tabular}
\end{table}

\move{clarity}{} We stress that Part I of Table~\ref{tab-translation} gives an \emph{algorithm}, not just an analogy (as Part II).
The algorithm provides conservation laws only for symmetries \gram{that} are preserved by the discretization. Thus we usually guarantee charge conservation (based on the automatically preserved gauge symmetry) and energy-momentum conservation (not based on any symmetry in our setup).

Results of applying the algorithm to basic field theories are \edit{R12P3}{summarized in Table~\ref{tab-discretization} and}
discussed in \S\ref{sec-examples}. The output discrete theories are usually simpler than the input continuum ones; knowledge of the latter is not required for understanding the former. All the output theories of~\S\ref{sec-examples} are known, but some obtained conservation laws are new. As a tool, we use discrete covariant differentiation \edit{R2P1}{} (see \S\ref{ssec-general-connections} and~\cite{Dimakis-etal-94}) and build a new discretization of tensor calculus involving non-antisymmetric tensors (see \S\ref{ssec-statements}). This is done in terms of cochain operations from Table~\ref{tab-translation}, which appear naturally in examples. 
\edit{R12P3}{A reader looking for a zero-knowledge introduction can now proceed directly to~\S\ref{sec-examples}.}

\begin{landscape}
\begin{table}[hbt]
  \caption{\red{Discretization of basic field theories}}
  \label{tab-discretization}
  \begin{tabular}{lccccccc}
    \toprule
    &\multicolumn{2}{c}{\textbf{Continuum}}&\multicolumn{4}{c}{\textbf{Discrete}}&\\
    \cmidrule(rl){2-3}                     \cmidrule(rl){4-7}
\textbf{Field theory}
    &Field &Lagrangian &Lagrangian &Equation 
    &Conserved &Energy-momentum &\textbf{Reference}\\
    &      &           &           &of motion 
    &current   &tensor  &\\
   \midrule
Electric network & $\mathbb{R}$-valued 
& $\tfrac{1}{2}\mathrm{d}\phiup\lrcorner \mathrm{d}\phiup - \mathrm{s}\lrcorner\phiup$
& $\tfrac{1}{2}\delta\phi\frown \delta\phi - s\frown \phi$
& $\partial\delta\phi=s$
& $\delta\phi$
& $\delta\phi\times\delta\phi$
& \S\ref{ssec-Networks}
\\
& $0$-form $\phiup$
&
&
&
&
&
&
\\[0.3cm]
Electrodynamics  &  $\mathbb{R}$-valued 
& $-\tfrac{1}{2}\sharp\mathrm{dA}\lrcorner \mathrm{dA} - j\lrcorner\mathrm{A}$
& $-\tfrac{1}{2}\#\delta A\frown \delta A - j \frown A$
& $-\partial\#\delta A=j$
& $j$
& $-\#\delta A\times\delta A$
& \S\ref{ssec-Electrodynamics}
\\
&  $1$-form $\mathrm{A}$
&
&
&
&
&
&
\\[0.3cm]
Gauge theory  &  connection 
& $-\mathrm{Re}\mathrm{Tr}[\tfrac{1}{2}\sharp\mathrm{F}^*\lrcorner \mathrm{F} + j\lrcorner\mathrm{A}]$,
& $-\mathrm{Re}\mathrm{Tr}[\tfrac{1}{2}\# F^*\frown F + j \frown A]$,
& $\mathrm{Pr}_{T_U G}\,D^*_A\# F$
& $j$
& $-\mathrm{Re}\mathrm{Tr}[\# F\times F]$
& \S\ref{ssec-gauge}
\\
             &  $1$-form $\mathrm{A}$ 
& $F=\mathrm{dA}+\mathrm{A}\wedge\mathrm{A}$
& $F=\delta A+ A\smile A$
& $=-\mathrm{Pr}_{T_U G}\,j$
&
&
&
\\[0.3cm]
Klein--Gordon & $\mathbb{C}$-valued 
& $\sharp\mathrm{d}\phiup\lrcorner \mathrm{d}\phiup^* - m^2\phiup\lrcorner\phiup^*$
& $\#\delta\phi\frown \delta\phi^* - m^2\phi\frown \phi^*$
& $\partial\#\delta\phi=m^2\phi$
& $-2\mathrm{Im}[\#\delta\phi^*\frown\phi]$
& $2\mathrm{Re}[\#\delta\phi^*\times\delta\phi$
& \ifarxiv{\S\ref{ssec-Klein-Gordon}}{\cite[\S A.2]{Skopenkov-preprint}}
\\
& $0$-form $\phiup$
&
&
&
&
& $-m^2\phi^*\times\phi]$
&
\\[0.3cm]
Klein--Gordon & $\mathbb{C}^{1\times n}$-valued 
& $\sharp\mathrm{D_A}\phiup\lrcorner (\mathrm{D_A}\phiup)^*$
& $\#D_A\phi\frown (D_A\phi)^*$
& $D^*_A\#D_A\phi$
& $-2\phi^*\smile \#D_A\phi$
& unknown
& \ifarxiv{\S\ref{ssec-Klein-Gordon}}{\cite[\S A.2]{Skopenkov-preprint}}
\\
in a gauge field & $0$-form $\phiup$
&  $- m^2\phiup\lrcorner\phiup^*$ 
&  $- m^2\phi\frown \phi^*$ 
&  $=m^2\phi$
&
&
&
\\
    \bottomrule
  \end{tabular}
\end{table}
\end{landscape}

\begin{remark}
  In Table~\ref{tab-translation} we intentionally include no discretization for the Hodge star or products other than exterior, interior, 
  tensor products. In all the examples, we have succeeded to avoid them. 

  Continuum and discrete notations fit not that well. But both are commonly used in their contexts (except \gram{for} a few new discrete objects, for which we keep the continuum notation in a different~font).

  \move{R12P3}{} Putting a continuum Lagrangian to the required input form is not always possible and can be ambiguous:
  \edit{R12P3}{For instance, in Table~\ref{tab-discretization}, electrodynamics can be also viewed as a gauge theory with a $\mathrm{u}(1)$-valued connection. This leads to the same continuum Lagrangian but different discretizations.}
\end{remark}

\subsection{Statements} \label{ssec-statements}

Let us state the main new results precisely in their simplest form. \edit{R12P3}{This subsection is a technical summary. 
The introduced notions are all motivated in~\S\ref{sec-examples}, where they appear little by little in examples.} Further generalizations are postponed until~\S\ref{sec-general}.

\begin{definition} \label{def-cochain}
\move{R12P2}{} Dissect the hypercube $0\le x_0,x_1\dots, x_{d-1}\le N$ in $\mathbb{R}^d$ into $N^d$ 
unit hypercubes\edit{R12P2}{; see Figure~\ref{fig-grid9}. By \emph{$k$-dimensional faces} we mean the $k$-dimensional faces of those unit hypercubes. The collection of all those faces is called the \emph{$d$-dimensional grid} $I^d_N$. In what follows we denote $M=I^d_N$, unless the values of $d$ and $N$ need to be shown explicitly (this is convenient for generalizations).}

A \emph{$k$-dimensional field} or \emph{$k$-cochain} \clarity{or \emph{function on $k$-dimensional faces}} is a real-valued function defined on the set of $k$-dimensional faces of~${M}$. 
Denote by $C^k({M};\mathbb{R})=C_k({M};\mathbb{R})$ the set of all $k$-dimensional fields; see~Remark~\ref{rem-compare-definition} for comparison with the other definitions in literature.

A \emph{Lagrangian} is a function $\mathcal{L}\colon C^k({M};\mathbb{R})\to C_0({M};\mathbb{R})$.
The \emph{action} $\mathcal{S}\colon C^k({M};\mathbb{R})\to \mathbb{R}$ is the sum of the values of the Lagrangian over all the vertices. A field $\phi\in C^k({M};\mathbb{R})$ is
\edit{R1P4}{an \emph{extremal} or a \emph{critical point} or}
\emph{stationary} for the action functional,
\red{if}
$\left.\frac{\partial}
  {\partial t}\mathcal{S}[\phi+ t\Delta]\right|_{ t=0}=0$ for each $\Delta\in C^k({M};\mathbb{R})$.
\end{definition}

\begin{figure}[htb]
  \centering
  \vspace{-0.4cm}
  \includegraphics[width=3.5cm]{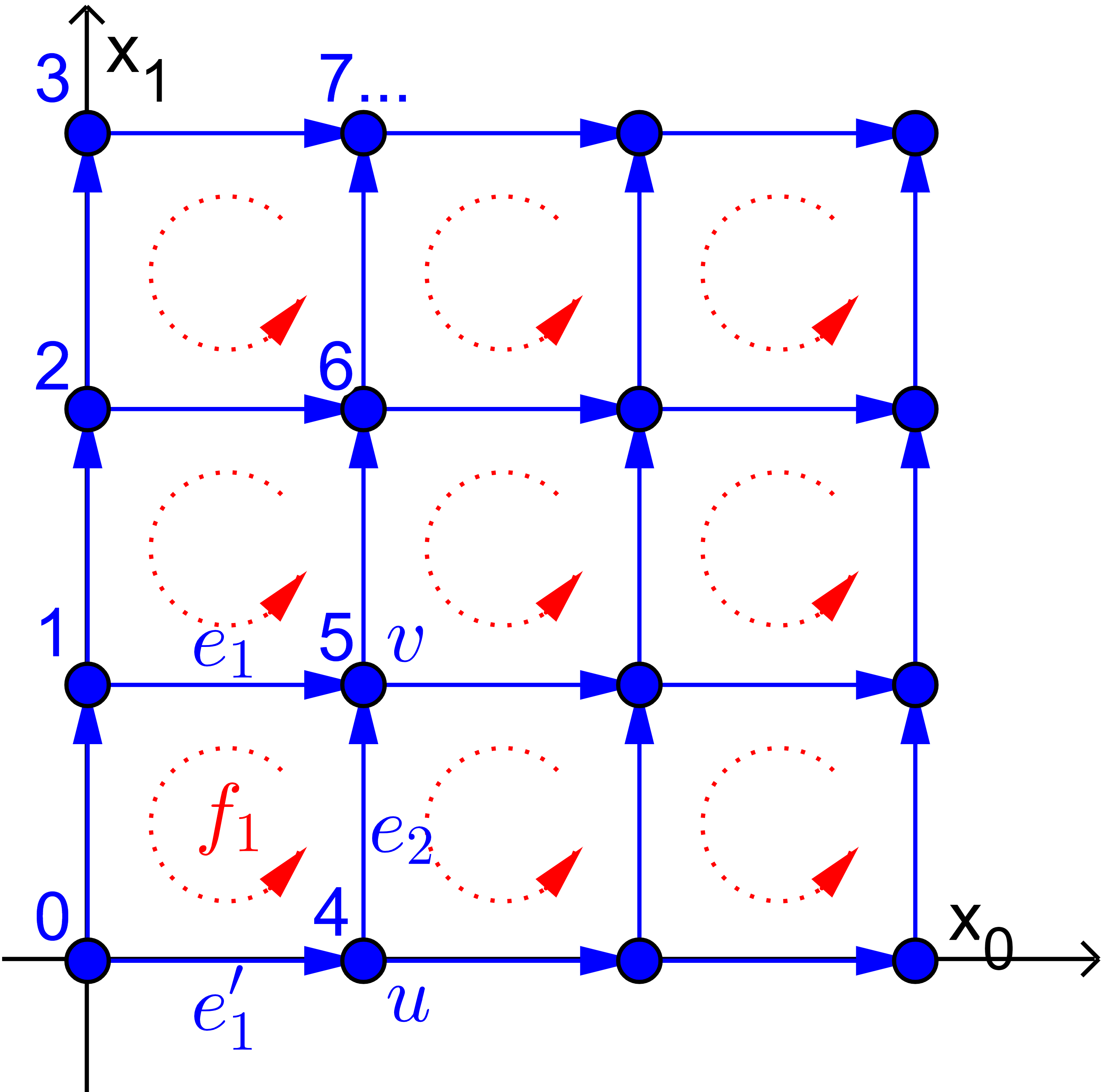}
  \vspace{-0.2cm}
  \caption{\red{The} $3\times 3$ grid \red{$I^2_3$} with the dictionary \red{order} of vertices; the vertices are enumerated in increasing order. \red{Orientation of $1$- and $2$-dimensional faces.}  The $1$-dimensional ($e_1$ and $e_2$) and $2$-dimensional ($f_1$) faces with the maximal vertex~$v$. \red{The $1$-dimensional face $e'_1$ of $f_1$ containing the vertex $u$ and not containing the vertex $v$.}
  }\label{fig-grid9}
\end{figure}
\clarity{}

\edit{R12P3}{Discrete field theory studies extremals of action functionals. Now we introduce additional structure: the \emph{dictionary order of vertices}, the natural \emph{orientation of faces}, the \emph{boundary} and \emph{coboundary} operators. The latter is an analog of the (exterior) derivative.}

\begin{definition} \label{def-boundary} \label{def-coboundary}
\move{R12P3}{} Fix the \emph{dictionary order} of the vertices \red{of the grid $I^d_N$}: set $(x_0,x_1\dots, x_{d-1})<(y_0,y_1\dots, y_{d-1})$ if and only if $x_0=y_0$, \dots, $x_{k-1}=y_{k-1}$, and $x_{k}<y_{k}$ for some $0\le k\le d-1$.

Denote by $\max f$ ($\min f$) the maximal (minimal) vertex  of a face $f$ \edit{R12P2}{of $M$}. (On the grid, it is the vertex with the maximal (minimal) sum of the coordinates).

Fix the following orientation of $k$-dimensional faces of \edit{R12P2}{$M$}. A \emph{positively oriented} basis in a face is formed by the $k$ vectors starting at the minimal vertex of the face, going along the edges of the face, and listed in the order opposite to the order of the endpoints. E.g., 
a positively oriented basis in a $d$-dimensional face of $I^d_N$ is $(1/N,0,\dots,0),(0,1/N,\dots,0),\dots,(0,0,\dots,1/N)$, because $(1/N,0,\dots,0)>(0,1/N,\dots,0)>\dots>(0,0,\dots,1/N)$.
A \clarity{$(k+1)$}-dimensional face $f$ and a \red{$k$}-dimensional face $e\subset f$ are \emph{cooriented} (respectively, \emph{opposite oriented}), if
the ordered set consisting of the outer normal to $e$ in $f$ and a positive basis in $e$ is a positive (respectively, negative) basis in $f$.
\remove{clarity}{}

The \emph{boundary} $\partial \red{\phi}$ and the \emph{coboundary} $\delta \red{\phi}$ of a function $\red{\phi}$ on $k$-dimensional faces $e$ are the functions on  $(k-1)$- and $(k+1)$-dimensional faces $v$ and $f$ respectively given by (see Figures~\ref{fig-3D} and~\ref{fig-pipeline2})
\begin{align*}
[\partial \red{\phi}](v)&=\sum_{e\supset v\textrm{ cooriented with }v}\red{\phi}(e)-\sum_{e\supset v\textrm{ oriented opposite to }v}\red{\phi}(e),\\
[\delta \red{\phi}](f)&=\sum_{e\subset f\textrm{ cooriented with } f}\red{\phi}(e)-\sum_{e\subset f\textrm{ oriented opposite to } f}\red{\phi}(e).
\end{align*}
Hereafter an empty sum is set to be $0$, 
and $C^k(\red{M};\mathbb{R}):=\{0\}$ for $k<0$ or $k>\red{\dim M}$. \move{R12P3}{}
\end{definition}

\begin{figure}[htbp]
\begin{tabular}{cl}
\begin{tabular}{l}
\includegraphics[width=2cm]{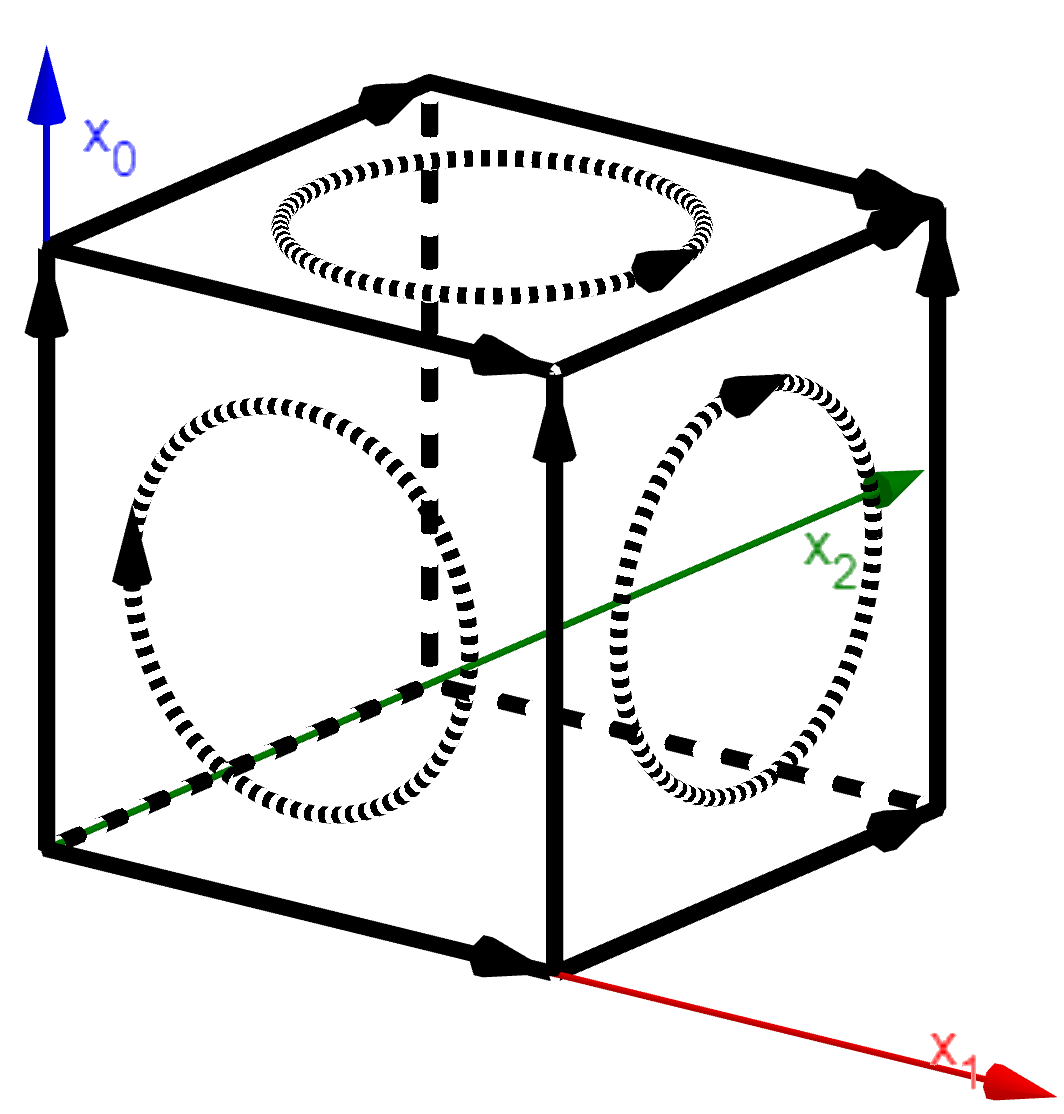}
\end{tabular}
&
\begin{tabular}{l}
$[\partial \red{\phi}](\includegraphics[width=0.6cm]{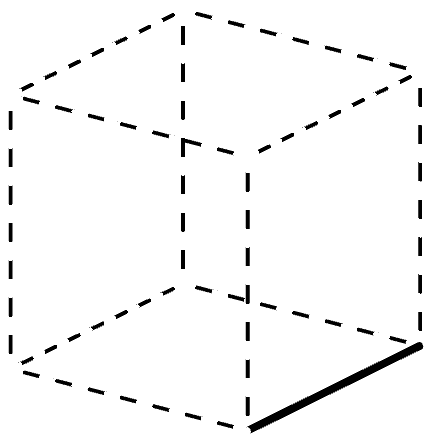})
=
\red{\phi}(\includegraphics[width=0.6cm]{1+2s.png})-
\red{\phi}(\hspace{-0.3cm}\begin{tabular}{c}\vspace{-0.0cm}
       \includegraphics[width=0.9cm]{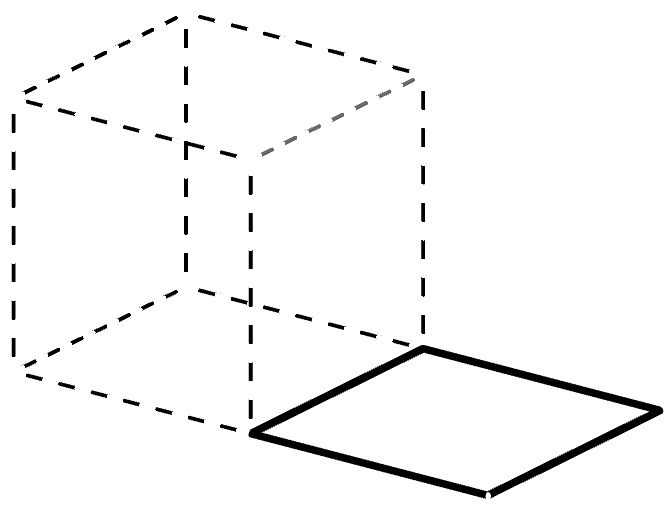}
       \end{tabular}\hspace{-0.2cm})+
\red{\phi}(\hspace{-0.3cm}\begin{tabular}{c}\vspace{-0.2cm}
       \includegraphics[width=0.6cm]{_0+1+1+2s.png}
       \end{tabular}\hspace{-0.2cm})-
\red{\phi}(\includegraphics[width=0.6cm]{0+1+1+2s.png})$
\\[0.4cm]
$[\delta \red{\phi}](\includegraphics[width=0.6cm]{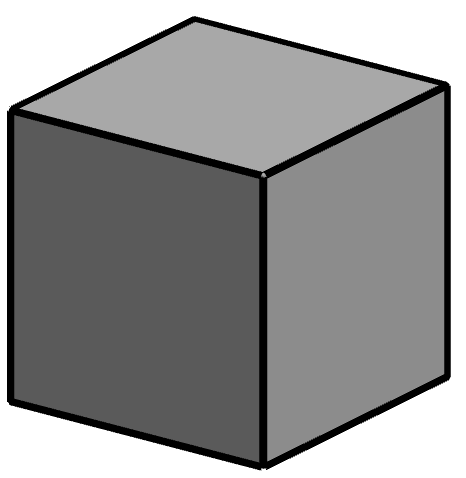})
=
\red{\phi}(\includegraphics[width=0.6cm]{0+0+1+2s.png})-
\red{\phi}(\includegraphics[width=0.6cm]{1+2s.png})-
\red{\phi}(\includegraphics[width=0.6cm]{0+1+1+2s.png})+
\red{\phi}(\includegraphics[width=0.6cm]{0+2s.png})+
\red{\phi}(\includegraphics[width=0.6cm]{0+1+2+2s.png}) -
\red{\phi}(\includegraphics[width=0.6cm]{0+1s.png})$
\end{tabular}
\end{tabular}
\caption{Boundary and coboundary (see Definition~\ref{def-boundary}). A \red{non-boundary 3-dimesional} face (to the left) is shown again by \red{dashed} lines (to the right). The face \red{at} which a particular function is evaluated is in bold. The signs in the expression for $\partial \red{\phi}$ are different from the ones in~\eqref{eq-Maxwell-1} because the latter depicts \red{the different equation $\partial \# F=0$ introduced in~\S\ref{ssec-Electrodynamics}.}}
\label{fig-3D}
\end{figure}
\varedit{clarity}{}{-2.5}

\begin{remark} \label{rem-form}
\edit{R2P2}{(For specialists)
The values of a $k$-cochain at the $k$-dimensional faces
with a common maximal vertex $v$ discretize the components of a $k$-form $\sum_{i_1<\dots<i_k} \phiup_{i_1,\dots,i_k}\,\mathrm{dx}_{i_1}\wedge\dots\wedge \mathrm{dx}_{i_k}$ at the point~$v$.
The special choice of orientations makes the coboundary consistent with the exterior derivative. The coboundary treats different spatial directions differently just like the exterior derivative.
}
\end{remark}

Informally, a Lagrangian is \emph{local} \clarity{or \emph{first-order}}, if its value at a vertex depends only on the values of
the field $\phi$ and the coboundary $\delta\phi$ at the faces for which the vertex is maximal. 
\emph{Partial derivatives} 
with respect to $\phi$ and $\delta\phi$ are fields of dimension $k$ and $k+1$ respectively, obtained by differentiating the Lagrangian as if $\phi$ and $\delta\phi$ were independent variables.
The precise definition is as follows. \remove{clarity}{}

\begin{definition} \label{def-local-particular}
A Lagrangian $\mathcal{L}\colon C^k({M};\mathbb{R})\to C_0({M};\mathbb{R})$ is \emph{local}, if for each vertex~$v\in M$ there is a smooth function $L_v(\phi_1,\dots,\phi_{p},\phi'_1,\dots,\phi'_{q})\in \mathrm{C}^1(\mathbb{R}^{p+q})$
such that for each $\phi \in C^k({M};\mathbb{R})$ we have
\begin{equation}\label{eq-local}
\mathcal{L}[\phi](v)=L_v(
\phi(e_1),\dots,\phi(e_p),[\delta\phi](f_1),\dots,[\delta\phi](f_q)),
\end{equation}
where $e_1,\dots,e_p$ and $f_1,\dots,f_q$ are
all the faces of dimension $k$ and $k+1$ respectively
with the maximal vertex $v$; see Figure~\ref{fig-grid9}.
Define
$$
\frac{\partial \mathcal{L}[\phi]}{\partial\phi}\in C_k({M};\mathbb{R})\qquad\text{and}\qquad \frac{\partial \mathcal{L}[\phi]}{\partial(\delta\phi)}\in C_{k+1}({M};\mathbb{R})
$$
by the following formulae for each $l=1,\dots,p$ and $m=1,\dots,q$:
\begin{align}\label{eq-Lagrangian-derivatives}
\frac{\partial \mathcal{L}[\phi]}{\partial\phi}({e}_{l})&:=
\frac{\partial L_v }{\partial\phi_l}(\phi(e_1),\dots,\phi(e_p),[\delta\phi](f_1),\dots,[\delta\phi](f_q)),\\
\label{eq-Lagrangian-derivatives2}
\frac{\partial \mathcal{L}[\phi]}{\partial(\delta\phi)}({f}_{m})&:=
\frac{\partial L_v }{\partial\phi'_{m}}(\phi(e_1),\dots,\phi(e_p),[\delta\phi](f_1),\dots,[\delta\phi](f_q)).
\end{align}
\end{definition}

The following theorem is a straightforward \gram{generalization} of known ones; cf.~\cite[Eq. (5.2)]{Mardsen-etal-98}.

\begin{theorem}[The discrete Euler--Lagrange equations]\label{th-Euler-Lagrange} 
  Let $\mathcal{L}\colon C^k({M};\mathbb{R})\to C_0({M};\mathbb{R})$ be a local Lagrangian. Then a field $\phi\in C^k({M};\mathbb{R})$ is \edit{R1P4}{an extremal}, if and only if the following equation holds:
  \begin{equation}\label{eq-Euler-Lagrange}
    \partial\frac{\partial \mathcal{L}[\phi]}{\mathrm{\partial}(\delta\phi)}+\frac{\partial \mathcal{L}[\phi]}{\mathrm{\partial}\phi}=0.
  \end{equation}
\end{theorem}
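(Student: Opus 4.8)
The plan is to compute the first variation of the action functional and to read off the Euler--Lagrange equation as the condition that this variation vanish for all variations of the field. Write $S[\phi]=\epsilon(\mathcal{L}[\phi])$ for the action. By definition $\phi$ is on shell exactly when $\frac{d}{dt}\big|_{t=0}S[\phi+t\eta]=0$ for every $\eta\in C^k({M};\mathbb{R})$, so I would first expand this derivative. Since $\epsilon$ is merely summation over vertices and $t\mapsto\mathcal{L}[\phi+t\eta](v)$ is differentiable for each vertex $v$, the derivative equals $\sum_v\frac{d}{dt}\big|_{t=0}\mathcal{L}[\phi+t\eta](v)$.

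Next I would apply the chain rule vertex by vertex, where locality is essential. Each face of ${M}$ has a unique maximal vertex, and $\mathcal{L}[\phi](v)$ depends only on the values of $\phi$ and $\delta\phi$ at faces whose maximal vertex is $v$; hence $\phi(\sigma)$ feeds into the Lagrangian at exactly one vertex, and likewise $(\delta\phi)(\tau)$ for a $(k+1)$-face $\tau$. Recalling $\delta(\phi+t\eta)=\delta\phi+t\,\delta\eta$, the chain rule then gives
$$
\frac{d}{dt}\Big|_{t=0}S[\phi+t\eta]
=\sum_{\sigma}\Big[\tfrac{\partial\mathcal{L}[\phi]}{\partial\phi}\Big](\sigma)\,\eta(\sigma)
+\sum_{\tau}\Big[\tfrac{\partial\mathcal{L}[\phi]}{\partial(\delta\phi)}\Big](\tau)\,(\delta\eta)(\tau),
$$
where $\sigma$ and $\tau$ range over the $k$- and $(k+1)$-faces respectively, and the two coefficient cochains are exactly the partial-derivative fields of Definition~\ref{def-local}. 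Their very well-definedness as cochains is the statement that, by locality, each face contributes to a single vertex's Lagrangian.

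The decisive step is a discrete summation by parts. The coboundary $\delta$ and the boundary $\partial$ are adjoint for the natural pairing: for any $(k+1)$-cochain $A$ one has $\sum_\tau A(\tau)\,(\delta\eta)(\tau)=\sum_\sigma (\partial A)(\sigma)\,\eta(\sigma)$, which follows immediately from $(\delta\eta)(\tau)=\sum_\sigma[\sigma:\tau]\,\eta(\sigma)$ and $(\partial A)(\sigma)=\sum_\tau[\sigma:\tau]\,A(\tau)$ by interchanging the order of summation. Crucially, this identity carries no boundary term, since we sum over all faces of the finite complex and impose no boundary conditions. Applying it with $A=\tfrac{\partial\mathcal{L}[\phi]}{\partial(\delta\phi)}$ converts the second sum into $\sum_\sigma\big[\partial\tfrac{\partial\mathcal{L}[\phi]}{\partial(\delta\phi)}\big](\sigma)\,\eta(\sigma)$, so that
$$
\frac{d}{dt}\Big|_{t=0}S[\phi+t\eta]
=\sum_{\sigma}\Big(\Big[\partial\tfrac{\partial\mathcal{L}[\phi]}{\partial(\delta\phi)}\Big](\sigma)+\Big[\tfrac{\partial\mathcal{L}[\phi]}{\partial\phi}\Big](\sigma)\Big)\,\eta(\sigma).
$$

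Finally I would invoke the discrete fundamental lemma of the calculus of variations: taking $\eta$ to be the indicator cochain of a single $k$-face shows that the last expression vanishes for all $\eta$ if and only if its coefficient vanishes at every $k$-face, which is precisely~\eqref{eq-Euler-Lagrange}. This concluding step is trivial in the discrete setting, in sharp contrast to the continuum. I expect the only genuine obstacle to lie in the chain-rule step --- making the partial-derivative cochains precise and verifying that locality forces each face to contribute to exactly one vertex --- whereas the summation by parts and the fundamental lemma are immediate.
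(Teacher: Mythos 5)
Your proposal is correct and follows essentially the same route as the paper: a chain-rule expansion of the first variation, discrete summation by parts to move $\delta$ onto the coefficient cochain as $\partial$, and a nondegeneracy/fundamental-lemma step. The only cosmetic difference is that the paper organizes the computation locally via the cap-product Leibniz rule $\partial(\phi\frown\psi)=(-1)^{\dim\psi}(\partial\phi\frown\psi-\phi\frown\delta\psi)$ followed by $\epsilon\partial=0$ (so that the local identity can be reused for the Noether theorem), and tests nondegeneracy with $\Delta=\phi$ rather than with indicator cochains; your global adjointness $\sum_\tau A(\tau)(\delta\eta)(\tau)=\sum_\sigma(\partial A)(\sigma)\eta(\sigma)$ is exactly the $\epsilon$-integrated form of that identity.
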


(Here a plus sign stands because the boundary operator $\partial$ for $k=0$ discretizes \emph{minus} divergence.)

\begin{figure}[htb]
  \centering
  \begin{tabular}[b]{cc}
  \definecolor{qqqqff}{rgb}{0.,0.,1.}
\begin{tikzpicture}[line cap=round,line join=round,>=triangle 45,x=2.0cm,y=2.0cm]
\clip(-0.36642341680689944,-0.29922777559595914) rectangle (1.940432744999426,0.29150580906597495);
\draw [->,color=qqqqff] (0.,0.) -- (0.5,0.);
\draw [->,color=qqqqff] (0.5,0.) -- (1.,0.);
\draw [->,color=qqqqff] (1.,0.) -- (1.5,0.);
\begin{scriptsize}
\draw [fill=qqqqff] (0.,0.) circle (2.5pt);
\draw[color=qqqqff] (-0.009023166386201077,-0.13416986223453645) node {0};
\draw [fill=qqqqff] (0.5,0.) circle (2.5pt);
\draw[color=qqqqff] (0.47292565615080123,-0.12258685077057696) node {1};
\draw[color=qqqqff] (0.2617345990840249,0.07432434411673444) node {$\mathbf{1}$};
\draw [fill=qqqqff] (1.,0.) circle (2.5pt);
\draw[color=qqqqff] (0.9819502552348259,-0.1167953450385972) node {2};
\draw[color=qqqqff] (0.7653440428586451,0.07432434411673444) node {$\mathbf{2}$};
\draw [fill=qqqqff] (1.5,0.) circle (2.5pt);
\draw[color=qqqqff] (1.490974854318851,-0.12258685077057696) node {3};
\draw[color=qqqqff] (1.263538331323861,0.07432434411673444) node {$\mathbf{3}$};
\end{scriptsize}
\end{tikzpicture} 
  &
  \begin{tabular}[b]{cc}
  \red{$[\partial \psi](1)=\psi(\mathbf{1})-\psi(\mathbf{2})$;} &
  \red{$\,[\psi\frown \phi](\mathbf{1})=\psi(\mathbf{1})\phi(0)$;}
  \\
  \red{$\,[\delta \phi](\mathbf{1})=\phi(1)-\phi(0)$;} &
  \red{$[\psi\smile \phi](\mathbf{1})=\psi(\mathbf{1})\phi(1)$.}
  \end{tabular}
  \end{tabular}
  \caption{\red{Cochain operations on the $1$-dimensional grid $I^1_3$. Here $\phi$  and $\psi$ are functions on vertices and edges respectively. Bold font is used for edge numbers.} Cf.~Definitions~\ref{def-coboundary}, \ref{def-cap-grid}, \ref{def-cap-general}.
  }
  \label{fig-pipeline2}
\end{figure}
\varedit{R12P3}{}{-2}


The Noether theorem gives a \emph{conserved current} for each continuous symmetry of the Lagrangian. \edit{R12P3}{It is most nicely stated in terms of the \emph{cap product}
relying on the ordering of the vertices; see Figure~\ref{fig-pipeline2}.}

\begin{definition} \label{def-cap-grid}
\move{clarity}{}A \emph{current} is \red{an arbitrary function on edges}. A current \red{$j$} is \emph{conserved}, if $\partial j=0$.

\edit{R12P3}{The (particular case of) \emph{cap product} $\psi\frown\phi$ of functions $\psi$ and $\phi$ on $(k+1)$- and $k$-dimensional faces respectively is the function on edges 
$uv$
given by
$$
[\psi\frown \phi](uv)=(-1)^k\sum_{i=1}^{q}\lambda_i\psi(f_i)\phi(e'_i),
$$
where $u<v$;
$f_1,\dots,f_q$ are all the ($k+1$)-dimensional faces containing $uv$ and having the maximal vertex $v$;
$e'_i$ is the $k$-dimensional face of $f_i$ containing $u$ and not containing $v$ (see Figure~\ref{fig-grid9});
$$
\lambda_i:=\begin{cases}
+1, &\text{if $f_i$ and $e'_i$ are cooriented;}\\
-1, &\text{if $f_i$ and $e'_i$ are opposite oriented.}
\end{cases}
$$
}
\end{definition}

\begin{theorem}[Discrete Noether theorem] 
\label{th-Noether}
    Let $\mathcal{L}\colon C^k({M};\mathbb{R})\to C_0({M};\mathbb{R})$ be a local Lagrangian and
    $\phi\in C^k({M};\mathbb{R})$ be \edit{R1P4}{an extremal}. The Lagrangian is \emph{invariant under an infinitesimal transformation} $\Delta\in C^k({M};\mathbb{R})$, i.e.,
  \begin{equation}\label{eq-invariance}
  \left.\frac{\partial}
  {\partial t}\mathcal{L}[\phi+ t\Delta]\right|_{ t=0}=0,
  \end{equation}
  if and only if the following current is conserved: 
  \begin{equation}\label{eq-Noether-current}
  j[\phi]=\frac{\partial\mathcal{L}[\phi]}{\partial(\delta\phi)}\frown \Delta.
  \end{equation}
\end{theorem}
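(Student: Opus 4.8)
The plan is to reduce the claimed equivalence to a single pointwise identity of $0$-cochains that expresses $\partial j[\phi]$ through the first variation of the Lagrangian, valid on shell. Throughout, abbreviate $\psi:=\partial\mathcal{L}[\phi]/\partial(\delta\phi)$ and $\chi:=\partial\mathcal{L}[\phi]/\partial\phi$, so that the Noether current is $j[\phi]=\psi\frown\Delta$ and the Euler--Lagrange equation of Theorem~\ref{th-Euler-Lagrange} reads $\partial\psi+\chi=0$.

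First I would establish the \emph{first-variation formula}. Since $\mathcal{L}$ is local, the value $\mathcal{L}[\phi](v)$ at a vertex $v$ depends only on the values of $\phi$ and $\delta\phi$ at the faces whose maximal vertex is $v$. Replacing $\phi$ by $\phi+t\Delta$ turns $\delta\phi$ into $\delta\phi+t\,\delta\Delta$, and differentiating at $t=0$ by the chain rule---applied as though $\phi$ and $\delta\phi$ were independent variables, which is exactly how their partial derivatives are defined---gives, at each vertex,
\[
\left.\frac{\partial}{\partial t}\mathcal{L}[\phi+t\Delta]\right|_{t=0}=\chi\frown\Delta+\psi\frown\delta\Delta.
\]
The essential point is that the cap product $\frown$ sums precisely over the faces having $v$ as their maximal vertex (as in Figures~\ref{fig-notation1}--\ref{fig-grid}), so the right-hand side collects exactly the terms produced by the derivative; hence this is an identity of $0$-cochains, not merely of their sums $\epsilon(\cdot)$ over spacetime.

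Next I would prove the combinatorial \emph{Leibniz rule} for the cap product,
\[
\partial(\psi\frown\Delta)=(\partial\psi)\frown\Delta-\psi\frown\delta\Delta,
\]
the discrete counterpart of $\partial(a\frown b)=\partial a\frown b\pm a\frown\delta b$, which follows directly from the definitions of $\partial$, $\delta$, and $\frown$. I expect this step to be the main obstacle: one must track the signs of two cap products of different type---the dimension-raising product $\psi\frown\Delta$ producing the $1$-chain $j[\phi]$, versus the equal-degree products $(\partial\psi)\frown\Delta$ and $\psi\frown\delta\Delta$ producing $0$-chains---and verify that the sign conventions are mutually consistent, so that the overall sign in the Leibniz rule is indeed the one displayed (rather than $+$, which would not close the argument).

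Finally I would combine the two ingredients. On shell the Euler--Lagrange equation gives $\partial\psi=-\chi$, so substituting it into the Leibniz rule and then invoking the first-variation formula yields
\[
\partial j[\phi]=(\partial\psi)\frown\Delta-\psi\frown\delta\Delta=-\chi\frown\Delta-\psi\frown\delta\Delta=-\left.\frac{\partial}{\partial t}\mathcal{L}[\phi+t\Delta]\right|_{t=0},
\]
the overall minus sign being consistent with $\partial$ discretizing \emph{minus} divergence and being immaterial for what follows. Both sides are $0$-cochains, so $\partial j[\phi]=0$ at every vertex if and only if the first variation vanishes at every vertex, which is precisely the invariance condition~\eqref{eq-invariance}. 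This establishes both implications of the equivalence simultaneously.
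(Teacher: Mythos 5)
Your proposal is correct and follows essentially the same route as the paper: the first-variation formula plus the cap-product ``integration by parts'' identity is exactly the paper's Lemma on the Lagrangian functional derivative, and combining it with the Euler--Lagrange equation on shell gives $\partial j[\phi]=-(-1)^k\left.\tfrac{\partial}{\partial t}\mathcal{L}[\phi+t\Delta]\right|_{t=0}$ as a pointwise identity of $0$-chains, whence the equivalence. The only detail to fix is that your displayed Leibniz rule should carry the factor $(-1)^{\dim\Delta}=(-1)^k$ on the right-hand side, which---as you correctly anticipate---is immaterial for the ``vanishes iff vanishes'' conclusion.
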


This theorem is different from known discretizations of the Noether theorem in~\cite{Dorodnitsyn-04,Hydon-Mansfield-04,Kraus-15,Mardsen-etal-98}.

Discrete spacetime has no continuous symmetries, but there is still a corresponding conserved tensor. \emph{Conserved tensors} are functions on faces of the \emph{Cartesian square} ${M}\times {M}$ rather than of~$M$ itself; \clarity{see Figure~\ref{fig-notation-cross}}. We shall see that such functions appear naturally in examples in~\S\ref{sec-examples}. 


\varedit{R12P3\\moved text}{}{3.6}
\begin{figure}[htb]
    \begin{tabular}{cl}
    \begin{tabular}{c}
  \includegraphics[height=2.51cm]{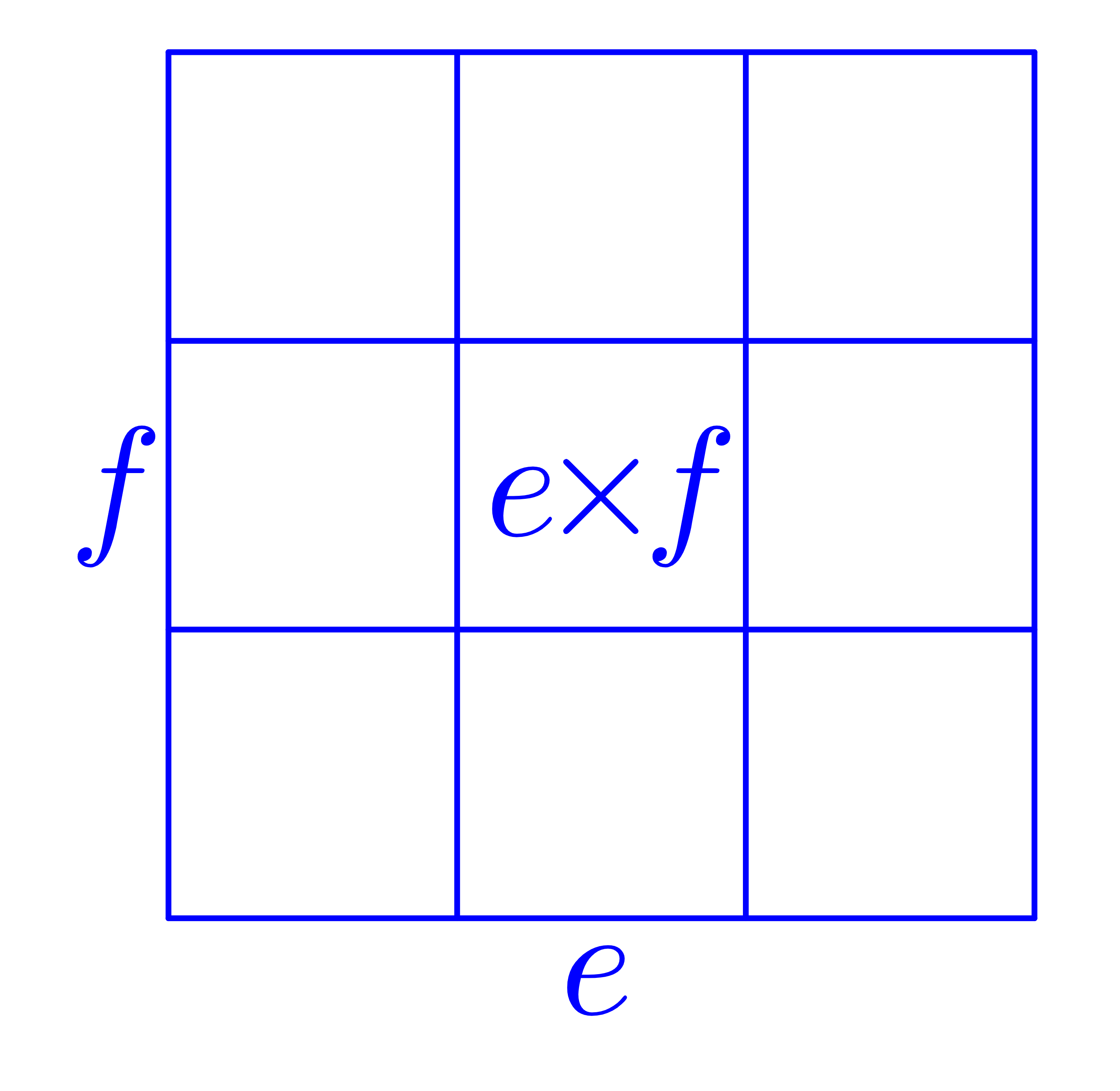}
  \end{tabular}
  &
  \begin{tabular}{l}
  \red{ $[\partial T](\includegraphics[width=1.0cm]{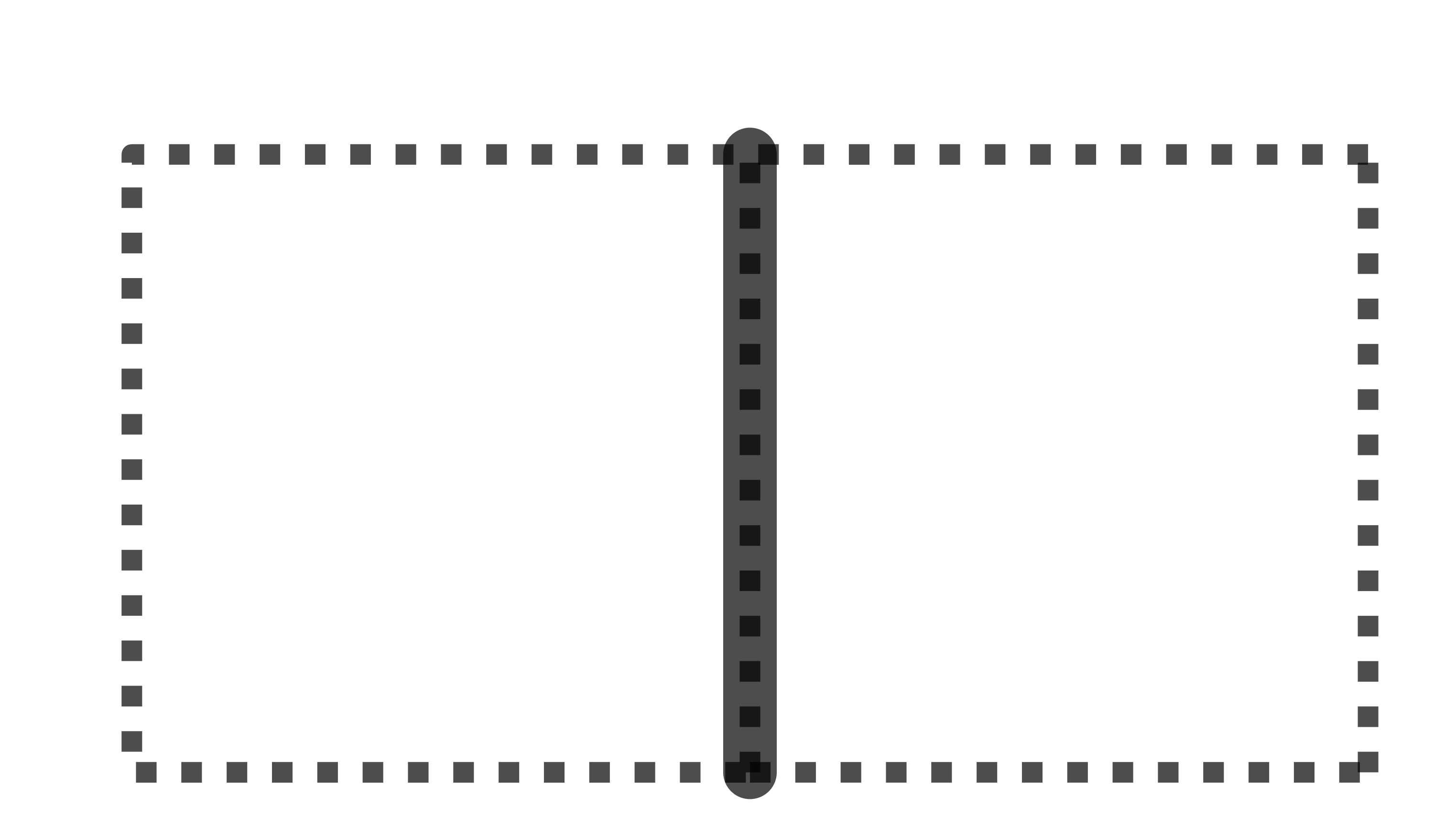}):=
  T(\includegraphics[width=1.0cm]{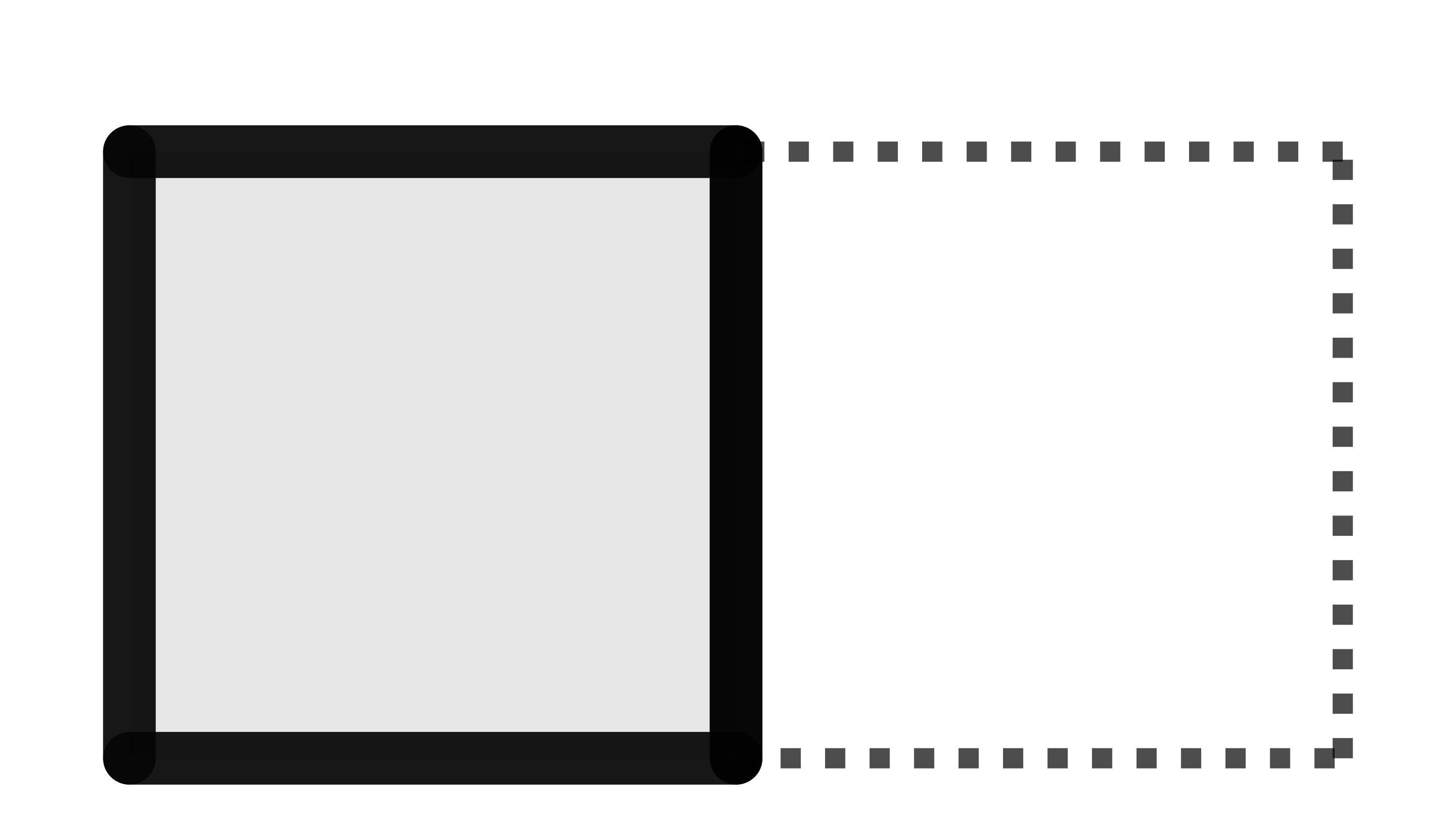})- T(\includegraphics[width=1.0cm]{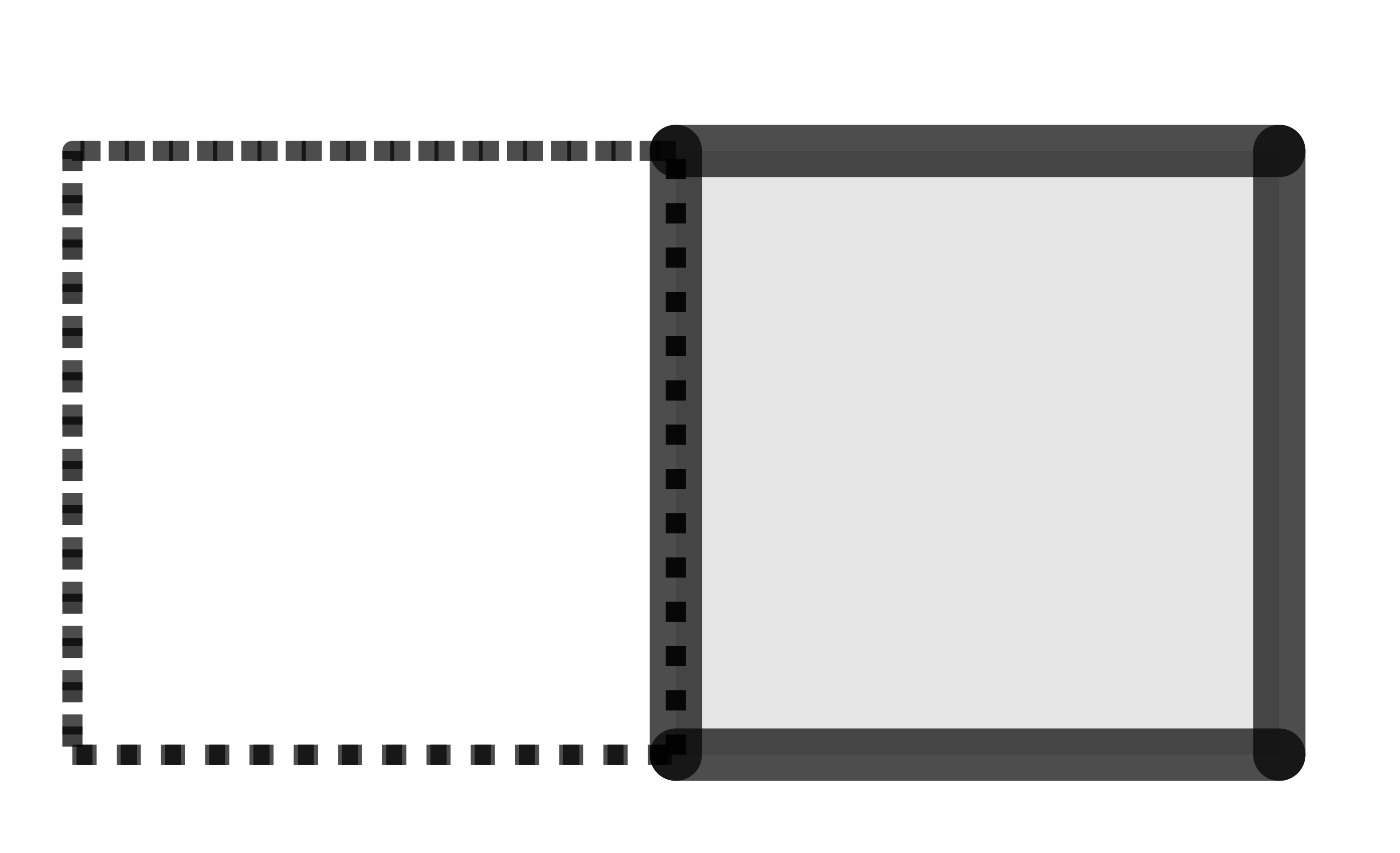})+
  T(\includegraphics[width=1.1cm]{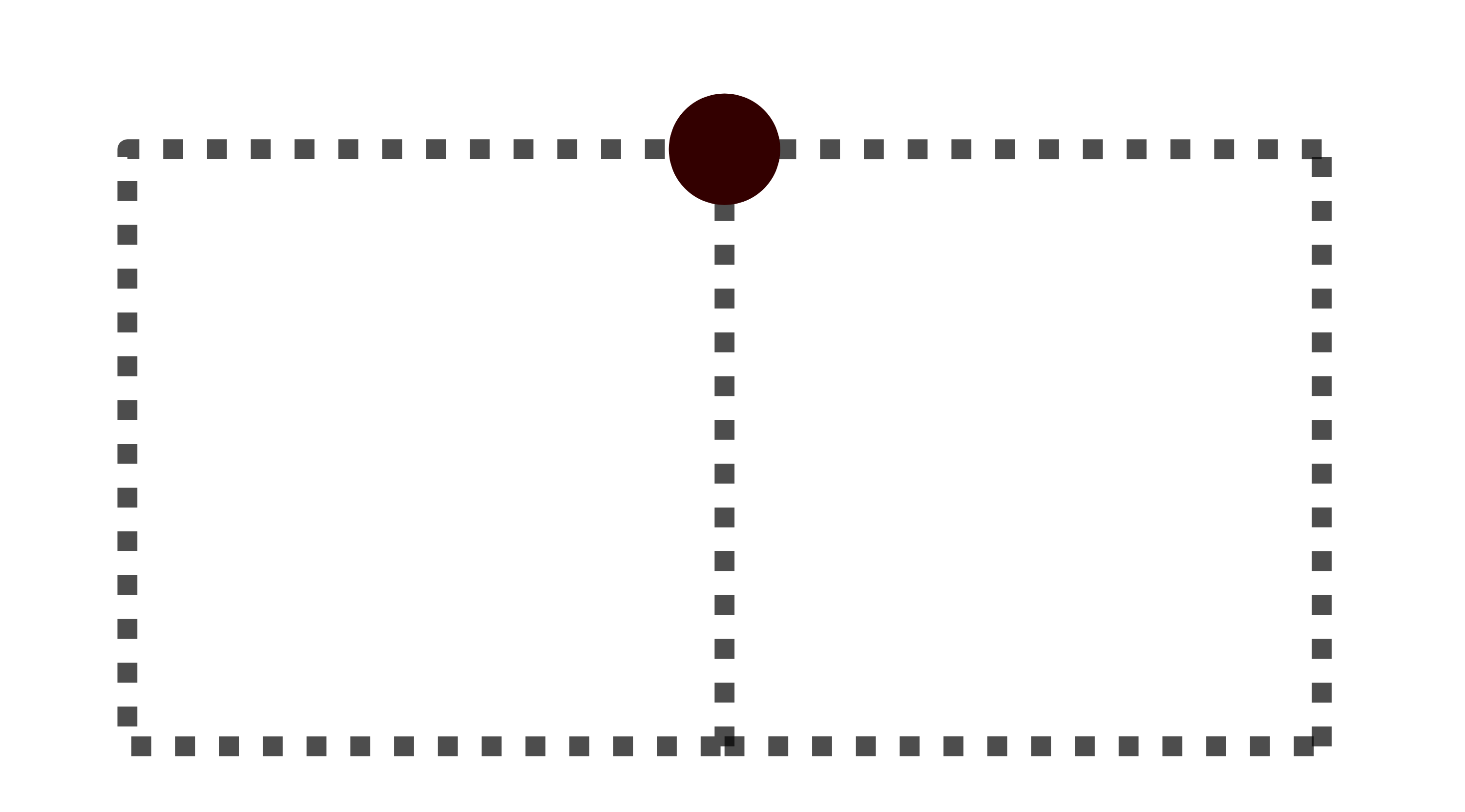})- T(\includegraphics[width=1.1cm]{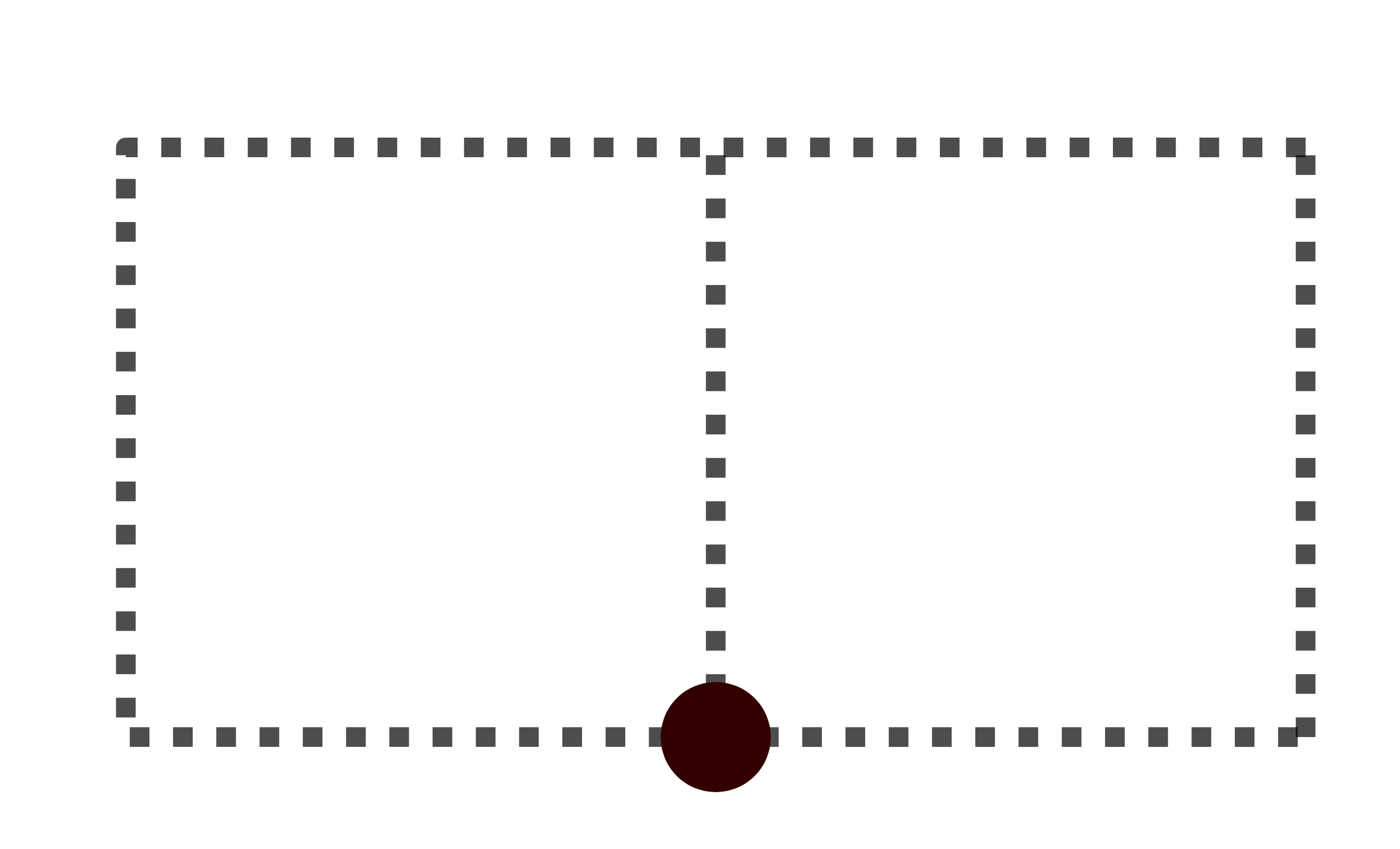})=0$ }
  \end{tabular}
  \end{tabular}
  \caption{\red{The Cartesian square of the $1$-dimensional grid $I_3^1$ and the equation of a conserved tensor; see Definition~\ref{def-tensor}.
  } This is a well-known discretization of the Cauchy--Riemann equations \cite[Eq.~(2.2)]{Bobenko-Skopenkov-13}, up to orientation. Thus tensor conservation means \red{one-half} of the Cauchy--Riemann equations (for \emph{vertical} edges only), like in \cite[Corollary~2.12(1)]{Chelkak-Smirnov-16}, although our setup is very different from theirs.
  }
  \label{fig-notation-cross}
\end{figure}

\begin{definition}\label{def-tensor}
  \move{R12P3}{}Let $I^d_N\times I^d_N$ be the Cartesian square of the $d$-dimensional grid. It is a $2d$-dimensional grid with the faces of the form $e\times f$, where $e$ and $f$ are faces of $I^d_N$ of arbitrary dimension.

  A \emph{tensor of type} $(q,1)$, where $q=1$ or $0$,
  is a function on all faces $e\times f$ \remove{clarity}{of $M\times M$}
  such that $\dim f-\dim e=1-q$. The \emph{chain-cochain \gram{cross product}} of fields $\phi$ and $\psi$ with $\dim\phi-\dim \psi=1-q$ is the tensor
  $$
  [\psi\times \phi](e\times f)=
  \begin{cases}
  \psi(e)\phi(f), &\mbox{if $\dim e=\dim\psi$ and $\dim f=\dim \phi$;}\\
  0, &\mbox{if $\dim e\ne \dim\psi$ or $\dim f\ne \dim\phi$.}
  \end{cases}
  $$
  The \emph{boundary} operator $\partial$ is the unique linear map between the spaces of type $(1,1)$ and $(0,1)$ tensors
  such that for each fields
  $\phi,\psi$ with $\dim\phi=\dim \psi$ we have
  $$
  \partial(\psi\times \phi)=\partial \psi\times \phi+\psi\times\delta \phi.
  $$
  \clarity{(Beware that this is \emph{not} the boundary operator on $I^{2d}_N$.)}
  A type $(1,1)$ tensor $T$ is \emph{conserved}, if $\partial T=0$.
\end{definition}

\begin{theorem}[Energy-momentum conservation]\label{th-energy-conservation}
For each local Lagrangian $\mathcal{L}\colon C^k({M};\mathbb{R})\to C_0({M};\mathbb{R})$ and each
\edit{R1P4}{extremal} $\phi\in C^k({M};\mathbb{R})$ the following \emph{energy-momentum} tensor \clarity{is conserved:} 
  \begin{equation}\label{eq-energy-conservation}
  T[\phi]=\frac{\partial\mathcal{L}[\phi]}{\partial(\delta\phi)}\times \delta\phi +\frac{\partial\mathcal{L}[\phi]}{\partial\phi}\times \phi.
  \end{equation}
\end{theorem}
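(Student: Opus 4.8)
The plan is to verify conservation directly from the Euler--Lagrange equation~\eqref{eq-Euler-Lagrange}, invoking no symmetry whatsoever; this symmetry-free character is exactly what distinguishes the statement from the Noether Theorem~\ref{th-Noether}. Abbreviate $P := \partial\mathcal{L}[\phi]/\partial(\delta\phi)\in C^{k+1}$ and $Q := \partial\mathcal{L}[\phi]/\partial\phi\in C^{k}$, so that $T = P\times\delta\phi + Q\times\phi$. The two summands occupy different bidegrees on $M\times M$: the first lies in $(k+1,k+1)$, the second in $(k,k)$. By Definition~\ref{def-tensor}, conservation of the type-$(1,1)$ tensor $T$ amounts to the algebraic identity $\partial_1 T + \delta_2 T = 0$, where $\partial_1$ is the boundary taken in the first (contravariant) factor and $\delta_2$ the coboundary taken in the second (covariant) factor.

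The first step is to record the one-sided Leibniz rules for the cross-product of Figure~\ref{fig-notation1}, namely $\partial_1(a\times b) = (\partial a)\times b$ and $\delta_2(a\times b) = a\times(\delta b)$, up to a sign fixed by the convention. The computation then splits by bidegree. In bidegree $(k-1,k)$ the only contribution is $\partial_1(Q\times\phi) = (\partial Q)\times\phi$; since the Euler--Lagrange equation gives $Q = -\partial P$, this equals $-(\partial\partial P)\times\phi = 0$ by $\partial\circ\partial = 0$. In bidegree $(k,k+1)$ the operator $\delta_2$ annihilates the first summand because $\delta(\delta\phi) = \delta^2\phi = 0$, so what remains is $(\partial P)\times\delta\phi + Q\times\delta\phi = (\partial P + Q)\times\delta\phi$.

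This last expression is $0$ by the Euler--Lagrange equation~\eqref{eq-Euler-Lagrange}, which reads precisely $\partial P + Q = 0$. Assembling the bidegree pieces gives $\partial_1 T + \delta_2 T = 0$, so $T[\phi]$ is conserved. The whole argument rests on a single on-shell relation, $\partial P + Q = 0$, together with the two nilpotency identities $\partial^2 = 0$ and $\delta^2 = 0$. Heuristically, the second factor of the cross-product occupies the slot that the infinitesimal transformation $\Delta$ occupies in Noether's Theorem~\ref{th-Noether}; here that slot is filled by the field itself, so the ``variation of the action'' becomes an automatic second-factor coboundary instead of something that must be assumed to vanish.

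I expect the sole genuine obstacle to be sign bookkeeping: confirming that the Leibniz rule for $\delta_2$ against the paper's particular cross-product carries the sign that Definition~\ref{def-tensor} has built into its notion of conservation, so that the two cross-terms in bidegree $(k,k+1)$ combine as $(\partial P + Q)\times\delta\phi$ rather than $(\partial P - Q)\times\delta\phi$. Beyond that, every step is forced by the two nilpotency relations and the single Euler--Lagrange identity.
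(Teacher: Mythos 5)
Your proof is correct and is essentially the paper's own argument: both apply the defining Leibniz rule $\partial(\psi\times\phi)=\partial\psi\times\phi+\psi\times\delta\phi$ from Definition~\ref{def-tensor}, then kill every term using $\partial\partial=\delta\delta=0$ together with the single on-shell relation $\partial P+Q=0$ from Theorem~\ref{th-Euler-Lagrange}. Your bidegree bookkeeping is merely a reorganization of the paper's direct substitution $Q=-\partial P$ followed by cancellation, and your worry about signs is moot since Definition~\ref{def-tensor} puts a plus sign on both Leibniz terms.
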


\edit{R2P1}{The notion of discrete 
tensors extends~\cite{Forman-02}; see Remark~\ref{rem-Forman}.} This theorem is completely new.
\remove{R12P3}{}

\subsection{\red{Summit}}\label{ssec-summit}

\edit{R2P1}{We conclude the summary of main results with the most technical one: an integral form of energy conservation,} sketched already in~\S\ref{ssec-quick}. \move{R2P1}{} To tensor~\eqref{eq-energy-conservation} defined on $M\times M$ we \red{now} assign a conserved quantity defined on the grid $M$ itself. This allows \gram{us} to compare tensors with their continuum \red{analogs.}

\move{R2P1}{}
\begin{definition}\label{def-flow}
Let $\mathrm{e}_k$, where $k=0,\dots,d-1$, be the vector of length $\frac{1}{2}$ pointing in the direction of the \clarity{$x_k$-}axis. Each linear combination of such vectors with coefficients \red{in} the set $\{0,1,\dots, 2N\}$ is the center of a unique face of $I^d_N$. We use the same notation for a face $f$ and its center. In particular, $f+\mathrm{e}_k$ denotes the face with the center at the point obtained from the center of $f$ by \clarity{the shift} by the vector $\mathrm{e}_k$ (the dimensions of $f$ and $f+\mathrm{e}_k$ always \red{differ} by $1$). A \emph{hyperface} is a $(d-1)$-dimensional face. 
\remove{clarity}{}

A type $(1,1)$ tensor $T$ is \emph{partially symmetric}, if
$T(e\times f)=T(f\times e)$ for each pair of faces $e\parallel f$ (hereafter we set $e\parallel f$, if $e$ and $f$ are vertices).
Take a partially symmetric tensor $T$,
a \gram{non-boundary} hyperface $h$,
a number $k\in\{0,\dots, d-1\}$,
and the unique $l\in\{0,\dots, d-1\}$ such that 
$\mathrm{e}_l\perp h$.
Then the \emph{$k$-th component of the flux of 
$T$ across $h$ 
in the positive normal direction} is
\begin{align*}
\langle T,h\rangle_k&=
\frac{1}{2}\sum_{\substack{f:f\subset h,f\ni\max h;\\
f\parallel \mathrm{e}_k \text{ for }h\parallel \mathrm{e}_k}}
  (-1)^{\mathrm{dim}\,\mathrm{Pr}(f,k,l)+l+1}\cdot
\begin{cases}
T((f+\mathrm{e}_l-\mathrm{e}_k)\times f)
+T((f+\mathrm{e}_l+\mathrm{e}_k)\times f), & \mbox{if }h\parallel \mathrm{e}_k; \\
T(f\times f)-T((f+\mathrm{e}_k)\times (f-\mathrm{e}_k)), & \mbox{if } h\perp \mathrm{e}_k,
\end{cases}
\end{align*}
where the sum is over faces $f$ of arbitrary dimension (we set $f\not\parallel\mathrm{e}_k$, if $f$ is a vertex), and
$\mathrm{Pr}(f,k,l)$ is the orthogonal projection of $f$ to the linear span of all $\mathrm{e}_m$ with $\min\{k,l\}\le m\le\max\{k,l\}$.

\clarity{The \emph{$k$-th component of the flux of $T$ across the boundary of a $d$-dimensional face $g$} is
$$\langle T,\partial g\rangle_k:=\sum_{h\subset\partial g \text{ cooriented with }g}\langle T,h\rangle_k
-\sum_{h\subset\partial g \text{ oriented opposite to }g}\langle T,h\rangle_k.$$
}
%
\end{definition}

\begin{theorem}[Integral energy-momentum conservation] \label{prop-global-momentum-conservation}
\clarity{If a partially symmetric type $(1,1)$ tensor $T$ is conserved, then for each $d$-dimensional face $g$ disjoint with $\partial{I}^d_N$ and each $k$ 
we get $\langle T,\partial g\rangle_k=0$.}
\end{theorem}

\clarity{In particular, if tensor~\eqref{eq-energy-conservation} is partially symmetric, then its flux across any closed hypersurface composed of hyperfaces vanishes exactly.}
\move{R2P1}{}
In many examples, \eqref{eq-Noether-current}--\eqref{eq-energy-conservation} approximate their continuum analogues; see Theorem~\ref{th-networks-convergence}, Proposition\ifarxiv{s}{}~\ref{th-Maxwell-approximation}, \ifarxiv{\ref{th-Klein-Gordon-approximation}, \ref{th-Dirac-approximation}, and}{,} 
Remark~\ref{rem-curvature}\ifarxiv{}{, \edit{R2P1}{and \cite[Propositions~A.2.4 and A.3.6]{Skopenkov-preprint}}}. \red{Thus we have established the discretization principles from~\S\ref{s:intro}.}

\subsection{Limitations}\label{ssec-limitations}

So far the proposed general discrete field theory has no applications (as a mathematical theory) and is not refutable (as a candidate for a fundamental physical theory).

Most of the technical issues concern the discretization of energy conservation and tensor calculus:

On the one hand, the new notion of energy-momentum tensor~\eqref{eq-energy-conservation} seems to be too abstract and too general. It discretizes not the continuum energy-momentum tensor precisely but a related object mapped to the latter; see Remark~\ref{rem-tensor}.
Depending on a particular Lagrangian, \eqref{eq-energy-conservation} approximates either the nonsymmetric canonical energy-momentum tensor, or the symmetric Belinfante--Rosenfeld one, or even a nonconserved tensor; see Remark~\ref{rem-energy}.

\edit{R2P1}{}
On the other hand, discrete non-antisymmetric tensor calculus from \S\S\ref{ssec-statements}\red{--}\ref{ssec-summit} seems to be too restrictive: it includes only type $(1,1)$ tensors and only the trivial connection; 
\clarity{the flux} is defined only on a grid. The way of further generalization is unclear: e.g., for lattice gauge theory from \S\ref{ssec-gauge}, a naive way to define a real \remove{clarity}{}
energy-momentum tensor leads to a nonconserved tensor; cf.~Remark~\ref{rem-curvature}.

Concerning approximation of continuum theories by discrete ones, only the following warm-up results are proved: First, for electrical networks the known approximation result is recalled in \S\ref{ssec-Networks}. Second, for the completely new discrete energy-momentum tensor the continuum limit is found in \S\ref{sec-examples}.

Some other limitations are stated as open problems in \S\ref{sec-open}.

\subsection{Overview}

In \S\ref{sec-examples} we give basic examples of discrete field theories. It contains an exposition of known results and a few new ones for nonspecialists; \S\ref{sec-examples} is independent from~\S\ref{s:intro}
\edit{R12P3}{(except that Definitions~\ref{def-cochain}, \ref{def-boundary}, and~\ref{def-tensor} are cited and used
in \S\ref{ssec-Electrodynamics} after they become motivated).} In \S\ref{sec-general} we state the main results in full generality. The only prerequisites for \S\ref{sec-general} are
Definitions~\edit{R12P3}{}\ref{def-cochain}, \ref{def-boundary}, \ref{def-gauge}, \ref{def-projection}, \ref{def-curvature}.
In \S\ref{s:proofs} we prove the results of \S\S\ref{s:intro}--\ref{sec-general}. In \S\ref{sec-open} we state open problems. \edit{R12P3}{More examples are given in}~\ifarxiv{\S\ref{sec-more-examples}}{\cite[\S A]
{Skopenkov-preprint}}.

The paper is written in a mathematical level of rigor, i.e., all the definitions, conventions, and theorems (including corollaries, propositions, lemmas) should be understood literally. Theorems remain true, even if cut out from the text.
The proofs of theorems use the statements but not the proofs of the other ones. Most statements are much less technical than the proofs \edit{R12P3}{and the order of statements is different from the logical order of proofs; thus} the proofs are kept in a separate section. Remarks are informal and are not used elsewhere (hence skippable) unless the opposite is explicitly indicated. 

\remove{R12P3}{} 

\section{Examples}
\label{sec-examples}

\subsection{Electrical networks}
\label{ssec-Networks}

\subsubsection*{Basic model}

\edit{clarity}{We start with the simplest discrete field theory to illustrate and motivate the main concepts.}
Consider an $N\times N$ grid of unit resistors; see Figure~\ref{fig-grid}.
A standard problem is to find currents in the grid, given the current sources at the boundary.
It is solved using the following mathematical model.
\gram{}

\begin{definition} \label{def-epsilon}
Each of the $N^2$ unit squares of the \clarity{$N\times N$} grid is called
a \emph{face}. Orient the boundary $\partial f$ of each face $f$ counterclockwise. Assume that the coordinate axes are parallel to the edges, and orient edges in the directions of the axes. A \emph{function on vertices/edges/faces} is a real-valued function defined on the set of vertices/edges/faces of the grid.

A \emph{source} $s$ is a function on vertices vanishing at all the \gram{non-boundary} vertices.
The \emph{current generated}
by the source $s$, or the \edit{R1P4}{\emph{stationary}} \emph{current}, is the function on edges satisfying two equations:
\begin{itemize}
    \item\label{eq-Kirchhoff} \emph{the Kirchhoff current law or charge conservation law}: $\partial j=-s$;
    \item \emph{the Kirchhoff voltage law in the case of unit resistances}: $\delta j=0$.
\end{itemize}
Here the \emph{boundary} $\partial j$ and the \emph{coboundary} $\delta j$ of a function $j$ on edges are the functions on vertices and faces respectively given by the following formulae (see Figure~\ref{fig-grid} to the middle and the right): \remove{grammar}{}
\begin{align*}
[\partial j](v)&=\sum_{e\textrm{ ending at }v}j(e)-\sum_{e\textrm{ starting at }v}j(e),\\
[\delta j](f)&=\sum_{e\textrm{ oriented along }\partial f}j(e)-\sum_{e\textrm{ oriented opposite to }\partial f}j(e),
\end{align*}
for each vertex $v$ and face $f$, where the sums are over edges $e$ containing $v$ and contained in $\partial f$ respectively. 
Denote by $\epsilon s:=\sum_{v} s (v)$ the sum over all vertices~$v$ (the operator $\epsilon$ is defined only for functions on vertices).
\end{definition}

\begin{figure}[!t] 
\vspace{-0.3cm}
  \centering
  \begin{tabular}{cc}
   {\small\definecolor{qqqqff}{rgb}{0.,0.,1.}
\definecolor{ffqqqq}{rgb}{1.,0.,0.}
\begin{tikzpicture}[line cap=round,line join=round,>=triangle 45,x=0.2cm,y=0.2cm]
\draw[->,color=black] (-1.36,0.) -- (11.46,0.);
\foreach \x in {,3.,6.,9.}
\draw[shift={(\x,0)},color=black] (0pt,2pt) -- (0pt,-2pt);
\draw[->,color=black] (0.,-1.28) -- (0.,10.56);
\foreach \y in {,3.,6.,9.}
\draw[shift={(0,\y)},color=black] (2pt,0pt) -- (-2pt,0pt);
\clip(-1.36,-1.28) rectangle (11.46,10.56);
\draw [->,line width=0.1pt,color=qqqqff] (0.,0.) -- (0.,3.);
\draw [->,line width=0.1pt,color=qqqqff] (0.,6.) -- (0.,9.);
\draw [->,line width=0.1pt,color=qqqqff] (0.,3.) -- (0.,6.);
\draw [->,line width=0.1pt,color=qqqqff] (0.,0.) -- (3.,0.);
\draw [->,line width=0.1pt,color=qqqqff] (0.,3.) -- (3.,3.);
\draw [->,line width=0.1pt,color=qqqqff] (0.,6.) -- (3.,6.);
\draw [->,line width=0.1pt,color=qqqqff] (0.,9.) -- (3.,9.);
\draw [->,line width=0.1pt,color=qqqqff] (3.,0.) -- (3.,3.);
\draw [->,line width=0.1pt,color=qqqqff] (3.,3.) -- (3.,6.);
\draw [->,line width=0.1pt,color=qqqqff] (3.,6.) -- (3.,9.);
\draw [->,line width=0.1pt,color=qqqqff] (3.,0.) -- (6.,0.);
\draw [->,line width=0.1pt,color=qqqqff] (6.,0.) -- (6.,3.);
\draw [->,line width=0.1pt,color=qqqqff] (3.,3.) -- (6.,3.);
\draw [->,line width=0.1pt,color=qqqqff] (3.,6.) -- (6.,6.);
\draw [->,line width=0.1pt,color=qqqqff] (6.,3.) -- (6.,6.);
\draw [->,line width=0.1pt,color=qqqqff] (3.,9.) -- (6.,9.);
\draw [->,line width=0.1pt,color=qqqqff] (6.,6.) -- (6.,9.);
\draw [->,line width=0.1pt,color=qqqqff] (6.,0.) -- (9.,0.);
\draw [->,line width=0.1pt,color=qqqqff] (9.,0.) -- (9.,3.);
\draw [->,line width=0.1pt,color=qqqqff] (6.,3.) -- (9.,3.);
\draw [->,line width=0.1pt,color=qqqqff] (6.,6.) -- (9.,6.);
\draw [->,line width=0.1pt,color=qqqqff] (9.,3.) -- (9.,6.);
\draw [->,line width=0.1pt,color=qqqqff] (6.,9.) -- (9.,9.);
\draw [->,line width=0.1pt,color=qqqqff] (9.,6.) -- (9.,9.);
\draw [shift={(1.52,1.48)},line width=0.8pt,dotted,color=ffqqqq]  plot[domain=0.7216548508647611:5.81953769817878,variable=\t]({1.*0.6660330322138684*cos(\t r)+0.*0.6660330322138684*sin(\t r)},{0.*0.6660330322138684*cos(\t r)+1.*0.6660330322138684*sin(\t r)});
\draw [->,line width=0.1pt,color=ffqqqq] (1.7888377107094406,0.8706345223919335) -- (2.12,1.18);
\draw [shift={(4.52,1.48)},line width=0.8pt,dotted,color=ffqqqq]  plot[domain=0.7216548508647611:5.81953769817878,variable=\t]({1.*0.6660330322138684*cos(\t r)+0.*0.6660330322138684*sin(\t r)},{0.*0.6660330322138684*cos(\t r)+1.*0.6660330322138684*sin(\t r)});
\draw [shift={(7.52,1.48)},line width=0.8pt,dotted,color=ffqqqq]  plot[domain=0.7216548508647611:5.81953769817878,variable=\t]({1.*0.6660330322138684*cos(\t r)+0.*0.6660330322138684*sin(\t r)},{0.*0.6660330322138684*cos(\t r)+1.*0.6660330322138684*sin(\t r)});
\draw [shift={(1.52,4.48)},line width=0.8pt,dotted,color=ffqqqq]  plot[domain=0.7216548508647611:5.81953769817878,variable=\t]({1.*0.6660330322138684*cos(\t r)+0.*0.6660330322138684*sin(\t r)},{0.*0.6660330322138684*cos(\t r)+1.*0.6660330322138684*sin(\t r)});
\draw [shift={(1.52,7.48)},line width=0.8pt,dotted,color=ffqqqq]  plot[domain=0.7216548508647611:5.81953769817878,variable=\t]({1.*0.6660330322138684*cos(\t r)+0.*0.6660330322138684*sin(\t r)},{0.*0.6660330322138684*cos(\t r)+1.*0.6660330322138684*sin(\t r)});
\draw [shift={(4.52,4.48)},line width=0.8pt,dotted,color=ffqqqq]  plot[domain=0.7216548508647611:5.81953769817878,variable=\t]({1.*0.6660330322138684*cos(\t r)+0.*0.6660330322138684*sin(\t r)},{0.*0.6660330322138684*cos(\t r)+1.*0.6660330322138684*sin(\t r)});
\draw [shift={(7.52,4.48)},line width=0.8pt,dotted,color=ffqqqq]  plot[domain=0.7216548508647611:5.81953769817878,variable=\t]({1.*0.6660330322138684*cos(\t r)+0.*0.6660330322138684*sin(\t r)},{0.*0.6660330322138684*cos(\t r)+1.*0.6660330322138684*sin(\t r)});
\draw [shift={(4.52,7.48)},line width=0.8pt,dotted,color=ffqqqq]  plot[domain=0.7216548508647611:5.81953769817878,variable=\t]({1.*0.6660330322138684*cos(\t r)+0.*0.6660330322138684*sin(\t r)},{0.*0.6660330322138684*cos(\t r)+1.*0.6660330322138684*sin(\t r)});
\draw [shift={(7.52,7.48)},line width=0.8pt,dotted,color=ffqqqq]  plot[domain=0.7216548508647611:5.81953769817878,variable=\t]({1.*0.6660330322138684*cos(\t r)+0.*0.6660330322138684*sin(\t r)},{0.*0.6660330322138684*cos(\t r)+1.*0.6660330322138684*sin(\t r)});
\draw [->,line width=0.1pt,color=ffqqqq] (1.7888377107094406,3.8706345223919336) -- (2.12,4.18);
\draw [->,line width=0.1pt,color=ffqqqq] (1.7888377107094406,6.870634522391933) -- (2.12,7.18);
\draw [->,line width=0.1pt,color=ffqqqq] (4.788837710709441,0.8706345223919335) -- (5.12,1.18);
\draw [->,line width=0.1pt,color=ffqqqq] (7.788837710709441,0.8706345223919335) -- (8.12,1.18);
\draw [->,line width=0.1pt,color=ffqqqq] (7.788837710709441,3.8706345223919336) -- (8.12,4.18);
\draw [->,line width=0.1pt,color=ffqqqq] (7.788837710709441,6.870634522391933) -- (8.12,7.18);
\draw [->,line width=0.1pt,color=ffqqqq] (4.788837710709441,6.870634522391933) -- (5.12,7.18);
\draw [->,line width=0.1pt,color=ffqqqq] (4.788837710709441,3.8706345223919336) -- (5.12,4.18);
\begin{scriptsize}
\draw [color=ffqqqq] (0.,0.) circle (1.5pt);
\draw [color=ffqqqq] (0.,3.) circle (1.5pt);
\draw [color=ffqqqq] (0.,6.) circle (1.5pt);
\draw [color=ffqqqq] (0.,9.) circle (1.5pt);
\draw [color=ffqqqq] (3.,0.) circle (1.5pt);
\draw [fill=qqqqff] (3.,3.) circle (1.5pt);
\draw [fill=qqqqff] (3.,6.) circle (1.5pt);
\draw[color=qqqqff] (3.52,6.77) node {d};
\draw [color=ffqqqq] (3.,9.) circle (1.5pt);
\draw[color=qqqqff] (3.52,3.79) node {a};
\draw [color=ffqqqq] (6.,0.) circle (1.5pt);
\draw [fill=qqqqff] (6.,3.) circle (1.5pt);
\draw [fill=qqqqff] (6.,6.) circle (1.5pt);
\draw[color=qqqqff] (6.58,6.79) node {c};
\draw[color=qqqqff] (6.54,3.79) node {b};
\draw [color=ffqqqq] (6.,9.) circle (1.5pt);
\draw [color=ffqqqq] (9.,0.) circle (1.5pt);
\draw [color=ffqqqq] (9.,3.) circle (1.5pt);
\draw [color=ffqqqq] (9.,6.) circle (1.5pt);
\draw [color=ffqqqq] (9.,9.) circle (1.5pt);
\draw[color=ffqqqq] (4.46,4.45) node {$f$};
\draw[color=ffqqqq] (7.38,4.31) node {$g$};
\end{scriptsize}
\end{tikzpicture} }&
  \begin{tabular}{cc}\\[-3cm] 
   {\small\definecolor{qqqqff}{rgb}{0.,0.,1.}
\definecolor{ffqqqq}{rgb}{1.,0.,0.}
\begin{tikzpicture}[line cap=round,line join=round,>=triangle 45,x=0.2cm,y=0.2cm]
\clip(-3.46,-0.44) rectangle (3.44,6.34);
\draw [->,line width=0.1pt,color=qqqqff] (-3.,3.) -- (0.,3.);
\draw [->,line width=0.1pt,color=qqqqff] (0.,0.) -- (0.,3.);
\draw [->,line width=0.1pt,color=qqqqff] (0.,3.) -- (0.,6.);
\draw [->,line width=0.1pt,color=qqqqff] (0.,3.) -- (3.,3.);
\begin{scriptsize}
\draw [color=ffqqqq] (-3.,3.) circle (1.5pt);
\draw [fill=qqqqff] (0.,0.) circle (1.5pt);
\draw [fill=qqqqff] (0.,3.) circle (1.5pt);
\draw[color=qqqqff] (0.58,3.77) node {$v$};
\draw[color=qqqqff] (-1.84,2.59) node {$\mathbf{3}$};
\draw [color=ffqqqq] (0.,6.) circle (1.5pt);
\draw[color=qqqqff] (-0.46,1.43) node {$\mathbf{4}$};
\draw[color=qqqqff] (-0.46,4.65) node {$\mathbf{2}$};
\draw [fill=qqqqff] (3.,3.) circle (1.5pt);
\draw[color=qqqqff] (1.48,2.61) node {$\mathbf{1}$};
\end{scriptsize}
\end{tikzpicture} } &
   {\small\definecolor{ffqqqq}{rgb}{1.,0.,0.}
\definecolor{qqqqff}{rgb}{0.,0.,1.}
\begin{tikzpicture}[line cap=round,line join=round,>=triangle 45,x=0.2cm,y=0.2cm]
\clip(-3.76,-0.92) rectangle (0.96,3.7);
\draw [->,line width=0.1pt,color=qqqqff] (-3.,0.) -- (-3.,3.);
\draw [->,line width=0.1pt,color=qqqqff] (-3.,0.) -- (0.,0.);
\draw [->,line width=0.1pt,color=qqqqff] (-3.,3.) -- (0.,3.);
\draw [->,line width=0.1pt,color=qqqqff] (0.,0.) -- (0.,3.);
\draw [shift={(-1.48,1.48)},dotted,color=ffqqqq]  plot[domain=0.7216548508647611:5.81953769817878,variable=\t]({1.*0.6660330322138684*cos(\t r)+0.*0.6660330322138684*sin(\t r)},{0.*0.6660330322138684*cos(\t r)+1.*0.6660330322138684*sin(\t r)});
\draw [->,line width=0.1pt,color=ffqqqq] (-1.0275222870382181,0.9912629344188479) -- (-0.88,1.18);
\begin{scriptsize}
\draw [fill=qqqqff] (-3.,0.) circle (1.5pt);
\draw [fill=qqqqff] (-3.,3.) circle (1.5pt);
\draw[color=qqqqff] (-3.5,1.53) node {$\mathbf{4}$};
\draw [fill=qqqqff] (0.,0.) circle (1.5pt);
\draw[color=qqqqff] (-1.48,-0.47) node {$\mathbf{1}$};
\draw [fill=qqqqff] (0.,3.) circle (1.5pt);
\draw[color=qqqqff] (-1.54,3.33) node {$\mathbf{3}$};
\draw[color=qqqqff] (0.26,1.49) node {$\mathbf{2}$};
\draw[color=ffqqqq] (-1.62,1.43) node {$f$};
\end{scriptsize}
\end{tikzpicture} }\\
  $[\partial j](v)=-j(\mathbf{1})-j(\mathbf{2})+j(\mathbf{3})+j(\mathbf{4})$; &
  $[\delta j](f)=j(\mathbf{1})+j(\mathbf{2})-j(\mathbf{3})-j(\mathbf{4})$;\\
  $[\phi\frown\psi](v)=\phi(\mathbf{3})\psi(\mathbf{3})+\phi(\mathbf{4})\psi(\mathbf{4})$; &
  $[\phi\smile\psi](f)=\phi(\mathbf{1})\psi(\mathbf{2})-\phi(\mathbf{4})\psi(\mathbf{3})$;
  \end{tabular}
  \end{tabular}
  \vspace{-0.5cm}
  \caption{A $3\times 3$ grid, boundary, coboundary; cap and \red{cup product} (see Definitions~\ref{def-epsilon}, \ref{def-cap}, \ref{def-cup})} \label{fig-grid}
\end{figure}

The following existence and uniqueness result is well-known.

\begin{proposition}\label{prop-current-existence}
A current generated by a source $s$ exists, if and only if $\epsilon s=0$. If a current generated by the source $s$ exists, then it is unique.
\end{proposition}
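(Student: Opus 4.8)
The plan is to reduce the whole statement to standard facts about the discrete Laplacian $\Delta:=\partial\delta$ on the connected grid, the one structural ingredient being the adjointness of $\partial$ and $\delta$.

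First I would dispose of necessity. For \emph{any} function $j$ on edges one has $\epsilon(\partial j)=\sum_v[\partial j](v)=0$, since in this sum each edge $e$ occurs once with coefficient $+j(e)$ (at its head) and once with $-j(e)$ (at its tail), and the two cancel. Applying $\epsilon$ to the Kirchhoff current law $\partial j=-s$ therefore gives $\epsilon s=0$, so this condition is necessary.

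The algebraic heart of the argument is the integration-by-parts identity $\sum_e[\delta\phi](e)\,j(e)=\sum_v\phi(v)\,[\partial j](v)$ for a function $\phi$ on vertices and $j$ on edges, checked directly from the definitions by regrouping the edge-sum according to endpoints; it says exactly that $\partial$ is adjoint to $\delta$. Consequently $\Delta:=\partial\delta$ is self-adjoint and positive semidefinite, with $\langle\Delta\phi,\phi\rangle=\sum_e[\delta\phi](e)^2$, so $\ker\Delta=\{\phi:\delta\phi=0\}$ is exactly the constants (the grid being connected). For existence, assume $\epsilon s=0$; by the Fredholm alternative for the self-adjoint operator $\Delta$, the Poisson equation $\Delta\phi=-s$ is solvable precisely when $-s$ is orthogonal to $\ker\Delta$, i.e. when $\sum_v s(v)=\epsilon s=0$, which holds. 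Fixing such a $\phi$ and setting $j:=\delta\phi$ gives $\partial j=\Delta\phi=-s$ and $\delta j=\delta\delta\phi=0$, so $j$ is a current generated by $s$.

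For uniqueness I would take the difference $j:=j_1-j_2$ of two currents generated by $s$, so that $\partial j=0$ and $\delta j=0$. The condition $\delta j=0$ is discrete curl-freeness, and since the grid is simply connected the discrete Poincar\'e lemma yields $j=\delta\phi$ for some $\phi$; concretely one sets $\phi(v)$ equal to the signed sum of $j$ along any monotone lattice path from a fixed corner to $v$, the value being path-independent because any two such paths differ by boundaries of unit faces, on each of which $\delta j=0$. Integration by parts then gives $\sum_e j(e)^2=\sum_e[\delta\phi](e)\,j(e)=\sum_v\phi(v)[\partial j](v)=0$, whence $j=0$ and $j_1=j_2$. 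The only genuinely topological, and hence the only delicate, step is this appeal to simple connectivity of the grid (equivalently $H^1=0$) in the Poincar\'e lemma; everything else is linear algebra plus the one-line parts formula.
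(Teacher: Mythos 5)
Your proof is correct. Note that the paper itself offers no proof of this proposition --- it is stated as ``well-known'' and used as a black box (e.g.\ in the proof of Theorem~\ref{th-networks-convergence}) --- so there is no argument of the author's to compare against. Your three steps are the standard ones and are all sound: necessity from $\epsilon\partial=0$; existence from the Fredholm alternative for the self-adjoint operator $\partial\delta$, whose kernel is the constants on the connected grid, so that its image is exactly $\{\,s:\epsilon s=0\,\}$; and uniqueness by writing a difference $j$ with $\partial j=\delta j=0$ as $\delta\phi$ via the discrete Poincar\'e lemma (valid here since the grid is simply connected, with path-independence reduced to elementary square moves) and then pairing, $\sum_e j(e)^2=\langle\phi,\partial j\rangle=0$. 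The only point worth stating explicitly is that every vertex of the grid is reachable from the chosen corner by a monotone lattice path, which is immediate.
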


\begin{remark} It could be more conceptual to write the Kirchhoff voltage law in the form $\delta Rj=0$, where $R$ is a map between $1$-chains and $1$-cochains depending on the resistances. In our setup, chains and cochains are identified and the resistances equal $1$, hence $R$ is the identity map and is omitted.
\end{remark}

\subsubsection*{Electrical potential}

Let us state a least-action principle for electrical networks. Throughout \S\ref{ssec-Networks} $j$ is a \edit{R1P4}{stationary} current.

\begin{definition}
An \emph{electrical potential} $\phi$ is a function on vertices satisfying
\begin{itemize}
\item\label{eq-Ohm} \emph{the Ohm law in the case of unit resistances}: $j=-\delta \phi$.
\end{itemize}
Here the \emph{coboundary} $\delta\phi$ is the function on edges given by the formula
$$
[\delta\phi](uv)=\phi(v)-\phi(u),
$$
where $uv$ denotes an oriented edge starting at $u$ and ending at $v$ hereafter.
\end{definition}

The following well-known existence and uniqueness result is straightforward.

\begin{proposition}
  For each
  \edit{R1P4}{stationary} current there is a unique up to additive constant electrical potential.
\end{proposition}

The following properties of an electrical potential $\phi$ may serve as equivalent definitions:
\begin{itemize}
  \item\emph{the Laplace equation with the Neumann boundary condition}: $\partial\delta\phi=s$;
  \item\emph{the least action principle}: among all the functions on vertices, $\phi$ minimizes the functional
      \begin{align*}
      \mathcal{S}[\phi]&=\frac{1}{2}\sum_{\text{edges } uv}(\phi(u)-\phi(v))^2-\sum_{\text{vertices } v}s(v)\phi(v)=
      \epsilon\mathcal{L}[\phi], \qquad \text{where}\\
      \mathcal{L}[\phi]&=\tfrac{1}{2}\delta\phi\frown \delta\phi-s\frown \phi.
      \end{align*}
\end{itemize}

Here the (particular case of) cap-product $\frown$ is defined as follows; see Figure~\ref{fig-grid} to the middle.

\begin{definition} \label{def-cap}
Denote by $\max f$ the vertex of a face $f$ or an edge $f$ having the maximal sum of the coordinates. Set $\max f:=f$, if $f$ is a vertex. The \emph{cap-product} $\phi\frown\psi$ of two functions $\phi$ and $\psi$ on faces (respectively, edges or vertices) is the function on vertices given by
$$
[\phi\frown \psi](v)=\sum_{f\,:\,\max f=v}\phi(f)\psi(f),
$$
where the sum is over faces (respectively, edges or vertices) $f$ such that $\max f=v$.
\end{definition}

\begin{remark}
\edit{R2P1}{The Euler--Lagrange equation (given by Theorem~\ref{th-Euler-Lagrange}) for the Lagrangian $\mathcal{L}[\phi]$ is the Laplace equation. Apart the grid boundary,}\move{R2P1}{} the Lagrangian is invariant under the transformation $\phi\mapsto\phi-t$, where $t\in\mathbb{R}$. The resulting conserved current (given by Theorem~\ref{th-Noether}) is \red{$j=-\delta\phi$.}
\end{remark}

\subsubsection*{Magnetic field}


There is one more discrete field in an electrical network: the current $j$ generates a magnetic field.

\begin{definition} 
A \emph{magnetic field} $F$ (or \emph{magnetic flux through faces in the $(0,0,-1)$-direction}) \emph{generated} by a 
current $j$ is a function on faces satisfying the following equation apart the 
boundary: \remove{clarity}{}
\begin{itemize}
  \item \emph{the Amp\`ere law in the case of unit-area faces}: $-\partial F=j$.
\end{itemize}
Here the \emph{boundary} $\partial F$ is the function on edges given by the formula
$$
[\partial F](e)=F(f)-F(g)
$$
for each pair of adjacent faces $f$ and $g$
such that $\partial f$ (respectively, $\partial g$) is oriented along (respectively, opposite to) the common edge $e$; see Figure~\ref{fig-grid} to the left. (The definition of $[\partial F](e)$ for boundary edges $e$ is not required for this subsection.) \remove{R12P3}{}
\end{definition}

The following well-known existence and uniqueness result is straightforward.

\begin{proposition}\label{prop-magnetic-existence}
For 
\edit{R1P4}{a stationary} current there is a unique up to 
additive constant magnetic field.
\end{proposition}

Throughout \S\ref{ssec-Networks} the functions $\phi$ and $F$ are an electrical potential and a magnetic field respectively.

\begin{remark}
  The pair $(\phi,F)$ and $-j$ \clarity{discretize} an analytic function
  and its derivative
  \cite{Chelkak-Smirnov-08,Bobenko-Skopenkov-13}.
\end{remark}

\begin{definition}
A \emph{magnetic 
vector potential} $A$ of the field $F$ is a function on edges such that  $\delta A=F$.
\end{definition}

A magnetic \gram{vector potential} $A$ has the following properties
(proved similarly to the ones from~\S\ref{ssec-Electrodynamics}):
\begin{itemize}
  \item\emph{the source equation}: $-\partial\delta A=j$ apart the grid boundary;
  \item\emph{gauge invariance}: $A+\delta g$ is a \gram{vector potential} of the same field for any function $g$ on vertices;
  \item\emph{the least action principle}: among all functions on edges, $A$ minimizes  $\mathcal{S}[A]=\epsilon\mathcal{L}[A]$, where
      $$
      \mathcal{L}[A]=\tfrac{1}{2}\delta A\frown \delta A+j\frown A.
      $$
\end{itemize}

\subsubsection*{Energy and momentum}

Let us state energy and momentum conservation in an electrical network in a simple heuristic form. This is a visual motivation for more abstract \edit{R12P3}{Definition~\ref{def-tensor} (not used in this subsection).}  

For functions $\phi$, $\psi$ on faces (respectively, edges or vertices), denote by $\langle\phi,\psi\rangle=\sum_f \phi(f)\psi(f)$ the sum over all faces~\clarity{$f$} (respectively, edges or vertices). The obvious identity $\langle\delta\phi,j\rangle=\langle\phi,\partial j\rangle$ implies 
\begin{itemize}
\item \emph{the Tellegen theorem or \emph{global} energy conservation}: $\langle\delta\phi,j\rangle+\langle\phi,s\rangle=0$.
\end{itemize}

Now we study \emph{local} conservation and the flow of energy. 
Energy flows in the direction of the Poynting vector, hence \emph{transversely to} (not along) the resistors.
Thus we assign energy flow to \emph{bisectors} of edges. The cross-product formula for the Poynting vector is then discretized directly.

\begin{definition} \label{def-doubling}
The \emph{doubling} is the $2N\times 2N$ grid with the vertices at vertices, edge midpoints, and face centers of the initial $N\times N$ grid. Orient all the edges still in the direction of the coordinate axes.

The \emph{heat power} $W$ is the function on the vertices $v$ of the doubling given by the formula (Figure~\ref{fig-SW})
$$
W(v)=
\begin{cases}
  -[\delta\phi](e)j(e), & \mbox{if $v$ is the midpoint of an edge }e;  \\
  0, & \mbox{if $v$ is the center of a face or a vertex of the initial grid}.
\end{cases}
$$
The \emph{Poynting vector} or \emph{energy flux} $S$ is the function on edges $uv$ of the doubling, $\max uv=v$, given by
$$
S(uv)=
\begin{cases}
[\delta\phi](e)F(f), &\mbox{$u$ and $v$ are the centers of a vertical edge $e$ and a face $f$ or vice versa;}\\
-[\delta\phi](e)F(f), &\mbox{$u$ and $v$ are the centers of a horizontal edge $e$ and a face $f$ or vice versa;}\\
0, & \mbox{$u$ or $v$ is a vertex of the initial grid}.
\end{cases}
$$
The \emph{Lorentz force} $L$ is defined analogously to $S$, only $\delta\phi$ is replaced by $-j/2$ (thus $L=S/2$ in our basic model). The \emph{magnetic pressure} $P$ (or \emph{momentum flux of the magnetic field towards the edges in the normal direction}) is the function on \gram{non-boundary} vertices $v$ of the doubling given by the formula
$$
P(v)=
\begin{cases}
  F(f)F(f)/2, & \mbox{if $v$ is the center of a face }f;  \\
  F(f)F(g)/2, & \mbox{if $v$ is the midpoint of the common edge of faces $f$ and $g$};\\
  0,        & \mbox{if $v$ is a vertex of the initial grid}.  \\
\end{cases}
$$
\end{definition}

\begin{figure}[htb]
  \centering
  \vspace{-0.5cm}
  \begin{tabular}{cl}
  \begin{tabular}{l}
  \includegraphics[width=2.5cm]{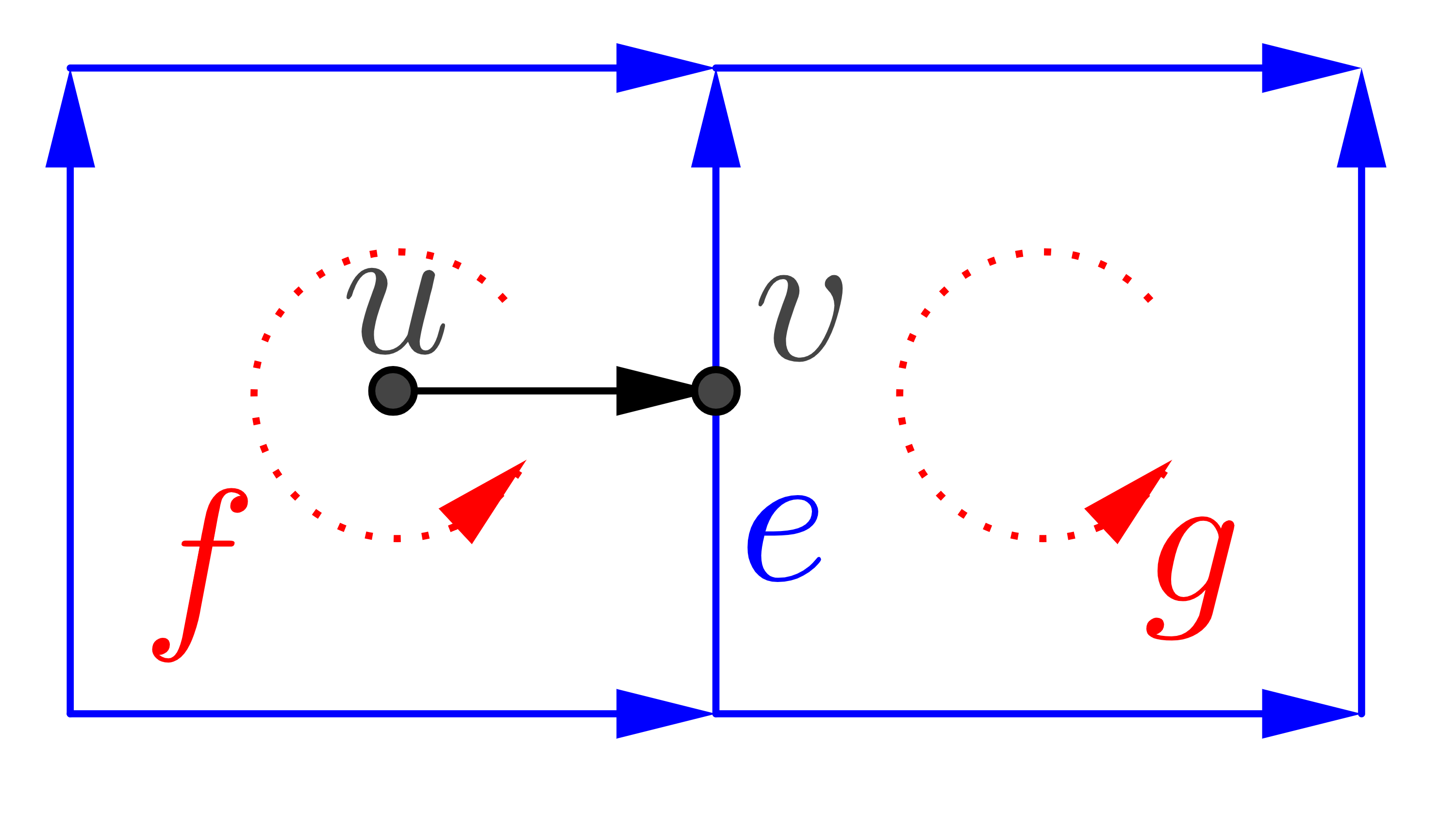}
  \end{tabular}
&
\begin{tabular}{l}
$W(\includegraphics[width=0.4cm]{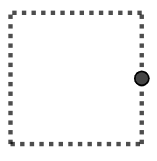})=
[\delta\phi](\includegraphics[width=0.4cm]{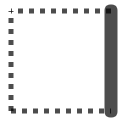})\cdot
j(\includegraphics[width=0.4cm]{square-r.png})$\\
$S(\includegraphics[width=0.4cm]{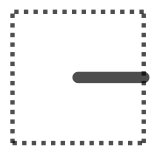})=
[\delta\phi](\includegraphics[width=0.4cm]{square-r.png})\cdot
F(\includegraphics[width=0.4cm]{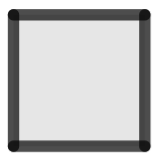})$
\end{tabular}
\end{tabular}
  \caption{Notation in Definition~\ref{def-doubling} of discrete energy flux. The face $f$ is shown by dotted lines to the right. The vertex, edge, or face \red{at} which a particular function is evaluated is shown in bold.}
  \label{fig-SW}
\end{figure}

\gram{The} straightforward consequences of these definitions and the Kirchhoff laws are:
\begin{itemize}
  \item \emph{Energy conservation}: $\partial S-W=0$. 
  \item \emph{Momentum conservation for the magnetic field}: $\delta P+L=0$ on those edges of the doubling that contain the face-centers of the initial grid.
\end{itemize}

In \S\ref{ssec-Electrodynamics} we introduce a more conceptual form of the two laws, explaining the latter restriction.

Now we state a less visual momentum conservation law for the \emph{electric} field. This is essentially \cite[Example in \S8 \red{of Ch.~III}]{Dorodnitsyn-04}. One expects the following properties of the momentum flux $\sigma(e)$ across edges $e$ of the initial grid (the latter property \clarity{is} required by the discretization principles from~\S\ref{s:intro}):
\begin{itemize}
\item
$\sigma(e)$ equals the momentum flux of a continuum electric field across $e$, if the potential is linear;
\item
$\sigma(e)$ depends only on the values of $\delta\phi$ at the edges intersecting $e$ and is bilinear in these values;
\item
$\delta\sigma=0$ apart the grid boundary: the momentum flux across the boundary of each face  vanishes.
\end{itemize}
The simplest function $\sigma$ satisfying these properties is defined as follows; cf.~Figure~\ref{fig-notation}. \remove{R12P3}

\begin{definition}\label{def-Maxwell-stress}
The \emph{momentum flux of the electric field across edges in the negative normal direction}, or \emph{the electric part of the Maxwell stress tensor}, is the pair $\sigma=(\sigma_1,\sigma_2)$ of functions on edges disjoint with the grid boundary given by the following formula for each $k=1,2$:
$$
\sigma_k(uv)=
\frac{(-1)^{k+1}}{2}
\begin{cases}
  \delta\phi(uu_+)\delta\phi(uv)+\delta\phi(vv_+)\delta\phi(uv), & \mbox{if }uv\parallel Ox_k;\\
  \delta\phi(uv)\delta\phi(uv)-\delta\phi(vv_+)\delta\phi(v_-v), & \mbox{if }uv\perp Ox_k,
\end{cases}
$$
where $uu_+$, $v_-v$, $vv_+$ are the edges orthogonal to $uv$ with the maximal vertices $u_+$, $v$, $v_+$; see Figure~\ref{fig-notation}.
\end{definition}

\begin{figure}[htb]
  \centering
  \begin{tabular}{cl}
\begin{tabular}{l}
  \definecolor{qqqqff}{rgb}{0.,0.,1.}
\definecolor{cqcqcq}{rgb}{0.7529411764705882,0.7529411764705882,0.7529411764705882}
\begin{tikzpicture}[line cap=round,line join=round,>=triangle 45,x=1.2cm,y=1.2cm]
\draw [color=cqcqcq,, xstep=0.6cm,ystep=0.6cm] (-0.06317423003158845,-0.07915055778072938) grid (1.870035745713,1.8494208509685262);
\draw[->,color=black] (-0.06317423003158845,0.) -- (1.870035745713,0.);
\foreach \x in {,0.5,1.,1.5}
\draw[shift={(\x,0)},color=black] (0pt,2pt) -- (0pt,-2pt);
\draw[color=black] (1.6913356639214834,0.023166022927918986) node [anchor=south west] { $x_1$};
\draw[->,color=black] (0.,-0.07915055778072938) -- (0.,1.8494208509685262);
\foreach \y in {,0.5,1.,1.5}
\draw[shift={(0,\y)},color=black] (2pt,0pt) -- (-2pt,0pt);
\draw[color=black] (0.02707576996841159,1.693050196205073) node [anchor=west] { $x_2$};
\clip(-0.06317423003158845,-0.07915055778072938) rectangle (1.870035745713,1.8494208509685262);
\draw [->,line width=0.8pt,color=qqqqff] (1.,0.5) -- (1.,1.);
\draw [->,line width=0.8pt,color=qqqqff] (0.5,1.) -- (1.,1.);
\draw [->,line width=0.8pt,color=qqqqff] (1.,1.) -- (1.5,1.);
\draw [->,line width=0.8pt,color=qqqqff] (1.,0.5) -- (1.5,0.5);
\begin{scriptsize}
\draw [fill=qqqqff] (1.,0.5) circle (2.5pt);
\draw[color=qqqqff] (1.160650572540616,0.6071428714588706) node {$u$};
\draw [fill=qqqqff] (1.,1.) circle (2.5pt);
\draw[color=qqqqff] (1.1660657265342982,1.1515444102649668) node {$v$};
\draw [fill=qqqqff] (1.5,0.5) circle (2.5pt);
\draw[color=qqqqff] (1.6832129329309597,0.6563706701806985) node {$u_+$};
\draw [fill=qqqqff] (0.5,1.) circle (2.5pt);
\draw[color=qqqqff] (0.6705791361123659,1.1544401631309567) node {$v_-$};
\draw [fill=qqqqff] (1.5,1.) circle (2.5pt);
\draw[color=qqqqff] (1.7102887028993712,1.1370656459350175) node {$v_+$};
\end{scriptsize}
\end{tikzpicture}\quad
  \definecolor{qqqqff}{rgb}{0.,0.,1.}
\definecolor{cqcqcq}{rgb}{0.7529411764705882,0.7529411764705882,0.7529411764705882}
\begin{tikzpicture}[line cap=round,line join=round,>=triangle 45,x=1.2cm,y=1.2cm]
\draw [color=cqcqcq,, xstep=0.6cm,ystep=0.6cm] (-0.06858938402527076,-0.09652507497666793) grid (1.6588447399593886,1.76833977072081);
\draw[->,color=black] (-0.06858938402527076,0.) -- (1.6588447399593886,0.);
\foreach \x in {,0.5,1.,1.5}
\draw[shift={(\x,0)},color=black] (0pt,2pt) -- (0pt,-2pt);
\draw[color=black] (1.4801446581678723,0.02316602292791898) node [anchor=south west] { $x_1$};
\draw[->,color=black] (0.,-0.09652507497666793) -- (0.,1.76833977072081);
\foreach \y in {,0.5,1.,1.5}
\draw[shift={(0,\y)},color=black] (2pt,0pt) -- (-2pt,0pt);
\draw[color=black] (0.027075769968411604,1.6119691159573568) node [anchor=west] { $x_2$};
\clip(-0.06858938402527076,-0.09652507497666793) rectangle (1.6588447399593886,1.76833977072081);
\draw [->,line width=0.8pt,color=qqqqff] (0.5,1.) -- (1.,1.);
\draw [->,line width=0.8pt,color=qqqqff] (1.,0.5) -- (1.,1.);
\draw [->,line width=0.8pt,color=qqqqff] (0.5,1.) -- (0.5,1.5);
\draw [->,line width=0.8pt,color=qqqqff] (1.,1.) -- (1.,1.5);
\begin{scriptsize}
\draw [fill=qqqqff] (0.5,1.) circle (2.5pt);
\draw[color=qqqqff] (0.6299654811597484,1.1167953758730886) node {$u$};
\draw [fill=qqqqff] (1.,1.) circle (2.5pt);
\draw[color=qqqqff] (1.1444051105595685,1.111003870141109) node {$v$};
\draw [fill=qqqqff] (1.,0.5) circle (2.5pt);
\draw[color=qqqqff] (1.1362823795690453,0.6447876587167394) node {$v_-$};
\draw [fill=qqqqff] (0.5,1.5) circle (2.5pt);
\draw[color=qqqqff] (0.6435033661439541,1.6351351388852757) node {$u_+$};
\draw [fill=qqqqff] (1.,1.5) circle (2.5pt);
\draw[color=qqqqff] (1.1633581495374568,1.6235521274213163) node {$v_+$};
\end{scriptsize}
\end{tikzpicture}\quad
\end{tabular}
&
\begin{tabular}{l}
$\sigma_2(\includegraphics[width=0.4cm]{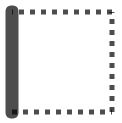})=
-\frac{1}{2}\left[
\delta\phi(\includegraphics[width=0.4cm]{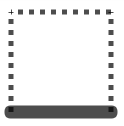})\cdot \delta\phi(\includegraphics[width=0.4cm]{square-l.png})+
\delta\phi(\includegraphics[width=0.4cm]{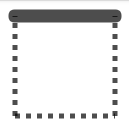})\cdot \delta\phi(\includegraphics[width=0.4cm]{square-l.png})
\right]$
\\[0.4cm]
$\sigma_2(\includegraphics[width=0.4cm]{square-d.png})=
-\frac{1}{2}\left[
\delta\phi(\includegraphics[width=0.4cm]{square-d.png})\cdot \delta\phi(\includegraphics[width=0.4cm]{square-d.png})-
\delta\phi(\includegraphics[width=0.4cm]{square-r.png})\cdot \delta\phi(
\hspace{-0.3cm}\begin{tabular}{c}\vspace{-0.3cm}
\includegraphics[width=0.4cm]{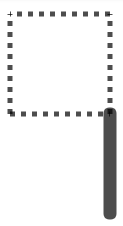}
\end{tabular}\hspace{-0.2cm})
\right]$
\end{tabular}
\end{tabular}
  \caption{Notation in Definition~\ref{def-Maxwell-stress} of discrete momentum flux. The square $uvv_+u_+$ is shown by dotted lines to the right. The edge \red{at} which a particular function is evaluated is 
  in bold. Cf.~\S\ref{ssec-quick}.}
  \label{fig-notation}
\end{figure}

\begin{corollary}[Momentum conservation for the electric field] \label{cor-networks-momentum-conservation} (Cf.~\cite[Example in \S8 \red{of Ch.~III}]{Dorodnitsyn-04}.) For each 
\remove{clarity}{} electric potential $\phi$  we have $\delta\sigma_1=\delta\sigma_2=0$ on each face not intersecting the grid boundary.
\end{corollary}

\edit{R12P3}{This corollary and all the other corollaries in~\S\ref{sec-examples} are particular cases of the results of~\S\ref{ssec-statements} and~\S\ref{ssec-general-connections}, and are directly deduced from those. We keep detailed proofs in~\S\ref{ssec-example-proofs} for the reader's reference.}

\begin{remark} \label{rem-networks-momentum}
  The function $\sigma_k$ is 
  the flux (given by Definition~\ref{def-flow}) of the energy-momentum tensor $T[\phi]=\delta\phi\times\delta\phi$ (given by Theorem~\ref{th-energy-conservation} for the Lagrangian $\mathcal{L}[\phi]=\frac{1}{2}\delta\phi\frown \delta\phi$).
  \remove{clarity}{}
\end{remark}

%

\subsubsection*{Approximation}

The basic network model indeed converges to a continuum one, as the grid becomes finer and finer.


The continuum model is a \emph{homogeneous conducting plate} defined as follows. Let $\mathrm{I}^2 $ be the unit square, $\vec{\mathrm{ n}}$ be the unit inner normal vector field on~$\partial \mathrm{I}^2 $ besides the corners, $*$ be the counterclockwise rotation through $\pi/2$ about the origin (the \emph{Hodge star}), $\deltaup_{kl}=\deltaup^l_k:=\begin{cases}
1, &\text{if }k=l;\\
0, &\text{if }k\ne l.
\end{cases}$

A \emph{source} $\mathrm{s}$ is a continuous function on $\partial \mathrm{I}^2 $.
The \emph{fields} $\vec{\mathrm{ j}}$,${\phiup}$,$\mathrm{F}$,$\mathrm{W} $,$\vec{\mathrm{ S}}$,$\vec{\mathrm{ L}}$,$\mathrm{P}$,$\mathcal{L}$,${\sigmaup}$
\emph{generated} by ${\mathrm{ s}}$ are continuous
scalar/vector/matrix fields on $\mathrm{I}^2 $, being  $\mathrm{C}^1$ and satisfying the following conditions apart~$\partial \mathrm{I}^2$:
\begin{align*}
-\nabla  {\phiup}&=\vec{\mathrm{ j}},&
\mathrm{W} &=-\nabla  {\phiup}\cdot \vec{\mathrm{ j}}, &
\vec{\mathrm{ S}}&=-\!*\!\nabla  {\phiup}\cdot \mathrm{F} ,
&
\mathcal{L}&=\tfrac{1}{2}(\nabla  {\phiup})^2 \quad\!\left(=\tfrac{1}{2}\mathrm{d}\phiup\lrcorner \mathrm{d}\phiup\right),
\\
*\nabla  \mathrm{F} &=\vec{\mathrm{  j}},&
\vec{\mathrm{ L}}&=\!*\vec{\mathrm{ j}}\cdot\mathrm{F} ,
&
\mathrm{P} &=\tfrac{1}{2}\mathrm{F} \cdot\mathrm{F} ,&
{\sigmaup}_{kl}&=
\frac{\partial{\phiup}}{\partial \mathrm{x}^k} \frac{\partial{\phiup}}{\partial \mathrm{x}^l}
-\frac{1}{2}\deltaup_{kl}(\nabla  {\phiup})^2,
\end{align*}
and the following boundary condition on $\partial \mathrm{I}^2 $ besides the corners:
$$
\vec{\mathrm{ j}}\cdot \vec{\mathrm{ n}}
=\mathrm{s}.
$$
In other words, ${\phiup}+i\mathrm{F} $ is an analytic function such that $\tfrac{\partial}{\partial\vec{\mathrm{ n}}}{\phiup}=-\mathrm{s}$; the other fields are expressions in it.

Let the unit square $\mathrm{I}^2 $ be dissected into $N^2$ equal squares. Given a source $s_N$, define the fields $j_N,\phi_N,F_N,W_N,S_N,L_N,P_N,\mathcal{L}_N,\sigma_N$ on the resulting grid literally as above on the grid of size $N\times N$.

\begin{remark}
  It would be somewhat more conceptual to modify the above Amp\`ere law for the resulting grid because the faces are not unit squares anymore. This leads just to \gram{the} normalization of the fields by powers of $N$. We avoid such modification for simplicity.

  \clarity{It would be more conceptual to write the Lagrangian as $\mathcal{L}=\tfrac{1}{2}\mathrm{d}\phiup\lrcorner \mathrm{d}\phiup-\mathrm{s}\lrcorner \phiup$ but the second term vanishes apart the boundary anyway.}

  \red{The} continuum model has more symmetries than the discrete one: e.g., $\mathcal{L}$ is rotational-invariant whereas $\mathcal{L}_N$ is not, at least in a naive sense; cf.~\cite[Definition~5.2.36]{Hydon-Mansfield-04}.
\end{remark}

Dissect each side of $\partial\mathrm{I}^2$ into $N+1$ (not $N$) equal segments called \emph{auxiliary segments}.
Write $a_N(x)\rightrightarrows b_N(x)$ for functions $a_N,b_N$ on a set $M_N$, if $\max_{x\in M_N}|a_N(x)- b_N(x)|\to 0$ as~$N\to\infty$.


\begin{theorem}[Approximation theorem] \label{th-networks-convergence}
  Let $\mathrm{s}\colon \partial \mathrm{I}^2 \to\mathbb{R}$
  be a continuous source with $\int_{\partial \mathrm{I}^2 }\mathrm{s}\,\mathrm{dl}=0$. Dissect $\mathrm{I}^2 $ into $N^2$ equal squares and define a discrete source $s_N$ on the resulting grid by the formula
  $$
  s_N(v):=\int_{v_-v_+}\mathrm{s}\,\mathrm{dl},
  $$
  where $v_-v_+\subset \partial I^2$ is the arc formed by $1$ or $2$ auxiliary segments containing a vertex $v\in\partial I^2$.
  Take continuous fields $\vec{\mathrm{ j}}$, ${\phiup}$,
  $\mathrm{F} $, $\mathrm{W} $, $\vec{\mathrm{ S}}$,
  $\vec{\mathrm{ L}}$, $\mathrm{P} $, $\mathcal{L}$, ${\sigmaup}$ and discrete ones $j_N$, $\phi_N$, $F_N$, $W_N$, $S_N$, $L_N$, $P_N$, $\mathcal{L}_N$, $\sigma_N=(\sigma_{N,1},\sigma_{N,2})$ generated by the sources. Assume that ${\phiup},\mathrm{F} $ and $\phi_N,F_N$ vanish at the center of $\mathrm{I}^2 $ and at one of the vertices or faces closest to the center respectively. Take $r>0$.
  Then on the set of all vertices $v$, edges $e$, faces $f$, edge-midpoints $e'$, and face-centers $f'$ at distance $\ge r$ from $\partial \mathrm{I}^2 $ we have:
  \begin{align*}
    \phi_N(v)   &\rightrightarrows
     {\phiup}(v), &
    Nj_N(e)     &\rightrightarrows
     N\int_e \vec{\mathrm{ j}}\cdot \mathrm{d}\vec{\mathrm{l}}, &
    N^2W_N(e')  &\rightrightarrows \mathrm{W} (e'), &
    N S_N(e'f') &\rightrightarrows
     2N\int_{e'f'} \vec{\mathrm{ S}}\cdot\mathrm{d}\vec{\mathrm{l}},
    \\
    N^2\mathcal{L}_N(v)   &\rightrightarrows \mathcal{L}(v), &
    F_N(f)      &\rightrightarrows
     N^2\int_f\mathrm{F} \,\mathrm{d\red{A}}, &
    P_N(e')     &\rightrightarrows \mathrm{P} (e'),  &
    NL_N(e'f')   &\rightrightarrows
     N\int_{e'f'} \vec{\mathrm{ L}}\cdot\mathrm{d}\vec{\mathrm{l}},\\[-0.9cm]
    \end{align*}
    $$
    N^2\sigma_{N,k}(e) \rightrightarrows N\int_e\left(\sigmaup_{k2}\,\mathrm{dx}^1
    -\sigmaup_{k1}\,\mathrm{dx}^2\right)
    \qquad\mbox{as }N\to\infty.
    $$
\end{theorem}


The theorem is essentially known; it is easily deduced from highly nontrivial known results in \S\ref{s:proofs}.




\subsection{Lattice electrodynamics}
\label{ssec-Electrodynamics}

A standard problem in electrodynamics is to find forces between given charges and currents. This is done in two steps: first the field generated by the charges and currents is computed, then --- the action of the field upon them. For a discretization, continuum spacetime is replaced by a $4$-dimensional grid.

\subsubsection*{Generation of the field by the current}

\edit{R12P3}{The discrete theory is nicely stated in terms of Definitions~\ref{def-cochain} and~\ref{def-boundary}, which we use hereafter.}

\begin{definition}\label{def-sharp}
The \emph{Minkowski sharp operator} $\#$ applied to a function $F$ on $k$-dimensional faces $f$ \edit{R12P2}{of the grid $I^d_N$, where $k\ne 0$,} is
$$
[\mathop{\#}\!F](f):=
\begin{cases}
(-1)^{k-1}F(f), &\text{if }f\parallel (1,\underbrace{0,\dots,0}_{d-1\text{ zeroes}}),\\
(-1)^{k}F(f), &\text{if }f\perp (1,0,\dots,0).
\end{cases}
$$
An \emph{electromagnetic \gram{vector potential} $A$ generated} by a current $j$ is a function on edges satisfying
\begin{itemize}
  \item\emph{The source equation}:
$-\partial\#\delta A=j$.
\end{itemize}
\end{definition}

\begin{remark}\label{rem-boundary}
We do not discuss conditions under which the \gram{vector potential} exists and is unique.

The operator $\#$ is new. It is a discrete \gram{analog} of raising all indices in the metric of signature $(+,-,...,-)$.
We use it instead of a discrete Hodge star \cite{Teixeira-13} to avoid working with the dual lattice, which would complicate the theory and its generalization to \edit{R12P2}{arbitrary complexes}.
\end{remark}

The following $3$ properties of an electromagnetic \gram{vector potential} $A$ generated by a current $j$ immediately follow from the well-known identities $\delta\delta=0$ and $\partial\partial=0$; cf.~\eqref{eq-Maxwell-1}:
\begin{itemize}
  \item \emph{Maxwell\red{'s} equations}: $\delta F=0$ and $-\partial\!\mathop{\#}\!F=j$, where $F:=\delta A$ is the \emph{electromagnetic field};
  \item \emph{Gauge invariance}: $A+\delta g$ is generated by the same current $j$ for any function $g$ on~vertices;
  \item \emph{Charge conservation}: $\partial j=0$, if there exists a \gram{vector potential} generated by the current $j$.
\end{itemize}


\begin{corollary} 
  \label{cor-Maxwell}
  An electromagnetic \gram{vector potential} $A$ is generated by a current $j$, if and only if $A$ is \edit{R1P4}{an extremal} of the functional $\mathcal{S}[A]=\epsilon\mathcal{L}[A]$, where
  $$
  \mathcal{L}[A]=-\tfrac{1}{2}\!\mathop{\#}\!\delta A \frown\delta A-j\frown A.
  $$
\end{corollary}

\begin{remark}
  Electrodynamics in linear nondispersive media is discretized analogously, only the Minkowski sharp operator is replaced by a linear operator depending on the media. 

  To convince the reader that lattice electrodynamics is a realistic model, let us informally sketch a network model for it \cite{Kron-44}. Set $d=4$. For each edge of the grid $I^{d-1}_N$, take an oscillatory circuit consisting of one (nonconstant) current source, one unit capacitor, and as many unit-transformer coils as there are faces containing the edge; see Figure~\ref{fig-network-model} to the bottom-left. Join the obtained circuits in the shape of the grid, join the transformer cores in the shape of the $1$-dimensional skeleton of the dual grid, \gram{and} join the capacitor dielectric cores in the shape of the $2$-dimensional skeleton of the dual grid. We get an electric, a magnetic, and a dielectric network coupled together; a part is shown in Figure~\ref{fig-network-model}. We conjecture that the integrals of appropriate currents and voltages over time intervals $[n,n+1]$, where $n\in\mathbb{Z}$, satisfy the discrete Maxwell equations above.
\end{remark}



\begin{figure}[htbp]
  \centering
\ctikzset{bipoles/length=0.32cm}
\begin{circuitikz}[scale=0.3]
\draw[color=brown]
(10,15) to[short, -] (10,16)
        to[C, -]     (0,11)
        to[short, -] (0,10)
(10,15) to[short, -] (5.5,12.75)
        to[short, -] (9.7,11)
        to[american inductor, -]     (9.7,10)
        to[short, -] (4.5,12.25)
(0,10)  to[american current source, -] (4.5,12.25)
(10,16) to[C, -]     (20,11)
        to[short, -] (20,10)
(10,15) to[short, -] (14,13)
        to[short, -] (10.3,11)
(10.3,10) to[american inductor, -]     (10.3,11)
(10.3,10) to[short, -]                 (15,12.5)
        to[american current source, -] (20,10)
(20,11) to[C, -]     (30,6)
        to[short, -] (30,5)
(20,10) to[american current source, -] (24,8)
        to[short, -] (20.3,6)
(20.3,5)to[american inductor, -]     (20.3,6)
(20.3,5)to[short, -]                 (25,7.5)
        to[short, -] (30,5)
(20,11) to[C, -]     (10,6)
        to[short, -] (10,5)
(15.5,7.75)to[american current source, -] (20,10)
(15.5,7.75)to[short, -] (19.7,6)
        to[american inductor, -]     (19.7,5)
        to[short, -] (14.5,7.25)
(15.5,7.75)to[short, -] (10.3,9.9)
(10.3,8.9) to[american inductor, -]  (10.3,9.9)
(10.3,8.9) to[short, -] (14.5,7.25)
(10,5)  to[short, -] (14.5,7.25)
(10,6)  to[C, -]     (20,1)
        to[short, -] (20,0)
(10,5)  to[short, -] (15,2.5)
        to[short, -] (19.7,4.9)
        to[american inductor, -]      (19.7,3.9)
        to[short, -]                 (16,2)
        to[american current source, -] (20,0)
(0,11)  to[C, -]     (10,6)
        to[short, -] (10,5)
(0,10)  to[american current source, -] (5,7.5)
        to[short, -] (9.7,9.9)
        to[american inductor, -]      (9.7,8.9)
        to[short, -]                  (6,7)
        to[short, -] (10,5)
(30,5) to[short, -] (30,6)
        to[C, -]     (20,1)
        to[short, -] (20,0)
(30,5) to[short, -] (25.5,2.75)
        to[short, -] (20.3,4.9)
(20.3,3.9) to[american inductor, -] (20.3,4.9)
(20.3,3.9) to[short, -] (24.5,2.25)
(20,0)  to[american current source, -] (24.5,2.25)
(10,13) to[short, -]     (10,6.6)
(20,8)  to[short, -]     (20,1.6)
(10,4.7)to[short, -]     (10,2.5)
        to[short, -]     (20,-2.5)
        to[short, -]     (20,-0.3)
(0,0) to[short, -] (1,0)
      to[short, -] (1,1)
      to[C, -]     (5,1)
      to[short, -] (5,0)
      to[short, -] (6,0)
(1,0) to[short, -] (1,-1)
      to[short, -] (2,-1)
      to[short, -] (2,-0.5)
      to[american inductor, -]     (3,-0.5)
      to[short, -] (3,-1.5)
      to[american inductor, -] (2,-1.5)
      to[short, -] (2,-1)
(3,-1)to[american current source, -] (5,-1)
      to[short, -] (5,0)
(2,-0.2)
      to[short, -]     (3,-0.2)
(2,-1.8)
      to[short, -]     (3,-1.8)
;\end{circuitikz}
  \caption{A network model for lattice electrodynamics; cf.~\cite{Kron-44}}\label{fig-network-model}
\end{figure}

\subsubsection*{Action of the field on the current}


The field acts on the current by the Lorenz force, which we are going to discretize now. The rest of \S\ref{ssec-Electrodynamics} contains completely new notions and results;\remove{R12P2}{(except the \gram{cross product})} cf.~\cite{Bossavit-03} \edit{P12P1}{and~\cite{Forman-02}.}

\edit{R12P2}{We start with an informal motivation.}
\move{R12P2}{}The formula \red{for the Lorenz force $L$ in \S\ref{ssec-Networks}} involves the product of the values of \red{fields} at \red{edges and faces. Thus} it is reasonable to view it as a ``projection'' of a more fundamental quantity defined on the Cartesian square of the \red{grid. More precisely,}
\move{R12P2}{} 
the set of faces of \clarity{the Cartesian square} is naturally mapped to the set of faces of the doubling: to a face $e\times f$ assign the face of the doubling with the center at the midpoint of the segment joining the centers of $e$ and $f$.
\remove{clarity}{}
Up to sign and factor $1/2$, the fields $W,S,L,P$ from~\S\ref{ssec-Networks} are ``induced'' by the latter map from \red{the \emph{cross products}} $j\times\delta\phi$, $F\times\delta\phi$, $j\times F$, $F\times F$ respectively. \red{This naturally leads to Definition~\ref{def-tensor}, which we use hereafter.} 
These heuristic fields \red{$W,S,L,P$} are now replaced by tensors.

\begin{definition}\label{def-electrodynamics-tensor}
Let $A$ be a \gram{vector potential} generated by a current $j$, and $F=\delta A$. The \emph{Lorentz force} is the type $(0,1)$ tensor $L=j\times F$. It has support on faces $e\times f\subset I^d_N\times I^d_N$ such that $\dim e\!\!=\!\!1$, $\dim f\!\!=\!\!2$.

The \emph{energy-momentum tensor}, or \emph{stress-energy tensor}, of the \gram{electromagnetic} field (respectively, of both the field and the current) is the type $(1,1)$ tensor $T'=-\#F\times F$ (respectively, $T=-\#F\times F-j\times A$).
The tensor $T'$ has support on $4$-dimensional faces $e\times f\subset I^d_N\times I^d_N$ such that $\dim e=\dim f=2$.
\end{definition}

An immediate consequence of these definitions, Maxwell's equations, and charge conservation is
\begin{itemize}
\item \emph{Energy and momentum conservation}: $\partial T'=L$ and $\partial T=0$.
\end{itemize}

\begin{remark} The latter is a particular case of Theorem~\ref{th-energy-conservation} for the Lagrangian from Corollary~\ref{cor-Maxwell}.
\end{remark}

\begin{remark} \label{rem-Forman} \clarity{Let us give general comments on the discretization of tensors in Definition~\ref{def-tensor}.}

\move{clarity}{} In contrast to continuum theory,
  type $(0,1)$ tensors are \emph{not} $1$-dimensional fields.


  \edit{R1P5}{Our tensors can be alternatively viewed as functions on faces of $I^d_N$, with two arguments. We prefer working with functions on $I^d_N\times I^d_N$ to get elegant expressions with chain cross product such as~\eqref{eq-energy-conservation}.}

  Although $I^d_N\times I^d_N$ is naturally identified with $I_N^{2d}$, the boundary operator on tensors is \emph{not} 
  the boundary operator on $I_N^{2d}$. To avoid confusion, we distinguish between $I^d_N\times I^d_N$ and $I^{2d}_N$ below.

  A type $(q,1)$ tensor can be equivalently defined  as an element of $C^{d+q-1}(I^d_N\times I^{d*}_N;\mathbb{R})$, where $I^{d*}_N$ is the dual grid. Then the boundary operator on tensors is exactly the boundary operator on $I^d_N\times I^{d*}_N$. We avoid working with dual grids for simplicity and easier generalization to arbitrary \edit{R12P2}{complexes}.

  \edit{R12P1}{A type $(q,1)$ tensor can also be viewed a collection of linear maps $C^{k}(I^d_N;\mathbb{R})\to C^{k+q-1}(I^d_N;\mathbb{R})$ for all $k=0,\dots,d$: the values of the tensor at faces $e\times f$ comprise the coefficients of those maps. This shows that our tensors generalize Forman's forms \cite{Forman-02}, with the same role of doubling.}

  It would be more conceptual to restrict the domain of a tensor to a ``neighborhood of the diagonal'' in $I^d_N\times I^d_N$: \move{R12P1}{the values at the other faces do not contribute to integration.} Type $(0,1)$ tensors can be restricted to the set of faces $e\times f$ such that $e\subset f$; \edit{R12P1}{this makes the definition completely equivalent to Forman's discretization of $1$-forms. Concerning type $(1,1)$ tensors, their natural domain is faces $e\times f$ such that either $e=f$ or $e\cap f$ is a codimension $1$ face in both $e$ and $f$. This is not equivalent to Forman's discretization (type $(1,1)$ tensors are not $0$-forms, as expected), but is a natural generalization.} We avoid such restriction for simplicity, \red{especially in computations involving cross products.}
\end{remark}



\begin{remark} \label{rem-tensor}
\move{clarity}{} \red{(For specialists.)} Let us clarify the relation \red{of our tensor calculus} to continuum theory.
The set of type $(q,1)$ tensors is \clarity{naturally isomorphic to} $\bigoplus_{p=0}^d C_{p}({I}^d_N;\mathbb{R}) \otimes_\mathbb{R} C^{p-q+1}({I}^d_N;\mathbb{R})$. Thus it discretizes the space $\bigoplus_{p=0}^d \Omega^{p}(\mathrm{I}^d)^* \otimes\Omega^{p-q+1}(\mathrm{I}^d)$ rather than the space $\mathrm{T}^{q}_{1}(\mathrm{I}^d)$ of continuum type $(q,1)$ tensors. (Here $\Omega^{p}(\mathrm{I}^d)$ denotes the set of $\mathrm{C}^\infty$ $p$-forms on the unit hypercube $\mathrm{I}^d$ and $\otimes$ denotes the tensor product over $\Omega^{0}(\mathrm{I}^d)$). But the former space is mapped to the latter by the `contraction' map
  $$
  \mathrm{T}^{m_1\dots m_p}_{n_1\dots n_{p-q+1}}\mapsto
    \begin{cases}
    \mathrm{T}^{m_1\dots m_p}_{km_1\dots m_{p}},
    &\text{if }q=0;\\
    \mathrm{T}^{lm_2\dots m_p}_{km_2\dots m_{p}}-
    \tfrac{1}{2p}\deltaup^l_k \mathrm{T}^{m_1\dots m_p}_{m_1\dots m_{p}},
    &\text{if }q=1, p>0.
    \end{cases}
  $$
  (Summation over repeating indices is understood.) 
  Since no discretization of the image is available (at least for $q=1$), the discretization of the domain is proclaimed to be space of type $(q,1)$ tensors.
  Here the role of the $\deltaup^l_k$-term is the same as in the Einstein tensor: it makes the `contraction' map commute with certain codifferentials when $\mathrm{T}$ has certain symmetry properties (\clarity{namely}, $\sharp\mathrm{T}$ is symmetric wrt interchanging $m_j$ and $n_j$ but antisymmetric
  wrt interchanging $m_i$ and $m_j$):
  $$
\begin{CD}
\frac{\partial\mathcal{L}}{\partial(\mathrm{d}\phiup)}\otimes \mathrm{d}\phiup +\frac{\partial\mathcal{L}}{\partial\phiup}\otimes \phiup
@. \quad\in\quad @. \bigoplus_{p=0}^d\Omega^{p}(\mathrm{I}^d)^*
\otimes\Omega^{p}(\mathrm{I}^d)
@>\mathrm{d}^*\otimes \mathrm{id}+\mathrm{id}\otimes \mathrm{d}>>
\bigoplus_{p=0}^{d-1}\Omega^{p}(\mathrm{I}^d)^*
\otimes\Omega^{p+1}(\mathrm{I}^d)\\
@VVV @. @V\text{`contraction'}VV @V\text{`contraction'}VV\\
\mathrm{T}^l_k @.\quad\in\quad @. \mathrm{T}^{1}_{1}(\mathrm{I}^d)
@>\text{divergence}>>
\mathrm{T}^{0}_{1}(\mathrm{I}^d).
\end{CD}
$$
  Similarly, \eqref{eq-energy-conservation} discretizes
  $\frac{\partial\mathcal{L}}{\partial(\mathrm{d}\phiup)}\otimes \mathrm{d}\phiup +\frac{\partial\mathcal{L}}{\partial\phiup}\otimes \phiup$
  rather than the continuum energy-momentum tensor $\mathrm{T}^l_k$,
  but the former is usually taken to the latter by the `contraction' map. Here $\left(\frac{\partial\mathcal{L}}{\partial(\mathrm{d}\phiup)}\right)
  ^{m_1\dots m_p}:=\frac{\partial\mathcal{L}}{\partial(\mathrm{d}\phiup)_{m_1\dots m_p}}$.
  The former is conserved (i.e. taken to $0$ by $\mathrm{d}^*\otimes \mathrm{id}+\mathrm{id}\otimes \mathrm{d}$) regardless of  symmetries of $\mathcal{L}[\phiup]$.

\clarity{In particular}, $L$ and $T'$ discretize the tensors $\mathrm{j}^{l}\mathrm{F}_{kn}$ and $-\mathrm{F}^{lm}\mathrm{F}_{kn}$, but the latter two are taken to the continuum Lorenz force and energy-momentum tensor by the `contraction' map\red{s}. \remove{clarity}{} \move{clarity}{} In contrast to $T'$, the tensor $T$ has \emph{no} conserved continuum analogue.

The formula for the discrete energy-momentum tensor $T'$ is even simpler than the continuum analogue. This is achieved at the cost of rather subtle \edit{R2P1}{Definition~\ref{def-flow}} of discrete tensor integration.
\end{remark}

\subsubsection*{Integral conservation laws}

\remove{R2P1}{}
To compare \red{discrete} tensors with their continuum analogues, \edit{R2P1}{we need their integration. This naturally leads to Definition~\ref{def-flow}, which we use in the rest of~\S\ref{ssec-Electrodynamics}.
Actually, we have already applied it in the particular cases $d=3$, $k=0$, $T=-\#F\times F$ and $d=2$, $k=1,2$, $T=\delta\phi\times\delta\phi$ in~\S\ref{ssec-quick} and Definition~\ref{def-Maxwell-stress} respectively. A more general setup when this construction is well-applicable to energy-momentum tensor~\eqref{eq-energy-conservation} 
is a \emph{free field}.}

\begin{theorem}[Integral energy-momentum conservation for a free field] \label{cor-free} 
If the Lagrangian is $\mathcal{L}[\phi]=-\red{\frac{1}{2}}\mathop{\#}\!\delta \phi \frown\delta \phi-\red{\frac{1}{2}}m^2\phi\frown \phi$, \clarity{where $m\ge 0$, and $\phi\in C^k({I}^d_N;\mathbb{R})$ satisfies Euler-Lagrange equation~\eqref{eq-Euler-Lagrange} on all non-boundary faces,
then
for each $d$-dimensional face $g$ disjoint with $\partial{I}^d_N$ and each $0\le l<d$ 
the flux of energy-momentum
tensor~\eqref{eq-energy-conservation} across $\partial g$ vanishes, i.e., $\langle T[\phi],\partial g\rangle_l=0$.}
\end{theorem}

\clarity{Here we have dropped the Euler--Lagrange equations on the boundary, otherwise the system degenerates; the boundary faces do not contribute to the tensor flux anyway.}

\red{Electrodynamics (without currents) is the particular case of a free field theory with $m=0$, $k=1$. Further specification to $d=3$ gives~\eqref{eq-discrete-Poynting}, where the function $T(h)$ on faces $h$ is actually $\langle T',h \rangle_0$.
The case  $m=0$, $k=1$, $d=2$ 
was} established in~\cite[\red{Example in \S8 of Ch.~III}]{Dorodnitsyn-04} by a different method.

\begin{remark}\label{rem-energy}
  \clarity{Let us give general comments on the discretization of tensor flux in Definition~\ref{def-flow}.}

  There are many other ways to define a tensor flux; we have chosen the simplest one.

  Our definition has the following informal motivation. Values of a tensor are ``sitting'' on the faces of the doubling; see the paragraph
  \edit{RP12P3}{before Definition~\ref{def-electrodynamics-tensor}}. The flux across a hyperface is then the sum of these values over the faces adjacent to the hyperface from \gram{the} appropriate ``side''.

  For nonconserved tensors, an \gram{analog} of the Stokes formula holds; see Proposition~\ref{prop-Stokes}.

  \remove{clarity}{}

  Unlike continuum theory, the $0$-th component of the flux of the energy-momentum tensor $T'$ (see Definition~\ref{def-electrodynamics-tensor}) across a hyperface $h\perp (1,0,\dots,0)$ is not necessarily positive, thus cannot be interpreted as energy density. 
  This is a \gram{higher-order} effect with respect to the discretization step~$1/N$.

  We use the notation $\langle T,\red{h}\rangle_k$, with literally the same definition, even if $T$ is not partially symmetric. This makes no sense in \gram{a} discrete setup but is useful for the continuum limit.
  \remove{R12P3}{}

  The energy-momentum tensor $T$ of both the field and the current (see Definition~\ref{def-electrodynamics-tensor}) is not partially symmetric. In a sense, it still approximates some continuum tensor, but the latter is not conserved. We know neither an integral conservation law nor a conserved continuum \gram{analog} for $T$.

  The energy-momentum tensor $T'$ is symmetric in a sense (after ``raising an index''). In particular, we shall see that it approximates the symmetric Belinfante--Rosenfeld energy-momentum tensor rather than the nonsymmetric canonical energy-momentum tensor. In other field theories, e.g., for the Dirac field, the discrete energy-momentum tensor approximates the nonsymmetric canonical energy-momentum tensor rather than the Belinfante--Rosenfeld one; see \ifarxiv{Proposition~\ref{th-Dirac-approximation}.}{ \edit{R12P3}{\cite[Proposition~A.3.6]{Skopenkov-preprint}.}} 
\end{remark}

Let us illustrate analogy between tensor~\eqref{eq-energy-conservation} and the continuum \emph{canonical energy-momentum tensor}
$$
\mathrm{T}_k{}^l=\frac{\partial\mathcal{L}}
{\partial(\partial\phiup/\partial \mathrm{x}_l)}
\frac{\partial\phiup}{\partial \mathrm{x}_k}-
\deltaup^l_k\mathcal{L}.
$$

\begin{proposition}\label{prop-quadratic-lagrangian}
\remove{clarity}{}
Let a local Lagrangian $\mathcal{L}\colon C^0({I}^d_N;\mathbb{R})\to C_0({I}^d_N;\mathbb{R})$ be homogeneous quadratic in $\phi$ and $\delta\phi$. Let $\phi$ be a $0$-dimensional field (not necessarily \edit{R1P4}{an extremal}) and $T\red{[\phi]}$ be the \remove{clarity}{} 
tensor
(not necessarily partially symmetric) given by~\eqref{eq-energy-conservation}.
Then for each $0\le k,l<d$ and each hyperface $h\perp \mathrm{e}_l$ having maximal vertex $v$ and disjoint with the grid boundary we have
\begin{align*}
(-1)^l\langle T\red{[\phi]},h\rangle_k&=
\frac{1}{2}\left(
\frac{\partial\mathcal{L}[\phi]}{\partial(\delta\phi)}
(v+\mathrm{e}_l)+
\frac{\partial\mathcal{L}[\phi]}{\partial(\delta\phi)}
(v+\mathrm{e}_l-2\mathrm{e}_k)
\right)
\delta\phi(v-\mathrm{e}_k)
-\deltaup^l_k\mathcal{L}[\phi](v).
\end{align*}
\end{proposition}

\subsubsection*{Approximation}

The discrete energy-momentum tensor $T'$ indeed approximates the continuum one, as we show now.
In continuum theory, an \emph{electromagnetic field} is a continuous antisymmetric matrix field $\mathrm{F} _{mn}$ on the unit hypercube $\mathrm{I}^d$. The \emph{(Belinfante--Rosenfeld) energy-momentum tensor} of the field (for the metric of signature $(+,-,...,-)$) is the matrix field
$$
\mathrm{T}^l_k=-\mathrm{F}^{lm}\mathrm{F}_{km}
+\tfrac{1}{4}\deltaup^l_k\mathrm{F}^{mn}\mathrm{F}_{mn},
$$
where summation over repeating indices is understood and
$
\mathrm{F}^{mn}:=
\begin{cases}
  -\mathrm{F}_{mn}, & \mbox{if $m=0$ or $n=0$}; \\
  \mathrm{F}_{mn}, & \mbox{if $m\ne 0$ and $n\ne 0$}.
\end{cases}
$

Let $\mathrm{I}^d$ be dissected into $N^d$ equal hypercubes. Given an arbitrary discrete $2$-dimensional field $F$, define the energy-momentum tensor $T'=-\#F\times F$ on the resulting grid literally as on the grid $I^d_N$.

\begin{remark}\label{rem-renormalization}
  It is somewhat more natural to modify the definition of the operator $\#$ by the factor $N^{2k-d}$ because the faces are not unit hypercubes anymore. This leads just to \gram{the} normalization of the energy-momentum tensor $T'$ by a power of $N$. We avoid such modification for simplicity.
\end{remark}

\begin{proposition}[Approximation property] \label{th-Maxwell-approximation}
Let $\mathrm{F}_{mn}$ be a continuous electromagnetic field on $\mathrm{I}^d$. Dissect $\mathrm{I}^d$ into $N^d$ equal hypercubes and define a discrete $2$-dimensional field $F_N$ on faces $f$ of the resulting grid
by the formula 
$$
F_N(f):=\mathrm{F}_{mn}(\max f),
$$
where the integers $m<n$ are determined by the conditions $\mathrm{e}_m,\mathrm{e}_n\parallel f$.
Let $\mathrm{T}^l_k$ and $T'_N=-\# F_N\times F_N$ be the continuous and discrete energy-momentum tensor respectively. Take $0\le k,l< d$.
Then on the set of all hyperfaces $h\perp \mathrm{e}_l$ not intersecting $\partial \mathrm{I}^d$ we have (under the notation before Theorem~\ref{th-networks-convergence})
$$
(-1)^l \langle T'_N,h\rangle_k\rightrightarrows \mathrm{T}_k^l(\max h) \qquad\mbox{as }N\to\infty.
$$
\end{proposition}

\begin{remark} \remove{clarity}{}
Here 
$\mathrm{F}_{mn}$ and $F_N$ do \emph{not} necessarily satisfy Maxwell\red{'s} equations (and typically $F_N$ cannot, even if $\mathrm{F}_{mn}$ does). \gram{The} approximation of a smooth solution of Maxwell\red{'s} equations by discrete ones, a standard question of computational electrodynamics, is not discussed in the paper.
\end{remark}

\subsection{Lattice gauge theory}
\label{ssec-gauge}

Classical gauge theory generalizes electrodynamics. It is a basis for quantum gauge theory describing all known interactions except gravity. The idea is simple, as shown by the following toy model; cf.\cite{Maldacena-16}.

\subsubsection*{Toy model}

\begin{wrapfigure}{r}{4cm}
  \vspace{-1.3cm}
  \centering
  \includegraphics[width=4cm]{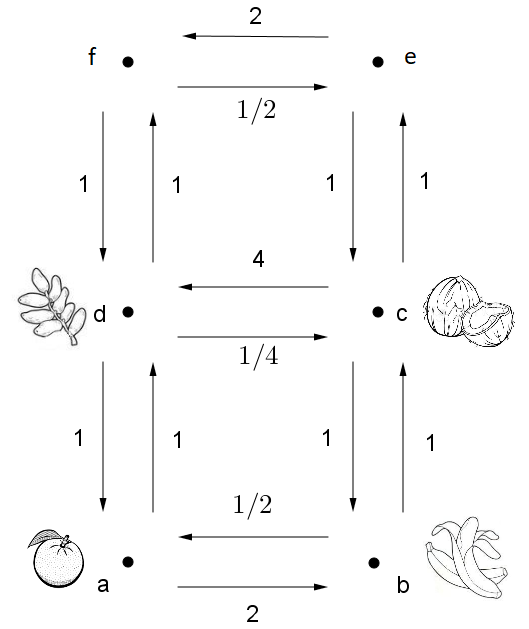}
  \caption{
  Gauge theory on a $1\times 2$ grid}\label{fig-toy}
  \vspace{-0.6cm}
\end{wrapfigure}
Several cities are 
\clarity{joined}
by roads in the shape of an $M\times N$ grid; see Fig.~\ref{fig-toy}. Each city has its own type of goods in an unlimited quantity. E.g., city $a$ has apples and city $b$ has bananas. For two neighboring cities $a$ and $b$ an exchange rate $U(ab)>0$ is fixed, e.g., $2$ banana for an apple. The rate is symmetric, i.e., $U(ba)=U(ab)^{-1}$: one gets back an apple for $2$ banana.

A cunning citizen can travel and exchange along a square $abcd$ to multiply his initial amount of goods by a factor of $U(ab)U(bc)U(cd)U(da)$.  The total speculation profit is measured by the quantity
$$
\mathcal{S}[U]:=\sum_{\text{all faces } abcd}\log^2(U(ab)U(bc)U(cd)U(da)).
$$
Here $\log^2(x)$ is chosen as a function vanishing at $x=1$ and positive for $x\ne 1$.

The king can set exchange rates except those on the boundary of the grid. He sets them to minimize
the quantity $\mathcal{S}[U]$. The resulting collection of rates is \edit{R1P4}{\emph{a stationary}} \emph{Abelian gauge group field}.



A \edit{R1P4}{stationary} gauge group field is far from being unique. For an interior city, one can change the units
,\remove{clarity}{} e.g., exchange dozens of apples instead of single ones. Such \emph{gauge transformation} multiplies the rates for all the roads starting from the city by the same value but preserves $\mathcal{S}[U]$.

A similar model on a $d$-dimensional grid (with an additional minus sign for each summand in $\mathcal{S}[U]$ such that $abcd$ is parallel to $(1,0,\dots,0)$) is equivalent to lattice electrodynamics discussed in \S\ref{ssec-Electrodynamics}. This follows from Corollary~\ref{cor-Maxwell}, if one sets $A(ab)=\log U(ab)$ and $j=0$; see also Remark~\ref{ex-electrodynamics-vs-gauge}.


\subsubsection*{Currents}

Now modify the model by introducing production of goods. For each pair of neighboring cities $a$ and $b$ fix a production rate $j(ab)\ge 0$: e.g., if $a$ has apples and $b$ has jam, then one produces $j(ab)$ units of jam from one apple. The rate is not at all symmetric: one cannot produce apples from jam. Assume that production always goes in the direction of the coordinate axes.

There is a new way to profit: producing jam and exchanging back to apples, one multiplies the initial amount of apples by $j(ab)U(ba)$. The total profit is now measured by the quantity $S[U,j]=S[U]+\sum_{ab}(j(ab)U(ba)-1)$.
A collection of rates $U$ minimizing $S[U,j]$ for fixed $j$ is called \emph{generated} by $j$. These rates may not exist, and the total profit can be negative.

These rates satisfy the \emph{conservation law}\/
$-j(\mathbf{1})U(\mathbf{1})^{-1}-j(\mathbf{2})U(\mathbf{2})^{-1}
+U(\mathbf{3})^{-1}j(\mathbf{3})+U(\mathbf{4})^{-1}j(\mathbf{4})=0$ for each interior city $v$, where we use the notation from Figure~\ref{fig-grid} to the middle (this law is a version of Corollary~\ref{cor-gauge}). This is a ``gauge-invariant'' equation, which coincides with the usual charge conservation $\partial j=0$ in the case when $U=1$ identically.

\subsubsection*{Non-Abelian gauge theory}

In non-Abelian gauge theory the goods become vectors and the rates become matrices. To catch the idea, one can start with the case when $d=2$, $n=1$, $G=\{g\in\mathbb{C}:|g|=1\}$, and drop all $\#$-operators.

\begin{definition} \label{def-gauge} Denote by $\mathbb{C}^{m\times n}$ the set of matrices with complex entries having $m$ rows and $n$ columns. For $u\in \mathbb{C}^{m\times n}$ denote by $u^*\in\mathbb{C}^{n\times m}$ the conjugate transpose matrix.

A \emph{gauge group} $G$ is a Lie group represented by unitary transformations of $\mathbb{C}^n$.
A \emph{gauge group field}
$U$ and a \emph{covariant current} $j$  are functions on edges of \edit{R12P2}{$M$} assuming values in $G$ and $\mathbb{C}^{n\times n}$ respectively.

\remove{clarity}{}
The \emph{operator of parallel transport} along \clarity{a simple} oriented path $\pi$ going along the edges is 
$$
U(\pi):=\prod_{e}U(e)^{\langle e,\pi\rangle},
$$
where the product is over all the edges $e$ of the path $\pi$,
and $\langle e,\pi\rangle=+1$ \clarity{if $e$ is cooriented with $\pi$, and $-1$ otherwise.}
In particular, the trace $\mathrm{Tr}\,U(\partial f)$ is a well-defined complex-valued function on  $2$-dimensional faces $f$.
A gauge group field $U$ \emph{generated} by a covariant current $j$ is
\edit{R1P4}{an extremal}
for the functional (for fixed $j$)
\begin{equation}\label{eq-def-gauge}
\mathcal{S}[U]=\sum_{\text{faces } f} \#\left( \mathrm{Re}\,\mathrm{Tr}\,U(\partial f)-n\right)-\sum_{\text{edges } e}\mathrm{Re}\,\mathrm{Tr}\,[j^*(e)U(e)].
\end{equation}
\end{definition}


Since $\mathcal{S}[U]$ is a continuous function on a compact set, we get the following existence theorem.

\begin{proposition}
For each covariant current there exists a gauge group field generated by it.
\end{proposition}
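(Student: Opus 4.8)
The plan is to derive existence from the extreme value theorem, exactly as the sentence preceding the statement suggests. The key observation is that the space of all gauge group fields carries a natural structure of a compact smooth manifold, on which the functional $\mathcal{S}$ of \eqref{eq-def-gauge} is smooth; its minimum is then automatically a stationary point because the manifold has no boundary.

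First I would describe the configuration space. A gauge group field is a function on the edges of $I^d_N$ with values in $G$, so the set $\mathcal{C}$ of all such fields is the finite product $\prod_e G$ over the edges $e$. Since $G$ is a gauge group, i.e. a Lie group realized by unitary transformations of $\mathbb{C}^n$, it is a closed subgroup of the compact group of unitary $n\times n$ matrices, hence itself a compact Lie group; consequently $\mathcal{C}$ is a compact smooth manifold without boundary.

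Next I would check that $\mathcal{S}$ is continuous, in fact smooth, on $\mathcal{C}$. Each operator of parallel transport $U(\partial f)$ is a product of the matrices $U(e)^{\pm 1}$, so its entries are polynomials in the entries of the $U(e)$; therefore every summand in \eqref{eq-def-gauge}, being the real part of a trace of such a product (or of $j^*(e)U(e)$), is a smooth function of $U$. By the extreme value theorem $\mathcal{S}$ attains a minimum at some $U_0\in\mathcal{C}$. Because $\mathcal{C}$ has no boundary, the differential of $\mathcal{S}$ vanishes at the interior minimizer $U_0$, so $U_0$ is a stationary function, i.e. a gauge group field generated by $j$.

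The only genuinely delicate point is matching the word \emph{stationary} from Definition~\ref{def-gauge} with ``critical point of $\mathcal{S}$ on the manifold $\mathcal{C}$''; once this identification is granted, the absence of boundary is precisely what upgrades the minimum to a stationary point. If one imposes boundary conditions (as in the toy model, where the boundary rates are fixed), the argument is unaffected: one simply minimizes over the still-compact, still-boundaryless product of copies of $G$ indexed by the free interior edges. I expect no real obstacle beyond these bookkeeping remarks, since the replacement of $\ln^2$ by the trace functional in \eqref{eq-def-gauge} removes any smoothness concern at the extremum.
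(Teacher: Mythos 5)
Your argument is exactly the paper's: the paper justifies the proposition with the single sentence that $\mathcal{S}[U]$ is a continuous function on a compact set, and your elaboration (the configuration space $\prod_e G$ is a compact boundaryless manifold, $\mathcal{S}$ is smooth in the entries of the $U(e)$, so its minimizer is a critical point) is precisely the intended reasoning spelled out. No discrepancy to report.
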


Now we state the Yang--Mills equation (necessary and sufficient for $U$ to be generated by $j$) and a conservation law.
This is a new Corollary~\ref{cor-gauge} extending \cite[Eq.~(4.15)]{Dimakis-etal-94}. It involves projection to certain tangent space of the Lie group~$G$. In gauge theory the role of the (co)boundary is played by the covariant (co)boundary, which is a ``gauge covariant'' operator equal the (co)boundary for $U=1$.

\begin{definition}
Fix a gauge group field $U$. Let $j$ be a $\mathbb{C}^{n\times n}$-valued function on edges. Its \emph{covariant boundary} $D_{A}^* j$ is a $\mathbb{C}^{n\times n}$-valued function on vertices $v$ given by
\begin{equation}\label{eq-def-boundary-particular}
[D_{A}^* j](v)=\sum_{e\textrm{ ending at }v}U(e)^{-1}j(e)-\sum_{e\textrm{ starting at }v}j(e)U(e)^{-1}.
\end{equation}

Denote by $D_{A}^* \#F$ the $\mathbb{C}^{n\times n}$-valued function on edges $e$ given by
\begin{equation}\label{eq-def-unmotivated}
[D_{A}^* \# F](e)=
\sum_{\textrm{2-faces }f \supset e}\# (U(e)-U(\partial f-e)),
\end{equation}
where $\partial f-e$ is the path starting at the vertex $\min e$, consisting of the $3$ edges of $\partial f-e$, and ending at $\max e$. E.g., in Figure~\ref{fig-toy} we have $[D_{A}^* \# F](dc)=U(dabc)+U(dfec)-2U(dc)$.

So far the notations $D_{A}^* j$ and $D_{A}^* \# F$ should be viewed as indivisible. Separate conceptual definitions of $A$, $F$, $D_A^*$ \edit{R12P3}{are given in Definitions~\ref{def-connection} and~\ref{def-covar-boundary-gauge},}
where~\eqref{eq-def-boundary-particular}--\eqref{eq-def-unmotivated} become easy propositions.
\end{definition}

\begin{definition} \label{def-projection}
The \emph{scalar product} of $u,v\in\mathbb{C}^{n\times n}$ is
$
\langle u,v\rangle:=\mathrm{Re}\,\mathrm{Tr}\,[u^*v].
$
Let $T_u G\subset \mathbb{C}^{n\times n}$ 
be the linear subspace parallel to the tangent subspace to $G$ at a point $u\in G$.
Let
$
\mathrm{Pr}_{T_u G}\colon\mathbb{C}^{n\times n}\to T_u G
$
be the orthogonal projection
and $\mathrm{Pr}_{T_U G}j$ be the function on edges $e$
given by
$[\mathrm{Pr}_{T_U G} j](e)=\mathrm{Pr}_{T_{U(e)}G} j(e).$ A covariant current $j$ is \emph{conserved}, if~$D^*_{A}\mathrm{Pr}_{T_U G}\,j=0$.
\end{definition}

\begin{corollary} \label{cor-gauge}
  A gauge field $U$ generated by a covariant current $j$ satisfies the following equations:
\begin{itemize}
  \item \emph{the Yang--Mills equation:} $-\mathrm{Pr}_{T_U G}\,D^*_A\# F=\mathrm{Pr}_{T_U G}\,j$;
  \item \emph{Charge conservation law:}
  $D^*_A\mathrm{Pr}_{T_U G}\,j=0$.
\end{itemize}
\end{corollary}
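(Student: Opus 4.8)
The plan is to read Corollary~\ref{cor-gauge} as the first-variation analysis of a constrained variational problem: $U$ generated by $j$ means $U$ is a stationary point of $\mathcal{S}[U]$ subject to $U(e)\in G$ on each edge, with the values on $\partial I^d_N$ frozen. Since the configuration space is the product of copies of $G$ over the nonboundary edges, stationarity is equivalent to the condition that for every nonboundary edge $e$ and every tangent vector $v\in T_{U(e)}G$ the directional derivative $\frac{d}{dt}\mathcal{S}[U_t]\big|_{t=0}$ vanishes, where $U_t$ agrees with $U$ off $e$ and moves $U(e)$ along a curve in $G$ with velocity $v$. Writing this derivative as $\langle\,\cdot\,,v\rangle$ for $\langle a,b\rangle=\mathrm{Re}\,\mathrm{Tr}[a^*b]$, stationarity says that the orthogonal projection onto $T_{U(e)}G$ of the $\mathbb{C}^{n\times n}$-gradient of $\mathcal{S}$ in $U(e)$ vanishes at every nonboundary edge. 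I would get the Yang--Mills equation from these per-edge variations and the charge conservation law from the variations induced by gauge transformations, the latter being the Noether law for the automatically present gauge symmetry.

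For the Yang--Mills equation I first record two facts about the unitary group $G$ with Lie algebra $\mathfrak{g}:=T_IG$ of anti-Hermitian matrices and $T_uG=u\,\mathfrak{g}$. First, $u\perp T_uG$, since $\langle u,u\xi\rangle=\mathrm{Re}\,\mathrm{Tr}[u^*u\xi]=\mathrm{Re}\,\mathrm{Tr}\,\xi=0$ for $\xi\in\mathfrak{g}$, so $\mathrm{Pr}_{T_uG}u=0$. Second, $\mathrm{Ad}_u$ is an isometry preserving $\mathfrak{g}$, hence commutes with $\mathrm{Pr}_{\mathfrak{g}}$, and $\mathrm{Pr}_{\mathfrak{g}}(X^*)=-\mathrm{Pr}_{\mathfrak{g}}(X)$ because the Hermitian part of $X$ is orthogonal to $\mathfrak{g}$ while the anti-Hermitian part flips sign under $X\mapsto X^*$. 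Now I differentiate $\mathcal{S}$ in $U(e)$. The current term $-\mathrm{Re}\,\mathrm{Tr}[j^*(e)U(e)]=-\langle j(e),U(e)\rangle$ contributes the gradient $-j(e)$. For each $2$-face $f\supset e$, cyclicity of the trace lets me write $\mathrm{Tr}\,U(\partial f)=\mathrm{Tr}[U(e)^{s}R]$ with $s=\langle e,\partial f\rangle$ and $R$ the transport along the complementary three edges; using unitarity, $R=U(\partial f-e)^{-1}$ when $s=+1$, and differentiation gives the contribution $\#\,U(\partial f-e)$, whereas for $s=-1$ the factor $U(e)^{-1}$ produces $-\#\,U(e)U(\partial f-e)^{-1}U(e)$. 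Projecting with $\mathrm{Pr}_{T_{u}G}w=u\,\mathrm{Pr}_{\mathfrak{g}}(u^{-1}w)$ and using $\mathrm{Pr}_{\mathfrak{g}}(X^*)=-\mathrm{Pr}_{\mathfrak{g}}(X)$, both signs of $s$ collapse to the same value $\#\,\mathrm{Pr}_{T_{U(e)}G}U(\partial f-e)$. Summing over $f\supset e$ and inserting the harmless terms $\#\,U(e)$, which are killed by $\mathrm{Pr}_{T_{U(e)}G}U(e)=0$, reassembles exactly $-\mathrm{Pr}_{T_{U(e)}G}[D^*_A\#F](e)$. Hence stationarity, i.e. $\mathrm{Pr}_{T_{U(e)}G}(\text{face gradient}-j(e))=0$, becomes $-\mathrm{Pr}_{T_UG}D^*_A\#F=\mathrm{Pr}_{T_UG}j$.

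For charge conservation I exploit that each Wilson term $\mathrm{Tr}\,U(\partial f)$ is conjugation-invariant, so the whole face part of $\mathcal{S}$ is invariant under gauge transformations $U(e)\mapsto g(\min e)U(e)g(\max e)^{-1}$. Fix a nonboundary vertex $v$ and $\xi\in\mathfrak{g}$, and take the infinitesimal gauge transformation supported at $v$, namely $g=\exp(t\xi)$ at $v$ and the identity elsewhere; it moves each incident $U(e)$ along a vector of $T_{U(e)}G$, so it is an admissible interior variation. Since $U$ is on shell and the face part is gauge invariant, the derivative of the current part alone must vanish. A short computation with trace cyclicity rewrites this derivative, up to sign, as $\langle\xi,[D^*_Aj](v)^*\rangle$, where $[D^*_Aj](v)=\sum_{e\text{ ending at }v}U(e)^{-1}j(e)-\sum_{e\text{ starting at }v}j(e)U(e)^{-1}$ is precisely the covariant boundary of~\eqref{eq-def-boundary-particular}. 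Letting $\xi$ range over $\mathfrak{g}$ and using $\mathrm{Pr}_{\mathfrak{g}}(X^*)=-\mathrm{Pr}_{\mathfrak{g}}(X)$ gives $\mathrm{Pr}_{\mathfrak{g}}[D^*_Aj](v)=0$. Finally $\mathrm{Ad}$-equivariance of $\mathrm{Pr}_{\mathfrak{g}}$ together with $\mathrm{Pr}_{T_uG}w=u\,\mathrm{Pr}_{\mathfrak{g}}(u^{-1}w)$ identifies $\mathrm{Pr}_{\mathfrak{g}}[D^*_Aj](v)$ with $[D^*_A\mathrm{Pr}_{T_UG}j](v)$, yielding $D^*_A\mathrm{Pr}_{T_UG}j=0$.

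The main obstacle is the orientation bookkeeping in the Yang--Mills step: one must show that the two cases $s=\pm1$ of how $e$ sits in $\partial f$ produce, after projection, the single uniform expression $\#\,\mathrm{Pr}_{T_{U(e)}G}U(\partial f-e)$ appearing in $D^*_A\#F$. This is exactly where unitarity of $G$ is indispensable, through the two facts $\mathrm{Pr}_{T_uG}u=0$ and $\mathrm{Pr}_{\mathfrak{g}}(X^*)=-\mathrm{Pr}_{\mathfrak{g}}(X)$; without the projection the gradient for $s=-1$ carries the genuinely different matrix $U(e)U(\partial f-e)^{-1}U(e)$, and it is only the tangential component that matches. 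The remaining verifications, that the gauge variation is an admissible interior variation and that the current-part derivative equals the covariant boundary, are routine once the conventions for the gauge action and for $\partial f-e$ are pinned down to agree with~\eqref{eq-def-boundary-particular}--\eqref{eq-def-unmotivated}.
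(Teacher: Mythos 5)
Your argument is correct, but it takes a more hands-on route than the paper. The paper obtains the corollary in two lines by specializing its general machinery: the Yang--Mills equation is Theorem~\ref{th-Euler-Lagrange-gauge} applied to the Lagrangian of Proposition~\ref{l-Wilson}, with the partial derivatives $\left(\frac{\partial\mathcal L}{\partial U}\right)^*=-j$ and $\left(\frac{\partial\mathcal L}{\partial F}\right)^*=-\#F$ read off from rows 6--7 of Table~\ref{tab-derivatives} and with $D^*_A\#F$ identified via Proposition~\ref{l-Bianchi}; charge conservation is Theorem~\ref{th-charge-conservation-gauge} together with the gauge invariance supplied by Proposition~\ref{l-gauge-covariance}. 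You instead redo the first variation of the Wilson action directly, edge by edge for Yang--Mills and vertex by vertex (localized gauge transformations) for charge conservation. The two computations are the same stationarity analysis in different packaging: your per-edge gradient calculation, with the collapse of the two orientations $s=\pm1$ via $\mathrm{Pr}_{T_uG}u=0$ and $\mathrm{Pr}_{\mathfrak g}(X^*)=-\mathrm{Pr}_{\mathfrak g}(X)$, is exactly what the paper hides inside Lemmas~\ref{l-boundary-cubical}, \ref{l-identities} and the proof of Proposition~\ref{l-Bianchi}; your single-vertex Noether argument parallels the paper's proof of Theorem~\ref{th-charge-conservation} via Lemma~\ref{l-const}, while the proof of Theorem~\ref{th-charge-conservation-gauge} actually used here runs the same computation globally through the pairing identity $\langle j,D_A\Delta\rangle=\langle D^*_Aj,\Delta\rangle$. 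What your version buys is self-containedness and a transparent view of where unitarity enters (it is precisely what makes the $s=-1$ gradient $-U(e)\,U(\partial f-e)^{-1}{}^{*}\,U(e)$ project to the same tangential component as $U(\partial f-e)$, and what makes $\mathrm{Ad}$-equivariance convert $\mathrm{Pr}_{T_1G}D^*_Aj$ into $D^*_A\mathrm{Pr}_{T_UG}j$); what the paper's version buys is that the identical theorems also dispatch the Klein--Gordon and Dirac couplings, with all orientation and sign bookkeeping done once in the cochain-calculus lemmas. The two facts you isolate about the projection are exactly the right ones, and the ``orientation bookkeeping'' you flag is the genuine crux; it is resolved as you describe, so there is no gap.
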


\begin{remark} \label{ex-electrodynamics-vs-gauge}
  The latter form of change conservation, different from the usual $\partial j=0$, reflects the fact that non-Abelian gauge fields are themselves charged. In contrast to continuum theory, this remains true even if $G$ is Abelian (the \clarity{deep} reason is that the \gram{cup product} is non-Abelian).\remove{R12P3}{}
  Also, $D^*_A j\ne 0$ in general: e.g., if $j$ vanishes on all edges except one, then $D^*_A j\ne 0$ whatever $U$ is.

  \clarity{But} for the Abelian group $G=\{e^{i\phi}:\phi\in\mathbb{R}\}$ and $d\!=\!2$ the action can be modified so that charge conservation returns to the form $\partial j=0$ (here $j\in C_1(I^2_N;\mathbb{R})$ is not a covariant current anymore):
  $$
  \mathcal{S}^\mathrm{Ab}[U]=
    -\tfrac{1}{2}\sum_{\textrm{\red{$2$-}faces }f}\arccos^2\mathrm{Re}\,\#U(\partial f)+i\sum_{\textrm{edges }e} j(e)\log U(e).
  $$
  The range of $U$ must be restricted to $\{e^{i\phi}:-\pi/4<\phi<\pi/4\}$ to keep the action single-valued and differentiable.
  The resulting theory is equivalent to lattice electrodynamics of \S\ref{ssec-Electrodynamics}, also with restricted range, because
  $
  \mathcal{S}^\mathrm{Ab}[e^{i\phi}]=
  \epsilon\,\left[
  -\frac{1}{2}\#\delta\phi\frown \delta\phi-j\frown \phi\right]$ for
  $\phi\in C^1({I}^2_N;\mathbb{R})$ with $|\phi|<\pi/4$.
\end{remark}


\subsubsection*{Connection and curvature}


\begin{definition} \label{def-cup0}
  Let $g$ and $\phi$ be $G$- and $\mathbb{C}^{n\times n}$-valued functions on vertices and $k$-faces respectively. The \emph{gauge transformation} of $\phi$ by $g$ is the function $g^*\smile\phi\smile g$ on $k$-faces $f$ given by (cf.~Table~\ref{tab-products})
\begin{equation*}
  [g^*\smile\phi\smile g](f) :=g^*(\min f)\,\phi(f)\,g(\max f).
\end{equation*}
\end{definition}

%



\begin{corollary}[Gauge invariance] \label{prop-gauge-invariance-Wilson}
  Each simultaneous gauge transformation of $U$ and $j$ by the same element $g$ preserves $\mathcal{S}[U]$. If $U$ is generated by $j$, then 
  $g^*\smile U\smile g$ is generated by  $g^*\smile j\smile g$.
\end{corollary}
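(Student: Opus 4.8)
The plan is to isolate one computational lemma about parallel transport and then read off both assertions with almost no further work. Write $U':=g^*\smile U\smile g$ and $j':=g^*\smile j\smile g$, and let $\mathcal{S}_j[U]$ denote the action \eqref{eq-def-gauge} with its dependence on the fixed current made explicit. Since $g$ takes values in the unitary group $G$, we have $g(v)^*=g(v)^{-1}$ at every vertex $v$, so by Definition~\ref{def-cup0}, on each edge $e$,
$$U'(e)=g(\min e)^{-1}\,U(e)\,g(\max e),\qquad j'(e)=g(\min e)^{-1}\,j(e)\,g(\max e).$$
The lemma I would prove is the cocycle rule: for every self-intersection-free edge-path $\pi$ with initial vertex $a$ and terminal vertex $b$,
$$U'(\pi)=g(a)^{-1}\,U(\pi)\,g(b).$$
I would argue by induction on the number of edges of $\pi$. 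For a single edge traversed positively this is the displayed formula for $U'(e)$; for one traversed negatively it follows by inverting that formula. For the inductive step I would split $\pi$ at an intermediate vertex $c$ into a path $a\to c$ followed by $c\to b$; the factors $g(c)$ and $g(c)^{-1}$ created at the junction cancel, and the ordered (non-commuting) product telescopes to $g(a)^{-1}U(\pi)g(b)$.

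With the lemma in hand, invariance of the two sums in \eqref{eq-def-gauge} is immediate. Applying it to the closed loop $\partial f$, whose initial and terminal vertices coincide at some $v$, gives $U'(\partial f)=g(v)^{-1}U(\partial f)g(v)$, a conjugate of $U(\partial f)$; cyclicity of the trace then yields $\mathrm{Re}\,\mathrm{Tr}\,U'(\partial f)=\mathrm{Re}\,\mathrm{Tr}\,U(\partial f)$, and since the per-face sign supplied by $\#$ and the constant $n$ are independent of $U$, the first sum is unchanged. For the current term I would compute on each edge directly: from $g^*=g^{-1}$ one gets $j'(e)^*=g(\max e)^{-1}j(e)^*g(\min e)$, hence
$$j'(e)^*\,U'(e)=g(\max e)^{-1}\,j(e)^*U(e)\,g(\max e),$$
again a conjugate, so $\mathrm{Re}\,\mathrm{Tr}[j'(e)^*U'(e)]=\mathrm{Re}\,\mathrm{Tr}[j(e)^*U(e)]$. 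Summing over edges gives $\mathcal{S}_{j'}[U']=\mathcal{S}_j[U]$, which is the first assertion.

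For the second assertion I would promote this value-invariance to invariance of critical points. The map $\Phi_g\colon U\mapsto g^*\smile U\smile g$ is a diffeomorphism of the manifold of gauge group fields: on each edge it is the left-right translation $U(e)\mapsto g(\min e)^{-1}U(e)g(\max e)$ of $G$, whose inverse is the translation induced by $g^{-1}$. The identity just proved reads $\mathcal{S}_{j'}\circ\Phi_g=\mathcal{S}_j$, so by the chain rule $d\mathcal{S}_j|_U=d\mathcal{S}_{j'}|_{\Phi_g(U)}\circ d\Phi_g|_U$ with $d\Phi_g|_U$ invertible; hence $U$ is stationary for $\mathcal{S}_j$ if and only if $\Phi_g(U)=g^*\smile U\smile g$ is stationary for $\mathcal{S}_{j'}$, i.e. $U$ generated by $j$ forces $g^*\smile U\smile g$ to be generated by $g^*\smile j\smile g$.

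The only genuinely delicate point is the bookkeeping inside the cocycle lemma, namely keeping the group elements in the correct non-commuting order while tracking the orientation signs $\langle e,\pi\rangle$ through inversions, together with the small check that $\partial f$ is an admissible self-intersection-free loop so that $U'(\partial f)$ is defined and base-point conjugation applies. Everything after the lemma is formal: conjugation-invariance of the real trace for the two sums, and invertibility of $\Phi_g$ for the passage from invariant values to invariant stationary points.
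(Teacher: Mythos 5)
Your proof is correct, but it follows a different route from the paper's. The paper proves this corollary in one line by invoking its general machinery: Proposition~\ref{l-Wilson} rewrites $\mathcal{S}[U]$ in the cochain form $\epsilon\,\mathrm{Re}\,\mathrm{Tr}[-\tfrac{1}{2}\#F^*\frown F-j^*\frown U]$, and Proposition~\ref{l-gauge-covariance} supplies the covariance $F[g^*\smile U\smile g]=g^*\smile F[U]\smile g$ (whence gauge invariance of the curvature term) together with the observation that $\mathrm{Re}\,\mathrm{Tr}[j^*\frown U]$ is preserved under the simultaneous transformation of $U$ and $j$. You instead work directly with the Wilson-loop definition \eqref{eq-def-gauge}, proving the parallel-transport cocycle identity $U'(\pi)=g(a)^{-1}U(\pi)g(b)$ by induction on edges and then reading off conjugation-invariance of the traces face by face and edge by edge. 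Your argument is more elementary and self-contained — it needs none of the cup/cap-product formalism of \S4 — and it makes the geometric mechanism (base-point conjugation of holonomy) explicit; the paper's argument is shorter because the covariance computation has already been done once and for all in Proposition~\ref{l-gauge-covariance}, and it generalizes uniformly to the other gauge-invariant Lagrangians in Table~\ref{tab-derivatives}. Your explicit diffeomorphism argument for the second assertion (stationary points are carried to stationary points because $\mathcal{S}_{j'}\circ\Phi_g=\mathcal{S}_j$ with $d\Phi_g$ invertible) is a welcome elaboration of a step the paper leaves implicit; one should only note additionally that $\Phi_g$ preserves the constraint set $C^1(M;G)$ over which the variation is taken, which is clear since $G$ is a group.
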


\begin{definition} \label{def-cup} \label{def-connection} \label{def-curvature}
The \emph{unit gauge group field} $1$ equals the unit $n\times n$ matrix at each edge. For a gauge group field~$U$, the \emph{connection} (or \emph{gauge potential}) is the $\mathbb{C}^{n\times n}$-valued function $A[U]=U-1$. The \emph{curvature} (or \emph{field strength}) is the $\mathbb{C}^{n\times n}$-valued function on 
\clarity{$2$-dimensional} faces given by
\begin{equation}\label{eq-def-curvature}
  F[U](abcd):=U(ab)U(bc)-U(ad)U(dc)
\end{equation}
for each face $abcd$ with the vertices listed counterclockwise starting from the minimal one; see~Figure~\ref{fig-grid}.
\end{definition}

\begin{remark}\label{rem-curvature}\label{prop-flatness-criterion}
On a grid, a gauge group field $U$ is a gauge transformation of the unit gauge group field, if and only if the curvature $F[U]$ vanishes (this is proved by a standard ``homological'' argument.)

In contrast to continuum theory, the connection and curvature assume values \emph{not} in the Lie algebra of the Lie group $G$ but in certain other subsets of $\mathbb{C}^{n\times n}$ approximating the Lie algebra in a sense. The fields $A$ and $F$ from \S\ref{ssec-Networks}--\ref{ssec-Electrodynamics} are neither connection nor curvature for \emph{no} gauge group field.

\remove{R12P2}{}
\clarity{Similarly} to Proposition~\ref{th-Maxwell-approximation}, the tensor $-\mathrm{Re}\,\mathrm{Tr}\,\left[\#F^*\times F\right]$ approximates the continuum Belinfante--Rosenfeld energy-momentum tensor. But the former is not conserved and \red{not even} gauge invariant.
\end{remark}

\begin{proposition}
\label{l-Wilson} 
There is the following expression for action~\eqref{eq-def-gauge}:
$$
\mathcal{S}[U]=\epsilon\,\mathrm{Re}\,\mathrm{Tr}\,\left[
-\tfrac{1}{2}\#F^*\frown F-j^*\frown U\right].
$$
\end{proposition}

\clarity{The latter} is the one given by the algorithm from~\S\ref{ssec-main-tools} up to an additive constant; \red{see Table~\ref{tab-discretization}.}

\edit{R12P3}{}
\begin{table}[t]
\caption{Products of \red{$0$- and $1$-}(co)chains ($ab$ denotes an edge with $a<b$; \red{the sums are over edges})}
\label{tab-products}
\begin{tabular}{ccc}
\toprule
$\dim\phi=1$, $\dim\psi=0$ & $\dim\phi=0$, $\dim\psi=1$ & $\dim\phi=\dim\psi=1$\\[0.1cm]
\midrule
$
\begin{matrix}
[\phi\smile \psi](ab) = \phi(ab)\,\psi(b)\\
[\phi\frown \psi](ab) = \phi(ab)\,\psi(a)\\
\phi\spleen \psi = 0 
\end{matrix}
$
&
$
\begin{matrix}
[\phi\smile \psi](ab) = \phi(a)\,\psi(ab)\\
\phi\frown \psi =0 
\\
[\phi\spleen \psi](ab) = \phi(b)\,\psi(ab)
\end{matrix}
$
&
$
\begin{matrix}
\phi\smile \psi  
\text{ is defined in Figure~\ref{fig-grid}}\qquad\\
[\phi\frown \psi](b) =\sum\nolimits_{
ab: a<b} \phi(ab)\,\psi(ab)\\
[\phi\spleen \psi](b) = \sum\nolimits_{
bc: c>b} \phi(bc)\,\psi(bc)
\end{matrix}
$
\\
\bottomrule
\end{tabular}
\end{table}

\section{Generalizations}\label{sec-general}

In this section we state the main results in their full generality, i.e., for \edit{R12P2}{general} connections and arbitrary \red{simplicial and cubical complexes} . The \red{results of} \S\ref{ssec-statements} are obtained in the particular case when \red{the complex is a grid}, the gauge group is trivial, i.e., $G=\{1\}$, and the fields \clarity{are real-valued}. 
Most of the results of~\S\ref{sec-examples} are obtained from these general results by substituting specific Lagrangians.

\subsection{\red{General connections}} \label{ssec-general-connections}
\edit{R12P2}{}


\move{R12P2}{}
Interaction with a gauge field is introduced by replacement of (co)boundary by covariant (co)boundary.
\move{R12P2}{} \red{The latter} is defined in terms of cochain products as follows; \red{see Table~\ref{tab-products} and} cf.~\cite[\S IV--V]{Dimakis-etal-94}.
Let $U\in C^1(\red{M};G)$, $A=U-1$, $F$ be a gauge group field, the connection, and the curvature respectively.


\begin{definition} \label{def-cap-general} \label{def-covar-boundary-gauge} \label{def-covar-coboundary-gauge}
Denote by $C^k(\red{M};V)$ the set of functions defined on the set of $k$-dimensional faces and assuming values in a set $V$. Here $V$, and hence $C^k(\red{M};V)$, is a set, not necessarily a group.

Denote by $a\dots b$ the face $f$ such that $\min f=a$, $\max f=b$ (if such face $f$ exists, then it is unique). \remove{R12P2}{}
An ordered triple of faces $a\dots b, b\dots c\subset a\dots c$ of dimensions $k$, $l$, $k+l$ respectively is \emph{cooriented} (repectively, \emph{opposite oriented}), if the ordered set consisting of a positive basis in $a\dots b$ and a positive basis in $b\dots c$ is a positive (respectively, negative) basis in $a\dots c$. Write
$$
\langle a,b,c\rangle=
\begin{cases}
  +1, & \mbox{if }a\dots b,b\dots c, a\dots c\mbox{ are cooriented},  \\
  -1, & \mbox{if }a\dots b,b\dots c, a\dots c\mbox{ are oppositely oriented}.
\end{cases}
$$
The \emph{cup}, \emph{cap}, and \gram{\emph{cop product}} of functions $\Phi\in C^k(\red{M};\mathbb{C}^{p\times q})$ and $\Psi\in C^l(\red{M};\mathbb{C}^{q\times r})$ are the $\mathbb{C}^{p\times r}$-valued functions on $(k+l)$-, $(k-l)$-, and $(l-k)$-dimensional faces respectively given by
\begin{align*}
[\Phi\smile \Psi](a\dots c)&=\sum_{b:\,\dim(a\dots b)=k,\dim(b\dots c)=l}\langle a,b,c\rangle\Phi(a\dots b)\Psi(b\dots c);\\
[\Phi\frown \Psi](b\dots c)&=\sum_{a:\,\dim(a\dots c)=k,\dim(a\dots b)=l}\langle a,b,c\rangle\Phi(a\dots c)\Psi(a\dots b);\\
[\Phi\spleen  \Psi](a\dots b)&=\sum_{c:\,\dim(b\dots c)=k,\dim(a\dots c)=l}\langle a,b,c\rangle\Phi(b\dots c)\Psi(a\dots c),
\end{align*}
where the sums are over all the vertices such that there exist $3$ faces $a\dots b,b\dots c\subset a\dots c$.

For $\Phi\in C^k(\red{M};\mathbb{C}^{n\times n})$, the \emph{covariant coboundary} and
the \emph{covariant boundary} are respectively
\begin{align}\label{eq-covar-coboundary-gauge}
  D_A\Phi & := \delta \Phi+A\smile \Phi -(-1)^k \Phi\smile A;\\
  \label{eq-covar-boundary-gauge}
  D^*_A\Phi & :=\partial \Phi+ (\Phi^*\frown A)^* +(-1)^k (A\spleen  \Phi^*)^*.
\end{align}

\move{R12P3}{}
For $\phi\in C^k(\red{M};\mathbb{C}^{1\times n})$, 
the \emph{gauge transformation} by $g\in C^0(\red{M};G)$ is \red{the field} $\phi\smile g$, \red{and} the \emph{covariant coboundary} and
the \emph{covariant boundary} are respectively
\begin{align}
  \label{eq-covar-coboundary}
  D_A\phi & := \delta\phi-(-1)^k\phi\smile A;\\
  \label{eq-covar-boundary}
  D^*_A\phi & :=\partial\phi+ (-1)^k (A\spleen  \phi^*)^*.
\end{align}
\end{definition}

\begin{remark}\label{rem-covariant}\label{rem-covariant-boundary-informally}
\remove{R12P2}{}
\move{R12P3}{}Definitions of a gauge transformation and covariant (co)boundary crucially depend on the set of field values (more precisely, on the representation of $G$): compare \eqref{eq-covar-coboundary-gauge}--\eqref{eq-covar-boundary-gauge} and \eqref{eq-covar-coboundary}--\eqref{eq-covar-boundary}.
For $n=1$ there is a minor conflict of notation between these pairs of equations, cleared up by context.

Informally, \eqref{eq-covar-coboundary}--\eqref{eq-covar-boundary} mean the following. Think of the field value at a face $e$ as sitting at the maximal vertex $\max e$. Then the covariant (co)boundary value at a face $v$ is defined just as the ordinary (co)boundary, but all the involved field values are parallelly transported to the maximal
vertex $\max v$.

The definition of the {cup product} is equivalent to~\cite[(22.3)]{Whitney-38} but not~\cite[Chapter IX, \S14, Eq.~(7)]{Whitney-57}.

Up to sign and factors interchange, the \gram{cop product} is the \red{cap product} in the same \edit{R12P2}{grid} but with reversed vertices ordering. The \red{cap and cop} products vanish for $k<l$ and $k>l$ respectively, and do \emph{not} coincide for $k=l\ne 0$. Usually both are denoted in the same way, which does not lead to a conflict until one identifies chains and cochains (hence the domains of the products). Since we have performed such identification, we need to introduce new notation $\spleen$ and new term ``\gram{cop product}''.
\end{remark}

\begin{proposition}
\label{l-Bianchi}
For each gauge group field $U$ we have
$F=\delta A+A\smile A$, $D_{A}F=0$, and~\eqref{eq-def-boundary-particular}--\eqref{eq-def-unmotivated}.
\end{proposition}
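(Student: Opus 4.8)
The plan is to verify the four assertions in turn: the curvature identity $F=\delta A+A\smile A$, the Bianchi identity $D_AF=0$, and the two explicit formulas \eqref{eq-def-boundary-particular}--\eqref{eq-def-unmotivated} for the covariant boundary. The curvature identity is the cornerstone and is proved by a direct computation on a single $2$-face. Fix a face $abcd$ listed counterclockwise from its minimal vertex $a$ (so $c=\max$), substitute $A=U-1$ into the explicit cup-product formula $[A\smile A](abcd)=A(ab)A(bc)-A(ad)A(dc)$ from Figure~\ref{fig-grid}, and expand. The quadratic terms reproduce $U(ab)U(bc)-U(ad)U(dc)=F[U](abcd)$ together with the linear remainder $-U(ab)-U(bc)+U(ad)+U(dc)$; since by Definition~\ref{def-boundary} one has $[\delta A](abcd)=U(ab)+U(bc)-U(dc)-U(ad)$, this remainder is cancelled, so $\delta A+A\smile A=F$ face by face.

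Granting this, the Bianchi identity $D_AF=0$ follows by the same formal manipulation as in the continuum. Since $F$ has dimension $2$, equation~\eqref{eq-covar-coboundary-gauge} gives $D_AF=\delta F+A\smile F-F\smile A$. I would substitute $F=\delta A+A\smile A$ and use three properties of the cup-product from Definition~\ref{def-cap-general}, standard for the Whitney product \cite{Whitney-38}: associativity, the coboundary (Leibniz) rule $\delta(\Phi\smile\Psi)=\delta\Phi\smile\Psi+(-1)^{\dim\Phi}\Phi\smile\delta\Psi$, and $\delta\delta=0$. These yield $\delta F=\delta A\smile A-A\smile\delta A$, while $A\smile F=A\smile\delta A+A\smile A\smile A$ and $F\smile A=\delta A\smile A+A\smile A\smile A$; summing the three contributions, the cubic terms $A\smile A\smile A$ and the mixed terms $\delta A\smile A$ and $A\smile\delta A$ cancel in pairs, leaving $D_AF=0$. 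The only genuinely nontrivial input here is associativity of $\smile$, to which the paper is already committed.

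The two covariant-boundary formulas are obtained by unwinding \eqref{eq-covar-boundary-gauge} with the help of Table~\ref{tab-products} and the unitarity $U(e)^{*}=U(e)^{-1}$, i.e.\ $A(e)^{*}=U(e)^{-1}-1$. For \eqref{eq-def-boundary-particular} take $\Phi=j$ (dimension $1$), so $D_A^*j=\partial j+(j^*\frown A)^*-(A\spleen j^*)^*$. Evaluating at a vertex $v$, the cap-term equals $\sum_{a<v}(U(av)^{-1}-1)\,j(av)$ and the cop-term contributes $-\sum_{c>v}j(vc)\,(U(vc)^{-1}-1)$; the contributions of the ``$-1$'' summands exactly cancel the two sums in $[\partial j](v)=\sum_{a<v}j(av)-\sum_{c>v}j(vc)$, leaving precisely $\sum_{e\text{ ending at }v}U(e)^{-1}j(e)-\sum_{e\text{ starting at }v}j(e)U(e)^{-1}$, which is \eqref{eq-def-boundary-particular}.

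For \eqref{eq-def-unmotivated} the same scheme applies to $\Phi=\#F$ (dimension $2$), giving $D_A^*\#F=\partial(\#F)+((\#F)^*\frown A)^*+(A\spleen(\#F)^*)^*$ evaluated on an edge $e$. I would expand each of the three terms as a sum over the $2$-faces $f\supset e$, insert $F[U](f)=U(ab)U(bc)-U(ad)U(dc)$, and repeatedly use $U^*=U^{-1}$ to turn conjugated matrix products into parallel transports along the complementary path $\partial f-e$; the sharp signs are facewise scalars and factor out of each face's contribution. This last computation is where the real work lies: one must track the coorientation sign of $e$ in each $\partial f$, correctly identify which product of three $U$'s (or inverses) equals $U(\partial f-e)$, and check that the stray linear terms cancel so that only $\#\big(U(e)-U(\partial f-e)\big)$ survives per face, matching the example $[D_A^*\#F](dc)=U(dabc)+U(dfec)-2U(dc)$ in which all sharp signs equal $-1$. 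The main obstacle is thus the sign bookkeeping in \eqref{eq-def-unmotivated}; the remaining parts are either short direct expansions or the formal Bianchi cancellation.
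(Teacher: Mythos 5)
Your treatment of the first three assertions is correct, and it takes a slightly more hands-on route than the paper: where you expand everything directly from the definitions \eqref{eq-covar-coboundary-gauge}--\eqref{eq-covar-boundary-gauge} and the explicit product formulas of Figure~\ref{fig-grid} and Table~\ref{tab-products}, the paper first establishes Lemma~\ref{l-boundary-cubical} ($D_A\Phi=U\smile\Phi-(-1)^k\Phi\smile U$, $F=U\smile U$, $\check D^*_A\Phi=\Phi\frown U+(-1)^kU\spleen\Phi$) and then reads all four claims off it: $F=(1+A)\smile(1+A)=\delta A+A\smile A$, and $D_AF=U\smile(U\smile U)-(U\smile U)\smile U=0$ by associativity alone. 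Your face-by-face check of the curvature identity and your Leibniz-rule Bianchi computation are both valid (the identities you invoke are exactly those of Lemma~\ref{l-identities}), and your derivation of \eqref{eq-def-boundary-particular} is a correct inline reproduction of what the paper gets from the lemma. The trade-off is that the paper's lemma packages the ``$\partial\Phi$ cancels against the $-1$ parts of $A$'' bookkeeping once and for all, which is precisely the cancellation you have to redo by hand in each part.

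For \eqref{eq-def-unmotivated}, however, you have only stated a plan, and you yourself flag the decisive computation as not done. Two things are missing. First, the structural fact that makes the facewise identification possible at all: by Definition~\ref{def-cochain} every $2$-face $f\supset e$ of a cubical complex has either $\min f=\min e$ or $\max f=\max e$, so the cap-term $((\#F)^*\frown U)^*$ collects exactly the faces of one kind and the cop-term $(U\spleen(\#F)^*)^*$ exactly the other; without this split the ``sum over $f\supset e$'' in \eqref{eq-def-unmotivated} does not match the two product terms. Second, the per-face identity itself: for $f=abcd$ with $e=ab$ and $\min f=\min e=a$ one must verify, using unitarity, that the cop-contribution $\bigl(U(bc)F(abcd)^*\bigr)^*=F(abcd)U(cb)=U(abc)U(cb)-U(adc)U(cb)=U(ab)-U(adcb)=U(e)-U(\partial f-e)$, and similarly for the cap-contribution when $\max f=\max e$. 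This is a short computation, but it is the actual content of \eqref{eq-def-unmotivated}; ``one must track the signs and check that the stray terms cancel'' is a promissory note, not a proof. (The cancellation of $\partial(\#F)$ against the ``$-1$'' parts of $A$ that you anticipate does occur — it is again the content of Lemma~\ref{l-boundary-cubical} — so your outline is workable; it just needs to be carried out.)
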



\remove{R12P2}{}


\begin{definition} \label{def-local}
(Cf.~Definition~\ref{def-local-particular})
\remove{R12P3}{}
A map
$$
\mathcal{L}\colon C^k({M};\mathbb{C}^{1\times n})\times C^1({M};\mathbb{C}^{n\times n})\to C_0({M};\mathbb{R})
$$
is a \emph{local Lagrangian}, if for each vertex~$v\in M$ there is a smooth function $L_v(\phi_1,\dots,\phi_{p},\phi'_1,\dots,\phi'_{q})
\in \mathrm{C}^1\left(\left(\mathbb{C}^{1\times n}\right)^{p+q}\right)$
such that for each $\phi \in C^k({M};\mathbb{C}^{1\times n})$ and $U \in C^1({M};\mathbb{C}^{n\times n})$ we have
\begin{equation*}
\mathcal{L}[\phi,U](v)=
L_v(
\phi(e_1),\dots,\phi(e_p),[D_{A[U]}\phi](f_1),\dots,[D_{A[U]}\phi](f_q)),
\end{equation*}
where $e_1,\dots,e_p$ and $f_1,\dots,f_q$ are
all the faces of dimension $k$ and $k+1$ respectively
with the maximal vertex $v$; see Figure~\ref{fig-grid9}.
Define $\tfrac{\partial L_v}{\partial\phi_l}\colon \left(\mathbb{C}^{1\times n}\right)^{p+q}\to\mathbb{C}^{n\times 1}$ by
$\left(\tfrac{\partial L_v}{\partial\phi_l}\right)_{m}=
\tfrac{\partial  L_v}{\partial (\mathrm{Re}\,\phi_{l}^m)}-i\tfrac{\partial L_v}{\partial (\mathrm{Im}\,\phi_{l}^m)}$, where $\phi_{l}=(\phi_{l}^1,\dots,\phi_{l}^n)\in \mathbb{C}^{1\times n}$. Define $\tfrac{\partial L_v}{\partial\phi'_m}$ analogously.
Define
$$
\frac{\partial \mathcal{L}[\phi,U]}{\partial\phi}\in C_k({M};\mathbb{C}^{n\times 1})\qquad\text{and}\qquad \frac{\partial \mathcal{L}[\phi,U]}{\partial(D_A\phi)}\in C_{k+1}({M};\mathbb{C}^{n\times 1})
$$
by the following formulae for each $l=1,\dots,p$ and $m=1,\dots,q$:
\begin{align*}
\frac{\partial \mathcal{L}[\phi,U]}{\partial\phi}(e_{l})&:=
\frac{\partial L_v }{\partial\phi_l}(\phi(e_1),\dots,\phi(e_p),[D_{A[U]}\phi](f_1),\dots,[D_{A[U]}\phi](f_q)),\\
\frac{\partial \mathcal{L}[\phi,U]}{\partial(D_A\phi)}(e_{m})&:=
\frac{\partial L_v }{\partial\phi'_{m}}(\phi(e_1),\dots,\phi(e_p),[D_{A[U]}\phi](f_1),\dots,[D_{A[U]}\phi](f_q)).
\end{align*}
A field $\phi\in C^k({M};\mathbb{C}^{1\times n})$ is \edit{R1P4}{\emph{an extremal} or \emph{stationary}} for the functional $\mathcal{S}[\phi,U]=\epsilon\mathcal{L}[\phi,U]$, \red{
if $\left.\frac{\partial}
  {\partial t}\mathcal{S}[\phi+ t\Delta,U]\right|_{ t=0}=0$ for each $\Delta\in C^k({M};\mathbb{C}^{1\times n})$
and} given fixed~$U\in C^1({M};G)$.
\remove{R12P3}{}
\end{definition}


\begin{table}
  \caption{Partial derivatives of basic Lagrangians} 
  \label{tab-derivatives}
  \centering
  \begin{tabular}{|l|l|l|l|l|l|l|}
    \hline
    & Lagrangian \red{$\mathcal{L}[\phi]$} & 
    $L_v(\phi_1,\dots,\phi_{p},\phi'_1,\dots,\phi'_{q})$&
    $\tfrac{\partial\mathcal{L}}{\partial\phi}$
    &
    $\tfrac{\partial\mathcal{L}}{\partial(\delta\phi)}$
    \\
    \hline
    1 & $j\frown\red{\phi} \quad\red{(\dim\phi=1)}$
    &    $\sum_{l=1}^{p} j({e}_{l})\red{\phi_l}$ & $j$ & $0$  \\
    2 & \red{$\#\delta\phi\frown \delta\phi$}
    & $\sum_{l=1}^{q} g(k,l)\phi'_l\red{\phi'_l}$ & $0$ & \red{$2\#\delta\phi$}
    \\
    \hline
    & Lagrangian \red{$\mathcal{L}[\phi,U]$} & 
    $L_v(\phi_1,\dots,\phi_{p},\phi'_1,\dots,\phi'_{q})$&
    $\left(\tfrac{\partial\mathcal{L}}{\partial\phi}\right)^*$
    &
    $\left(\tfrac{\partial\mathcal{L}}{\partial(D_A\phi)}\right)^*$
    \\
    \hline
    3 & $\phi\frown \phi^*$
    & $\sum_{l=1}^{p} \phi_l\phi_l^*$ & $2\phi$ & $0$ \\
    4 &$\#D_A\phi\frown (D_A\phi)^*$
    & $\sum_{l=1}^{q} g(k,l)\phi'_l(\phi'_l)^*$ & $0$ & $2\#D_A\phi$  \\
    \hline
    & Lagrangian $\mathcal{L}[U]$ 
    &
    $L_v(U,\dots,U_{p},U'_1,\dots,U'_{q})$ & $\left(\tfrac{\partial\mathcal{L}}{\partial U}\right)^*$ &
    $\left(\tfrac{\partial\mathcal{L}}{\partial(F[U])}\right)^*$  \\
    \hline
    5 & $\mathrm{Re}\,\mathrm{Tr}[j^*\frown U]$
    &
    $\mathrm{Re}\,\mathrm{Tr}\sum_{l=1}^{p}j^*(e_{l})U_l$
    & $j$ & $0$\\
    6 & $\mathrm{Re}\,\mathrm{Tr}[\#F^*\frown F]$
    &
    $\mathrm{Re}\,\mathrm{Tr}\sum_{l=1}^{q}g(1,l)(U'_l)^*U'_l$
    & $0$ & $2\# F$\\
    \hline
  \end{tabular}
\end{table}

\begin{proposition}\label{l-differentiate-Lagrangian}
  For fixed \edit{R12P2}{current or covariant current} $j$, each of the Lagrangians in Table~\ref{tab-derivatives} to the left is local and the partial derivatives are given by the two columns to the right.
  \remove{R12P2}{}
\end{proposition}

\begin{theorem}[the Euler--Lagrange equation]\label{th-Euler-Lagrange-covar} 
  Let $\mathcal{L}\colon C^k({M};\mathbb{C}^{1\times n})\times C^1({M};\mathbb{C}^{n\times n})\to C_0({M};\mathbb{R})$ be a local Lagrangian, $A\in  C^1({M};\mathbb{C}^{n\times n})$ be
  a connection. Then $\phi\in C^k({M};\mathbb{C}^{1\times n})$ is \edit{R1P4}{an extremal} if and only if 
  \begin{equation}\label{eq-Euler-Lagrange-covar}
    D^*_A\left(\frac{\partial \mathcal{L}[\phi,1+A]}{\partial(D_A\phi)}\right)^*+\left(\frac{\partial \mathcal{L}[\phi,1+A]}{\partial\phi}\right)^*=0.
  \end{equation}
\end{theorem}

A \emph{local Lagrangian} $\mathcal{L}\colon C^1({M};\mathbb{C}^{n\times n})\to C_0({M};\mathbb{R})$ and the \emph{partial derivatives} $\frac{\partial \mathcal{L}}{\partial U}\in C_1({M};\mathbb{C}^{n\times n})$, $\frac{\partial \mathcal{L}}{\partial (F[U])}\in C_{2}({M};\mathbb{C}^{n\times n})$
are defined analogously to Definition~\ref{def-local}, only the fields $\phi$ and $D_A\phi$ are replaced by a gauge group field $U$ and the curvature $F[U]$ respectively (notice that $F[U]\ne D_AU$).
A gauge group field $U$ is \edit{R1P4}{an \emph{extremal}}, if it is stationary for the functional $\mathcal{S}[U]=\epsilon\mathcal{L}[U]$ under the constraint $U\in C^1({M};G)$. For fixed $\phi\in C^k({M};\mathbb{C}^{1\times n})$, a local Lagrangian $\mathcal{L}[\phi,U]$ in the sense of Definition~\ref{def-local}
is a local Lagrangian in the sense of this paragraph (by the second paragraph of Remark~\ref{rem-covariant-boundary-informally}). \clarity{The latter is the reason for using row-vectors $\phi$ rather than column-vectors.}

\begin{theorem}[the Euler--Lagrange equation]\label{th-Euler-Lagrange-gauge} 
  Let $\mathcal{L}\colon C^1({M};\mathbb{C}^{n\times n})\to C_0({M};\mathbb{R})$ be a local Lagrangian.
  Then a gauge group field $U\in C^1({M};G)$ is \edit{R1P4}{an extremal}, if and only if
  \begin{equation}\label{eq-Euler-Lagrange-gauge}
    \mathrm{Pr}_{T_U G} \left[D^*_A\left(\frac{\partial \mathcal{L}[U]}{\partial(F[U])}\right)^*+\left(\frac{\partial \mathcal{L}[U]}{\partial U}\right)^*\right]=0.
  \end{equation}
\end{theorem}



\begin{theorem}[Noether's theorem]\label{th-Noether-covar}
  If a local Lagrangian $\mathcal{L}[\phi,U]$ satisfies~\eqref{eq-invariance} for some $\Delta\in C^k({M};\mathbb{C}^{1\times n})$ and fixed $U\in C^1({M};G)$, then for each \edit{R1P4}{extremal} $\phi$ the edgewise scalar product of the covariant current $j[\phi,U]=\left(\frac{\partial\mathcal{L}[\phi,U]}{\partial(D_A\phi)}\frown \Delta\right)^*$ with 
  $U$ is conserved, i.e. $\partial\langle j[\phi,U],U\rangle=0$.\remove{clarity}{}
\end{theorem}

A Lagrangian $\mathcal{L}[\phi,U]$ is \emph{gauge invariant}, if $\mathcal{L}[\phi\smile g,g^*\smile U \smile g]=\mathcal{L}[\phi,U]$ for each $\phi\in C^k(M;\mathbb{C}^{1\times n})$, $U\in C^1(M;G)$, $g\in C^0(M;G)$.
For gauge invariant Lagrangians the numerous Noether currents are combined together as follows.

\begin{theorem}[Charge conservation] \label{th-charge-conservation}
  If a local Lagrangian $\mathcal{L}[\phi,U]$ is gauge invariant, then for \move{clarity}{} each gauge group field $U$ and each \edit{R1P4}{extremal} $\phi$ the following covariant current is conserved:
  $$
  j[\phi,U]=\left(\frac{\partial{\mathcal{L}[\phi,U]}}{\partial (D_A\phi)}\frown \phi\right)^*=
  \left(\frac{\partial{\mathcal{L}[\phi,U]}}{\partial U}\right)^*,
  \qquad  \text{i.e., }\quad D^*_A\mathrm{Pr}_{T_U G} j[\phi,U]=0.
  $$
\end{theorem}

\begin{theorem}[Charge conservation] \label{th-charge-conservation-gauge}
  Let
  $\mathcal{L}[U]=
  \mathcal{L}'[U]-\mathrm{Re}\,\mathrm{Tr}\,[j^*\frown U]$ be a local Lagrangian, where $j\in C_1({M};\mathbb{C}^{n\times n})$ is fixed \clarity{and} $\mathcal{L}'[U]$ is gauge invariant and does not depend on $j$. Then for each \edit{R1P4}{extremal $U\in C^1(M;G)$} the covariant current $j$ is conserved, i.e., $D^*_A\mathrm{Pr}_{T_U G} j=0$.
\end{theorem}

The last three theorems are not completely obvious even 
\clarity{for} a $1\times 1$ grid.
The crucial gauge invariance 
is usually guaranteed by the following result.

\begin{proposition}[Gauge covariance, see \cite{Dimakis-etal-94}] \label{l-gauge-covariance}
  For each $U\in C^1({M};G)$, $\Phi\in C^k({M};\mathbb{C}^{n\times n})$, $\phi\in C^k({M};\mathbb{C}^{1\times n})$, $g\in C^0({M};G)$ we have:
  \begin{align*}
    A[g^*\smile U\smile g]&=g^*\smile A[U]\smile g+g^*\smile \delta g&(=g^*\smile A[U]\smile g-\delta g^*\smile g);  \\
    F[g^*\smile U\smile g]&=g^*\smile F[U]\smile g; \\
    D_{A[g^*\smile U\smile g]}(g^*\smile\Phi\smile g)&=g^*\smile (D_{ A[U]}\Phi) \smile g;
    &D_{A[g^*\smile U\smile g]}(\phi\smile g)=(D_{ A[U]}\phi)\smile g;     \\
    D^*_{A[g^*\smile U\smile g]}(g^*\smile\Phi\smile g)&=g^*\smile (D^*_{ A[U]}\Phi) \smile g;
    &D^*_{A[g^*\smile U\smile g]}(\phi\smile g)=(D^*_{ A[U]}\phi)\smile g.
  \end{align*}
  \red{The} Lagrangians in the left column \edit{R12P2}{and rows 3, 4, 6} of Table~\ref{tab-derivatives} 
  are gauge invariant.
\end{proposition}

\subsection{\red{Simplicial and cubical complexes}}
\edit{R12P2}{}

\move{R12P2}{}
\begin{definition}\label{def-spacetime}
A \emph{finite simplicial} (respectively, \emph{cubical}) \emph{complex} is a finite set of simplices (respectively, hypercubes) in a Euclidean space of some dimension satisfying the following properties:
\begin{itemize}
  \item
the intersection of any two simplices (respectively, hypercubes) from the set is either empty or their common face
(a simplex/hypercube itself is also viewed as its own face);
  \item
all the faces of a simplex (respectively, a hypercube)
  from the set belong to the set as well.
\end{itemize}
\emph{Spacetime} ${M}$ is an arbitrary finite simplicial or cubical complex with a fixed vertices ordering. For a cubical complex, we require that the minimal and the maximal vertex of each $2$-dimensional face are opposite \edit{R12P2}{(this is essential for the definition of products and curvature).} 
The simplices/hypercubes of $M$ are called \emph{faces} of $M$.
\end{definition}

\begin{remark} \label{rem-ordering} \edit{R12P2}{While vertices ordering is required, a particular choice is not that important. 
For an \emph{arbitrary} ordering, the discretization algorithm from~\S\ref{ssec-main-tools} automatically produces a local Lagrangian for all field theories we considered (cf.~Proposition~\ref{l-differentiate-Lagrangian}).
Changing the ordering is like changing the lattice: combinatorial relations are changed but the underlying physical theory remains the same.
}
\end{remark}

\red{Until this subsection, spacetime was a grid with the dictionary vertices ordering. Passing to general spacetime is like passing from a coordinate chart to a coordinate-free formulation. The paper is intentionally designed to make this 
almost automatic. For} an arbitrary spacetime ${M}$,
all notions \red{in} the middle column of Table~\ref{tab-translation} except $\#$ \red{and} 
$\langle T,h\rangle_k$ are defined literally as
\red{above} (see the right column for definition numbers) \red{up to the following modifications required for simplicial complexes ${M}$ only:}
\begin{description}
  \item[Definition~\ref{def-tensor}:] \red{The \emph{Cartesian square} $M\times M$ is now
      a \emph{cell complex} (rather than simplicial or cubical complex) with faces of the form $e\times f$, where $e$ and $f$ are faces of $M$.}
  \item[Definition~\ref{def-curvature}:] \red{The \emph{curvature} is no longer defined by~\eqref{eq-def-curvature} but now} \move{R12P2}{} by the formula
$$
F[U](abc)=U(ab)U(bc)-U(ac)
$$
for each face $abc$ with the vertices listed in increasing order $a<b<c$.
   \item[Definition~\ref{def-cap-general}:]
   \move{R12P2}{}A face is \red{no longer} determined by just the minimal and the maximal vertices. \red{Thus we} denote by $a_1a_2\dots a_{s+1}$ the $s$-dimensional face with the vertices $a_1< a_2<\dots<a_{s+1}$. 
   \remove{R12P2}{} \red{Then} $a\dots b$, $b\dots c$, $a\dots c$ are replaced by $a_1\dots a_sb$, $bc_1\dots c_t$, $a_1\dots a_sbc_1\dots c_t$ respectively, summation over $b$ is omitted, and summation over $a$ and $c$ is replaced by summation over all collections $(a_1,\dots ,a_s)$ and $(c_1,\dots, c_t)$ respectively.
\end{description}
\edit{R12P2}{With these modifications,
all the theorems and corollaries in \S\ref{ssec-statements} 
and~\S\ref{ssec-general-connections}, as well as their proofs, remain literally true for an arbitrary spacetime $M$. (Propositions~\ref{l-Bianchi} and~\ref{l-differentiate-Lagrangian} remain true, once one drops all $\#$-operators; see the proofs.) } We do not use and do not define $\#$ \red{and} 
$\langle T,h\rangle_k$ for ${M}\ne I^d_N$.

\begin{remark} \label{rem-compare-definition} \label{rem-identification}
\move{R12P2}{} To make the definition of \remove{clarity}{}
fields more accessible to nonspecialists, we took the liberty to use equivalent definitions of some commonly used notions and to identify spaces connected by the unique fixed isomorphism. \remove{clarity}{}
Now we compare Definition~\ref{def-cochain} with the other ones in literature.

Often \emph{simplicial} (or \emph{cubical}) $k$-\emph{chains} are defined in a more abstract way, as the elements of the linear space $C_k({M};\mathbb{R})$ generated by the $k$-dimensional faces of $M$ (with somehow fixed orientation); and $k$-\emph{cochains} are defined as elements of the dual space $C^k({M};\mathbb{R})$.
But space $C_k({M};\mathbb{R})$ comes with the obvious unique distinguished basis: the basis consists of all the $k$-dimensional faces; the orientation of the faces is determined by the order of their vertices in spacetime~$M$ as specified in Definition~\ref{def-boundary}; the faces are listed in the dictionary order with respect to the ordered lists of their vertices. The distinguished basis identifies both $C_k({M};\mathbb{R})$ and $C^k({M};\mathbb{R})$ with the set of real-valued functions defined on the set of $k$-dimensional faces, that is, $k$-\emph{cochains} in the sense of Definition~\ref{def-cochain}. Notice that this identification
is not related to spacetime metric.

Thus we do \emph{not} distinguish between chains and cochains. 
Inserting the obvious isomorphism between their spaces in our formulae would give no advantage but would only complicate notation. 
However, to make notation compatible with the commonly used one, we sometimes switch between  different notation $C^k({M};\mathbb{R})$ and $C_k({M};\mathbb{R})$ for the same object (in our setup).

We \emph{do} distinguish between row- and column-vectors. This makes clear, if the product of two vectors is a number or a matrix. Some of our results depend on the type of vectors used as field values.



We do \emph{not} assume that ${M}$ is a manifold. In fact, faces of ${M}$ of dimension $>2$ have never appeared at all in the examples from \S\ref{sec-examples}. \remove{R12P3}{} 
The whole ambient spacetime is not that important: think of an electric network lying on a table; is spacetime of the model  1-, 2-, 3- or 4-dimensional? This is why we avoid dual grids and the Hodge star. However dimension-like properties of $M$ like the average vertex degree \emph{are} of course important.
\end{remark}

\section{Proofs} \label{s:proofs}

\subsection{Basic results}\label{ssec-proofs-basic}

First we prove the results of~\S\ref{ssec-statements}.
\edit{R12P3}{The statements are recalled right before the proofs for the convenience.
Throughout \S\ref{ssec-proofs-basic} $\mathcal{L}\colon C^k({M};\mathbb{R})\to C_0({M};\mathbb{R})$ is a local Lagrangian
and $\phi,\Delta\in C^k({M};\mathbb{R})$ (they are not necessarily extremals).
Besides the notation from~\S\ref{ssec-statements}, we only use the following one:
\begin{itemize}
\item $\epsilon$ is the sum of the values of a $0$-dimensional field over all the vertices;
\item $\phi\frown \Delta$ is the $0$-dimensional field given by $[\phi\frown \Delta](v)=\sum_{f:\dim f=k,\max f=v}\phi(f)\Delta(f)$, where the sum is over all the $k$-dimensional faces $f$ 
    with the maximal vertex $v$; cf.~Definition~\ref{def-cap-grid}.
\end{itemize}
}


\begin{lemma}[Lagrangian functional derivative]\label{l-Lagrangian-functional-derivative}
\remove{R12P3}{}
For arbitrary fields $\phi,\Delta\in C^k({M};\mathbb{R})$ we have
$$
 \left.\frac{\partial \mathcal{L}[\phi+ t\Delta]}{\partial t}\right|_{ t=0}=
 \left(\frac{\partial\mathcal{L}[\phi]}{\partial\phi}+
 \partial\frac{\partial\mathcal{L}[\phi]}{\partial(\delta\phi)}\right)
 \frown \Delta
 \,-\,(-1)^k\partial\left(
 \frac{\partial\mathcal{L}[\phi]}{\partial(\delta\phi)}
 \frown\Delta\right).
$$
\end{lemma}

\begin{proof} Take a vertex $v\in {M}$. Starting with \eqref{eq-local}--\eqref{eq-Lagrangian-derivatives2}, \red{then} using the chain rule, \remove{R12P3}{} 
and finally the well-known 'integration by parts' identity \cite{Whitney-38} \edit{R12P3}{(which holds for any $\psi\in C^{k+1}({M};\mathbb{R})$)}
\begin{equation}\label{eq-by-parts}
\partial(\red{\psi}\frown \red{\phi}) = (-1)^{\dim\red{\phi}}(\partial \red{\psi}\frown \red{\phi}-\red{\psi}\frown \delta \red{\phi})
\vspace{-0.2cm}
\end{equation}
we get
\begin{align*}
\left.\frac{\partial \mathcal{L}[\phi+ t\Delta]}{\partial t}(v)\right|_{t=0}
&=
\left.\frac{\partial}{\partial t} L_v\left([\phi+ t\Delta]({e}_1),\dots,[\phi+ t\Delta]({e}_p),[\delta\phi
+\delta\,t\Delta]({f}_{1}),\dots,[\delta\phi
+\delta\,t\Delta]({f}_{q})\right)\right|_{ t=0}
\\&=
\sum_{l=1}^{p}
\frac{\partial L_v}{\partial\phi_l}
\left(\phi({e}_1),\dots,\phi({e}_p),
[\delta\phi]({f}_{1}),\dots,[\delta\phi]({f}_{q})\right)
\left.\frac{\partial}{\partial t}
[\phi+ t\Delta]({e}_{l})\right|_{ t=0}
\\&+
\sum_{m=1}^{q}
\frac{\partial L_v}{\partial\phi'_{m}}
\left(\phi({e}_1),\dots,\phi({e}_p),
[\delta\phi]({f}_{1}),\dots,[\delta\phi]({f}_{q})\right)
\left.\frac{\partial}{\partial t}
[\delta\phi+\delta\,t\Delta]({f}_{m})
\right|_{ t=0}\\
&=
\sum_{l=1}^{p}
\frac{\partial \mathcal{L}[\phi]}{\partial\phi}({e}_{l})
\Delta({e}_{l})+
\sum_{m=1}^{q}
\frac{\partial \mathcal{L}[\phi]}{\partial(\delta\phi)}({f}_{m})
[\delta\Delta]({f}_{m})
=
\left[\frac{\partial \mathcal{L}[\phi]}{\partial\phi}\frown
\Delta+
\frac{\partial \mathcal{L}[\phi]}{\partial(\delta\phi)}\frown
\delta\Delta\right](v)
\\&=
\left[
  \left(\frac{\partial\mathcal{L}[\phi]}{\partial\phi}+
  \partial \frac{\partial\mathcal{L}[\phi]}{\partial(\delta\phi)}\right)
  \frown \Delta
  -(-1)^k\partial\left(
  \frac{\partial\mathcal{L}[\phi]}{\partial(\delta\phi)}
  \frown\Delta\right)\right](v).\\[-1.4cm]
\end{align*}
\end{proof}

\begin{lemma}\label{l-nondegeneracy-triv}
  \clarity{Fix} $\phi\in C_k({M};\mathbb{R})$.
  If $\epsilon [\phi\frown \Delta]=0$ for each $\Delta\in C^k({M};\mathbb{R})$, then $\phi=0$.
\end{lemma}

\begin{proof} Take $\Delta=\phi$. Then \remove{R12P3}{}
$0=\epsilon [\phi\frown \phi]=\sum_{f:\dim f=k}\phi(f)^2$.
Thus $\phi=0$.
\end{proof}

\begin{theorem}[\red{Restatement of Theorem~\ref{th-Euler-Lagrange}}]
\label{th-Euler-Lagrange-repeated}
\edit{R12P3}{A field $\phi$ is an extremal $\Leftrightarrow$
$
    \partial\frac{\partial \mathcal{L}[\phi]}{\mathrm{\partial}(\delta\phi)}+\frac{\partial \mathcal{L}[\phi]}{\mathrm{\partial}\phi}=0.
$}
\end{theorem}

\begin{proof}[Proof of the Euler--Lagrange Theorem~\ref{th-Euler-Lagrange-repeated}]
A field $\phi$ is \edit{R1P4}{an extremal} if and only if for \red{any} 
 $\Delta$ 
we have
\begin{multline*}
0=
\epsilon \left.\frac{\partial \mathcal{L}[\phi+ t\Delta]}{\partial t}\right|_{ t=0}=
\epsilon \left[\left(\frac{\partial \mathcal{L}[\phi]}{\partial\phi} +\partial \frac{\partial\mathcal{L}[\phi]}{\partial(\delta\phi)}\right)\frown \Delta\right]
-\\
 -(-1)^k\epsilon \partial\left[
  \frac{\partial\mathcal{L}[\phi]}{\partial(\delta\phi)}
  \frown\Delta\right]
=
\epsilon \left[\left(\frac{\partial \mathcal{L}[\phi]}{\partial\phi} +\partial\frac{\partial\mathcal{L}[\phi]}{\partial(\delta\phi)}\right)
\frown \Delta\right].
\end{multline*}
The latter two equalities follow from Lemma~\ref{l-Lagrangian-functional-derivative} and the obvious identity $\epsilon\partial=0$ respectively.
Since $\Delta$ is arbitrary, by  Lemma~\ref{l-nondegeneracy-triv} the resulting equation is equivalent to \clarity{desired equation}~\eqref{eq-Euler-Lagrange}.
\end{proof}

\begin{theorem}[\red{Restatement of Theorem~\ref{th-Noether}}] 
\label{th-Noether-repeated}
    \edit{R12P3}{An extremal $\phi$ satisfies
  $
  \left.\frac{\partial}
  {\partial t}\mathcal{L}[\phi+ t\Delta]\right|_{ t=0}=0
  $
  $\Leftrightarrow$ the current
  $
  j[\phi]=\frac{\partial\mathcal{L}[\phi]}{\partial(\delta\phi)}\frown \Delta
  $
  is conserved.}
\end{theorem}

\begin{proof}[Proof of the Noether Theorem~\ref{th-Noether-repeated}]
By Lemma~\ref{l-Lagrangian-functional-derivative} and Theorem~\ref{th-Euler-Lagrange-repeated} for \edit{R1P4}{an extremal} $\phi$ we get
\begin{multline*}
  \left.\frac{\partial \mathcal{L}[\phi+ t\Delta]}{\partial t}\right|_{ t=0}
  =
  \left(\frac{\partial\mathcal{L}[\phi]}{\partial\phi}
  +\partial\frac{\partial\mathcal{L}[\phi]}{\partial(\delta\phi)}
  \right)
  \frown \Delta
  -(-1)^{k}\partial\left(
  \frac{\partial\mathcal{L}[\phi]}{\partial(\delta\phi)}
  \frown\Delta\right)
  =
  -(-1)^{k}\partial j[\phi].
\end{multline*}
Thus $j[\phi]$ is a conserved current, if and only if the left-hand side vanishes.
\end{proof}

\begin{remark} \edit{R1P6}{This is immediately generalized to symmetries of the action $\mathcal{S}$ rather than the Lagrangian~$\mathcal{L}$: if
$\left.\frac{\partial}
  {\partial t}\mathcal{S}[\phi+ t\Delta]\right|_{ t=0}=0$ then
$\left.\frac{\partial}
  {\partial t}\mathcal{L}[\phi+ t\Delta]\right|_{ t=0}=\partial\Lambda[\phi,\Delta]$ for some $\Lambda[\phi,\Delta]\in C_1({M};\mathbb{R})$ because ${M}$ is connected. Then $j[\phi]+(-1)^k\Lambda[\phi,\Delta]$ is a conserved current.}
\end{remark}

\begin{theorem}[\red{Restatement of Theorem~\ref{th-energy-conservation}}]
\label{th-energy-conservation-repeated}
\edit{R12P3}{For each extremal $\phi$ the \emph{energy-momentum} tensor
  $
  T[\phi]=\frac{\partial\mathcal{L}[\phi]}{\partial(\delta\phi)}\times \delta\phi +\frac{\partial\mathcal{L}[\phi]}{\partial\phi}\times \phi
  $
  is conserved.} 
\end{theorem}

\begin{proof}[Proof of \red{Energy-momentum conservation} Theorem~\ref{th-energy-conservation-repeated}]
  By Theorem~\ref{th-Euler-Lagrange-repeated}, Definition~\ref{def-tensor}, and the known identity $\partial\partial=\delta\delta=0$, \clarity{for each extremal $\phi$} we have
\begin{multline*}
\partial T[\phi]
\red{=\partial\left(\frac{\partial\mathcal{L}[\phi]}{\partial(\delta\phi)}\times \delta\phi +\frac{\partial\mathcal{L}[\phi]}{\partial\phi}\times \phi\right)}
=\partial\left(\frac{\partial\mathcal{L}[\phi]}{\partial(\delta\phi)}\times \delta\phi -\partial\frac{\partial\mathcal{L}[\phi]}{\partial(\delta\phi)}\times \phi\right)
=\\=\frac{\partial\mathcal{L}[\phi]}{\partial(\delta\phi)}\times \delta\delta\phi
+\partial\frac{\partial\mathcal{L}[\phi]}{\partial(\delta\phi)}\times \delta\phi -\partial\frac{\partial\mathcal{L}[\phi]}{\partial(\delta\phi)}\times \delta\phi
-\partial\partial\frac{\partial\mathcal{L}[\phi]}{\partial(\delta\phi)}\times \phi
=0.\\[-1.6cm]
\end{multline*}
\end{proof}

\subsection{\red{Summit}} \label{ssec-proofs-global}

\edit{R2P1}{Now we prove the result of~\edit{R2P1}{\S\ref{ssec-summit}}.}
We start with a \red{visual} heuristic 
proof of \red{a}
particular case \red{from~\S\ref{ssec-quick}}.

\begin{proof}[Proof of identity~\eqref{eq-discrete-Poynting}]
By definition, twice the left-hand side of~\eqref{eq-discrete-Poynting} equals

$
+
\underbrace{F(\includegraphics[width=0.4cm]{0+0+1+2s.png})\, F(\includegraphics[width=0.4cm]{0+0+1+2s.png})}_{1}+
\underbrace{F(\hspace{-0.3cm}\begin{tabular}{c}\vspace{-0.1cm}
\includegraphics[width=0.4cm]{0+1+1+2s.png}
\end{tabular}\hspace{-0.2cm})\, F(\includegraphics[width=0.4cm]{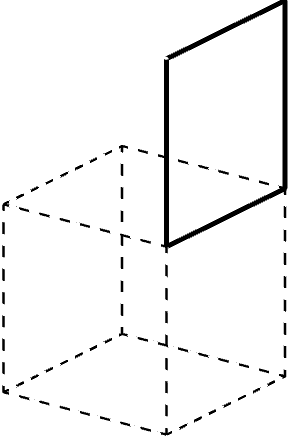})}_{2}+
\underbrace{F(\hspace{-0.2cm}\begin{tabular}{c}
\includegraphics[width=0.4cm]{0+1+2+2s.png}\end{tabular}\hspace{-0.2cm})\, F(\includegraphics[width=0.4cm]{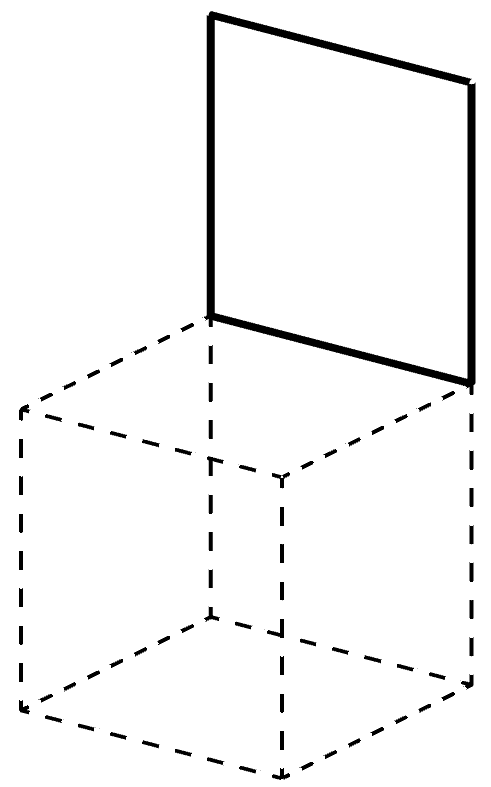})}_{3}$
$
-\underbrace{F(\includegraphics[width=0.4cm]{1+2s.png})\, F(\includegraphics[width=0.4cm]{1+2s.png})}_{4}-
\underbrace{F(\hspace{-0.3cm}\begin{tabular}{c}\vspace{-0.1cm}
\includegraphics[width=0.4cm]{_0+1+1+2s.png}
\end{tabular}\hspace{-0.2cm})\, F(\includegraphics[width=0.4cm]{0+1+1+2s.png})}_{5}-
\underbrace{F(\hspace{-0.2cm}\begin{tabular}{c}
\includegraphics[width=0.4cm]{_0+1+2+2s.png}\end{tabular}\hspace{-0.2cm})\, F(\includegraphics[width=0.4cm]{0+1+2+2s.png})}_{6}$

$
-
\underbrace{F(\includegraphics[width=0.6cm]{1+1+1+2s.png})\, F(\includegraphics[width=0.4cm]{0+1+1+2s.png})}_{7}-
\underbrace{F(\includegraphics[width=0.6cm]{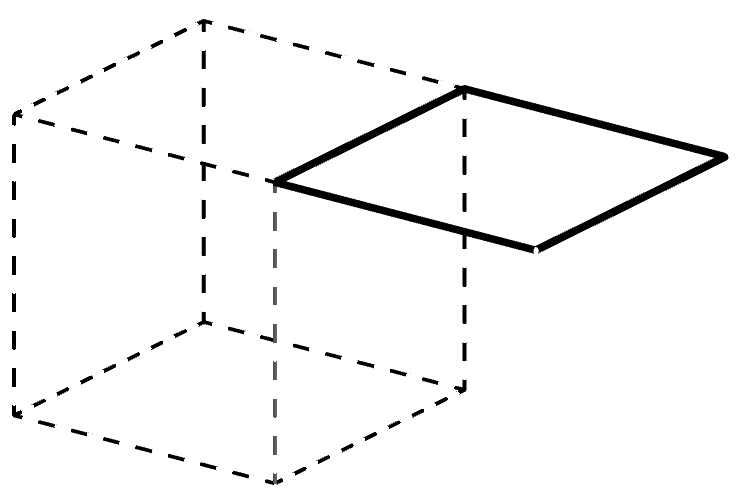})\, F(\includegraphics[width=0.4cm]{0+1+1+2s.png})}_{8}+
\underbrace{F(\includegraphics[width=0.4cm]{1+2s.png})\, F(\includegraphics[width=0.4cm]{0+2s.png})+
F(\includegraphics[width=0.4cm]{0+0+1+2s.png})\, F(\includegraphics[width=0.4cm]{0+2s.png})}_{9}
$

$
+
\underbrace{F(\includegraphics[width=0.6cm]{1+2+2+2s.png})\, F(\includegraphics[width=0.4cm]{0+1+2+2s.png})}_{10}+
\underbrace{F(\includegraphics[width=0.6cm]{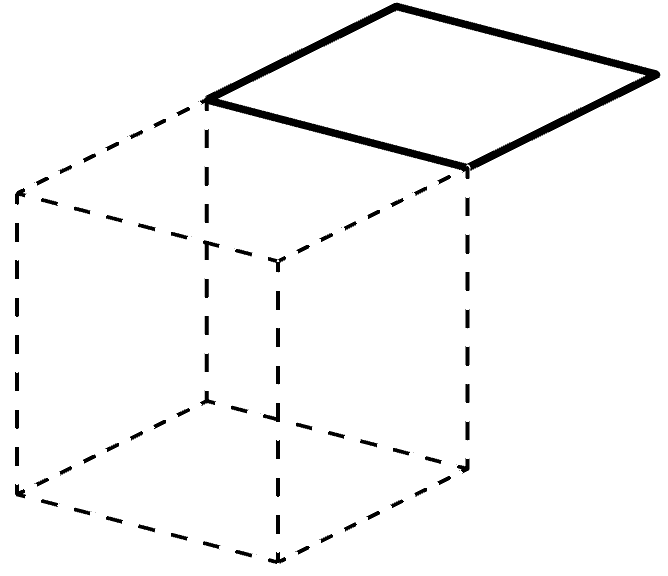})\, F(\includegraphics[width=0.4cm]{0+1+2+2s.png})}_{11}-
\underbrace{F(\includegraphics[width=0.4cm]{1+2s.png})\, F(\includegraphics[width=0.4cm]{0+1s.png})-
F(\includegraphics[width=0.4cm]{0+0+1+2s.png})\, F(\includegraphics[width=0.4cm]{0+1s.png})}_{12}$

\mbox{$
=\!\!$ $\left(\!
  F(\includegraphics[width=0.4cm]{1+2s.png})\!+\!               F(\includegraphics[width=0.4cm]{0+0+1+2s.png})
  \!\right)
  \!\!\left[
  \underbrace{F(\includegraphics[width=0.4cm]{0+0+1+2s.png})
  -F(\includegraphics[width=0.4cm]{1+2s.png})}_{1-4}-
  {\color{blue}\underbrace{F(\includegraphics[width=0.4cm]{0+1+1+2s.png})}_{a+b}}+
  \underbrace{F(\includegraphics[width=0.4cm]{0+2s.png})}_{9}+
  {\color{green}\underbrace{F(\includegraphics[width=0.4cm]{0+1+2+2s.png})}_{c+d}}-
  \underbrace{F(\includegraphics[width=0.4cm]{0+1s.png})}_{12}
  \right]$}

\smallskip
  $
  +
  \left[
  {\color{blue}\underbrace{F(\includegraphics[width=0.4cm]{1+2s.png})}_{a}}-
  \underbrace{F(\hspace{-0.3cm}\begin{tabular}{c}\vspace{-0.0cm}
       \includegraphics[width=0.6cm]{1+1+1+2s.png}
       \end{tabular}\hspace{-0.2cm})}_{7}-
  \underbrace{F(\hspace{-0.3cm}\begin{tabular}{c}\vspace{-0.2cm}
       \includegraphics[width=0.4cm]{_0+1+1+2s.png}
       \end{tabular}\hspace{-0.2cm})}_{5}+
  {\color{brown} \underbrace{F(\includegraphics[width=0.4cm]{0+1+1+2s.png})}_{e}}
  \right]
  F(\includegraphics[width=0.4cm]{0+1+1+2s.png})$
  $
  +
  \left[
  {\color{blue}\underbrace{F(\includegraphics[width=0.4cm]{0+0+1+2s.png})}_{b}}-
  \underbrace{F(\hspace{-0.3cm}\begin{tabular}{c}\vspace{-0.0cm}
       \includegraphics[width=0.6cm]{0+0+1+1+1+2s.png}
       \end{tabular}\hspace{-0.2cm})}_{8}-
  {\color{brown} \underbrace{F(\hspace{-0.3cm}\begin{tabular}{c}\vspace{-0.2cm}
       \includegraphics[width=0.4cm]{0+1+1+2s.png}
       \end{tabular}\hspace{-0.2cm})}_{e}}+
  \underbrace{F(\includegraphics[width=0.4cm]{0+0+0+1+1+2s.png})}_{2}
  \right]
  F(\includegraphics[width=0.4cm]{0+1+1+2s.png})$

  $
  -
  \left[
  {\color{green}\underbrace{F(\includegraphics[width=0.4cm]{1+2s.png})}_{c}}-
  \underbrace{F(\hspace{-0.3cm}\begin{tabular}{c}\vspace{-0.0cm}
       \includegraphics[width=0.6cm]{1+2+2+2s.png}
       \end{tabular}\hspace{-0.2cm})}_{10}+
  \underbrace{F(\hspace{-0.3cm}\begin{tabular}{c}\vspace{-0.2cm}
       \includegraphics[width=0.4cm]{_0+1+2+2s.png}
       \end{tabular}\hspace{-0.2cm})}_{6}-
  {\color{red}\underbrace{F(\includegraphics[width=0.4cm]{0+1+2+2s.png})}_{f}}
  \right]
  F(\includegraphics[width=0.4cm]{0+1+2+2s.png})$
  $
  -
  \left[
  {\color{green}\underbrace{F(\includegraphics[width=0.4cm]{0+0+1+2s.png})}_{d}}-
  \underbrace{F(\hspace{-0.3cm}\begin{tabular}{c}\vspace{-0.0cm}
       \includegraphics[width=0.6cm]{0+0+1+2+2+2s.png}
       \end{tabular}\hspace{-0.2cm})}_{11}+
  {\color{red} \underbrace{F(\hspace{-0.3cm}\begin{tabular}{c}\vspace{-0.2cm}
       \includegraphics[width=0.4cm]{0+1+2+2s.png}
       \end{tabular}\hspace{-0.2cm})}_{f}}-
  \underbrace{F(\includegraphics[width=0.4cm]{0+0+0+1+2+2s.png})}_{3}
  \right]
  F(\includegraphics[width=0.4cm]{0+1+2+2s.png})
 =0.
 $\\
Here the terms labeled by letters cancel each other; the terms in square brackets vanish by~\eqref{eq-Maxwell-1}.
\end{proof}

\edit{R2P1}{To proceed,} we are going to integrate tensors defined on $I^d_N\times I^d_N$ over the faces of the \emph{doubling}.


\begin{definition} \label{def-flow-doubling}
Dissect the hypercube $0\le x_0,x_1\dots, x_{d-1}\le N$ in $\mathbb{R}^d$ into $(2N)^d$ equal hypercubes. The cubical complex consisting of all the faces of the resulting hypercubes is the \emph{doubling} of $I^d_N$. For a vertex $f$ of the doubling, define $f_0,\dots,f_{d-1}\in\mathbb{Z}$ by the formula $f=f_0\mathrm{e_0}+\dots+f_{d-1}\mathrm{e}_{d-1}$. 
The face of the initial grid \clarity{$I^d_N$} with the center $f$ is denoted by $f$ as well.

Let $T$ be a partially symmetric type $(1,1)$ tensor, $g$ be a \gram{non-boundary} hyperface of the doubling, $\mathrm{e}_l\perp g$, $f=\max g$.
The \emph{$k$-th component of the flux of $T$ across $g$
in \red{the} positive normal direction}~is
$$
\langle T,g\rangle_k=\frac{1}{2}(-1)^
{\left(l+1+\sum\limits_{\min\{k,l\}\le m\le\max\{k,l\}}f_m\right)}
\cdot
\begin{cases}
 -T((f-\mathrm{e}_k)\times(f+\mathrm{e}_l) ),
 & \mbox{if } l\ne k,\, 2\nmid f_k, \,2\nmid f_l; \\
 T((f+\mathrm{e}_l-\mathrm{e}_k)\times f),
 & \mbox{if } l\ne k,\, 2\nmid f_k,\, 2\mid f_l; \\
 T(f\times(f+\mathrm{e}_l-\mathrm{e}_k)),
 & \mbox{if } l\ne k,\, 2\mid f_k,\, 2\nmid f_l; \\
 -T((f+\mathrm{e}_l)\times(f-\mathrm{e}_k)),
 & \mbox{if } l\ne k,\, 2\mid f_k,\, 2\mid f_l; \\
 T(f\times f)-T((f+\mathrm{e}_l)\times(f-\mathrm{e}_k)),
 & \mbox{if } l=k.
\end{cases}
$$
\remove{clarity}{}

Let $L$ be a type $(0,1)$ tensor, $g$ be a $d$-dimensional face of the doubling, $f=\max g$. Denote
$$
\langle L,g\rangle_k:=
\frac{1}{2}(-1)^{1+\sum_{m<k}f_m}\cdot
\begin{cases}
  L(f\times (f-\mathrm{e}_k)),
 & \mbox{if } 2\mid f_k; \\
  L((f-\mathrm{e}_k)\times f),
 & \mbox{if } 2\nmid f_k.
\end{cases}
$$
\end{definition}

\begin{proposition} \label{prop-flux}
The flux of a partially symmetric type $(1,1)$ tensor across a hyperface $h$ of the grid \clarity{$I^d_N$} (see Definition~\ref{def-flow}) is the sum of fluxes across all the hyperfaces of the doubling contained in~$h$.
\end{proposition}

\begin{proof} Compare the $k$-th components of the fluxes. Take $\mathrm{e}_l\perp h$. Consider the $2$ cases: $l\!=\!k$ and $l\!\ne\! k$.

For $l=k$, the map $g\mapsto \max g$ is a 1--1 map
between the set of hyperfaces of the doubling contained in $h$ and the set of faces of the initial grid $I^d_N$ contained in $h$ and containing $\max h$. (Recall that the vertex $\max g$ is identified with the face~$f$ of the initial grid with the center at $\max g$.) Since $\dim\mathrm{Pr}(f,k,k)=0=f_k\pmod2$, by Definitions~\ref{def-flow} and~\ref{def-flow-doubling} the case $l=k$ follows.

For $l\ne k$, the map
$g\mapsto
\begin{cases}
 \max g, &\mbox{if }2\nmid (\max g)_k,\\
 \max g-\mathrm{e}_k,&\mbox{if }2\mid (\max g)_k;
\end{cases}
$
is a 2--1 map between the set of hyperfaces of the doubling in $h$ and the set of faces $f$ of the initial grid $I^d_N$ such that $f\subset h$, $f\ni\max h$, $f\parallel\mathrm{e}_k$. The contribution of a pair of hyperfaces mapped to the same face $f$ to the sum of fluxes is
\begin{align*}
&\tfrac{1}{2}(-1)^{\left(l+1+\sum\limits_{\min\{k,l\}\le m\le\max\{k,l\}}f_m\right)}
T((f+\mathrm{e}_l-\mathrm{e}_k)\times f)+\\+&
\tfrac{1}{2}(-1)^{\left(l+1+\sum\limits_{\min\{k,l\}\le m\le\max\{k,l\}}(f_m+\deltaup_{mk})\right)}
[-T((f+\mathrm{e}_k+\mathrm{e}_l)\times(f+\mathrm{e}_k-\mathrm{e}_k))]
=\\=
&\tfrac{1}{2}(-1)^{\dim\mathrm{Pr}(f,k,l)+l+1}[T( (f+\mathrm{e}_l-\mathrm{e}_k)\times f)
+T((f+\mathrm{e}_l+\mathrm{e}_k)\times f)]
\end{align*}
because $2\nmid f_k$ and $2\mid f_l$ by the assumptions $f\subset h\perp \mathrm{e}_l$ and $f\parallel \mathrm{e}_k$. Summation over all such pairs proves the case $l\ne k$.
\end{proof}

Now let us prove an analogue of the Stokes formula; cf.~\eqref{eq-discrete-Poynting}.
For that we need a lemma.

\begin{lemma}\label{l-grid-boundary}
For each $k$-dimensional face $f$ of the $d$-dimensional grid $I^d_N$ denote by $[f]\in C^k(I^d_N;\mathbb{R})$ the field, which equals $1$ at $f$, and equals $0$ at all the other faces.
Then
\begin{align*}
  \partial [f] &=  \sum_{l:2\nmid f_l}
  (-1)^{\sum_{0\le m\le l}f_m}\cdot ([f-\mathrm{e}_l]-[f+\mathrm{e}_l]);\\
  \delta [f] &=  \sum_{l:2\mid f_l}
  (-1)^{\sum_{0\le m\le l}f_m}\cdot ([f-\mathrm{e}_l]-[f+\mathrm{e}_l]).
\end{align*}
\end{lemma}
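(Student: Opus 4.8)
The plan is to unwind Definition~\ref{def-boundary} of $\partial$ and $\delta$ applied to the indicator cochain $[f]$ and then to pin down the resulting coorientation signs. Write the centre as $f=\sum_m f_m\mathrm e_m$; since $\mathrm e_m$ has length $\tfrac12$, the $x_m$-coordinate of the centre is the half-integer $f_m/2$, so $f$ is spanned by the axis direction $m$ precisely when $f_m$ is odd. Hence the set of spanned directions is $\{m:2\nmid f_m\}$, of cardinality $\dim f=k$. The $(k-1)$-faces of $f$ are obtained by collapsing one spanned direction $l$ to its lower or upper end, i.e. they are exactly $f-\mathrm e_l$ and $f+\mathrm e_l$ with $2\nmid f_l$ (which always lie in $I^d_N$); the $(k+1)$-faces containing $f$ are obtained by extending $f$ along one non-spanned direction $l$, i.e. they are $f\pm\mathrm e_l$ with $2\mid f_l$ (with the convention $[g]:=0$ whenever $g\notin I^d_N$, which disposes of the boundary terms). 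Since $[f]$ is supported on the single face $f$, Definition~\ref{def-boundary} reduces to $\partial[f]=\sum_{v\subset\partial f,\ \dim v=k-1}\varepsilon(f,v)[v]$ and $\delta[f]=\sum_{g\supset f,\ \dim g=k+1}\varepsilon(g,f)[g]$, where $\varepsilon(\cdot,\cdot)=\pm1$ is the coorientation sign. So everything reduces to computing $\varepsilon$.

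Next I would compute $\varepsilon$ from the orientation convention, which gives that the positive basis of a $k$-face spanning directions $m_1<\dots<m_k$ is $(\mathrm e_{m_1},\dots,\mathrm e_{m_k})$ (edge-vectors in increasing direction index). Fix a spanned direction $l=m_j$ and compare the ordered tuple $(\text{outer normal},\ \text{positive basis of the subface})$ with the positive basis of $f$: for the upper subface $f+\mathrm e_{m_j}$ the outer normal is $+\mathrm e_{m_j}$, and moving it from the first slot into slot $j$ costs $j-1$ transpositions, giving $\varepsilon(f,f+\mathrm e_{m_j})=(-1)^{j-1}$; for the lower subface the normal is $-\mathrm e_{m_j}$, giving the opposite sign $\varepsilon(f,f-\mathrm e_{m_j})=(-1)^{j}$. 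Thus $\partial[f]=\sum_{j}(-1)^{j}\bigl([f-\mathrm e_{m_j}]-[f+\mathrm e_{m_j}]\bigr)$. The identical computation applied to a $(k+1)$-face $g$ and its subface $f$ in a non-spanned direction $l$, where $l$ now has rank $i=\#\{p:m_p<l\}+1$ among the spanned directions of $g$, yields $\varepsilon(g,f)=(-1)^{i-1}$ when $g=f+\mathrm e_l$ (so $f$ is the lower face of $g$) and $(-1)^{i}$ when $g=f-\mathrm e_l$, whence $\delta[f]=\sum_{l:2\mid f_l}(-1)^{i-1}\bigl([f-\mathrm e_l]-[f+\mathrm e_l]\bigr)$.

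Finally I would convert the positional rank into the stated exponent $\sum_{0\le m\le l}f_m$. For the boundary, with $l=m_j$ the parity of $\sum_{0\le m\le l}f_m$ equals the number of odd $f_m$ with $m\le l$, i.e. the number of spanned directions $\le m_j$, which is $j$; hence $(-1)^{\sum_{0\le m\le l}f_m}=(-1)^{j}$ and the first formula follows. For the coboundary $l$ is non-spanned, so $f_l$ is even and the odd $f_m$ with $m\le l$ are precisely the spanned directions $m_p<l$, of which there are $i-1$; hence $(-1)^{\sum_{0\le m\le l}f_m}=(-1)^{i-1}$ and the second formula follows. I expect the only delicate point to be the sign bookkeeping in the coorientation computation and the matching of the abstract rank to the parity of the coordinate sum; as a sanity check I would verify both formulas against the explicit two- and three-dimensional instances displayed in Figures~\ref{fig-grid} and~\ref{fig-3D}.
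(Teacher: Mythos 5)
Your proof is correct and follows essentially the same route as the paper's: both reduce the lemma to computing the coorientation sign of a face and its sub/superface in the direction $\mathrm{e}_l$, obtain that sign by counting the transpositions needed to move $\mathrm{e}_l$ from the front of the ordered basis to its natural position among the spanned directions, and then identify that count with the parity of $\sum_{0\le m\le l}f_m$. The paper merely compresses the two cases ($2\mid f_l$ and $2\nmid f_l$) into one coorientation criterion and treats one case explicitly, whereas you write out both; the substance is identical.
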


\begin{proof}
  This is a direct computation using Definition~\ref{def-boundary}. It suffices to prove that $f$ and $f-\mathrm{e}_l$ are cooriented, if and only if $2\mid\sum_{0\le m\le l}f_m$. Assume that $2\mid f_l$; the opposite case is analogous. A positive basis in $f$ is the sequence formed by all the vectors $\mathrm{e}_m$ such that $2\nmid f_m$ in a natural order. A positive basis in $f-\mathrm{e}_l$ is obtained by insertion of $\mathrm{e}_l$ into the sequence. Adding the outer normal to the former basis means adding $\mathrm{e}_l$ at the beginning of the sequence instead. Since moving $\mathrm{e}_l$ to the beginning of the sequence requires $\sum_{0\le m< l}f_m\pmod{2}$ transpositions, the lemma follows.
\end{proof}

\begin{proposition}[the Stokes Formula] \label{prop-Stokes} 
Let $0\le k<d$.\remove{clarity}{} For each partially symmetric type $(1,1)$ tensor $T$ and each $d$-dimensional face $g$ of the doubling of ${I}^d_N$ we have $\langle T,\partial g\rangle_k=\langle \partial T,g\rangle_k$.
\end{proposition}

\begin{proof} This is a direct computation; a technical difficulty is signs.
Set $f=\max g$.
Assume that $2\mid f_k$; the opposite case is discussed at the end of the proof.
For any fields $\phi$ and $\psi$ denote $T(\psi\times \phi)=\sum_{e,f}T(e\times f)\psi(e)\phi(f)$.
Then by Definition~\ref{def-tensor} we have
$
\partial T(e\times f)=T([e]\times \partial[f])
  +T(\delta [e]\times [f])
$
and by Lemma~\ref{l-grid-boundary} we have \fix{}
\begin{align*}
  \red{2(-1)^{1+\sum_{m<k}f_m}}\langle \partial T,g\rangle_k&=\partial T(f\times (f-\mathrm{e}_k))
  =T([f]\times \partial[f-\mathrm{e}_k])
  +T(\delta [f]\times [f-\mathrm{e}_k])\\
  &= \sum_{l:2\nmid f_l-\deltaup_{kl}}(-1)^{\sum_{m\le l}(f_m-\deltaup_{mk})}\cdot
  \left[T(f \times (f-\mathrm{e}_k-\mathrm{e}_l))
  -T(f\times (f-\mathrm{e}_k+\mathrm{e}_l))
  \right]\\
  &+
  \sum_{l:2\mid f_l}(-1)^{\sum_{m\le l}f_m}\cdot  \left[T((f-\mathrm{e}_l)\times(f-\mathrm{e}_k))
  -T((f+\mathrm{e}_l)\times(f-\mathrm{e}_k))\right].
\end{align*}
It remains to show that here the $l$-th summand multiplied by $(-1)^{1+\sum_{m<k}f_m}$ equals twice the difference of the fluxes across the two opposite hyperfaces of $g$ orthogonal to $\mathrm{e}_l$ multiplied by $(-1)^l$. (The latter sign factor is required to get the right contribution of the two faces into the whole flux across $\partial g$ in the positive normal direction; see Lemma~\ref{l-grid-boundary} for $k=d$).
Denote $f'=f-\mathrm{e}_l$, $k'=\min\{k,l\}$, $l'=\max\{k,l\}$.
Denote by $g+\mathrm{e}_l/2$ and $g-\mathrm{e}_l/2$ the hyperfaces of $g$ orthogonal to $\mathrm{e}_l$
such that $\max (g+\mathrm{e}_l/2)=f$ and $\max (g-\mathrm{e}_l/2)=f'$ respectively.

Consider the following $3$ cases: 1) $l=k$; \quad 2) $l\ne k$ and \clarity{$2\nmid f_l$}; \quad 3) $l\ne k$ and \red{$2\mid f_l$}.

For $l\!=\!k$ (hence $2\! \mid\! f_k\!=\!f_l$) the $l$-th summands in the two sums multiplied by $(-1)^{1+\sum_{m<k}f_m}$ add~up~to
\begin{align*}
&(-1)^{1+\sum_{m<k}f_m}(-1)^{\sum_{m\le k}(f_m-\deltaup_{mk})}\cdot
\left[T(f\times (f-2\mathrm{e}_k))-T(f\times f)\right]+\\+
&(-1)^{1+\sum_{m<k}f_m}(-1)^{\sum_{m\le k}f_m}\cdot
\left[ T((f-\mathrm{e}_k)\times(f-\mathrm{e}_k))
-T((f+\mathrm{e}_k)\times (f-\mathrm{e}_k))\right]=\\=
&(-1)^{f_k+1}\cdot \left[T(f\times f)-T((f+\mathrm{e}_k)\times(f-\mathrm{e}_k))\right]-\\-
&(-1)^{f_k}\cdot
\left[T((f-\mathrm{e}_k)\times (f-\mathrm{e}_k))     -T((f-\mathrm{e}_k+\mathrm{e}_k)\times(f-2\mathrm{e}_k)) \right]
=\\=
&(-1)^k\cdot(-1)^{k+1+f_k}\cdot \left[T(f\times f)-T((f+\mathrm{e}_k)\times(f-\mathrm{e}_k))\right]
-\\-
&(-1)^k\cdot (-1)^{k+1+f'_k}\cdot \left[T(f'\times f')     -T(f'+\mathrm{e}_k)\times (f'-\mathrm{e}_k)) \right]
=\\=
&(-1)^k 2\langle T,g+\mathrm{e}_k/2\rangle_k-
(-1)^k 2\langle T,g-\mathrm{e}_k/2\rangle_k;
\end{align*}
see Definition~\ref{def-flow-doubling} applied for $l=k$.
We have found the contribution of the $l$-th summands for~$l=k$.

For $l\ne k$ and $2\nmid f_l$ the $l$-th summand multiplied by $(-1)^{1+\sum_{m<k}f_m}$ is
\begin{align*}
&(-1)^{1+\sum_{m<k}f_m}(-1)^{\sum_{m\le l}(f_m-\deltaup_{mk})}\cdot \left[T(f\times(f-\mathrm{e}_k-\mathrm{e}_l))
  -T(f\times(f-\mathrm{e}_k+\mathrm{e}_l))\right]=\\
  \overset{(*)}{=}
&(-1)^{1+\sum_{k'\le m\le l'}f_m}\cdot \left[
T(f\times(f+\mathrm{e}_l-\mathrm{e}_k))-
T((f-\mathrm{e}_l+\mathrm{e}_l)\times(f-\mathrm{e}_l-\mathrm{e}_k))\right]=\\=
&(-1)^l\cdot (-1)^{l+1+\sum_{k'\le m\le l'}f_m}\cdot
T(f\times(f+\mathrm{e}_l-\mathrm{e}_k))-\\-
&(-1)^l\cdot (-1)^{l+1+\sum_{k'\le m\le l'}f'_m}\cdot [-T((f'+\mathrm{e}_l)\times (f'-\mathrm{e}_k))]
=\\=
&(-1)^l 2\langle T,g+\mathrm{e}_l/2\rangle_k
-
(-1)^l 2\langle T,g-\mathrm{e}_l/2\rangle_k;
\end{align*}
see Definition~\ref{def-flow-doubling} applied for
$l\ne k$, $2\mid f_k$, $2\nmid f_l$ and $l\ne k$, $2\mid f'_k$, $2\mid f'_l$. Here (*) follows from
$$
1+\sum_{m<k}f_m+\sum_{m\le l}(f_m-\deltaup_{mk})=
\begin{cases}
  \sum_{k\le m\le l}f_m, & \mbox{if }k<l; \\
  \sum_{l <  m <  k}f_m+1, & \mbox{if }k>l;
\end{cases}
\qquad= \sum_{k'\le m\le l'}f_m\pmod2,
$$
where we used the conditions $2\nmid f_l$ and $2\mid f_k$ to change the range of summation over $m$.


For $l\ne k$ and $2\mid f_l$ the $l$-th summand multiplied by $(-1)^{1+\sum_{m<k}f_m}$ is
\begin{align*}
&(-1)^{1+\sum_{m<k}f_m}(-1)^{\sum_{m\le l}f_m}\cdot  \left[T((f-\mathrm{e}_l)\times (f-\mathrm{e}_k))- T((f+\mathrm{e}_l)\times (f-\mathrm{e}_k))\right]=\\=
&(-1)^{\sum_{k'\le m\le l'}f_m}\cdot
\left[T((f+\mathrm{e}_l)\times(f-\mathrm{e}_k))-
T((f-\mathrm{e}_l)\times(f-\mathrm{e}_l+\mathrm{e}_l-\mathrm{e}_k))\right]=\\=
&(-1)^l\cdot (-1)^{l+1+\sum_{k'\le m\le l'}f_m}\cdot
[-T((f+\mathrm{e}_l)\times(f-\mathrm{e}_k))]-\\-
&(-1)^l\cdot (-1)^{l+1+\sum_{k'\le m\le l'}f'_m}\cdot
T(f'\times(f'+\mathrm{e}_l-\mathrm{e}_k))
=\\=
&(-1)^l 2\langle T,g+\mathrm{e}_l/2\rangle_k-
(-1)^l 2\langle T,g-\mathrm{e}_l/2\rangle_k.
\end{align*}
Summation of the expressions obtained in the three cases completes the proof in the case when $2\mid f_k$.

For $2\nmid f_k$ the proof is analogous and starts from the evaluation of \fix{} $\red{2(-1)^{1+\sum_{m<k}f_m}}\langle \partial T,g\rangle_k=\partial T( (f-\mathrm{e}_k)\times f)$. For $l=k$ one ends up with an expression involving $T((f-\mathrm{e}_k)\times(f+\mathrm{e}_k))$
rather than $T((f+\mathrm{e}_k)\times(f-\mathrm{e}_k))$. But the latter two values are equal because $T$ is partially symmetric.
\end{proof}

\begin{theorem}[\red{Restatement of Theorem~\ref{prop-global-momentum-conservation}}] \label{prop-global-momentum-conservation-repeated}
\edit{R12P3}{If a partially symmetric type $(1,1)$ tensor $T$ is conserved, then for each $d$-dimensional face $g$ disjoint with $\partial{I}^d_N$ and each $k$ 
we get $\langle T,\partial g\rangle_k=0$.}
\end{theorem}

\begin{proof}[Proof of \red{Theorem}~\ref{prop-global-momentum-conservation-repeated}] This follows directly from Propositions~\ref{prop-flux} and~\ref{prop-Stokes} because \clarity{$g$} can be filled by $d$-dimensional faces of the doubling.
\end{proof}

\begin{remark} \label{rem-conservation-apart-boundary} \clarity{Here the assumption that $T$ is conserved can be relaxed to $[\partial T](e\times f)=0$ for all faces~$e,f\not\subset\partial{I}^d_N$, because the boundary faces do not contribute to the flux when $g\cap \partial{I}^d_N=\emptyset$.}
\end{remark}

\subsection{Identities}

For the sequel we need several identities for cochain operations, most of which are well-known. \edit{R12P2}{Throughout this subsection $M$ is an arbitrary simplicial or cubical complex unless otherwise indicated.}

\begin{definition}\label{def-pairing}
The \emph{pairing} of fields $\phi,\psi\in C^k({M};\mathbb{C}^{m\times n})$, where $m=1$ or $m=n$,  is defined by
$$
\langle\phi,\psi\rangle=
\mathrm{Re}\,\mathrm{Tr} \sum_{\text{$k$-dimensional faces }f} \phi(f)\psi^*(f)=
\epsilon\,\mathrm{Re}\,\mathrm{Tr}[\phi\frown\psi^*]=
\epsilon\,\mathrm{Re}\,\mathrm{Tr}[\phi\spleen \psi^*].
$$
\end{definition}

Given $U\in C^1({M};G)$, denote by $C^1({M};T_U G)$ the set of all $\Delta\in C^1({M};\mathbb{C}^{n\times n})$ such that $\Delta(e)$ belongs to the tangent space $T_{U(e)}G$ for each edge $e$.
For $\phi\in C_k({M};\mathbb{C}^{n\times m})$, where $m=1$ or $m=n$, denote
\begin{equation}\label{eq-covar-boundarydual}
  \check D_A^* \phi=(D_A^*\phi^*)^*=
  \partial\phi+(-1)^k A\spleen\phi+\deltaup_{mn}\cdot \phi\frown A.
\end{equation}

\begin{lemma}[\red{Pairing} nondegeneracy] \label{l-nondegeneracy-general}\label{l-nondegeneracy-gauge}
  \gram{Fix} $\phi\in C_k({M};\mathbb{C}^{m\times n})$,  $\psi\in C_0({M};\mathbb{C}^{n\times n})$,
  $\chi\in C_1({M};\mathbb{C}^{n\times n})$.

  If $\langle \phi,\Delta\rangle=0$ for each $\Delta\in C^k({M};\mathbb{C}^{m\times n})$, then $\phi=0$.

  If $\langle \psi,\Delta\rangle=0$ for each $\Delta\in C^0({M};T_1 G)$, then $\mathrm{Pr}_{T_1 G}\psi=0$.

  If $\langle \chi,\Delta\rangle=0$ for each $\Delta\in C^1({M};T_U G)$, then $\mathrm{Pr}_{T_U G}\chi=0$.
\end{lemma}

\begin{proof} For the first assertion, take $\Delta=\phi$.  Then $0=\langle\phi,\phi\rangle=\sum_{f}\mathrm{Re}\,\mathrm{Tr}[\phi^*(f)\phi(f)]=\sum_f\sum_{i,j=1}^{m,n}|\phi_{ij}(f)|^2$. Thus $\phi=0$.

For the third assertion, take $\Delta=\mathrm{Pr}_{T_U G}\chi$. Then $0=\langle \chi,\mathrm{Pr}_{T_U G}\chi\rangle=
\sum_{e}\langle \chi(e),\mathrm{Pr}_{T_{U(e)}G}\chi(e)\rangle =
\sum_{e}\langle\mathrm{Pr}_{T_{U(e)}G}\chi(e),\mathrm{Pr}_{T_{U(e)}G}\chi(e)\rangle $, where the sums are over all edges $e$,
because $\mathrm{Pr}_{T_{U(e)}G}$ is an orthogonal projection. Since the pairing $\langle \cdot,\cdot\rangle$ on $\mathbb{C}^{n\times n}$ is nondegenerate, it follows that $\mathrm{Pr}_{T_U G}\chi=0$.

The second assertion is proved analogously.
\end{proof}



\begin{lemma}\label{l-boundary-cubical}
In a \emph{cubical} complex ${M}$, for each $U\in C^1({M};G)$ and $\Phi\in C^k({M};\mathbb{C}^{n\times n})$ we have
  \begin{align*}
  D_A\Phi&=U\smile \Phi-(-1)^k\Phi\smile U; & F&=U\smile U;\\
  \check D^*_A\Phi&=\Phi\frown U+(-1)^k U\spleen \Phi. & &
  \end{align*}
The two identities in the 1st column hold for a \emph{simplicial} complex $M$ for $k=0$ and $k=1$ respectively.
\end{lemma}

\begin{proof}
By Definitions~\ref{def-boundary} and~\ref{def-cap-general} it follows that
\begin{align*}
  \hspace{-0.4cm}[\delta\Phi](a\dots c) &=  \sum_{\substack{b:\dim(a\dots b)=1,\\\dim(b\dots c)=k}}\langle a,b,c\rangle\Phi(b\dots c)
  - (-1)^{k}\sum_{\substack{b:\dim(a\dots b)=k,\\\dim(b\dots c)=1}}\langle a,b,c\rangle\Phi(a\dots b)
  =1\smile\Phi-(-1)^{k}\Phi\smile 1;\\
  \hspace{-0.4cm}[\partial\Phi](b\dots c) &=\sum_{\substack{a:\dim(a\dots b)=1,\\\dim(a\dots c)=k}}\langle a,b,c\rangle\Phi(a\dots c)
  + (-1)^{k}\sum_{\substack{d:\dim(b\dots d)=k,\\\dim(c\dots d)=1}}\langle b,c,d\rangle\Phi(b\dots d)
  =\Phi\frown 1+(-1)^{k}1\spleen\Phi,
\end{align*}
where $1$ is the unit gauge group field and the sums are over the vertices such that there exist faces $a\dots b,b\dots c\subset a\dots c$ or $b\dots c,c\dots d\subset b\dots d$. Using~\eqref{eq-covar-coboundary-gauge}--\eqref{eq-covar-boundary-gauge},
we get the 
\clarity{identities in the first column. The identity $F=U\smile U$ holds by Definitions~\ref{def-curvature} and~\ref{def-cap-general}; see Figure~\ref{fig-grid} to the right.}
\end{proof}


\begin{lemma}\label{l-identities}\label{l-cap-delta-covar}
  (Cf.~\cite{Dimakis-etal-94}) For each $\phi\in C^k({M};\mathbb{C}^{p\times q}),\psi\in C^l({M};\mathbb{C}^{q\times r}),\chi\in C^m({M};\mathbb{C}^{r\times s})$ we have
\begin{align*}
  \delta\delta&=0; & \delta(\phi\smile\psi)&=(\delta\phi)\smile \psi+(-1)^{\dim \phi}\phi\smile\delta\psi;& (\phi\smile\psi)\smile\chi &= \phi\smile(\psi\smile\chi);
  \\
  \partial\partial&=0; &
  \partial(\phi\frown \psi) &= (-1)^{\dim\psi}(\partial \phi\frown \psi-\phi\frown \delta \psi);&
  (\phi\frown\psi)\frown\chi &= \phi\frown(\psi\smile\chi);
  \\
  \epsilon\partial &= 0; &
  \partial(\phi\spleen  \psi) &= \phi\spleen \partial  \psi+(-1)^{\dim\psi-\dim\phi}\delta\phi\spleen   \psi;& \phi\spleen (\psi\spleen \chi) &= (\phi\smile\psi)\spleen\chi;
  \\
  & & (\phi\smile\psi)^*&=
  \begin{cases}
    \psi^*\frown\phi^*, & \mbox{if } \dim\phi=0;\\
    \psi^*\spleen\phi^*,  & \mbox{if } \dim\psi=0;
  \end{cases}
  & (\phi\spleen \psi)\frown\chi &= \phi\spleen(\psi\frown\chi).
\end{align*}
For each $\phi\in C^{\red{k}}({M};\mathbb{C}^{n\times m}),\psi\in C^{\red{l}}({M};\mathbb{C}^{m\times n})$, $U\in C^1({M};G)$, where $m=1$ or $m=n$, we have \fix{}
\begin{align*}
  D_A D_A\psi&= -\psi\smile F +\deltaup_{mn}\cdot F\smile\psi; &
  D_A(\phi\smile \psi) &= D_A \phi\smile\psi +(-1)^{\dim\phi}\phi\smile D_A \psi, \mbox{\red{ if $m=n$;}}\\
  \check D^*_A \check D^*_A\phi&= -F\spleen \phi +\deltaup_{mn}\cdot \phi\frown F; &
  \check D^*_A(\phi\frown \psi) &= (-1)^{\dim\psi}(\check D^*_A \phi\frown \psi-\phi\frown D_A \psi);\\
  \mathrm{Re}\,\mathrm{Tr}\,\epsilon\,\check D^*_A\phi&=0, \mbox{ if $m=n$ and $\dim\phi=1$}; &
  \check D^*_A(\phi\spleen \psi) &=  \phi\spleen \check D^*_A\psi+(-1)^{\dim\psi-\dim\phi}D_A \phi\spleen \psi, \mbox{\red{ if $m=n$.}}
\end{align*}
For each $\phi\in C^k({M};\mathbb{C}^{m\times n}),\psi\in C^l({M};\mathbb{C}^{n\times n}\text{ or }\mathbb{C}^{m\times m}),\chi\in C_{k+l}({M};\mathbb{C}^{m\times n})$, $U\in C^1({M};G)$, where $m=1$ or $m=n$ (and $l=1$ for the identities in the 1st and 3rd column below), we have:
\begin{align*}
  \langle\chi,\delta\phi\rangle &= \langle \partial\chi,\phi\rangle;
  &
  \langle\chi,\psi\smile\phi\rangle &= \langle(\chi^*\frown\psi)^*,\phi\rangle;
  &
  \mathrm{Re}\,\mathrm{Tr}\, D^*_A\psi&=\partial \,\mathrm{Re}\,\mathrm{Tr}\, [U^*\cdot\psi];
  \\
  \langle\chi,D_A\phi\rangle &= \langle D^*_A\chi,\phi\rangle;&
  \langle\chi,\phi\smile\psi\rangle &= \langle(\psi\spleen \chi^*)^*,\phi\rangle;
  &
  \mathrm{Pr}_{T_1 G}\, D^*_A \psi&=D^*_A\mathrm{Pr}_{T_U G}\, \psi.
  \end{align*}
  In the 3rd column, \fix{$m=n$ and} ``$\cdot$'' is the edgewise product: $[U^*\cdot\psi](e):=U^*(e)\psi(e)$ for each edge $e$.
\end{lemma}

\begin{proof}
The identities involving neither \gram{cop product} nor covariant (co)boundary are well-known in the case when the functions assume values in a commutative ring; cf.~\cite{Dimakis-etal-94}. Without the commutativity the proof is literally the same.  Let us prove the remaining identities.


For an ordered $4$-ple of faces $a\dots b, b\dots c, c\dots d\subset a\dots d$ write $\langle a,b,c,d\rangle=+1$, if the ordered set consisting of positive bases in $a\dots b$, $b\dots c$, $c\dots d$ is a positive basis in $a\dots d$. Otherwise write
$\langle a,b,c,d\rangle= -1$. Clearly, $\langle a,b,c,d\rangle=\langle a,b,c\rangle\langle a,c,d\rangle=\langle a,b,d\rangle\langle b,c,d\rangle$.
Thus by Definition~\ref{def-cap-general}
\begin{align*}
[\phi\spleen (\psi\spleen \chi)](a\dots b)
&=\sum_{c:\dim(b\dots c)=k,\dim(a\dots c)=m-l}\langle a,b,c\rangle \phi(b\dots c)[\psi\spleen\chi](a\dots c)\\
&=\sum_{c,d:\dim(b\dots c)=k, \dim(c\dots d)=l,\dim(a\dots d)=m}\langle a,b,c\rangle
\langle a,c,d\rangle \phi(b\dots c)\psi(c\dots d)\chi(a\dots d)\\
&=\sum_{c,d:\dim(b\dots c)=k, \dim(c\dots d)=l,\dim(a\dots d)=m}\langle a,b,d\rangle
\langle b,c,d\rangle \phi(b\dots c)\psi(c\dots d)\chi(a\dots d)\\
&=[(\phi\smile\psi)\spleen \chi](a\dots b).
\end{align*}

Setting $m=k+l$, changing the notation $\chi$ to $\chi^*$, and applying the operator $\epsilon\,\mathrm{Re}\,\mathrm{Tr}$, we obtain $\langle(\psi\spleen \chi^*)^*,\phi\rangle=\langle\chi,\phi\smile\psi\rangle $. Taking $\psi=A$, $\phi\in C^k({M};\mathbb{C}^{m\times n})$, $\chi\in C_{k+1}({M};\mathbb{C}^{m\times n})$, multiplying by $(-1)^{\dim\phi}=-(-1)^{\dim\chi}$, adding the known identity $\langle \partial\chi,\phi\rangle =\langle\chi,\delta\phi\rangle$
(and for $m=n$ also the known identity $\langle (\chi\frown\psi^*)^*,\phi\rangle= \langle\chi,\phi\smile\psi\rangle$), and
using~\eqref{eq-covar-coboundary-gauge}--\eqref{eq-covar-boundary}, we get
$\langle D^*_A\chi,\phi\rangle= \langle\chi,D_A\phi\rangle$.

The formula for $(\phi\spleen \psi)\frown\chi$ is proved analogously.

Next, the formula for
$\check D^*_A(\phi\frown \psi)$ for a cubical complex and $m=n$ follows from
\begin{align*}
  (-1)^l\check D^*_A(\phi\frown \psi)
  &=(-1)^l(-1)^{k-l}U\spleen (\phi\frown \psi)+ (-1)^l(\phi\frown \psi)\frown U\\
  &=
  (-1)^{k}(U\spleen  \phi)\frown\psi
  +(\phi\frown U)\frown \psi
  -\phi\frown (U\smile \psi)
  +(-1)^l\phi\frown (\psi\smile U)\\
  &=(\check D_A^*\phi)\frown \psi-\phi\frown D_A\psi,
\end{align*}
where we used Lemma~\ref{l-boundary-cubical} and the identities not involving (covariant) (co)boundary.
Alternatively, the formula for $\check D^*_A(\phi\frown \psi)$ can be deduced from the formula for $\delta(\phi\smile\psi)$ by pairing with an arbitrary field $\Delta$ and applying Lemma~\ref{l-nondegeneracy-general} and the identities from the paragraph before the previous one; this works for a simplicial complex and for $m=1$ as well.

The formulae for $D_A(\phi\smile \psi)$, $\check D^*_A(\phi\spleen \psi)$, $D_A D_A$, $\check D^*_A\check D^*_A$  are proved analogously.

Finally, for each vertex $v$ by Lemma~\ref{l-boundary-cubical} we have (where $\langle U, \psi\rangle$ is the edgewise scalar product)
  \begin{align*}
  \hspace{-0.6cm}[\mathrm{Re}\,\mathrm{Tr}\, D^*_A\psi](v)&=\mathrm{Re}\,\mathrm{Tr}[\psi^*\frown U - U\spleen  \psi^*]^*(v)=\sum_{e:\max e=v}\langle \psi(e),U(e)\rangle- \sum_{e:\min e=v}\langle \psi(e),U(e)\rangle=[\partial \langle U, \psi\rangle](v);\\
  D^*_A\mathrm{Pr}_{T_U G} \psi
  &=((\mathrm{Pr}_{T_U G} \psi)^*\frown U-U\spleen(\mathrm{Pr}_{T_U G} \psi)^*)^*
  =\mathrm{Pr}_{T_1 G}(\psi^*\frown U)^* -\mathrm{Pr}_{T_1 G}(U\spleen  \psi^*)^*
  =\mathrm{Pr}_{T_1 G} D^*_A \psi.
  \end{align*}
  Applying the operator $\epsilon$ we get
  $
  \mathrm{Re}\,\mathrm{Tr}\,\epsilon\,\check D^*_A\psi=
  \epsilon\,\mathrm{Re}\,\mathrm{Tr}\,D^*_A\psi^*=
  \epsilon\,\partial\,\mathrm{Re}\,\mathrm{Tr}[U\cdot\psi]=0.
  $
\end{proof}


\subsection{Generalizations}

Now we 
\red{prove} the results of \S\ref{sec-general}. \edit{R12P2}{Starting from Lemma~\ref{l-lagrangian-functional-derivative-covar} below, the proof} 
 is parallel to that of~\S\ref{ssec-proofs-basic}.

\begin{proof}[Proof of Proposition~\ref{l-Bianchi}]
\move{R12P3}{}
By the \red{formulae} of Lemma~\ref{l-boundary-cubical} for $F$ and $D_A\Phi$ in the case when \clarity{$(\Phi,A)=(A,0)$,} 
we get $F=
(1+A)\smile (1+A)
=0+D_0 A+A\smile A=\delta A+A\smile A$.
By Lemma~\ref{l-boundary-cubical} and the associativity of the \gram{cup product}, $D_A F=U\smile F-F\smile U=U\smile (U\smile U)-(U\smile U)\smile U=0$. By Lemma~\ref{l-boundary-cubical} and the 3rd column of Table~\ref{tab-products} we get~\eqref{eq-def-boundary-particular}.

Let us prove~\eqref{eq-def-unmotivated}. \remove{R12P2}{} 
For each \red{face} $f\supset e$ we have either $\min f=\min e$ or $\max f=\max e$ \edit{R12P2}{(cf.~Definition~\ref{def-spacetime}).} Consider a face $f=abcd$ containing $e=ab$ such that $\min f=a$. Then $U(e)-U(\partial f -e)=U(ab)-U(adcb)=\left(F(abcd)^*U(bc)\right)^*$. Applying $\#$ and summing the obtained expression over all such faces $f$, we get $(\# F^*\frown U)^*$. Analogous sum over all the faces $f$ such that $\max f=b$ gives $(U\spleen \# F^*)^*$. Then Lemma~\ref{l-boundary-cubical} implies~\eqref{eq-def-unmotivated}.

\edit{R12P2}{For a cubical complex the proof is the same, only one drops all $\#$-operators. For a simplicial complex, in addition, references to Lemma~\ref{l-boundary-cubical} should be replaced by a direct checking.}
\end{proof}

\begin{proof}[Proof of Proposition~\ref{l-differentiate-Lagrangian}]
  This is a straightforward computation using the explicit expression for the function $L_v$ given in the middle part of Table~\ref{tab-derivatives}. There we use notation from Definition~\ref{def-local} and \remove{R12P2}{}
  $$g(k,l)=\begin{cases}
                    (-1)^k, & \mbox{if } f_{l}\parallel (1,0,\dots,0), \\
                    (-1)^{k-1}, & \mbox{if } {f}_{l}\perp (1,0,\dots,0).
  \end{cases}$$
\remove{R12P2}{}
\red{For a cubical/simplicial complex the proof is the same, only one drops all $\#$ and $g(k,l)$.}
\end{proof}

\begin{lemma}[Lagrangian functional derivative]\label{l-lagrangian-functional-derivative-covar}
For a local Lagrangian $\mathcal{L}\colon C^k({M};\mathbb{C}^{1\times n})\times C^1({M};\mathbb{C}^{n\times n})\to C_0({M};\mathbb{R})$
and arbitrary fields $\phi,\Delta\in C^k({M};\mathbb{C}^{1\times n})$, $U\in C^1({M};G)$ we have
$$
 \left.\frac{\partial \mathcal{L}[\phi+ t\Delta,U]}{\partial t}\right|_{ t=0}=
 \mathrm{Re}\,\mathrm{Tr}\left[
 \left(\frac{\partial\mathcal{L}[\phi,U]}{\partial\phi}+
 \check D^*_A\frac{\partial\mathcal{L}[\phi,U]}{\partial(D_A\phi)}\right)
 \frown \Delta
 -(-1)^k \check D^*_A\left(
 \frac{\partial\mathcal{L}[\phi,U]}{\partial(D_A\phi)}
 \frown\Delta\right)\right].
$$
\end{lemma}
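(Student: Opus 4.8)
The plan is to mirror the proof of Lemma~\ref{l-Lagrangian-functional-derivative}, upgrading each real-scalar step to its complex-matrix, gauge-covariant counterpart; the paper itself notes the argument runs parallel to~\S\ref{ssec-proofs-basic}. Fix a vertex $v$. First I would expand $\left.\frac{\partial}{\partial t}\mathcal{L}[\phi+t\Delta,U](v)\right|_{t=0}$ using the local form~\eqref{eq-local} together with the linearity of $D_A$ in its argument (immediate from~\eqref{eq-covar-coboundary}, since $\delta$ and the cup-with-$A$ map $\phi\mapsto\phi\smile A$ are both linear), so that $D_A(\phi+t\Delta)=D_A\phi+tD_A\Delta$. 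The chain rule then requires differentiating the real-valued $L_v$ of complex-matrix arguments. The crucial computational identity to establish is that, for a real-valued $\mathrm{f}$ on $\mathbb{C}^{m\times n}$ and the Wirtinger-type derivative $\frac{\partial\mathrm{f}}{\partial z}$ defined just before Definition~\ref{def-local}, one has $\left.\frac{d}{dt}\mathrm{f}(z+tw)\right|_{t=0}=\mathrm{Re}\,\mathrm{Tr}\!\left[\frac{\partial\mathrm{f}}{\partial z}\,w\right]$; this is a direct expansion of $\mathrm{Re}\,\mathrm{Tr}$ against the definition $\left(\frac{\partial\mathrm{f}}{\partial z}\right)_{lk}=\frac{\partial\mathrm{f}}{\partial(\mathrm{Re}\,z_{kl})}-i\frac{\partial\mathrm{f}}{\partial(\mathrm{Im}\,z_{kl})}$. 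Applying it argument-by-argument to $L_v$ and invoking~\eqref{eq-Lagrangian-derivatives}--\eqref{eq-Lagrangian-derivatives2} should yield
$$
\left.\frac{\partial}{\partial t}\mathcal{L}[\phi+t\Delta,U](v)\right|_{t=0}=\mathrm{Re}\,\mathrm{Tr}\Bigg[\sum_{j}\frac{\partial\mathcal{L}}{\partial\phi}(\mathrm{e}_{v,k,j})\,\Delta(\mathrm{e}_{v,k,j})+\sum_{j}\frac{\partial\mathcal{L}}{\partial(D_A\phi)}(\mathrm{e}_{v,k+1,j})\,[D_A\Delta](\mathrm{e}_{v,k+1,j})\Bigg].
$$

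Next I would recognize both sums as cap products evaluated at $v$. Since $\mathrm{e}_{v,k}$ is by definition the set of $k$-faces with maximal vertex $v$, Definition~\ref{def-cap-general} (which reduces to the elementary same-dimension cap product, the degenerate sign $\langle a,v,v\rangle$ being $+1$) gives $\left[\frac{\partial\mathcal{L}}{\partial\phi}\frown\Delta\right](v)=\sum_{j}\frac{\partial\mathcal{L}}{\partial\phi}(\mathrm{e}_{v,k,j})\Delta(\mathrm{e}_{v,k,j})$, and likewise $\left[\frac{\partial\mathcal{L}}{\partial(D_A\phi)}\frown D_A\Delta\right](v)=\sum_{j}\frac{\partial\mathcal{L}}{\partial(D_A\phi)}(\mathrm{e}_{v,k+1,j})[D_A\Delta](\mathrm{e}_{v,k+1,j})$. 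Here the column-vector convention for the partial derivatives ($\mathbb{C}^{n\times 1}$-valued) against the row-vector $\Delta$ ($\mathbb{C}^{1\times n}$-valued) is exactly what makes each summand a $\mathbb{C}^{n\times n}$ matrix, matching the output type of $\frown$; this left/right ordering is forced by the definition of $\frac{\partial\mathrm{f}}{\partial z}$ and must be tracked carefully. Consequently the displayed derivative equals $\mathrm{Re}\,\mathrm{Tr}\left[\frac{\partial\mathcal{L}}{\partial\phi}\frown\Delta+\frac{\partial\mathcal{L}}{\partial(D_A\phi)}\frown D_A\Delta\right](v)$.

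Finally I would move the covariant coboundary off $\Delta$ using the covariant integration-by-parts identity $\check D^*_A(\Phi\frown\psi)=(-1)^{\dim\psi}(\check D^*_A\Phi\frown\psi-\Phi\frown D_A\psi)$ from Lemma~\ref{l-identities}, applied with $\Phi=\frac{\partial\mathcal{L}}{\partial(D_A\phi)}\in C_{k+1}$ and $\psi=\Delta\in C^{k}$. Solving this for $\frac{\partial\mathcal{L}}{\partial(D_A\phi)}\frown D_A\Delta$ and substituting produces precisely the right-hand side of the statement, with the operator $\check D^*_A$ (not $D^*_A$) appearing because the partial derivatives are $\mathbb{C}^{n\times 1}$-valued, i.e. the $m=1$ case of~\eqref{eq-covar-boundarydual}. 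The only genuine obstacle is bookkeeping: verifying the Wirtinger chain-rule identity with the correct placement of $\mathrm{Re}\,\mathrm{Tr}$ and the non-commutative ordering of factors, and confirming that this is the same ordering under which the integration-by-parts identity of Lemma~\ref{l-identities} was proved. Once these are pinned down, the computation is as mechanical as in the abelian case $A=0$ treated in Lemma~\ref{l-Lagrangian-functional-derivative}.
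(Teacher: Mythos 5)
Your proposal is correct and follows essentially the same route as the paper's proof, which simply repeats the computation of Lemma~\ref{l-Lagrangian-functional-derivative} with $\delta$, $\partial$ replaced by $D_A$, $\check D^*_A$, with $\mathrm{Re}\,\mathrm{Tr}$ applied to each summand, and with the identity for $\check D^*_A(\phi\frown\psi)$ from Lemma~\ref{l-identities} in place of~\eqref{eq-by-parts}. The extra details you supply (the Wirtinger-type chain rule against $\mathrm{Re}\,\mathrm{Tr}$, the linearity of $D_A$, and the matrix-shape bookkeeping that makes the cap products well typed) are exactly the points the paper leaves implicit, and they check out.
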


\begin{proof}
  This is proved literally as Lemma~\ref{l-Lagrangian-functional-derivative} with $\delta$ and $\partial$ replaced by $D_A$ and $\check D^*_A$ respectively, and $\mathrm{Re}\,\mathrm{Tr}$ applied to each summand. Instead of~\eqref{eq-by-parts} use the formula for $\check D^*_A(\phi\frown\psi)$ from Lemma~\ref{l-identities}.
\end{proof}

\begin{proof}[Proof of Theorem~\ref{th-Euler-Lagrange-covar}]
  A  field $\phi$ is \edit{R1P4}{an extremal}, if and only if $\left.\frac{\partial \mathcal{S}[\phi+ t\Delta,U]}{\partial t}
  \right|_{ t=0}
  =0$ for each $\Delta\in C^k({M},\mathbb{C}^{1\times n})$.
  By Lemmas~\ref{l-lagrangian-functional-derivative-covar} and~\ref{l-nondegeneracy-general} this is equivalent to~\eqref{eq-Euler-Lagrange-covar} because $\epsilon\,\mathrm{Re}\,\mathrm{Tr}\, \check D^*_A=0$ by Lemma~\ref{l-identities}.
\end{proof}

\begin{lemma}[Lagrangian functional derivative]\label{l-lagrangian-functional-derivative-gauge}
For a local Lagrangian $\mathcal{L}\colon C^1({M};\mathbb{C}^{n\times n})\to C_0({M};\mathbb{R})$
and arbitrary fields $U\in C^1({M};G)$, $\Delta\in C^1({M};T_U G)$ we have
$$
 \left.\frac{\partial \mathcal{L}[U+ t\Delta]}{\partial t}\right|_{ t=0}=
 \mathrm{Re}\,\mathrm{Tr}\left[
 \left(\frac{\partial\mathcal{L}[U]}{\partial U}+
 \check D^*_A\frac{\partial\mathcal{L}[U]}{\partial (F[U])}\right)
 \frown \Delta
 + \check D^*_A\left(
 \frac{\partial\mathcal{L}[U]}{\partial (F[U])}
 \frown\Delta\right)\right].
$$
\end{lemma}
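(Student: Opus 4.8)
The plan is to follow the template of Lemma~\ref{l-lagrangian-functional-derivative-covar} almost verbatim, the only genuinely new ingredient being the first-order variation of the curvature. First I would write out the local structure $\mathcal{L}[U](v)=L_v(U(\mathrm{e}_{v,1}),F[U](\mathrm{e}_{v,2}))$ and differentiate $\mathcal{L}[U+t\Delta](v)$ in $t$ at $t=0$ by the chain rule. For a real-valued function $\mathrm{f}$ of a matrix argument, the convention $\partial\mathrm{f}/\partial z$ introduced before Definition~\ref{def-local} yields $\frac{d}{dt}\mathrm{f}(z+tw)|_{t=0}=\mathrm{Re}\,\mathrm{Tr}\,[\tfrac{\partial\mathrm{f}}{\partial z}\,w]$, so the chain rule produces two sums: one over the $1$-faces with maximal vertex $v$, pairing $\partial L_v/\partial U_j$ with $\Delta$, and one over the $2$-faces with maximal vertex $v$, pairing $\partial L_v/\partial(F_j)$ with $\frac{d}{dt}F[U+t\Delta]|_{t=0}$.

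The crux is the evaluation of $\frac{d}{dt}F[U+t\Delta]|_{t=0}$. Here I would use Proposition~\ref{l-Bianchi}, namely $F=\delta A+A\smile A$ with $A=U-1$, which is in fact a polynomial identity in the entries of $U$ (expand $U(ab)U(bc)=(1+A(ab))(1+A(bc))$) and hence remains valid when $U+t\Delta$ leaves $G$. Since $\delta$ is linear and $(A+t\Delta)\smile(A+t\Delta)$ has $t$-derivative $\Delta\smile A+A\smile\Delta$ at $t=0$, I obtain
\[
\left.\frac{d}{dt}F[U+t\Delta]\right|_{t=0}=\delta\Delta+A\smile\Delta+\Delta\smile A=D_A\Delta,
\]
the last equality being exactly~\eqref{eq-covar-coboundary-gauge} for $k=1$. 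This is the step I expect to require the most care, both in justifying that the Lagrangian's curvature argument extends off $G$ and in matching the cup-product signs. Combining this with the definitions of $\partial\mathcal{L}/\partial U$ and $\partial\mathcal{L}/\partial(F[U])$ and with the cap-product (Definition~\ref{def-cap-general}), the two sums assemble into $\mathrm{Re}\,\mathrm{Tr}\,[\frac{\partial\mathcal{L}[U]}{\partial U}\frown\Delta+\frac{\partial\mathcal{L}[U]}{\partial(F[U])}\frown D_A\Delta]$.

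Finally I would integrate by parts. Applying the identity $\check D^*_A(\phi\frown\psi)=(-1)^{\dim\psi}(\check D^*_A\phi\frown\psi-\phi\frown D_A\psi)$ from Lemma~\ref{l-identities} with $\phi=\frac{\partial\mathcal{L}[U]}{\partial(F[U])}\in C_2$ and $\psi=\Delta\in C^1$ (so $\dim\psi=1$) gives $\frac{\partial\mathcal{L}[U]}{\partial(F[U])}\frown D_A\Delta=\check D^*_A(\frac{\partial\mathcal{L}[U]}{\partial(F[U])}\frown\Delta)+\check D^*_A\frac{\partial\mathcal{L}[U]}{\partial(F[U])}\frown\Delta$. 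Substituting this and collecting the two terms proportional to $\frown\Delta$ yields precisely the asserted formula; the positive sign in front of $\check D^*_A(\cdots)$ (in contrast to the $-(-1)^k$ appearing in Lemma~\ref{l-lagrangian-functional-derivative-covar}) is exactly the specialization $(-1)^{\dim\Delta}=(-1)^1=-1$. I note that no tangency hypothesis on $\Delta$ is actually used in this computation, so the formula holds for all $\Delta\in C^1({M};\mathbb{C}^{n\times n})$; the restriction $\Delta\in C^1({M};T_U G)$ is only the case relevant to the constrained stationarity in Theorem~\ref{th-Euler-Lagrange-gauge}.
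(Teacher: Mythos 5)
Your proposal is correct and follows essentially the same route as the paper: the paper's proof is exactly "proceed as in the basic Lagrangian-derivative lemma with $\phi,\delta\phi$ replaced by $U,F[U]$," with the single new computation $\left.\tfrac{\partial}{\partial t}F[U+t\Delta]\right|_{t=0}=\delta\Delta+A\smile\Delta+\Delta\smile A=D_A\Delta$, followed by the same integration by parts via $\check D^*_A(\phi\frown\psi)$. Your sign check $-(-1)^{\dim U}=+1$ and your observation that the tangency of $\Delta$ is never used (the paper itself later invokes this lemma for arbitrary $\Delta\in C^1({M};\mathbb{C}^{n\times n})$ in the proof of Lemma~\ref{l-lagrangian-functional-derivative-charge}) are both consistent with the text.
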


\begin{proof}
This is proved analogously to Lemma~\ref{l-Lagrangian-functional-derivative} with $\phi$ and $\delta\phi$ replaced by $U$ and $F=\delta A+A\smile A$ (see Proposition~\ref{l-Bianchi}), using  the formula for $\check D^*_A(\phi\frown\psi)$ from Lemma~\ref{l-identities} instead of~\eqref{eq-by-parts}, and
\begin{align*}
  \left.\frac{\partial}{\partial t} F[U+ t\Delta]
  \right|_{ t=0}&=\left.\frac{\partial}{\partial t} [\delta (U+t\Delta-1)+(U+t\Delta-1)\smile(U+t\Delta-1)]
  \right|_{ t=0}\\
  &=\delta\Delta+(U-1)\smile \Delta+\Delta\smile(U-1)=D_A\Delta.\\[-1.5cm]
\end{align*}
\end{proof}

\begin{proof}[Proof of Theorem~\ref{th-Euler-Lagrange-gauge}]
  A gauge group field $U$ is \edit{R1P4}{an extremal}, if and only if $\left.\frac{\partial \mathcal{S}[U+ t\Delta]}{\partial t}
  \right|_{ t=0}
  =0$ for each $\Delta\in C^1({M},T_{U}G)$.
  By Lemmas~\ref{l-lagrangian-functional-derivative-gauge}, \ref{l-identities}, and~\ref{l-nondegeneracy-gauge} this is equivalent to~\eqref{eq-Euler-Lagrange-gauge}.
\end{proof}

\begin{proof}[Proof of Theorem~\ref{th-Noether-covar}]
  This follows from $\partial \langle j[\phi,U],U\rangle=\mathrm{Re}\,\mathrm{Tr}\, D^*_Aj[\phi,U]=\left.\frac{\partial\mathcal{L}[\phi+t\Delta,U]}{\partial t}\right|_{t=0}=0$.
  Here the 1st equality is given by Lemma~\ref{l-identities}. The 2nd one is proved
  as in the proof of Theorem~\ref{th-Noether-repeated} with $\delta$, $\partial$ replaced by $D_A$, $\check D^*_A$, and $\mathrm{Re}\,\mathrm{Tr}$ applied to each summand. The 3rd one is~\eqref{eq-invariance}.
\end{proof}

\begin{remark} \label{rem-subset} If~\eqref{eq-invariance} holds in a subset of $M$, then the current $\langle j[\phi,U],U\rangle$ is conserved on the subset.
\end{remark}

\begin{lemma}[Lagrangian functional derivative] \label{l-lagrangian-functional-derivative-charge}
For a local Lagrangian $\mathcal{L}\colon C^k({M};\mathbb{C}^{1\times n})\times C^1({M};\mathbb{C}^{n\times n})\to C_0({M};\mathbb{R})$
and arbitrary fields $\phi\in C^k({M};\mathbb{C}^{1\times n})$ and $U,\Delta\in C^1({M};\mathbb{C}^{n\times n})$ we have
$$
 \left.\frac{\partial \mathcal{L}[\phi,U+ t\Delta]}{\partial t}\right|_{t=0} =
 \mathrm{Re}\,\mathrm{Tr}\left[
 \left(\frac{\partial\mathcal{L}[\phi,U]}{\partial(D_A\phi)}
 \frown \phi\right)\frown \Delta\right]
 \qquad\mbox{and}\qquad
 \frac{\partial \mathcal{L}[\phi,U]}{\partial U} =\frac{\partial\mathcal{L}[\phi,U]}{\partial(D_A\phi)}
 \frown \phi.
$$
\end{lemma}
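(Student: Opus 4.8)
The plan is to imitate the proof of Lemma~\ref{l-lagrangian-functional-derivative-covar}, but to differentiate the Lagrangian in its gauge-field slot rather than in its field slot. The essential structural fact is that, by Definition~\ref{def-local}, the value $\mathcal{L}[\phi,U](v)=L_v(\phi(\mathrm{e}_{v,k}),[D_{A[U]}\phi](\mathrm{e}_{v,k+1}))$ depends on $U$ \emph{only} through the covariant coboundary $D_{A[U]}\phi$. Since $A[U]=U-1$, we have $A[U+t\Delta]=A[U]+t\Delta$, so by~\eqref{eq-covar-coboundary} the variation of the coboundary is purely algebraic:
\begin{equation*}
\left.\frac{\partial}{\partial t}D_{A[U+t\Delta]}\phi\right|_{t=0}
=\left.\frac{\partial}{\partial t}\bigl(\delta\phi-(-1)^k\phi\smile(A[U]+t\Delta)\bigr)\right|_{t=0}
=-(-1)^k\,\phi\smile\Delta .
\end{equation*}
In contrast to Lemma~\ref{l-lagrangian-functional-derivative-covar}, here the perturbation $\phi\smile\Delta$ stays supported on the same $(k+1)$-faces, so no boundary-type operator is generated and the answer will contain no $\check D^*_A$ term.

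First I would work one vertex $v$ at a time and apply the chain rule for real functions of complex-matrix arguments. From the definition of $\tfrac{\partial\mathrm{f}}{\partial z}$ in~\S\ref{sec-general} one checks the elementary identity $\left.\tfrac{d}{dt}\mathrm{f}(z+tw)\right|_{t=0}=\mathrm{Re}\,\mathrm{Tr}\bigl[\tfrac{\partial\mathrm{f}}{\partial z}\,w\bigr]$. Applying it to $L_v$ with the slots $\phi'_j=[D_A\phi](\mathrm{e}_{v,k+1,j})$ perturbed by $w_j=-(-1)^k[\phi\smile\Delta](\mathrm{e}_{v,k+1,j})$, and invoking~\eqref{eq-Lagrangian-derivatives2}, gives
\begin{equation*}
\left.\frac{\partial \mathcal{L}[\phi,U+t\Delta]}{\partial t}(v)\right|_{t=0}
=-(-1)^k\sum_{j=1}^{p(v,k+1)}
\mathrm{Re}\,\mathrm{Tr}\Bigl[\tfrac{\partial\mathcal{L}[\phi,U]}{\partial(D_A\phi)}(\mathrm{e}_{v,k+1,j})\,[\phi\smile\Delta](\mathrm{e}_{v,k+1,j})\Bigr].
\end{equation*}
Since $\{\mathrm{e}_{v,k+1,j}\}_j$ is precisely the set of $(k+1)$-faces with maximal vertex $v$, this sum is the value at $v$ of the cap-product $\tfrac{\partial\mathcal{L}}{\partial(D_A\phi)}\frown(\phi\smile\Delta)$ (Definition~\ref{def-cap-general}). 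The associativity identity $(\alpha\frown\beta)\frown\gamma=\alpha\frown(\beta\smile\gamma)$ from Lemma~\ref{l-identities}, taken with $\alpha=\tfrac{\partial\mathcal{L}}{\partial(D_A\phi)}$, $\beta=\phi$, $\gamma=\Delta$, then reassembles it into $\bigl(\tfrac{\partial\mathcal{L}}{\partial(D_A\phi)}\frown\phi\bigr)\frown\Delta$, which is the first displayed equation.

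For the second equation I would use that $\tfrac{\partial\mathcal{L}}{\partial U}\in C_1({M};\mathbb{C}^{n\times n})$ is characterised, exactly as in Definition~\ref{def-local}, by $\left.\tfrac{\partial}{\partial t}\mathcal{L}[\phi,U+t\Delta]\right|_{t=0}=\mathrm{Re}\,\mathrm{Tr}\bigl[\tfrac{\partial\mathcal{L}}{\partial U}\frown\Delta\bigr]$ for all $\Delta$. Subtracting this from the first equation and applying $\epsilon$ shows $\langle\,\tfrac{\partial\mathcal{L}}{\partial U}-\tfrac{\partial\mathcal{L}}{\partial(D_A\phi)}\frown\phi,\ \Delta^*\,\rangle=0$ for every $\Delta\in C^1({M};\mathbb{C}^{n\times n})$; nondegeneracy of the pairing (Lemma~\ref{l-nondegeneracy-general}, case $m=n$) then forces the two $1$-cochains to coincide edgewise.

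The step I expect to be the main obstacle is the sign bookkeeping rather than any genuine idea: the factor $(-1)^k$ in~\eqref{eq-covar-coboundary}, the coorientation signs $\langle a,b,c\rangle$ buried in the cap- and cup-products, and the convention for $\tfrac{\partial\mathrm{f}}{\partial z}$ must all be threaded consistently. Indeed the naive computation above produces the overall factor $-(-1)^k$, so to match the sign in the statement one must verify carefully how these conventions combine (for the matter fields of interest $k=0$); everything else — locality, the matrix chain rule, associativity, and nondegeneracy — is routine and already packaged in the cited lemmas.
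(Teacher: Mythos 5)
Your proposal follows essentially the same route as the paper's proof: observe that $U$ enters $\mathcal{L}$ only through $D_{A[U]}\phi$, whose $t$-derivative is $\phi\smile\Delta$ up to sign, then apply the chain rule vertexwise, the associativity identity $(\alpha\frown\beta)\frown\gamma=\alpha\frown(\beta\smile\gamma)$, and nondegeneracy of the pairing to obtain both claims. The sign factor $-(-1)^k$ you flag is not a defect of your argument: the paper's own proof writes $D_{A[U+t\Delta]}\phi=\delta\phi+\phi\smile(U-1+t\Delta)$, silently suppressing the $-(-1)^k$ from~\eqref{eq-covar-coboundary}, so that discrepancy lies in the paper's conventions rather than in your derivation.
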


\begin{proof}
Analogously to the proof of Lemma~\ref{l-Lagrangian-functional-derivative} using \eqref{eq-covar-coboundary} and Lemma~\ref{l-cap-delta-covar} we get
  \begin{align*}
  \left.\frac{\partial \mathcal{L}[\phi,U+ t \Delta]}{\partial t}
  \right|_{ t=0}
  &=
  \mathrm{Re}\,\mathrm{Tr}\,
  \left[
  \frac{\partial \mathcal{L}[\phi,U]}{\partial \phi}\frown \frac{\partial \phi}{\partial t}+
  \frac{\partial\mathcal{L}[\phi,U]}{\partial(D_A\phi)}\frown \left.\frac{\partial (D_{A[U+ t\Delta]}\phi)}{\partial t}
  \right|_{ t=0}\right]\\
  &=0+\mathrm{Re}\,\mathrm{Tr}\,\left[
  \frac{\partial\mathcal{L}[\phi,U]}{\partial(D_A\phi)}\frown
  \left.\frac{\partial [\delta\phi+\phi\smile(U-1+t \Delta)]}{\partial t}
  \right|_{ t=0}\right]\\
  &=
  \mathrm{Re}\,\mathrm{Tr}\,\left[
  \frac{\partial\mathcal{L}[\phi,U]}{\partial(D_A\phi)}\frown
  (\phi\smile \Delta)\right]
  =
  \mathrm{Re}\,\mathrm{Tr}\,\left[
  \left(\frac{\partial\mathcal{L}[\phi,U]}{\partial(D_A\phi)}\frown \phi\right)\frown\Delta\right].
\end{align*}
A local Lagrangian $\mathcal{L}[\phi,U]$ is also local with respect to $U$ and does not depend on $F[U]$. Since $\Delta\in C^1({M};\mathbb{C}^{n\times n})$ is arbitrary, by Lemmas~\ref{l-lagrangian-functional-derivative-gauge} and~\ref{l-nondegeneracy-general} it follows that $\frac{\partial \mathcal{L}[\phi,U]}{\partial U} =\frac{\partial\mathcal{L}[\phi,U]}{\partial(D_A\phi)}
 \frown \phi$.
\end{proof}

\begin{lemma}[Infinitesimal form of gauge invariance]\label{l-infinitesimal-gauge}
For each gauge invariant differentiable function $
\mathcal{L}\colon C^k({M};\mathbb{C}^{1\times n})\times C^1({M};\mathbb{C}^{n\times n})\to C_0({M};\mathbb{R})
$ and each $\Delta\in C^0({M},T_1 G)$ we have
$$
\left.\frac{\partial}
  {\partial t} \mathcal{L}[\phi+t \phi\smile\Delta,U+ t D_A\Delta ]\right|_{ t=0}=0.
$$
\end{lemma}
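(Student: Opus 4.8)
The plan is to realize the infinitesimal perturbation $(\phi\smile\Delta,\,D_A\Delta)$ as the velocity at $t=0$ of a genuine curve of gauge transformations, along which gauge invariance makes $\mathcal{L}$ literally constant. First I would choose a differentiable path $g(t)\in C^0({M};G)$ with $g(0)=1$ and $\dot g(0)=\Delta$; for instance the vertexwise matrix exponential $g(t)(v):=\exp(t\Delta(v))$, which lies in $G$ because $\Delta(v)\in T_1 G$ is an element of the Lie algebra. By gauge invariance (Corollary~\ref{prop-gauge-invariance-Klein-Gordon}) the $0$-cochain $t\mapsto \mathcal{L}[\phi\smile g(t),\,g(t)^*\smile U\smile g(t)]$ is independent of $t$, so its derivative at $t=0$ vanishes. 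Since $\mathcal{L}$ is differentiable, this derivative equals the directional derivative of $\mathcal{L}$ along the velocity vector of the curve $t\mapsto(\phi\smile g(t),\,g(t)^*\smile U\smile g(t))$ at $t=0$. Hence it suffices to show that this velocity vector is exactly $(\phi\smile\Delta,\,D_A\Delta)$, for then the directional derivative coincides with $\left.\frac{\partial}{\partial t}\mathcal{L}[\phi+t\,\phi\smile\Delta,\,U+t\,D_A\Delta]\right|_{t=0}$, which is therefore $0$.

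Next I would compute the two components by bilinearity of the cup-product, using $g(0)=1$ and the identities $\phi\smile 1=\phi$ and $1\smile\Delta=\Delta$. The first component gives $\left.\frac{d}{dt}(\phi\smile g(t))\right|_{t=0}=\phi\smile\dot g(0)=\phi\smile\Delta$, matching the perturbation of the field. For the second, differentiating $[g(t)^*\smile U\smile g(t)](ab)=g(t)^*(a)\,U(ab)\,g(t)(b)$ at $t=0$ yields $\Delta^*\smile U+U\smile\Delta$. The crucial input here is that $G$ acts by unitary matrices, so $g(t)^*g(t)=1$ differentiates to $\Delta^*+\Delta=0$, i.e.\ $T_1 G$ consists of anti-Hermitian matrices and $\Delta^*=-\Delta$. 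Thus the second velocity equals $U\smile\Delta-\Delta\smile U$, which by Lemma~\ref{l-boundary-cubical} in degree $k=0$ together with $A=U-1$ is precisely $D_A\Delta$; equivalently one expands $D_A\Delta=\delta\Delta+A\smile\Delta-\Delta\smile A$ from~\eqref{eq-covar-coboundary-gauge} and cancels the $1$-terms.

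Combining these, the velocity of the gauge-orbit curve is $(\phi\smile\Delta,\,D_A\Delta)$, and the claimed derivative vanishes. I expect the only delicate point to be the bookkeeping in the second component: one must differentiate the three-fold product $g^*\smile U\smile g$ correctly, and, most importantly, use the anti-Hermiticity $\Delta^*=-\Delta$ forced by unitarity of the representation in order to turn $\Delta^*\smile U$ into $-\Delta\smile U$ and so recover $D_A\Delta$. Everything else is routine linearity and an application of the chain rule.
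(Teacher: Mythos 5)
Your proposal is correct and follows essentially the same route as the paper's proof: the paper also takes the curve $g(t)=\exp(t\Delta)$, expands $\mathcal{L}[\phi\smile g,\,g^*\smile U\smile g]$ to first order in $t$ using $\Delta^*=-\Delta$, and identifies $U\smile\Delta-\Delta\smile U$ with $D_A\Delta$ via Lemma~\ref{l-boundary-cubical}. No gaps.
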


\begin{proof}
Since $\mathcal{L}[\phi,U]$ is gauge invariant and differentiable, by Lemma~\ref{l-boundary-cubical} up to first order in $t$
\begin{align*}
\mathcal{L}[\phi,U]
&=\mathcal{L}[\phi\smile\exp(t\Delta),\exp(-t\Delta)\smile U\smile \exp(t\Delta)]\\
&=\mathcal{L}[\phi+t\phi\smile\Delta,U+t(U\smile \Delta-\Delta\smile U)]+o(t)\\
&=\mathcal{L}[\phi+t\phi\smile\Delta, U+ t D_A\Delta]+o(t) \qquad\mbox{as }t\to 0.
\end{align*}
\clarity{Subtracting $\mathcal{L}[\phi,U]$ from both sides and dividing by $t$,}
we get the required result.
\end{proof}

\begin{lemma}[Local covariant constants] \label{l-const}
  For each $U\in C^1({M};G)$, $g_0\in T_1G$, and each vertex $v$ there is $g\in C^0({M};T_1G)$ such that $g(v)=g_0$ and $[D_A g](uv)=0$ for each neighbor $u$ of $v$.
\end{lemma}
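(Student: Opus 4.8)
The plan is to exploit the explicit formula for the covariant coboundary of a matrix-valued $0$-cochain and to solve the equation $[D_A g](e)=0$ directly, one edge at a time. By Lemma~\ref{l-boundary-cubical} applied with $k=0$ (which is valid for both cubical and simplicial $M$), the covariant coboundary of $g\in C^0({M};\mathbb{C}^{n\times n})$ is
\[
D_A g = U\smile g - g\smile U.
\]
Evaluating this on an edge $e$ by means of Table~\ref{tab-products}, with $\min e = a$ and $\max e = b$, gives $[D_A g](e) = U(e)\,g(b) - g(a)\,U(e)$. So the vanishing condition on an edge is a single linear relation between the values of $g$ at its two endpoints.

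First I would prescribe $g$ on the neighbors of $v$ by solving $[D_A g](e)=0$ on each edge $e$ incident to $v$. For a neighbor $u$ joined to $v$ by an edge $e$ there are two cases according to the vertex ordering. If $u<v$, so that $v=\max e$, the condition reads $U(e)\,g_0 = g(u)\,U(e)$, forcing $g(u) = U(e)\,g_0\,U(e)^{-1}$; if $u>v$, so that $v=\min e$, it reads $U(e)\,g(u)=g_0\,U(e)$, forcing $g(u) = U(e)^{-1}\,g_0\,U(e)$. Since in a simplicial or cubical complex any two vertices are joined by at most one edge, each neighbor value is determined by a single equation and no consistency conflict can arise. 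I then set $g(v)=g_0$ and $g=0$ at every vertex that is neither $v$ nor a neighbor of $v$. The resulting $0$-cochain satisfies $[D_A g](uv)=0$ for every neighbor $u$ by construction.

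The one substantive point to verify is that the constructed $g$ genuinely takes values in $T_1 G$, i.e.\ that it is a legitimate element of $C^0({M};T_1 G)$. This is where the Lie-group structure enters: each prescribed value has the form $U(e)^{\pm 1}\,g_0\,U(e)^{\mp 1}$, a conjugate of $g_0\in T_1 G$ by the group element $U(e)\in G$. Because $G$ is a group, $U(e)\exp(t\,g_0)U(e)^{-1}=\exp\!\big(t\,U(e)\,g_0\,U(e)^{-1}\big)$ lies in $G$ for all $t$, so differentiating at $t=0$ shows $U(e)\,g_0\,U(e)^{-1}\in T_1 G$; the same argument with $U(e)^{-1}\in G$ in place of $U(e)$ covers the other case. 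Hence conjugation by elements of $G$ preserves $T_1 G$, and all prescribed values, together with the value $0$ used elsewhere, lie in $T_1 G$. Checking this invariance of $T_1 G$ under the adjoint action is the only nonroutine step; the rest is the bookkeeping of vertex orderings carried out above.
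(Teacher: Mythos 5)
Your proof is correct and follows essentially the same route as the paper's: the paper likewise sets $g(v)=g_0$ and $g(u)=U(uv)\,g_0\,U(vu)$ at each neighbor $u$ (leaving $g$ arbitrary elsewhere), and checks $[D_Ag](uv)=0$ via the formula $D_Ag=U\smile g-g\smile U$ from Lemma~\ref{l-boundary-cubical}. Your additional verification that conjugation by $U(e)\in G$ preserves $T_1G$ is a point the paper leaves implicit, and is a welcome bit of extra care rather than a deviation.
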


\begin{proof}
  Set $g(v)=g_0$, $g(u)=U(uv)g(v)U(vu)$ at each neighbor $u$ of $v$, and let $g$ be arbitrary at the other vertices. By Lemma~\ref{l-boundary-cubical} we have $[D_A g](uv)=
  U(uv)g(v)-U(uv)g(v)U(vu)U(uv)=0$.
\end{proof}

\begin{proof}[Proof of Theorem~\ref{th-charge-conservation}] Take an arbitrary vertex $v$ and $g_0\in T_1G$. Let $g\in C^0({M};T_1G)$ be given by Lemma~\ref{l-const}. Apply Lemma~\ref{l-infinitesimal-gauge} for $\Delta=g$.
Since $D_A g(uv)=0$ for each neighbor $u$ of $v$, we obtain that equation~\eqref{eq-invariance} holds at the vertex $v$ with $\Delta=\phi\smile g$ (notice that the connection in~\eqref{eq-invariance} does not depend on $t$). By Theorem~\ref{th-Noether-covar}, Remark~\ref{rem-subset}, and Lemma~\ref{l-identities}, we have \clarity{}
\begin{multline*}
0=\partial \mathrm{Re}\,\mathrm{Tr}\,\left[ \left(\frac{\partial\mathcal{L}[\phi,U]}{\partial(D_A\phi)}
\frown(\phi\smile g)\right)\cdot U\right](v)
=
\mathrm{Re}\,\mathrm{Tr}\, \bigg[\check D^*_A
\bigg(\underbrace{\left(\frac{\partial\mathcal{L}[\phi,U]}{\partial(D_A\phi)}
\frown\phi\right)}_{\red{j[\phi,U]^*}}\frown g\bigg)\bigg](v)\\
=
\mathrm{Re}\,\mathrm{Tr}\, \left[\check D^*_A
\red{j[\phi,U]^*}\frown g-
\red{j[\phi,U]^*}\frown D_A g\right](v)
=\mathrm{Re}\,\mathrm{Tr}\, \left[D^*_A
\red{j[\phi,U]}(v)\cdot g_0\red{^*}\right].
\end{multline*}
Here we used that $[D_Ag](uv)=0$ for each edge $uv$ containing $v$. Since the vertex $v$ and $g_0\in T_1 G$ are arbitrary, by Lemma~\ref{l-nondegeneracy-gauge} it follows that $\mathrm{Pr}_{T_1G}\,D^*_A \red{j[\phi,U]}
=0$. \red{Then} by Lemma~\ref{l-identities} we have
$D^*_A \mathrm{Pr}_{T_UG}\, j[\phi,U]=0$.
\move{clarity}{} \red{Finally}, by Lemma~\ref{l-lagrangian-functional-derivative-charge} we have $\frac{\partial\mathcal{L}[\phi,U]}{\partial(D_A\phi)}
\frown\phi=\frac{\partial\mathcal{L}[\phi,U]}{\partial U}$. 
\end{proof}

\begin{proof}[Proof of Theorem~\ref{th-charge-conservation-gauge}]
  Denote $\mathcal{S}[U]=\epsilon\mathcal{L}[U]$ and $\mathcal{S}'[U]=\epsilon\mathcal{L}'[U]$.
  Take arbitrary $\Delta\in C^0({M},T_1 G)$.
  By Lemmas~\ref{l-infinitesimal-gauge} (with $\mathcal{L}[\phi,U]$ replaced by $\mathcal{L}'[U]$) and~\ref{l-cap-delta-covar} we get
  $$
  \left.\frac{\partial}{\partial t} \mathcal{S}[U+ t D_A\Delta ]\right|_{ t=0}=
  \left.\frac{\partial}
  {\partial t}\left( \mathcal{S}'[U+ t D_A\Delta ]+\langle j, U+t D_A\Delta\rangle\right)\right|_{ t=0}
  =0+\langle j, D_A\Delta\rangle=\langle D^*_A j, \Delta\rangle.
  $$
  For a \edit{R1P4}{stationary} gauge group field $U$ the \red{left} side vanishes, because $D_A\Delta= U\smile \Delta-\Delta\smile U\in C^1({M},T_U G)$ is a possible variation of $U$.
  Thus $\langle D^*_A j, \Delta\rangle=0$ for arbitrary $\Delta\in C^0({M},T_1 G)$.
  By Lemmas~\ref{l-nondegeneracy-gauge} and~\ref{l-identities} we get $0=\mathrm{Pr}_{T_1 G} D^*_A j=D^*_A \mathrm{Pr}_{T_U G} j$, as required.
\end{proof}

\begin{proof}[Proof of Proposition~\ref{l-gauge-covariance}]
Let us present the proof for a cubical complex.
For a simplicial complex the argument is literally the same, only each instance of the fourth vertex ``$d$'' is just removed.

Since the group $G$ consists of unitary matrices, for each edge $uv$ and each face $abcd$ we have
\begin{align*}
  A[g^*\smile U\smile g](uv)&=
  g^*(u)U(uv)g(v)-1\\&=
  g^*(u)(U(uv)-1)g(v)+g^*(u)(g(v)-g(u))=
  [g^*\smile A[U]\smile g+g^*\smile \delta g](uv);\\
  F[g^*\smile U\smile g](abcd)&=
  [g^*\smile U\smile g](abc)-[g^*\smile U\smile g](adc)\\&=
  g^*(a)U(ab)g(b)g^*(b)U(bc)g(c)-g^*(a)U(adc)g(c)=
  [g^*\smile F[U]\smile g](abcd).
\end{align*}
Now, using \eqref{eq-covar-coboundary}--\eqref{eq-covar-boundary} and Lemma~\ref{l-identities} we get
\begin{align*}
  D_{A[g^*\smile U\smile g]}(\phi\smile g)&=
  \delta(\phi\smile g)-(-1)^k \phi\smile g \smile [g^*\smile A[U]\smile g+g^*\smile \delta g]\\&=
  (\delta\phi)\smile g +(-1)^{k}\phi\smile\delta g-(-1)^k \phi\smile (g\smile g^*)\smile [A[U]\smile g+\delta g]\\&=
  (D_{ A[U]}\phi)\smile g;\\
  \left(D^*_{A[g^*\smile U\smile g]}(\phi\smile g)\right)^*&=
  \partial(\phi\smile g)^*+(-1)^k [g^*\smile A[U]\smile g-\delta g^*\smile g]\spleen(\phi\smile g)^*  \\&=
  \partial(g^*\spleen\phi^*)+(-1)^k [g^*\smile A[U]\smile g-\delta g^*\smile g]\spleen (g^*\spleen\phi^*)  \\&=
  g^*\spleen\partial\phi^*+(-1)^k \delta  g^*\spleen\phi^*
  +(-1)^k (g^*\smile A[U]-\delta g^*)\spleen(g\spleen (g^*\spleen\phi^*))\\&=
  g^*\spleen(\partial\phi^*+(-1)^k A[U]\spleen\phi^*)
  =\left(D^*_{ A[U]}\phi\smile g\right)^*.
  \end{align*}
The formulae involving $\Phi\in C^k({M};\mathbb{C}^{n\times n})$ are proved analogously. Gauge invariance of the Lagrangians
\edit{R12P2}{in rows 3, 4, and 6 of} Table~\ref{tab-derivatives} is a straightforward consequence.
\end{proof}

\subsection{Proofs of examples}
\label{ssec-example-proofs}

Now we apply the general results of \edit{R2P1}{\S\ref{ssec-statements}, \S\ref{ssec-summit}, and \S\ref{ssec-general-connections}} to prove particular results of \S\ref{sec-examples}.
\remove{R2P1}{}

\begin{proof}[Proof of Corollary~\ref{cor-networks-momentum-conservation}]
This follows \edit{R2P1}{from Theorem~\ref{prop-global-momentum-conservation} and
Remark~\ref{rem-conservation-apart-boundary}}
applied to 
$T[\phi]=\delta\phi\times\delta\phi$,
\red{because $\partial T[\phi]=
\partial\delta\phi\times\delta\phi+\delta\phi\times\delta\delta\phi
=0$ apart $\partial I^2_N$, since $\partial\delta\phi=s$  and $\delta\delta=0$.}
\remove{R12P3}{}
\end{proof}

\begin{proof}[Proof of Theorem~\ref{th-networks-convergence}]
First let us prove the ``convergence'' of $F_N$ to $\mathrm{F}$.
It is convenient to modify the grid slightly.
Consider the auxiliary grid $M$ obtained by dissection of $\mathrm{I}^2 $ into $(N+1)^2$ equal squares and its dual $N\times N$ grid $M'$ with the vertices at face-centers of $M$. Consider all the discrete fields in question as defined on $M'$ instead of the initial $N\times N$ grid; this does not affect approximation.

Let $F'_N$ be the function on vertices of $M$ such that $\partial\delta F'_N=0$ apart $\partial \mathrm{I}^2 $ and $F'_N=\mathrm{F}$ on $\partial \mathrm{I}^2 $. The restriction of $F'_N$ to \gram{non-boundary} vertices can be considered as a function on faces of $M'$. Actually, it is a magnetic field on $M'$ generated by the source $s_N$ (in particular, it exists by Proposition~\ref{prop-magnetic-existence}). Indeed, the condition $\partial\delta F'_N=0$ implies that it is a magnetic field generated by \emph{some} source. The source is exactly $s_N$ because for each boundary vertex $v$ of the initial $N\times N$ grid we have $F'_N(v_+)-F'_N(v_-)=\mathrm{F} (v_+)-\mathrm{F}(v_-) =\int_{v_-v_+}\mathrm{s}\,\mathrm{dl}=s_N(v)$,
where $v_-,v,v_+$ are in the counterclockwise order along $\partial \mathrm{I}^2 $. By Propositions~\ref{prop-current-existence} and~\ref{prop-magnetic-existence} the function $F'_N-F_N$ on faces of $M'$ is a constant (depending on $N$).

By \cite[Proposition~3.3]{Chelkak-Smirnov-08} on the set of vertices $v$ \clarity{of $M$} at distance $\ge r$ from $\partial \mathrm{I}^2 $, we have $F'_N(v)\rightrightarrows \mathrm{F} (v)$ as $N\to\infty$. In particular,
for one of the faces $f_N$ closest to $c:=(\frac{1}{2},\frac{1}{2})$ we have
$F'_N (f_N)\to \mathrm{F} (c)=0=F_N(f_N)$ as $N\to\infty$. Since $F'_N-F_N$ is a constant  function on \clarity{faces $f$ of $M'$,} it follows that $$
F_N(f)\rightrightarrows F'_N(f)\rightrightarrows \mathrm{F} (\max f)\rightrightarrows N^2\int_f\mathrm{F} \,\mathrm{dS}.$$

The convergence of $j_N=-\partial F_N$
follows immediately from the second part of \cite[Proposition~3.3]{Chelkak-Smirnov-08}.

To prove the convergence of $\phi_N$, join a vertex $v$ with the vertex $u$ closest to $c$ such that $\phi_N(u)=0$ by a shortest grid path $uv$. By the convergence of $j_N$ we get
$
\phi_N(v)\!=\!\sum_{e\subset uv}\langle uv,e \rangle j_N(e)
\rightrightarrows \int_{cv} \vec{\mathrm{ j}}\cdot\mathrm{d}\vec{\mathrm{l}}={\phiup}(v).
$

The convergence of the other fields is a straightforward consequence. For instance, let  $e=uv$ be a horizontal edge with the midpoint $e'$ and $f\supset e$ be a face with the center $f'$. Then
\begin{align*}
  N L_N(e'f') &= \tfrac{N}{2} j_N(e)F_N(f)
     \rightrightarrows F_N(f)\tfrac{N}{2}\!\int_e \vec{\mathrm{ j}}\cdot\mathrm{d}\vec{\mathrm{l}}
     \rightrightarrows \mathrm{F} (e')\vec{\mathrm{ j}}(e')\cdot \tfrac{N}{2}\!\int_e \mathrm{d}\vec{\mathrm{l}}
     = *\vec{\mathrm{ j}}(e')\mathrm{F} (e')\cdot
     N\!\int_{e'f'}\!\!\mathrm{d}\vec{\mathrm{l}}
     \rightrightarrows N\!\int_{e'f'}\!\!\vec{\mathrm{ L}}\cdot\mathrm{d}\vec{\mathrm{l}},\\
  N^2\sigma_{N,2}(uv) &=
     -\tfrac{N^2}{2}  [\delta\phi(uv)^2-\delta\phi(vv_+)\delta\phi(v_-v)]
     \rightrightarrows \tfrac{1}{2}\tfrac{\partial\phiup}{\partial \mathrm{x}_2}(v)^2-\tfrac{1}{2}\tfrac{\partial\phiup}{\partial \mathrm{x}_1}(v)^2=\sigmaup_{22}(v)\rightrightarrows N\!\int_e\left(\sigmaup_{22}\,\mathrm{dx}^1-
     \sigmaup_{21}\,\mathrm{dx}^2\right),
\end{align*}
as required (in the latter formula the notations $v_+$ and $v_-$ from Definition~\ref{def-Maxwell-stress} are used).
\end{proof}


\begin{proof}[Proof of Corollary~\ref{cor-Maxwell}]
  \red{Use} Theorem~\ref{th-Euler-Lagrange} and Proposition~\ref{l-differentiate-Lagrangian};
  \remove{clarity}{}
   see rows \red{1--2} of Table~\ref{tab-derivatives}.
\end{proof}

\begin{proof}[Proof of Theorem~\ref{cor-free}]
\move{R2P1}{}Tensor~\eqref{eq-energy-conservation} is partially symmetric for this particular Lagrangian $\mathcal{L}[\phi]$; \red{see}~rows 2--3 of Table~\ref{tab-derivatives}.
Thus the \clarity{result} follows from Theorems~\ref{th-energy-conservation}, \ref{prop-global-momentum-conservation}, \edit{R2P1}{and Remark~\ref{rem-conservation-apart-boundary}}.
\end{proof}

\begin{proof}[Proof of Proposition~\ref{prop-quadratic-lagrangian}]
Consider the cases when $l\ne k$ and $l=k$ separately.

For $l\ne k$ the only nonvanishing contribution to the flux of $T$ comes from the edge $f=v-\mathrm{e}_k$ because $f\not\parallel \mathrm{e}_k$ otherwise. We have $\dim\mathrm{Pr}(f,k,l)=1$. Thus by Definition~\ref{def-flow} and~\eqref{eq-energy-conservation} we get 
\begin{align*}
\hspace{-0.8cm}(-1)^{l}\langle T,h\rangle_k &=\tfrac{1}{2}(-1)^{l}(-1)^{l}
\left[
T((v+\mathrm{e}_l)\times(v-\mathrm{e}_k))
+T((v+\mathrm{e}_l-2\mathrm{e}_k)\times(v-\mathrm{e}_k))
\right]\\&=
\frac{1}{2}\left[
\frac{\partial\mathcal{L}}{\partial(\delta\phi)}
(v+\mathrm{e}_l)+
\frac{\partial\mathcal{L}}{\partial(\delta\phi)}
(v+\mathrm{e}_l-2\mathrm{e}_k)
\right]
\delta\phi(v-\mathrm{e}_k).
\intertext{
\quad For $l=k$ the contribution to the flux comes from $f=v$ and $f=v-\mathrm{e}_m$ for each $m\ne k$. Thus} 
\hspace{-0.8cm}(-1)^k\langle T,h\rangle_k &=\frac{1}{2}(-1)^k(-1)^{k+1}
\left[
T(v\times v)-T((v+\mathrm{e}_k)\times (v-\mathrm{e}_k))+\sum_{m\ne k}T((v-\mathrm{e}_m)\times (v-\mathrm{e}_m))
\right]\\
&=-\frac{1}{2}
\left[
\frac{\partial\mathcal{L}}{\partial\phi}(v)
\phi(v)-
\frac{\partial\mathcal{L}}{\partial(\delta\phi)}(v+\mathrm{e}_k)[\delta\phi](v-\mathrm{e}_k)
+\sum_{m\ne k} \frac{\partial\mathcal{L}}{\partial(\delta\phi)}
(v-\mathrm{e}_m)[\delta\phi](v-\mathrm{e}_m)
\right]\\
&=\frac{1}{2}\left[
\frac{\partial\mathcal{L}}{\partial(\delta\phi)}
(v+\mathrm{e}_k)+
\frac{\partial\mathcal{L}}{\partial(\delta\phi)}
(v-\mathrm{e}_k)
\right]\delta\phi(v-\mathrm{e}_k)
-\frac{1}{2}
\frac{\partial\mathcal{L}}{\partial\phi}(v)
\phi(v)
-\frac{1}{2}\sum_{m} \frac{\partial\mathcal{L}}{\partial(\delta\phi)}
(v-\mathrm{e}_m)\delta\phi(v-\mathrm{e}_m)
\\
&=\frac{1}{2}\left[
\frac{\partial\mathcal{L}}{\partial(\delta\phi)}
(v+\mathrm{e}_k)+
\frac{\partial\mathcal{L}}{\partial(\delta\phi)}
(v+\mathrm{e}_k-2\mathrm{e}_k)
\right]\delta\phi(v-\mathrm{e}_k)
-\mathcal{L}[\phi](v).
\end{align*}
The latter equality is proved as follows. Since $\mathcal{L}$ is homogeneous quadratic, it follows that $
\frac{\partial L_v}{\partial\phi_1}\phi_1+
\frac{\partial L_v}{\partial\phi'_1}\phi'_1+\dots+
\frac{\partial L_v}{\partial\phi'_d}\phi'_d=2L_v(\phi_1,\phi'_1,\dots,\phi'_d).
$
Hence $
\frac{\partial\mathcal{L}}{\partial\phi}\frown\phi+
\frac{\partial\mathcal{L}}{\partial(\delta\phi)}\frown\delta\phi
=2\mathcal{L}[\phi],
$
as~required.
\end{proof}

\begin{proof}[Proof of Proposition~\ref{th-Maxwell-approximation}]
First note that $F_N(f)=\mathrm{F}_{mn}(\max f)\rightrightarrows\mathrm{F}_{mn}(\max h)$ on the set of all pairs $(f,h)$ having common vertices, because $\mathrm{F}_{mn}$ is continuous on $\mathrm{I}^d$, hence uniformly continuous.

\clarity{Denote $v=\max h$}. Consider the cases when $l=k$ and $l\ne k$ separately.

Assume that $l=k$. For a $1$- or $2$-dimensional face $f\subset h\perp\mathrm{e}_k$ we have $\dim\mathrm{Pr}(f,k,k)=0$. Thus
\begin{align*}
(-1)^k\langle T'_N,h\rangle_k
&=-\frac{1}{2}\left[\sum_{f:f\subset h,f\ni\red{v},\dim f=2} T'_N(f\times f)-
\sum_{f:f\subset h,f\ni\red{v},\dim f=1} T'_N((f+\mathrm{e}_k)\times(f-\mathrm{e}_k))\right]\\
&=\frac{1}{2}\left[\sum_{f:f\subset h,f\ni\red{v},\dim f=2} \#F_N(f)F_N(f)-
\sum_{f:f\subset h,f\ni\red{v},\dim f=1} \#F_N(f+\mathrm{e}_k)F_N(f-\mathrm{e}_k)\right]\\
&\rightrightarrows \frac{1}{2}\left[\sum_{m,n\ne k:m<n}\mathrm{F}^{mn}\mathrm{F}_{mn}-
  \sum_{m\ne k}\mathrm{F}^{km}\mathrm{F}_{km}
  \right](\red{v})
\\
&=  \left[\frac{1}{4}\sum_{m,n}\mathrm{F}^{mn}\mathrm{F}_{mn}-
  \sum_{m}\mathrm{F}^{km}\mathrm{F}_{km}
  \right](\red{v})
\\
&=\mathrm{T}^k_k(\red{v}).
\intertext{\quad Assume that $l\ne k$. For a $2$-dimensional face $f\parallel \mathrm{e}_k,\mathrm{e}_m$, where $m\ne k,l$, we have $\dim\mathrm{Pr}(f,k,l)=2$ or $1$ depending on if
$m$ is between $k$ and $l$ or not. Thus}
(-1)^l\langle T'_N,h\rangle_k
&=\frac{1}{2}
\sum_{\substack{f:f\subset h,f\ni\red{v},\\
                \dim f=2,f\parallel \mathrm{e}_k}}
(-1)^{\dim\mathrm{Pr}(f,k,l)}
\left[\#F_N(f+\mathrm{e}_l-\mathrm{e}_k))+
\#F_N(f+\mathrm{e}_l+\mathrm{e}_k)\right]F_N(f)
\\
&\rightrightarrows
-\sum_{m\ne k}\mathrm{sgn}(m-k)\mathrm{sgn}(m-l)
  \mathrm{F}^{\min\{l,m\},\max\{l,m\}}(\red{v})
  \mathrm{F}_{\min\{k,m\},\max\{k,m\}}(\red{v})
  \\
&=-\sum_{m\ne k}\mathrm{F}^{lm}(\red{v})\mathrm{F}_{km}(\red{v})
\\
&= \mathrm{T}^l_k(\red{v}).\\[-1.7cm]
\end{align*}
\end{proof}



\begin{proof}[Proof of Proposition~\ref{l-Wilson}]
Let $abcd$ be a face with the vertices listed in the order compatible with the positive orientation of its boundary (\clarity{see} Definition~\ref{def-boundary}), starting from the minimal vertex. Then
\begin{align*}
\mathrm{Re}\mathrm{Tr}\,[\#F^*(abcd) F(abcd)]&=\# \mathrm{Re}\mathrm{Tr}\,[(U(abc)-U(adc))^*(U(abc)-U(adc))]\\&=
\# \mathrm{Re}\mathrm{Tr}\,[U(cbabc)-U(cdabc)-U(cbadc)+U(cdadc)]\\&=
\# \mathrm{Re}\mathrm{Tr}\,[1-U(abcda)-U(abcda)^*+1]\\&=
2\#(n-\mathrm{Re}\mathrm{Tr}\,U(abcda)).
\end{align*}
Multiplying by $-1/2$ and summing over all the faces $abcd$,
we get the required expression.
\end{proof}

\begin{proof}[Proof of Corollary~\ref{cor-gauge}]
\clarity{Use notation $F$, $D^*_A$, $\mathcal S[U]$ from Definitions~\ref{def-curvature}, \ref{def-covar-boundary-gauge}, and Proposition~\ref{l-Wilson}. By Propositions~\ref{l-Bianchi} and~\ref{l-Wilson}, this notation is compatible with \eqref{eq-def-gauge}--\eqref{eq-def-unmotivated}. Then}
the Yang--Mills equation follows from Theorem~\ref{th-Euler-Lagrange-gauge} \red{and} \ref{l-differentiate-Lagrangian}; see rows \edit{R12P2}{5--6} of Table~\ref{tab-derivatives}. 
Proposition~\ref{l-gauge-covariance} and Theorem~\ref{th-charge-conservation-gauge} imply charge conservation.
\end{proof}

\begin{proof}[Proof of Corollary~\ref{prop-gauge-invariance-Wilson}]
This follows directly from Propositions~\ref{l-Wilson} and \ref{l-gauge-covariance} (see \edit{R12P2}{row~5} of Table~\ref{tab-derivatives}) because $\mathrm{Re}\,\mathrm{Tr}[j^*\frown U]$ is preserved under simultaneous gauge transformation of $U$ and~$j$.
\end{proof}

\remove{R2P1}{}

\section{Open problems}
\label{sec-open}

\begin{itemize}
  \item Expand the suggested discretization algorithm to:
      \begin{itemize}
      \item quantum field theories via path integral formalism;
      \item general relativity via discretizing the \gram{raising-index} operator $\sharp$  for nonflat spacetimes;
      \item hydrodynamics via discretizing the fluid energy-momentum tensor.
      \end{itemize}
  \item Extend the suggested discretization algorithm to involve the following conservation laws:
      \begin{itemize}
      \item energy conservation in nontrivial connection via making the \gram{cross product} gauge invariant;
      \item angular momentum conservation via discretizing the radius vector;
      \item integral-form energy conservation in general complexes via discretizing tensor integration.
      \end{itemize}
  \item Prove the conservation of the discrete covariant chiral current. Generally, is the covariant current from Theorem~\ref{th-Noether-covar} times $i$ conserved for each gauge invariant Lagrangian satisfying~\eqref{eq-invariance}?
  \item Prove analogous conservation laws in statistical field theory. E.g., is the expectation of a covariant current conserved, if the gauge group field is random with the probability density proportional to the exponential of the action from Definition~\ref{def-connection}?
  \item Apply the discretization algorithm to characteristic classes to obtain invariants of piecewise-linear homeomorphisms or rational homotopy type. 
  \item Constuct a ``second-generation'' discretization algorithm for field theories, in which not only spacetime, but also the set of field values becomes discrete; e.g., as in the Feynman checkerboard.
  \item Prove that the discussed discrete field theories approximate continuum ones in a sense. Even no analogue of Theorem~\ref{th-networks-convergence} for planar graphs with faces not being inscribed is known \cite{Chelkak-Smirnov-08,Werness-14}.
  \item State and prove a ``reciprocal Noether theorem'' giving a symmetry of the continuum limit for each discrete conservation law.
  \item Find an experimentally measurable quantity in our discretization not converging to the continuum counterpart; this would make the discretization \emph{refutable} against the continuum theory.
\end{itemize}

\comment
\subsection*{Conclusions}

We have introduced a new general discretization algorithm for field theories (see \S\ref{ssec-main-tools}), in many cases leading to both approximation of continuum theory and exact conservation laws.
The latter are produced by a new discrete Noether theorem relating them to continuum symmetries (Theorems~\ref{th-Noether} and~\ref{th-Noether-covar}). Compared to known results, the new Noether theorem is simple enough to write the resulting conservation laws explicitly as one-line formulae (using only standard topological notation) in numerous examples. Since discrete spacetime has no continuous symmetries, exact energy conservation is obtained separately by a different method not based on a symmetry, extending~\cite{Dorodnitsyn-04} (Theorems~\ref{th-energy-conservation} and~\ref{cor-free}). For that purpose, a new discretization of tensor calculus involving non-antisymmetric tensors is applied (see~\S\ref{ssec-Electrodynamics}), although it has serious limitations (see \S\ref{ssec-limitations}). Approximation of continuum theory is established in many examples (Theorem~\ref{th-networks-convergence}, Propositions~\ref{th-Maxwell-approximation},
\remove{R12P3}{}
and Remark~\ref{rem-curvature}),
some of them slightly extending the known results on electrical networks \cite{Chelkak-Smirnov-08} (see Theorem~\ref{th-networks-convergence}) and gauge theory \cite{Dimakis-etal-94} (see Corollary~\ref{cor-gauge}).
In particular, the new conserved discrete energy-momentum tensor approximates the continuum one at least for free fields.
Thus for a variety of field theories, the algorithm achieves the principles of discretization from~\S\ref{s:intro}.
\endcomment

\bigskip


\textbf{Acknowledgements.} The author is grateful to  E.Akhmedov, L.Alania, D.Arnold, A.Bossavit, V.Buchstaber, D.Chelkak, M.Chernodub, M.Desbrun, M.Gualtieri, F.G\"unther, I.Ivanov, M.Kraus, N.Mnev, F.M\"uller-Hoissen, S.Pirogov, P.Pylyavskyy, A.Rassadin, R.Rogalyov, I.Sabitov, P.Schr\"oder, I.Shenderovich, B.Springborn, A.Stern, S.Tikhomirov, S.Vergeles for useful discussions.

\ifarxiv{}{
\subsection*{Declarations}

\textbf{Funding.} The publication was prepared within the framework of the Academic Fund Program at the National Research University Higher School of Economics (HSE) in 2018-2019 (grant N18-01-0023) and by the Russian Academic Excellence Project ``5-100''. The author has also received support from the Simons--IUM fellowship.


\textbf{Data availability.} Data sharing not applicable to this article as no datasets were generated or analysed during the current study.

\textbf{Conflict of interests.} The author has no relevant financial or non-financial interests to disclose.
}

\arxiv
\appendix

\section{More examples} \label{sec-more-examples}

Here we give more examples: \S\ref{ssec-1dim} is a zero-knowledge illustration, and \S\S\ref{ssec-Klein-Gordon}--\ref{ssec-Dirac} use notions from \S\S\ref{s:intro}--\ref{sec-examples}. The proofs rely on \S\ref{s:intro}--\S\ref{s:proofs} and are given at the end of \S\ref{ssec-Klein-Gordon} and \S\ref{ssec-Dirac}.

\subsection{One-dimensional field theory}
\label{ssec-1dim}

\subsubsection*{Toy model}

Let us illustrate our main results in the trivial particular case of dimension $1$.

Consider a pipeline of $N$ identical pipes in series with sources at the two endpoints pumping incompressible fluid in and out; see Figure~\ref{fig-pipeline2} to the left. Let $s$ be the intensity of each source (measured in liters/second). The \emph{current} $j(\mathbf{k})$ through $\mathbf{k}$-th pipe (measured in litres/second) satisfies
\begin{itemize}
  \item \emph{Mass conservation law}: $j(\mathbf{1})=j(\mathbf{N})=s$ and  $j(\mathbf{k}+\mathbf{1})=j(\mathbf{k})$ for each $\mathbf{k}=\mathbf{1},\dots, \mathbf{N}-\mathbf{1}$.
\end{itemize}
This just means that $j(\mathbf{k})=s$ for $\mathbf{k}=\mathbf{1},\dots,\mathbf{N}$. Throughout \S\ref{ssec-1dim} we use bold font for pipe numbers.

Formally, we define $s\in\mathbb{R}$ to be a fixed number and the \emph{current} to be a function $j\colon\{\mathbf{1},\dots,\mathbf{N}\}\to\mathbb{R}$ satisfying the mass conservation. (There is no formal difference between symbols in different fonts.)



Let us state a least action principle for the toy model.
A \emph{potential} $\phi$ of the flow is a function $\phi\colon\{0,\dots,N\}\to\mathbb{R}$ such that $\phi(k-1)-\phi(k)=j(\mathbf{k})$ for each $k=1,\dots, N$. Clearly, it satisfies
\begin{itemize}
  \item \emph{the Laplace equation}:
   $\phi(k+1)-2\phi(k)+\phi(k-1)=0$ for each $k=1,\dots, N-1$;
  \item \emph{the least action principle}: among all functions on $\{0,\dots,N\}$, $\phi$ minimizes the functional
      \begin{equation*}
      \frac{1}{2}\sum_{k=1}^{N}(\phi(k)-\phi(k-1))^2-s\phi(0)+s\phi(N)
      \end{equation*}
\end{itemize}
The first term is the total fluid kinetic energy. The functional is the sum of the values of the function
      \begin{equation*}
      \mathcal{L}[\phi](k)=\frac{1}{2}(\underbrace{\phi(k)-\phi(k-1)}_{[\delta\phi](\mathbf{k})})^2-s(k)\phi(k),
      \quad\mbox{ where }\quad
      s(k):=\begin{cases}
        +s, & \mbox{if }k=0  \\
        0,  & \mbox{if }1\le k\le N-1 \\
        -s, & \mbox{if }k=N.
      \end{cases}
      \end{equation*}

\vspace{-0.6cm}

\subsubsection*{Generalization}

Such a ``least action'' formulation of the model has a straightforward generalization. The following definition is a particular case of Definition~\ref{def-local-particular} above.

A \emph{local Lagrangian} $\mathcal{L}$ is a self-map  of the set of all real-valued functions on $\{0,\dots,N\}$ such that
$$
\mathcal{L}[\phi](k)=L_k(\phi(k),\phi(k)-\phi(k-1))
$$
for some differentiable function $L_k\colon \mathbb{R}^2\to \mathbb{R}$. The $2$ arguments of~$L_k$ are denoted by $\phi$ and $\delta\phi$. Set
\begin{align*}
\frac{\partial\mathcal{L}[\phi]}{\partial\phi}&\colon \{0,\dots,N\}\to\mathbb{R},&
\frac{\partial\mathcal{L}[\phi]}{\partial\phi}(k)
&:= \left.\frac{\partial L_k(\phi,\delta\phi)}{\partial \phi}\right|_{\phi=\phi(k),\,\delta\phi=\phi(k)-\phi(k-1)};\\
\frac{\partial\mathcal{L}[\phi]}{\partial(\delta\phi)}&\colon \{\mathbf{1},\dots,\mathbf{N}\}\to\mathbb{R}, &
\frac{\partial\mathcal{L}[\phi]}{\partial(\delta\phi)}(\mathbf{k})
&:= \left.\frac{\partial L_k(\phi,\delta\phi)}{\partial (\delta\phi)}\right|_{\phi=\phi(k),\,\delta\phi=\phi(k)-\phi(k-1)}.
\end{align*}
We also set  $\frac{\partial\mathcal{L}}{\partial(\delta\phi)}(\mathbf{0})
=\frac{\partial\mathcal{L}}{\partial(\delta\phi)}(\mathbf{N}+\mathbf{1})=0$.
E.g., in the toy model:
$\frac{\partial\mathcal{L}[\phi]}{\partial\phi}(k)
= -s(k)$,
$\frac{\partial\mathcal{L}[\phi]}{\partial(\delta\phi)}(\mathbf{k})
= \delta\phi(\mathbf{k})$.

The following obvious proposition is a particular case of Theorem~\ref{th-Euler-Lagrange} above.

\begin{proposition}[the Euler--Lagrange equation] \label{prop-Euler-Lagrange}
Let $\mathcal{L}[\phi]$ be a local Lagrangian.
A function $\phi$ is stationary for the functional $\sum_{k=0}^N\mathcal{L}[\phi](k)$, if and only if for each $k=0,\dots,N$ we have
$$
\frac{\partial\mathcal{L}[\phi]}{\partial(\delta\phi)}(\mathbf{k})
-\frac{\partial\mathcal{L}[\phi]}{\partial(\delta\phi)}(\mathbf{k}+\mathbf{1})
+\frac{\partial\mathcal{L}[\phi]}{\partial\phi}(k)=0.
$$
\end{proposition}

E.g., in the toy model above, the Euler--Lagrange equation is the Laplace equation. That model had a built-in conservation law, hidden after the least-action formulation. The following obvious proposition reveals conservation laws hidden in the Lagrangian; it is a particular case of Theorem~\ref{th-Noether}.

\begin{proposition}[the Noether theorem]\label{prop-Noether}
If a local Lagrangian $\mathcal{L}[\phi]$ is \emph{invariant under an infinitesimal transformation} $\Delta (k)$, i.e.,
  \begin{equation*}
  \left.\frac{\partial}
  {\partial t}\mathcal{L}[\phi+ t\Delta]\right|_{ t=0}=0,
  \end{equation*}
  then for each stationary function $\phi$ for $\sum_{k=0}^N\mathcal{L}[\phi](k)$ the following function is \emph{conserved}, i.e. constant:
  \begin{equation*}
  j(\mathbf{k})=\frac{\partial\mathcal{L}[\phi]}{\partial(\delta\phi)}(\mathbf{k}) \Delta(k-1).
\end{equation*}
\end{proposition}

E.g., in the above toy model, apart the endpoints, the Lagrangian is invariant under the transformation $\phi\mapsto\phi-t$, where $t\in\mathbb{R}$. 
The resulting Noether conserved function is exactly $j(\mathbf{k})=\phi(k-1)-\phi(k)$.



\subsubsection*{Momentum conservation}

Let us state a less intuitive momentum conservation. The introduced discrete momentum tensor is a completely new object. First we give a heuristic motivation (cf.~\S\ref{ssec-Networks}), then a precise definition.

In the toy model above, momentum circulation is physically clear. The momentum of the fluid in the pipe $\mathbf{k}$ is proportional to $j(\mathbf{k})$. During time $\Delta t$, the volume proportional to $j(\mathbf{k})\Delta t$ moves to the next pipe. Thus the momentum flux through the vertex $k$ per unit time is proportional to $j(\mathbf{k})^2$. (We ignore pressure and do not care of the proportionality constant because this is just a heuristic anyway.)

Now consider a \emph{free field}, i.e., $\mathcal{L}[\phi](k)=[\delta\phi](\mathbf{k})^2+m^2\phi(k)^2$, where $m\ge 0$. Let $\phi$ be a stationary function, i.e. just a function satisfying the equation $\phi(k-1)-(2+m^2)\phi(k)+\phi(k+1)=0$ for each $0<k<N$. One expects the following properties of the momentum flux $\sigma(k)$ through a vertex $k$:
\begin{itemize}
\item
$\sigma(k)=j(\mathbf{k})^2$ for $m=0$, i.e., for a linear potential $\phi$;
\item
$\sigma(k)$ depends only on $\phi(k)$, $\delta\phi(\mathbf{k})$, $\delta\phi(\mathbf{k}+\mathbf{1})$, and is homogeneous quadratic in these values;
\item
$\sigma(k)=\mathrm{const}$ \ apart the endpoints, i.e., the momentum is conserved.
\end{itemize}
The simplest function $\sigma(k)$ satisfying these properties is (we skip a direct checking)
$$
\sigma(k)=\delta\phi(\mathbf{k}+\mathbf{1})\delta\phi(\mathbf{k})-m^2\phi(k)^2
=
\frac{1}{2}\frac{\partial\mathcal{L}}{\partial(\delta\phi)}(\mathbf{k}+\mathbf{1})
\delta\phi(\mathbf{k})-
\frac{1}{2}\frac{\partial\mathcal{L}}{\partial\phi}(k)\phi(k).
$$ 
\begin{remark}\label{rem-naive}
  A naive way to discretize the momentum flux would be to take the usual (continuum) momentum flux of a piecewise-linear extension of $\phi$. But the resulting quantity is not conserved in a reasonable sense.
  Our function $\sigma(k)$ is very different from such naive ``finite-element'' discretization.
\end{remark}

For an arbitrary Lagrangian, the formula for $\sigma(k)$ is not applicable literally but still suggestive. Since the formula involves the product of the values of $\delta\phi$ at distinct edges, it is reasonable to view it as a ``projection'' of a more fundamental quantity defined on the Cartesian square of the pipeline.



\begin{definition} \label{def-tensor-1D}
(This is a particular case of Definition~\ref{def-tensor}.)
The \emph{Cartesian square} of a path with $N$ edges is the grid $N\times N$; see Figure~\ref{fig-notation-cross}.
The vertices of the grid have form $k\times l$, where $k$ and $l$ are vertices of the path. The $1\times 1$ squares have form $\mathbf{k}\times \mathbf{l}$, where $\mathbf{k}$ and $\mathbf{l}$ are edges.

For functions $\psi,\phi$ on the set of vertices (respectively, edges) of the path denote by $\psi\times \phi$ the function on the vertices (respectively, $1\times 1$ squares) of the grid given by $[\psi\times \phi](k\times l)=\psi(k)\phi(l)$
(respectively, by $[\psi\times \phi](\mathbf{k}\times \mathbf{l})=\psi(\mathbf{k})\phi(\mathbf{l})$). A real-valued function on the disjoint union of the sets of vertices and $1\times 1$ squares of the grid is a \emph{type $(1,1)$ tensor}. (E.g., for the toy model, equation~\eqref{eq-energy-conservation} gives the tensor equal $s^2$ on each $1\times 1$ square and vanishing on each non-boundary vertex.)

A tensor $T$ is \emph{conserved}, if for each $0<k< N$ and $0<l\le N$ the following equation holds:
$$
T(k\times l)-T(k\times(l-1)) +T(\mathbf{k}\times\mathbf{l})
-T((\mathbf{k}+\mathbf{1})\times\mathbf{l})=0.
$$
I.e., we have one equation per \emph{vertical} non-boundary edge; see Figure~\ref{fig-notation-cross}.
\end{definition}

The following obvious corollary of Proposition~\ref{prop-Euler-Lagrange} is a particular case of Theorem~\ref{th-energy-conservation}.

\begin{proposition}[Momentum conservation] Let $\mathcal{L}[\phi]$ be a local Lagrangian and $\phi$ be a stationary function for the functional $\sum_{k=0}^{N}\mathcal{L}[\phi](k)$. Then the tensor given by~\eqref{eq-energy-conservation} is conserved.
\end{proposition}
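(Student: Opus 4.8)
The plan is to expand the conservation identity of Definition~\ref{def-tensor-1D} directly, using the explicit cross-product formulas, and show that it collapses to a scalar multiple of the Euler--Lagrange equation, which vanishes by stationarity. First I would unwind the tensor $T=\frac{\partial\mathcal{L}}{\partial(\delta\phi)}\times\delta\phi+\frac{\partial\mathcal{L}}{\partial\phi}\times\phi$ from~\eqref{eq-energy-conservation}. Since $\frac{\partial\mathcal{L}}{\partial(\delta\phi)}$ and $\delta\phi$ are functions on edges while $\frac{\partial\mathcal{L}}{\partial\phi}$ and $\phi$ are functions on vertices, the first summand lands on the $1\times 1$ squares and the second on the vertices of the grid, giving
$$
T(\mathbf{k}\times\mathbf{l})=\frac{\partial\mathcal{L}}{\partial(\delta\phi)}(\mathbf{k})\,\delta\phi(\mathbf{l}),
\qquad
T(k\times l)=\frac{\partial\mathcal{L}}{\partial\phi}(k)\,\phi(l).
$$

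Next I would substitute these into the conservation equation $T(k\times l)-T(k\times(l-1))+T(\mathbf{k}\times\mathbf{l})-T((\mathbf{k}+\mathbf{1})\times\mathbf{l})=0$ for a fixed vertical nonboundary edge, i.e.\ fixed $0<k<N$ and $0<l\le N$. The two vertex terms combine into $\frac{\partial\mathcal{L}}{\partial\phi}(k)\bigl[\phi(l)-\phi(l-1)\bigr]=\frac{\partial\mathcal{L}}{\partial\phi}(k)\,\delta\phi(\mathbf{l})$, using $\delta\phi(\mathbf{l})=\phi(l)-\phi(l-1)$; the two square terms already carry the common factor $\delta\phi(\mathbf{l})$. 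Factoring it out, the left-hand side becomes
$$
\delta\phi(\mathbf{l})\left[\frac{\partial\mathcal{L}}{\partial(\delta\phi)}(\mathbf{k})-\frac{\partial\mathcal{L}}{\partial(\delta\phi)}(\mathbf{k}+\mathbf{1})+\frac{\partial\mathcal{L}}{\partial\phi}(k)\right].
$$

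Finally, I would observe that the bracketed factor is exactly the left-hand side of the Euler--Lagrange equation of Proposition~\ref{prop-Euler-Lagrange} evaluated at the vertex $k$, which vanishes because $\phi$ is stationary. Hence the whole product is zero for every admissible pair $k,l$, which is precisely the conservation of $T$ in the sense of Definition~\ref{def-tensor-1D}. There is no genuine obstacle here: once the cross-products are unwound, the entire content is the algebraic observation that factoring $\delta\phi(\mathbf{l})$ cleanly separates the $l$-dependence (carried by $\delta\phi(\mathbf{l})$) from the $k$-dependence (the Euler--Lagrange combination), after which stationarity finishes the argument. The only point requiring care is the bookkeeping of which summand of~\eqref{eq-energy-conservation} lives on squares versus on vertices; note also that for $0<k<N$ all the edges $\mathbf{k},\mathbf{k}+\mathbf{1}$ and the derivatives $\frac{\partial\mathcal{L}}{\partial(\delta\phi)}(\mathbf{k}),\frac{\partial\mathcal{L}}{\partial(\delta\phi)}(\mathbf{k}+\mathbf{1})$ are interior, so the boundary convention $\frac{\partial\mathcal{L}}{\partial(\delta\phi)}(\mathbf{0})=\frac{\partial\mathcal{L}}{\partial(\delta\phi)}(\mathbf{N}+\mathbf{1})=0$ is not even needed in the claimed range.
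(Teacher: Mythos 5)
Your computation is correct: unwinding the two cross-product summands onto squares and vertices, combining $\phi(l)-\phi(l-1)=\delta\phi(\mathbf{l})$, and factoring out $\delta\phi(\mathbf{l})$ leaves exactly the Euler--Lagrange expression of Proposition~\ref{prop-Euler-Lagrange} at the vertex $k$, which vanishes by stationarity. This is essentially the paper's argument: the paper obtains the proposition as the $1$-dimensional instance of Theorem~\ref{th-energy-conservation}, whose proof performs the same reduction to the Euler--Lagrange equation at the operator level (via the Leibniz rule $\partial(\psi\times\phi)=\partial\psi\times\phi+\psi\times\delta\phi$ and $\partial\partial=\delta\delta=0$), and your coordinate expansion is just that computation written out explicitly on the grid.
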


Define the \emph{flux} of a tensor $T$ through a vertex $k$ by the formula $\frac{1}{2}T((\mathbf{k}+\mathbf{1})\times \mathbf{k})- \frac{1}{2}T(k\times k)$. E.g., for the free field, 
the flux of tensor~\eqref{eq-energy-conservation} equals exactly $\sigma(k)$. A tensor $T$ is \emph{symmetric}, if $T(k\times l)=T(l\times k)$ for all vertices or edges $k,l$. E.g., tensor \eqref{eq-energy-conservation} is symmetric essentially \emph{only} for the free field (despite being a tensor on $1$-dimensional spacetime). A conserved symmetric tensor has constant flux (this is a version of  Theorem~\ref{prop-global-momentum-conservation} above).
E.g., for the toy model, the flux of tensor~\eqref{eq-energy-conservation} is $j(\mathbf{k})^2/2$.


\subsection{The Klein--Gordon field}\label{ssec-Klein-Gordon}

The classical (not quantum) Klein--Gordon field does not describe a real physical field 
but serves as an example for more realistic models. Corollaries~\ref{cor-Klein-Gordon-charge-conservation}, \ref{cor-Klein-Gordon-charge-conservation-gauge},
\ref{cor-Klein-Gordon-full} and Proposition~\ref{th-Klein-Gordon-approximation} are new.

\subsubsection*{Basic model}

\begin{definition}
Fix a number $m\ge 0$ called \emph{particle mass}.
A complex-valued function $\phi$ on the set of vertices of $I^d_N$ is a \emph{Klein--Gordon field} of mass $m$, if the following equation holds apart $\partial I^d_N$:
\begin{itemize}
  \item \emph{the Klein--Gordon equation:} $-\partial\#\delta\phi+m^2\phi=0$.
\end{itemize}
\end{definition}

\begin{corollary} 
  \label{cor-Klein-Gordon}
  A complex-valued function $\phi$ on vertices of $I^d_N$ 
  is a Klein--Gordon field, if and only if
  among all the functions with the same values on $\partial I^d_N$, the function $\phi$ is stationary for the functional $\mathcal{S}[\phi]=\epsilon\mathcal{L}[\phi]$, where
  $$
  \mathcal{L}[\phi]=\mathop{\#}\!\delta \phi \frown\delta \phi^*-m^2\phi\frown \phi^*.
  $$
\end{corollary}

Here we impose a boundary condition, because the theory becomes trivial otherwise. The Lagrangian $\mathcal{L}[\phi]$ is \emph{globally gauge invariant}, i.e., $\mathcal{L}[\phi g]=\mathcal{L}[\phi]$ for each $g\in\mathbb{C}$ with $|g|=1$.

\begin{corollary}[Charge, energy, momentum conservation] \label{cor-Klein-Gordon-charge-conservation}
For a Klein--Gordon field $\phi$
the current  $j[\phi]:=-2\mathrm{Im}(\#\delta\phi^*\frown \phi)$
and the tensor $T[\phi]:=2\mathrm{Re}[\#\delta\phi^*\times \delta\phi-m^2\phi^*\times\phi]$
are conserved apart~$\partial I^d_N$.
\end{corollary}

\subsubsection*{Approximation}

The resulting current $j[\phi]$ and energy-momentum tensor $T[\phi]$ indeed approximate continuum ones.

In continuum theory, $\phiup$ is a smooth complex-valued function defined on 
$\mathrm{I}^d$. (Hereafter \emph{smooth} means $C^1$, and the derivative at the boundary $\partial\mathrm{I}^d$ means a one-sided derivative.) The \emph{current} and \emph{energy-momentum tensor} of $\phiup$ (for the metric signature $(+,-,...,-)$) are the vector and matrix fields
\begin{equation*}
\mathrm{j}^l=-2\mathrm{Im}\left[\phiup\,\partial^l \phiup^*\right]
\quad\mbox{ and }\quad
\mathrm{T}_k^l=2\mathrm{Re}\left[\partial^l \phiup^*\partial_k \phiup\right]
+\deltaup_k^l\left[-\partial^n \phiup^*\partial_n \phiup+m^2\phiup^*\phiup\right],
\vspace{-0.3cm}
\end{equation*}
where summation over $n$ is understood, and we denote
$\partial_n\phiup:=\frac{\partial \phiup}{\partial\mathrm{x}^n}$,  $\partial^n\phiup:=\begin{cases}
  +\partial_n\phiup, & \mbox{if $n=0$}; \\
  -\partial_n\phiup, & \mbox{if $n\ne 0$}.
\end{cases}$
\vspace{-0.3cm}

\begin{proposition}[Approximation property] \label{th-Klein-Gordon-approximation}
Let $\phiup$ be a smooth complex-valued field on $\mathrm{I}^d$. Dissect $\mathrm{I}^d$ into $N^d$ equal hypercubes and take the discrete field $\phi_N(v):=\phiup(v)$ on the vertices $v$ of the resulting grid. Let $\mathrm{j}^l$, $\mathrm{T}^l_k$ be the continuous current and energy-momentum tensor. Define $j_N=j[\phi_N]$, $T_N=T[\phi_N]$ by the same formulae as in Corollary~\ref{cor-Klein-Gordon-charge-conservation} except that $m$ is replaced by $m/N$. Take $0\le k,l<d$.
Then on the set of all edges $e\parallel \mathrm{e}_l$ and all hyperfaces $h\perp \mathrm{e}_l$ disjoint with $\partial \mathrm{I}^d$, we have
\begin{equation*}
Nj_N(e)\rightrightarrows \mathrm{j}^l(\max e)
\quad\mbox{ and }\quad
(-1)^l N^2 \langle T_N,h\rangle_k\rightrightarrows \mathrm{T}^l_k(\max h)\qquad\mbox{as }N\to\infty.
\end{equation*}
\end{proposition}

\begin{remark} \label{rem-weekness}
  The fields $\phiup$ and $\phi_N$ are \emph{not} necessarily Klein--Gordon fields (and typically $\phi_N$ cannot be such one, even $\phiup$ is). In particular, $j[\phi_N]$ and $T[\phi_N]$ are \emph{not} necessarily conserved.
%
\end{remark}

\subsubsection*{Coupling to a gauge field}

Interaction with a gauge field is introduced by replacement of (co)boundary by covariant (co)boundary. 
Let $U\in C^1({I}^d_N;G)$, $A=U-1$, $F$ be a gauge group field, the connection, and the curvature respectively.
Hereafter $C^k({I}^d_N;V)$ is the set of $V$-valued functions on the set of $k$-dimensional faces of ${I}^d_N$.

\begin{definition} \label{def-covar-boundary}
\label{def-covar-coboundary} 
For $\phi\in C^0(I^d_N;\mathbb{C}^{1\times n})$, $g\in C^0(I^d_N;G)$, and $\psi\in C_1(I^d_N;\mathbb{C}^{1\times n})$,
the \emph{gauge transformation} of $\phi$ by $g$, the \emph{covariant coboundary} of $\phi$, and the \emph{covariant boundary} of $\psi$ are the functions on vertices $b$ or edges $ab$, where $a<b$, given by
\begin{align*}
[\phi\smile g](b)&:=\phi(b)g(b),\\
[D_A\phi](ab)    &:=\phi(b)-\phi(a)U(ab),\\
[D^*_A\psi](b)   &:=\sum\nolimits_{\text{edges }
ab\text{ ending at }b}\psi(ab)-\sum\nolimits_{\text{edges }
bc\text{ starting at }b}\psi(bc)U(cb)
\end{align*}
A field $\phi\in C^0({I}^d_N;\mathbb{C}^{1\times n})$
is a \emph{Klein--Gordon field interacting with the gauge field}, if 
apart $\partial I^d_N$ we have
\begin{itemize}
  \item \emph{the Klein--Gordon equation in a gauge field}: $-D^*_A\#D_A\phi+m^2\phi=0$.
\end{itemize}
\end{definition}

\begin{corollary} 
  \label{cor-Klein-Gordon-gauge}
  A function $\phi\in C^0({I}^d_N;\mathbb{C}^{1\times n})$ is a Klein--Gordon field interacting with a gauge group field $U\in C^1({I}^d_N;G)$, if and only if among all the functions with the same values on $\partial I^d_N$, the function $\phi$ is stationary for the functional $\mathcal{S}[\phi,U]=\epsilon\mathcal{L}[\phi,U]$ for fixed $U$, where
  $$
  \mathcal{L}[\phi,U]=\#D_A \phi\frown (D_A \phi)^*-m^2\phi\frown \phi^*-\tfrac{1}{2}\mathrm{Re}\,\mathrm{Tr}[\#F^*\frown F].
  $$
  \end{corollary}

\begin{remark}
  Using row-vectors $\phi$ rather than column-vectors is essential to make $\mathcal{L}[\phi,U]$ a local Lagrangian with respect to the gauge group field $U$ as well. The third summand in $\mathcal{L}[\phi,U]$ can be dropped for fixed $U$ but becomes essential for dynamic $U$ in Corollary~\ref{cor-Klein-Gordon-full}.
\end{remark}

In what follows we use (co)chain products defined in Table~\ref{tab-products}; cf.~Definition~\ref{def-cap-general}.

\begin{corollary}[Gauge invariance] \label{prop-gauge-invariance-Klein-Gordon}
The Lagrangian $\mathcal{L}[\phi,U]$ from Corollary~\ref{cor-Klein-Gordon-gauge} is \emph{gauge invariant}, i.e., $\mathcal{L}[\phi\smile g,g^*\smile U \smile g]=\mathcal{L}[\phi,U]$ for each $\phi\in C^0(I^d_N;\mathbb{C}^{1\times n})$, $U\in C^1(I^d_N;G)$, $g\in C^0(I^d_N;G)$.
\end{corollary}


\begin{corollary}[Charge conservation] 
\label{cor-Klein-Gordon-charge-conservation-gauge}
For a Klein--Gordon field $\phi$ interacting with a gauge group field $U$
the covariant current $j[\phi,U]=-2\phi^*\smile \#D_A\phi\in C^1({I}^d_N;\mathbb{C}^{n\times n})$ is conserved apart $\partial I^d_N$, i.e.,  $D^*_A\mathrm{Pr}_{T_U G} j[\phi,U]=0$ apart $\partial I^d_N$. (Beware that the product of a column- and a row-vector is a~matrix.)
\end{corollary}

\begin{corollary}
\label{cor-Klein-Gordon-full}
A gauge group field $U$ is stationary for the functional
$\mathcal{S}[\phi,U]$ from Corollary~\ref{cor-Klein-Gordon-gauge}
for fixed $\phi\in C^0(I^d_N;\mathbb{C}^{1\times n})$, if and only if $U$ is generated by
the covariant current from Corollary~\ref{cor-Klein-Gordon-charge-conservation-gauge}.
\end{corollary}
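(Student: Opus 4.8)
The plan is to reduce the statement to a single variational identity and then invoke Proposition~\ref{l-Wilson}. First I would use Corollary~\ref{cor-Klein-Gordon-gauge} to write $\mathcal{S}[\phi,U]=\epsilon\mathcal{L}[\phi,U]$ and split it, for fixed $\phi$, into three pieces: the matter part $\mathcal{S}_{\mathrm{mat}}[U]:=\epsilon[-\#D_A\phi\frown(D_A\phi)^*]$, the mass part $-m^2\epsilon[\phi\frown\phi^*]$, which does not depend on $U$ at all, and the Yang--Mills part $\mathcal{S}_{\mathrm{YM}}[U]:=\epsilon[-\tfrac12\mathrm{Re}\,\mathrm{Tr}(\#F^*\frown F)]$. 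By Proposition~\ref{l-Wilson}, ``$U$ is generated by a current $j$'' (Definition~\ref{def-gauge}) means exactly that $U$ is stationary for $\mathcal{S}_{\mathrm{YM}}[U]-\epsilon\,\mathrm{Re}\,\mathrm{Tr}[j^*\frown U]$ with $j$ held fixed. Since $\mathcal{S}_{\mathrm{YM}}$ is common to both functionals and the mass part is constant in $U$, the whole corollary reduces to the claim that the first variation of $\mathcal{S}_{\mathrm{mat}}$ in $U$ (at fixed $\phi$) agrees, as a linear functional of the variation $\dot U$ tangent to $G$ at each edge, with the first variation of $-\epsilon\,\mathrm{Re}\,\mathrm{Tr}[j^*\frown U]$ with $j$ frozen at the current $j[\phi,U]=2\phi^*\smile\#D_A\phi$ of Corollary~\ref{cor-Klein-Gordon-charge-conservation-gauge}.

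To establish this identity I would first compute $D_A\phi$ explicitly. From \eqref{eq-covar-coboundary} with $k=0$ and the products in Table~\ref{tab-products}, for an edge $e$ with $a=\min e$, $b=\max e$ and $A=U-1$ one gets $D_A\phi(e)=\delta\phi(e)-\phi(a)A(e)=\phi(b)-\phi(a)U(e)$, so the matter density is $-\#_e\,D_A\phi(e)\,D_A\phi(e)^*$, where $\#_e=\pm1$ is the sign attached by $\#$ according to the direction of $e$. The crucial structural observation is that, because $U(e)$ is unitary, the quadratic-in-$U$ term $\phi(a)U(e)U(e)^*\phi(a)^*=\phi(a)\phi(a)^*$ is independent of $U$; hence $\mathcal{S}_{\mathrm{mat}}$ is affine along $G$, with $U$-dependent part the cross term $2\sum_e\#_e\,\mathrm{Re}\,\mathrm{Tr}[\phi(b)^*\phi(a)U(e)]$. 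Differentiating and using cyclicity of the trace together with $\mathrm{Re}\,\mathrm{Tr}[M\dot U]=\langle M^*,\dot U\rangle$ for the scalar product of Definition~\ref{def-projection}, the first variation of $\mathcal{S}_{\mathrm{mat}}$ becomes $\sum_e\langle\, 2\#_e\phi(a)^*\phi(b),\,\dot U(e)\rangle$, whose left entries are precisely the values of $2\phi^*\smile\#D_A\phi$ restricted to their $U$-independent part; the identical computation applied to $-\epsilon\,\mathrm{Re}\,\mathrm{Tr}[j^*\frown U]$ with $j$ frozen yields the same linear functional (with the signs dictated by the conventions of Proposition~\ref{l-Wilson} and Definition~\ref{def-gauge}), which is the desired identity.

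One subtlety must be handled explicitly. The current $j[\phi,U]$ itself depends on $U$ through $D_A\phi$, so I would separate $j[\phi,U](e)=2\#_e\phi(a)^*\phi(b)-2\#_e\phi(a)^*\phi(a)U(e)$ into a $U$-independent part and a remainder of the form $HU(e)$ with $H=\phi(a)^*\phi(a)$ Hermitian and positive semidefinite. A one-line check, $\langle HU,XU\rangle=\mathrm{Re}\,\mathrm{Tr}[HX]=0$ for every anti-Hermitian generator $X$ of $G$, shows that this remainder lies in the normal direction to $G$ at $U(e)$ and is therefore annihilated by $\mathrm{Pr}_{T_{U(e)}G}$. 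Since stationarity over the compact group $G$ sees only the tangential component, freezing $j$ at $j[\phi,U]$ imposes the same condition as freezing it at its constant part, and this is also why the Yang--Mills equation and charge conservation in Corollary~\ref{cor-gauge} are phrased through $\mathrm{Pr}_{T_UG}$. I expect this point, namely tracking the conjugations and the $\#$-signs while restricting every variation to $T_{U(e)}G$ so that only $\mathrm{Pr}_{T_UG}$ survives and the $U$-dependence of $j[\phi,U]$ becomes invisible, to be the main obstacle; once it is settled the equivalence of the two stationarity conditions, and hence the corollary, follows immediately.
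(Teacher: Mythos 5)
Your proposal is correct in substance and rests on the same key identity as the paper's proof: the $U$-derivative of the matter part of the action coincides, modulo the normal direction to $G$, with the covariant current $j[\phi,U]$, so that the two stationarity conditions impose the same (projected) Yang--Mills equation. The difference is one of execution rather than of route. The paper gets the identity in one line by citing Lemma~\ref{l-lagrangian-functional-derivative-charge} (which yields $\partial\mathcal{L}/\partial U=\frac{\partial\mathcal{L}}{\partial(D_A\phi)}\frown\phi=j[\phi,U]^*$ abstractly) together with row~7 of Table~\ref{tab-derivatives}, and then applies the gauge Euler--Lagrange Theorem~\ref{th-Euler-Lagrange-gauge} twice --- once to the full functional and once to the Wilson action with $j$ frozen at $j[\phi,U_0]$ --- observing that the two Euler--Lagrange equations are literally the same equation. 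You instead redo the variation edge by edge, using unitarity to see that the matter action is affine along $G$; and your observation that the $U$-dependent part of $j[\phi,U](e)$ has the form $HU(e)$ with $H$ Hermitian positive semidefinite, hence is orthogonal to $T_{U(e)}G$, is exactly the explicit content of why one may freeze $j$ at $j[\phi,U_0]$ without changing the projected equation --- a point the paper handles implicitly through the $\mathrm{Pr}_{T_UG}$ in Theorem~\ref{th-Euler-Lagrange-gauge}. What your route buys is transparency about where unitarity and the projection enter; what it costs is that all the sign and conjugation bookkeeping (the $\#_e$ factors, the minus sign in front of the source term in Definition~\ref{def-gauge}, the orientation conventions in the cup and cap products) must be tracked by hand, and this is precisely where the argument is most fragile --- so you should pin those signs down completely rather than deferring them to ``the conventions of Proposition~\ref{l-Wilson}.''
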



\subsubsection*{Proofs}

\begin{proof}[Proof of Corollary~\ref{cor-Klein-Gordon}]
This follows from a version of Theorem~\ref{th-Euler-Lagrange} for
complex-valued fields and nonfree boundary conditions and Proposition~\ref{l-differentiate-Lagrangian} for $A=0$; see rows 3--4 of Table~\ref{tab-derivatives}.
\end{proof}

\begin{proof}[Proof of Corollary~\ref{cor-Klein-Gordon-charge-conservation}]
Since $\mathcal{L}[\phi]$ is globally gauge invariant, it follows that~\eqref{eq-invariance} holds for
$\Delta=i\phi$. By the versions of Theorems~\ref{th-Noether} and~\ref{th-energy-conservation} for
complex-valued fields and nonfree boundary conditions, it follows that the real parts of~\eqref{eq-Noether-current}--\eqref{eq-energy-conservation} are conserved apart the boundary, as required.
\end{proof}

\begin{proof}[Proof of Proposition~\ref{th-Klein-Gordon-approximation}]
Since $\phiup$ is $C^1$, we get $N[\delta\phi_N](e)\rightrightarrows \partial_l\phiup(\max e)$, $\phi_N(\min e)\rightrightarrows\phiup(\max e)$,
$$
Nj_N(e)=-2N\,\mathrm{Im}[\#\delta\phi_N^*\frown \phi_N](e)
\rightrightarrows -2\,\mathrm{Im}\left[
\partial^l\phiup^*(\max e)\phiup(\max e)
\right]
= \mathrm{j}^l(\max e).
$$
For $v\!:=\!\max h$, by a version of Proposition~\ref{prop-quadratic-lagrangian} for $\mathbb{C}$-valued fields and rows 3--4 of Table~\ref{tab-derivatives}, we get
\begin{align*}
  (-1)^lN^2\langle T_N,h\rangle_k
  &=
  N^2\mathrm{Re}\left[
  \left(\#\delta\phi_N^*(v+\mathrm{e}_l)+
  \#\delta\phi_N^*(v+\mathrm{e}_l-2\mathrm{e}_k)\right)
  \delta\phi_N(v-\mathrm{e}_k)
    \right]+\\&+
  N^2\deltaup^l_k
  \left[
  -\#\delta\phi_N\frown \delta\phi_N^* +\tfrac{m^2}{N^2}\phi_N\frown\phi_N^*
  \right](v)\\
  &\rightrightarrows 2\mathrm{Re}\left[\partial^l \phiup^*\partial_k \phiup\right](v)
  +\deltaup^l_k\left[-\partial^n \phiup^*\partial_n \phiup+m^2\phiup^*\phiup\right](v)
  =\mathrm{T}^l_k(v).\\[-1.6cm]
\end{align*}
\end{proof}

\begin{proof}[Proof of Corollary~\ref{cor-Klein-Gordon-gauge}]
Drop the last term (not depending on $\phi$) from the Lagrangian $\mathcal{L}[\phi,U]$. Then by a version of Theorem~\ref{th-Euler-Lagrange-covar} and rows 3--4 of Table~\ref{tab-derivatives} the corollary follows.
\end{proof}

\begin{proof}[Proof of Corollary~\ref{prop-gauge-invariance-Klein-Gordon}]
This follows directly from Proposition~\ref{l-gauge-covariance}; see rows 3--4 of Table~\ref{tab-derivatives}.
\end{proof}

\begin{proof}[Proof of Corollary~\ref{cor-Klein-Gordon-charge-conservation-gauge}]
This follows from Corollary~\ref{prop-gauge-invariance-Klein-Gordon}, a version of Theorem~\ref{th-charge-conservation} for nonfree boundary conditions, row 4 of Table~\ref{tab-derivatives}, and the formula for $(\phi\smile\psi)^*$ from Lemma~\ref{l-identities}.
\end{proof}

\begin{proof}[Proof of Corollary~\ref{cor-Klein-Gordon-full}]
For fixed $\phi$, 
the Lagrangian $\mathcal{L}[\phi,U]=:\mathcal{L}[U]$ from Corollary~\ref{cor-Klein-Gordon-gauge} is local with respect to $U$. By Lemma~\ref{l-lagrangian-functional-derivative-charge} and row~6 of Table~\ref{tab-derivatives} we get
$\frac{\partial\mathcal{L}[U]}{\partial U}=j[\phi,U]^*$ and $\frac{\partial\mathcal{L}[U]}{\partial (F[U])}=\# F^*$, where $j[\phi,U]$ is given by Corollary~\ref{cor-Klein-Gordon-charge-conservation-gauge}.
Let $U_0$ be stationary for the functional $\mathcal{S}[\phi,U]=\epsilon \mathcal{L}[U]$.
By Theorem~\ref{th-Euler-Lagrange-gauge}, $U_0$ satisfies the Yang--Mills equation from Corollary~\ref{cor-gauge} with $j=j[\phi,U_0]$.
Then again by Theorem~\ref{th-Euler-Lagrange-gauge} $U_0$, is stationary for $\mathcal{S}[U]$ from Proposition~\ref{l-Wilson}, where $j=j[\phi,U_0]$ is fixed (i.e., one keeps $j=j[\phi,U_0]$ rather than $j=j[\phi,U]$ under a variation of $U$). Thus by Definition~\ref{def-gauge}, $U_0$ is generated by $j[\phi,U_0]$. The reciprocal assertion is proved analogously.
\end{proof}

\subsection{The Dirac field} \label{ssec-Dirac}

A classical (not quantum) Dirac field
describes the wave function of an electron in quantum-mechanics (not quantum field theory). Our discretization is equivalent to~\cite[(5.19)]{Creuz-70} but not to~\cite[(5.55)]{Creuz-70}.
In this subsection, the ``topological'' notation seems to be less clear than the original ``coordinate'' one \cite{Creuz-70}, but we keep the former for sameness. Corollaries~\ref{cor-probability-conservation}, \ref{cor-Dirac-charge-conservation}, \ref{cor-Dirac-full}, and Proposition~\ref{th-Dirac-approximation} are new.

\subsubsection*{Basic model}

\begin{definition} \label{def-Dirac}
Introduce the \emph{Dirac $\gamma$-matrices} (\emph{generators of the Clifford algebra of $\mathbb{R}^{1,3}$}):
$$
\gamma^0=
\left(\begin{smallmatrix}
  1 & 0 & 0 & 0 \\
  0 & 1 & 0 & 0 \\
  0 & 0 & -1 & 0 \\
  0 & 0 & 0 & -1
\end{smallmatrix}\right),\quad
\gamma^1=
\left(\begin{smallmatrix}
        0 & 0 & 0 & 1 \\
        0 & 0 & 1 & 0 \\
        0 & -1 & 0 & 0 \\
        -1 & 0 & 0 & 0
      \end{smallmatrix}\right),\quad
\gamma^2=
\left(\begin{smallmatrix}
        0 & 0 & 0 & -i \\
        0 & 0 & i & 0 \\
        0 & i & 0 & 0 \\
        -i & 0 & 0 & 0
      \end{smallmatrix}\right),\quad
\gamma^3=
\left(\begin{smallmatrix}
        0 & 0 & 1 & 0 \\
        0 & 0 & 0 & -1 \\
        -1 & 0 & 0 & 0 \\
        0 & 1 & 0 & 0
      \end{smallmatrix}\right).\quad
$$

The \emph{Dirac chain} $\gamma\in C_1(I^4_N;\mathbb{C}^{4\times 4})$ is given by $\gamma(e)=\gamma^k$ for each edge $e\parallel \mathrm{e}_k$, where $k=0,1,2,3$.

A function $\psi\in C^0(I^4_N;\mathbb{C}^{4\times 1})$ is a \emph{Dirac field} of mass $m$, if the following equation holds apart $\partial I^4$:
\begin{itemize}
  \item \emph{the Dirac equation}: $i\gamma\frown \delta\psi+i\gamma\spleen \delta\psi-2m\psi=0$.
\end{itemize}
\end{definition}

Such form of the equation, with the Dirac chain appearing twice,
is forced by the following variational principle and
is a manifestation of  lattice \emph{fermion doubling} phenomenon. Set $\bar\psi:=\psi^*\gamma^0$.

\begin{corollary} 
  \label{cor-Dirac}
  A function $\psi\in C^0(I^4_N;\mathbb{C}^{4\times 1})$ is a Dirac field, if and only if
  among all the fiunctions with the same values on $\partial I^4_N$,
  the function $\psi$ is stationary for the functional $\mathcal{S}[\psi]=\epsilon\mathcal{L}[\psi]$, where
  $$
  \mathcal{L}[\psi]=\mathrm{Re}\left[\bar\psi\frown (i\gamma\frown\delta\psi-m\psi)\right].
  $$
\end{corollary}

  Using column-vectors $\psi$ rather than row-vectors is essential to make the expression meaningful.

The \emph{doubling} of the $d$-dimensional grid $I^d_N$ is defined analogously to Definition~\ref{def-doubling}.

\begin{proposition}\label{prop-fermion-doubling}
Consider a Dirac field on the doubling of $I^4_N$. Then the restriction of the field to the initial grid $I^4_N$ besides the boundary satisfies the Klein--Gordon equation with twice larger mass.
\end{proposition}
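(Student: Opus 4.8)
The plan is to show that the discrete Dirac operator squares, up to sign, to the discrete d'Alembertian $-\partial\#\delta$, exactly as in the continuum, the only inputs being the Clifford relations satisfied by $\gamma^0,\dots,\gamma^3$ and the matching of signatures between the $\gamma$-matrices and the operator $\#$.

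First I would unwind the Dirac equation on the doubling into explicit finite differences. Recall (the notation preceding Definition~\ref{def-flow}, together with Definition~\ref{def-doubling}) that $\mathrm e_0,\dots,\mathrm e_3$ are the length-$\tfrac12$ vectors whose integer combinations are the face centres of $I^4_N$; these face centres are precisely the vertices of the doubling, so an edge of the doubling in direction $k$ is a translate by $\mathrm e_k$, while an edge of $I^4_N$ is a translate by $2\mathrm e_k$ and a vertex of $I^4_N$ sits at an even combination $\sum_k 2a_k\mathrm e_k$. Using Table~\ref{tab-products} for $\dim\phi=\dim\psi=1$, the identities $\gamma(e)=\gamma^k$ for $e\parallel\mathrm e_k$ and $\delta\psi(ab)=\psi(b)-\psi(a)$, I evaluate the two products at an interior vertex $b$ of the doubling:
$$[i\gamma\frown\delta\psi](b)=i\sum_k\gamma^k\bigl(\psi(b)-\psi(b-\mathrm e_k)\bigr),\qquad [i\gamma\spleen\delta\psi](b)=i\sum_k\gamma^k\bigl(\psi(b+\mathrm e_k)-\psi(b)\bigr).$$
Adding, the terms $\psi(b)$ cancel and the Dirac equation becomes $D\psi=2m\psi$, where $D\psi(b):=i\sum_k\gamma^k\bigl(\psi(b+\mathrm e_k)-\psi(b-\mathrm e_k)\bigr)$ is the central-difference Dirac operator.

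Next I would apply $D$ a second time. For $b$ interior to $I^4_N$ each doubling-neighbour $b\pm\mathrm e_l$ is interior to the doubling, so $D\psi=2m\psi$ holds there and $D^2\psi(b)=D(2m\psi)(b)=2m\,D\psi(b)=4m^2\psi(b)$. On the other hand I expand $D^2$ directly and split the double sum into diagonal ($k=l$) and off-diagonal ($k\ne l$) parts. The off-diagonal part carries the coefficient $\gamma^l\gamma^k$ against the mixed second difference $B_{kl}:=\psi(b+\mathrm e_l+\mathrm e_k)-\psi(b+\mathrm e_l-\mathrm e_k)-\psi(b-\mathrm e_l+\mathrm e_k)+\psi(b-\mathrm e_l-\mathrm e_k)$; since $B_{kl}=B_{lk}$ while $\gamma^k\gamma^l=-\gamma^l\gamma^k$ for $k\ne l$, pairing the indices $(k,l)$ and $(l,k)$ annihilates this part. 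The diagonal part equals $-\sum_k(\gamma^k)^2\bigl(\psi(b+2\mathrm e_k)-2\psi(b)+\psi(b-2\mathrm e_k)\bigr)$; the steps $2\mathrm e_k$ are exactly the edges of $I^4_N$, and using $(\gamma^0)^2=I$, $(\gamma^k)^2=-I$ for $k=1,2,3$ one recognizes it as $-\partial\#\delta\psi(b)$ computed on $I^4_N$.

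Combining the two evaluations gives $-\partial\#\delta\psi(b)=4m^2\psi(b)$, that is $\partial\#\delta\psi(b)+(2m)^2\psi(b)=0$ at every interior vertex $b$ of $I^4_N$, which is the Klein--Gordon equation of mass $2m$ apart from $\partial I^4_N$, as claimed. The only genuinely delicate step is the diagonal computation: one must track the interaction between the metric signature hidden in $\#$ (which flips the $x_0$-edges) and the signature of the Clifford relations ($(\gamma^0)^2=+I$, $(\gamma^k)^2=-I$), and check that they combine to reproduce $-\partial\#\delta$ with the correct overall sign. The off-diagonal cancellation, and the bookkeeping of which vertices stay interior so that $D$ may legitimately be applied twice, are the remaining points to state carefully.
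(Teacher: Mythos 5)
Your proposal is correct and follows essentially the same route as the paper: square the (central-difference) Dirac operator, kill the off-diagonal terms via antisymmetry of $\gamma^k\gamma^l$ against the symmetric mixed differences, and identify the diagonal part with $\partial\#\delta$ on the initial grid using $(\gamma^0)^2=I$, $(\gamma^k)^2=-I$. The only cosmetic difference is that you apply $D$ twice using the equation at the neighbouring vertices, while the paper applies the operator $i\not\!\partial+2m$ to cancel the cross terms; these are equivalent.
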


The Lagrangian $\mathcal{L}[\psi]$ is globally gauge invariant: $\mathcal{L}[\psi g]=\mathcal{L}[\psi]$ for each
$g\in\mathbb{C}$ with $|g|=1$. In the case $m=0$ there is also a symmetry $\mathcal{L}[e^{i\gamma^5t}\psi]=\mathcal{L}[\psi]$ for each $t\in\mathbb{R}$, where $\gamma^5:=i\gamma^0\gamma^1\gamma^2\gamma^3$.


\begin{corollary}[Current, chiral current, energy, momentum conservation]\label{cor-probability-conservation}
For a Dirac field $\psi$
the following current and tensor are conserved apart $\partial I^4_N$:
$$j[\psi]=\mathrm{Re}\left[\bar\psi\smile \gamma\smile\psi\right]
\quad\mbox{ and }\quad
T[\psi]=\mathrm{Re}[(\bar\psi\spleen i\gamma)\times \delta\psi-(\delta\bar\psi\frown i\gamma+2m\bar\psi)\times \psi].$$
In the case when $m=0$ the current $j^5[\psi]=\mathrm{Re}\left[\bar\psi\smile \gamma^5\gamma\smile\psi\right]$ is also conserved apart $\partial I^4_N$.
\end{corollary}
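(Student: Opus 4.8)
The plan is to realize all three conservation laws as instances of the two general theorems already proved: the two currents from the Discrete Noether Theorem~\ref{th-Noether} and the tensor from the Energy--Momentum Conservation Theorem~\ref{th-energy-conservation}, both in their complex/vector-valued form (``the real part of the right-hand side is conserved''), applied to the Dirac Lagrangian $\mathcal{L}[\psi]=\mathrm{Re}[\bar\psi\frown(i\gamma\frown\delta\psi-m\psi)]$ of Corollary~\ref{cor-Dirac}. So the first step is to record the two partial derivatives of this Lagrangian. Writing $\mathcal{L}=\tfrac12(\tilde{\mathcal L}+\tilde{\mathcal L}^{*})$ with $\tilde{\mathcal L}=\bar\psi\frown(i\gamma\frown\delta\psi-m\psi)$ and using the Hermiticity relations $\gamma^{0}\gamma^{k*}\gamma^{0}=\gamma^{k}$ to rewrite the conjugate summand $\tilde{\mathcal L}^{*}$, I expect to obtain, up to an overall real factor,
\[
\frac{\partial\mathcal L[\psi]}{\partial(\delta\psi)}=\bar\psi\spleen i\gamma,
\qquad
\frac{\partial\mathcal L[\psi]}{\partial\psi}=-(\delta\bar\psi\frown i\gamma+2m\bar\psi),
\]
the cop-product $\spleen$ appearing because the value $\bar\psi(\max e)$ naturally sits at the maximal vertex of each edge $e$, and the second derivative acquiring its $\delta\bar\psi$-term precisely from $\tilde{\mathcal L}^{*}$.

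For the two currents I would apply Theorem~\ref{th-Noether}. The infinitesimal transformations are read off from the symmetries already recorded in the text: $\Delta=i\psi$ for global gauge invariance (valid for every $m$), and $\Delta=i\gamma^{5}\psi$ for the chiral symmetry (valid only when $m=0$). The invariance conditions~\eqref{eq-invariance} are exactly the stated identities $\mathcal L[\psi g]=\mathcal L[\psi]$ and $\mathcal L[e^{i\gamma^{5}t}\psi]=\mathcal L[\psi]$; for the latter one uses $\gamma^{5*}=\gamma^{5}$, $\gamma^{5}\gamma^{0}=-\gamma^{0}\gamma^{5}$, and the anticommutation $\gamma^{5}\gamma^{k}=-\gamma^{k}\gamma^{5}$, which leave the kinetic term invariant but force the mass term to be dropped. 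Substituting $\frac{\partial\mathcal L}{\partial(\delta\psi)}=\bar\psi\spleen i\gamma$ into~\eqref{eq-Noether-current} and simplifying $(\bar\psi\spleen i\gamma)\frown\Delta$ by the product identities of Definition~\ref{def-cap-general} together with $\gamma^{0}\gamma^{k*}\gamma^{0}=\gamma^{k}$ should collapse the one-sided expression to the symmetric form $\mathrm{Re}[\bar\psi\smile\gamma\smile\psi]$, and likewise to $\mathrm{Re}[\bar\psi\smile\gamma^{5}\gamma\smile\psi]$ in the chiral case. Conservation apart $\partial I^{4}_{N}$ then follows because $\psi$ is on shell away from the boundary by Corollary~\ref{cor-Dirac}.

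For the energy--momentum tensor I would substitute the two derivatives above into~\eqref{eq-energy-conservation}, obtaining, up to the same overall factor,
\[
T[\psi]=\mathrm{Re}\Bigl[(\bar\psi\spleen i\gamma)\times\delta\psi-(\delta\bar\psi\frown i\gamma+2m\bar\psi)\times\psi\Bigr],
\]
its conservation apart $\partial I^{4}_{N}$ being immediate from Theorem~\ref{th-energy-conservation} once $\psi$ satisfies the Dirac equation away from the boundary.

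The hard part will be the bookkeeping of the complex/matrix structure in the ``real part of the right-hand side'' prescription: one must verify that treating $\psi$ and $\bar\psi$ as the two independent fields and summing their contributions reproduces exactly the $\mathrm{Re}[\cdots]$ of the stated expressions, with the correct coefficient of $m$ and no spurious antiholomorphic ($\times\delta\bar\psi$ or $\times\bar\psi$) terms. Intertwined with this is the conversion between the three cochain products: the derivatives come out in terms of $\spleen$ and $\frown$ anchored at the maximal vertex of each face, whereas the target currents are written with the symmetric-looking $\smile$, and passing between the two forms requires both the orientation signs $\langle a,b,c\rangle$ of Definition~\ref{def-cap-general} and the Hermiticity of the $\gamma$-matrices. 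The overall real normalization (a factor of $\tfrac12$ or $2$) is irrelevant to conservation and can be fixed at the end to match the stated formulas.
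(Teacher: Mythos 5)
Your proposal is correct and follows essentially the same route as the paper: the two currents come from the complex-valued Noether Theorem~\ref{th-Noether} with $\Delta=\pm i\psi$ and $\Delta=i\gamma^5\psi$, and the tensor from Theorem~\ref{th-energy-conservation}, with the partial derivatives you compute agreeing with rows 4--5 of Table~\ref{tab-derivatives} and the passage from the $\spleen$/$\frown$ form to $\mathrm{Re}[\bar\psi\smile\gamma\smile\psi]$ handled, as in the paper, via $(\gamma^0\gamma)^*=\gamma^0\gamma$ and the conjugation identity for $(\phi\smile\psi)^*$ from Lemma~\ref{l-identities}. The only (immaterial) deviation is the sign of $\Delta$, which flips the sign of the current without affecting its conservation.
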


\begin{remark} Unlike continuum theory, $j[\psi](e)$ is not necessarily positive on edges $e\parallel (1,0,0,0)$ (because $\psi$ and $\bar\psi$ are evaluated at distinct endpoints of $e$) and thus cannot be interpreted as probability.

The tensor $T[\psi]$ is not partially symmetric. Thus we know no integral form of its conservation.
\end{remark}


\subsubsection*{Approximation}

The resulting current and energy-momentum tensor indeed approximate the continuum ones.

In continuum theory, $\psiup\colon \mathrm{I}^4\to\mathbb{C}^4$ is a smooth function. The \emph{current} and the (\emph{canonical}) \emph{energy-momentum tensor} of $\psiup$ are the vector and matrix fields
\begin{equation*}
\mathrm{j}^l=\mathrm{Re}\left[\bar\psiup\,\gamma^l \psiup\right]
\quad\mbox{ and }\quad
\mathrm{T}_k{}^l=\mathrm{Re}\left[i\bar\psiup\gamma^l \partial_k \psiup
-\deltaup^l_k\left(i\bar\psiup\gamma^n \partial_n \psiup-m\bar\psiup\psiup\right)\right],
\end{equation*}
where summation over $n$ is understood. In what follows analogues of Remarks~\ref{rem-renormalization} and~\ref{rem-weekness} apply.

\begin{proposition}[Approximation property] \label{th-Dirac-approximation}
Let $\psiup\colon \mathrm{I}^4\to\mathbb{C}^4$ be a smooth function. Dissect $\mathrm{I}^4$ into $N^4$ equal hypercubes and define the discrete field $\psi_N(v):=\psiup(v)$ on the vertices $v$ of the resulting grid. Let $\mathrm{j}^l$, $\mathrm{T}_k{}^l$ be the continuous current and energy-momentum tensor. Define $j_N=j[\phi_N]$, $T_N=T[\phi_N]$ by the same formulae as in Corollary~\ref{cor-probability-conservation} except that $m$ is replaced by $m/N$. Take $0\le k,l<4$.
Then on the set of all edges $e\parallel \mathrm{e}_l$ and hyperfaces $h\perp \mathrm{e}_l$ not intersecting $\partial \mathrm{I}^4$, we have
\begin{equation*}
j_N(e)\rightrightarrows \mathrm{j}^l(\max e)
\quad\mbox{ and }\quad
(-1)^l N \langle T_N,h\rangle_k\rightrightarrows \mathrm{T}_k{}^l(\max h) \qquad\mbox{as }N\to\infty.
\end{equation*}
\end{proposition}


\subsubsection*{Coupling to a gauge field}

\begin{definition} Let 
$U\in C^1(I^4_N;G)$ be a gauge group field. Assume that $n\ne 4$ to avoid notational conflict. The \emph{covariant coboundary} $D_A\psi$ of $\psi\in C^k(I^4_N;\mathbb{C}^{4\times n})$ is defined literally as for $\psi\in C^k(I^4_N;\mathbb{C}^{1\times n})$. Set
\begin{equation}\label{eq-covar-coboundary-reversed}
  \bar D_A\psi=(\delta \psi^*+A\smile \psi^*)^*.
\end{equation}
A function $\psi\in C^0(I^4_N;\mathbb{C}^{4\times n})$ is a \emph{Dirac field interacting with the gauge field}, if apart $\partial I^4_N$ we have
\begin{itemize}
  \item \emph{the Dirac equation in a gauge field}: $i\gamma\frown D_A\psi+i\gamma\spleen  \bar D_A\psi-2m\psi=0$.
\end{itemize}
\end{definition}


\begin{corollary} 
  \label{cor-Dirac-gauge}
  A function $\psi\in C^0(I^4_N;\mathbb{C}^{4\times n})$ is a Dirac field interacting with a gauge group field $U\in C^1(I^4_N;G)$, if and only if among all functions with the same values on $\partial I^4_N$, the  function $\psi$ is stationary for the functional $\mathcal{S}[\psi,U]=\epsilon\mathcal{L}[\psi,U]$ for fixed $U$, where
  $$
  \mathcal{L}[\psi,U]=\mathrm{Re}\,\mathrm{Tr}\,\left[\bar\psi\frown (i\gamma\frown D_A\psi-m\psi)-\tfrac{1}{2}\#F^*\frown F\right].
  $$
\end{corollary}

\begin{corollary}[Gauge invariance] \label{prop-gauge-invariance-Dirac}
The Lagrangian $\mathcal{L}[\psi,U]$ in Corollary~\ref{cor-Dirac-gauge} is gauge invariant.
\end{corollary}

\begin{corollary}[Charge conservation]
\label{cor-Dirac-charge-conservation}
For a Dirac field $\psi$ interacting with a gauge field $U$, the covariant current $j[\psi]=-\bar\psi\smile i\gamma\smile\psi\in C^1(I^4_N;\mathbb{C}^{n\times n})$ is conserved, i.e., $D^*_A\mathrm{Pr}_{T_U G} j[\psi]\!\!=\!\!0$ apart~$\partial I^4_N$.
In particular, its edgewise product with $iU$ is conserved, i.e., $\partial\langle j[\psi],iU\rangle=0$.
\end{corollary}

\begin{corollary}
\label{cor-Dirac-full}
A gauge group field $U$ is stationary for the functional
$\mathcal{S}[\psi,U]$ from Corollary~\ref{cor-Dirac-gauge}
for fixed $\psi\in C^0(I^4_N;\mathbb{C}^{4\times n})$,
if and only if $U$ is generated by
the covariant current from Corollary~\ref{cor-Dirac-charge-conservation}.
\end{corollary}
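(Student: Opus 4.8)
The plan is to reduce the statement to Proposition~\ref{l-Wilson} and the very definition of ``generated by'' in Definition~\ref{def-gauge}. Concretely, I would show that, for fixed $\psi$, the functional $U\mapsto\mathcal{S}[\psi,U]$ and the pure-gauge action $\mathcal{S}[U]$ with current $j=j[\psi]$ differ only by a $U$-independent constant. Two functionals of $U$ that differ by a constant have identical critical points, so $U$ would then be stationary for $\mathcal{S}[\psi,\cdot]$ if and only if it is stationary for $\mathcal{S}[\cdot]$ with current $j[\psi]$, which is exactly the assertion. By Proposition~\ref{l-Wilson} the target functional is $\mathcal{S}[U]=\epsilon\,\mathrm{Re}\,\mathrm{Tr}\,[-\tfrac12\#F^*\frown F-j^*\frown U]$, so it suffices to compare the $U$-dependence of the two Lagrangians term by term.

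First I would isolate the $U$-dependence of $\mathcal{S}[\psi,U]=\epsilon\mathcal{L}[\psi,U]$. The mass term $-m\,\mathrm{Re}\,\mathrm{Tr}[\bar\psi\frown\psi]$ contains no $U$. Expanding $D_A\psi=\delta\psi-\psi\smile A$ via Definition~\ref{def-covar-coboundary} with $k=0$, the summand $\mathrm{Re}\,\mathrm{Tr}[\bar\psi\frown(i\gamma\frown\delta\psi)]$ is likewise $U$-independent, and the Yang--Mills term $-\tfrac12\mathrm{Re}\,\mathrm{Tr}[\#F^*\frown F]$ is literally the first term of $\mathcal{S}[U]$. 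Thus the only nontrivial comparison is between the $A$-linear coupling term $-\mathrm{Re}\,\mathrm{Tr}\,\epsilon[\bar\psi\frown(i\gamma\frown(\psi\smile A))]$ and $-\mathrm{Re}\,\mathrm{Tr}\,\epsilon[j[\psi]^*\frown U]$. Using the product formulas of Table~\ref{tab-products}, and keeping track of whether each factor is read at the minimal or maximal vertex of an edge, I would rewrite both as single sums over edges $ab$ with $a<b$; the coupling term reduces to $-\mathrm{Re}\,\mathrm{Tr}\sum_{a<b}\bar\psi(b)\,i\gamma(ab)\,\psi(a)\,A(ab)$.

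The crux is the identity $j[\psi]^*(ab)=\bar\psi(b)\,i\gamma(ab)\,\psi(a)$. Since $j[\psi]=-\bar\psi\smile i\gamma\smile\psi$ evaluates to $j[\psi](ab)=-\bar\psi(a)\,i\gamma(ab)\,\psi(b)$, taking the conjugate transpose and substituting $\bar\psi=\psi^*\gamma^0$ turns the claim into the Hermiticity relation $(\gamma^k)^*=\gamma^0\gamma^k\gamma^0$ for the Dirac matrices (equivalently, $\gamma^0$ Hermitian and $\gamma^1,\gamma^2,\gamma^3$ anti-Hermitian), together with $(\gamma^0)^2=I$. Granting this, $-\mathrm{Re}\,\mathrm{Tr}\,\epsilon[j[\psi]^*\frown U]=-\mathrm{Re}\,\mathrm{Tr}\sum_{a<b}\bar\psi(b)\,i\gamma(ab)\,\psi(a)\,U(ab)$, which differs from the coupling term only by the replacement of $A(ab)$ with $U(ab)=A(ab)+1$; the resulting discrepancy $-\mathrm{Re}\,\mathrm{Tr}\sum_{a<b}\bar\psi(b)\,i\gamma(ab)\,\psi(a)$ is manifestly independent of $U$. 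Hence $\mathcal{S}[\psi,U]$ and $\mathcal{S}[U]$ with $j=j[\psi]$ agree up to a constant, which completes the reduction.

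I expect the main obstacle to be bookkeeping rather than anything structural: correctly tracking the $\min$/$\max$ vertex conventions through the cup-, cap-, and cop-products so that both the coupling term and $j[\psi]^*\frown U$ collapse to the \emph{same} edge sum, and verifying the conjugation identity $(\gamma^k)^*=\gamma^0\gamma^k\gamma^0$ that makes the two edge sums coincide. Once these are in place, the reduction to Definition~\ref{def-gauge} and Proposition~\ref{l-Wilson} is immediate.
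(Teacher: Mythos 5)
Your proof is correct, and the identity you need does hold: with $j[\psi](ab)=-\bar\psi(a)\,i\gamma(ab)\,\psi(b)$ one gets $j[\psi]^*(ab)=\bar\psi(b)\,i\gamma(ab)\,\psi(a)$ from $(\gamma^k)^*=\gamma^0\gamma^k\gamma^0$ (the paper uses this same fact in the equivalent form $(\gamma^0\gamma)^*=\gamma^0\gamma$ in the proof of Proposition~\ref{l-differentiate-Lagrangian}), and since $U-A=1$ the two couplings $-\mathrm{Re}\,\mathrm{Tr}\sum_{ab}\bar\psi(b)i\gamma(ab)\psi(a)A(ab)$ and $-\mathrm{Re}\,\mathrm{Tr}\sum_{ab}\bar\psi(b)i\gamma(ab)\psi(a)U(ab)$ indeed differ by the $U$-independent quantity $\mathrm{Re}\,\mathrm{Tr}\sum_{ab}\bar\psi(b)i\gamma(ab)\psi(a)$. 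However, your route is genuinely different from the paper's. The paper never compares the functionals themselves: it treats $\mathcal{L}[\psi,U]$ as a local Lagrangian in $U$ for fixed $\psi$, computes $\bigl(\frac{\partial\mathcal{L}}{\partial U}\bigr)^*=j[\psi]$ and $\bigl(\frac{\partial\mathcal{L}}{\partial(F[U])}\bigr)^*=\#F$ via Lemma~\ref{l-lagrangian-functional-derivative-charge} and row~7 of Table~\ref{tab-derivatives}, and then applies the gauge Euler--Lagrange Theorem~\ref{th-Euler-Lagrange-gauge} in both directions, so that stationarity of either functional is identified with one and the same Yang--Mills equation with $j=j[\psi]$. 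Your constant-difference argument is stronger and more elementary (it bypasses Theorem~\ref{th-Euler-Lagrange-gauge} and the projection $\mathrm{Pr}_{T_UG}$ entirely, needing only that adding a constant does not change critical points on the constraint set $C^1(I^4_N;G)$), but it exploits two special features of the Dirac coupling: $j[\psi]$ does not depend on $U$, and the $U$-dependence of $\mathcal{L}[\psi,U]$ is exactly affine. The paper's derivative-based argument is the one that survives in the parallel Klein--Gordon statement (Corollary~\ref{cor-Klein-Gordon-full}), where the covariant current $j[\phi,U]=2\phi^*\smile\#D_A\phi$ genuinely depends on $U$, a global constant-difference claim is false, and one must instead freeze the current at $j[\phi,U_0]$ when varying $U$.
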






\subsubsection*{Proofs}

All the results of \S\ref{ssec-general-connections} remain true for fields $\psi\in C^0({I}^4_N;\mathbb{C}^{4\times n})$, with analogous definitions and proofs. The following lemma is proved by direct checking analogously to Proposition~\ref{l-differentiate-Lagrangian}.

\begin{lemma}\label{l-Dirac-derivatives} Drop the last term (not depending on $\psi$) from the Lagrangian $\mathcal{L}[\psi,U]$ in Corollary~\ref{cor-Dirac-gauge}. The resulting Lagrangian is local and has partial derivatives
$$
\left(\frac{\partial\mathcal{L}}{\partial\psi}\right)^*
= i\gamma^0\gamma\frown D_A\psi - 2m\gamma^0\psi
\qquad\text{and}\qquad
\left(\frac{\partial\mathcal{L}}{\partial(D_A\psi)}\right)^*
=-i\gamma^0\gamma\smile \psi.
$$
\end{lemma}

%

\begin{proof}[Proof of Corollaries~\ref{cor-Dirac} and~\ref{cor-Dirac-gauge}]
Let us prove Corollary~\ref{cor-Dirac-gauge}; \ref{cor-Dirac} is a particular case. 
By a version of Theorem~\ref{th-Euler-Lagrange-covar}, a field $\psi\in C^0(I^4_N;\mathbb{C}^{4\times n})$ is stationary for 
$\mathcal{S}[\psi]$, if and only if 
\begin{multline*}
0=D^*_A\left(\tfrac{\partial\mathcal{L}}{\partial(D_A\psi)}\right)^*+
\left(\tfrac{\partial\mathcal{L}}{\partial\psi}\right)^*
=D^*_A(-i\gamma^0\gamma\smile \psi)+i\gamma^0\gamma\frown D_A\psi-2m\gamma^0\psi=\\
=i\gamma^0\gamma\spleen \bar D_A\psi+i\gamma^0\gamma\frown D_A\psi-2m\gamma^0\psi.
\end{multline*}
Left-multiplying by $(\gamma^0)^{-1}$, we get the Dirac equation in a gauge field. Here the 1st equality 
is obtained by Lemma~\ref{l-Dirac-derivatives} 
and the 2nd one follows from
\begin{multline*}
(D^*_A(\gamma\smile\psi))^*
=\partial(\psi^*\spleen\gamma^*)
- A\spleen (\psi^*\spleen\gamma^*)
=\\=\psi^*\spleen\partial\gamma^*-\delta\psi^*\spleen\gamma^*
- (A\smile\psi^*)\spleen\gamma^*
=-(\gamma\spleen\bar D_A\psi)^*,
\end{multline*}
where we used the obvious identity $\partial\gamma^*=0$, equations \eqref{eq-covar-boundary}--\eqref{eq-covar-coboundary-reversed},
and Lemma~\ref{l-identities}.
\end{proof}

\begin{proof}[Proof of Proposition~\ref{prop-fermion-doubling}]
Let the \emph{Dirac operator} on the doubling act by $\not\!\!\partial\psi=\gamma \frown\delta\psi+\gamma \spleen \delta\psi$ for each $\mathbb{C}^{4\times 1}$-valued field $\psi$ on the vertices of the doubling.
Then the Dirac equation is $i\!\!\not\!\!\partial \psi-2m\psi=0$. Applying the operator $i\!\!\not\!\!\partial+2m$ to the left-hand side and canceling the $\pm im\!\!\not\!\!\partial$-terms we get
 $\not\!\!\partial\!\!\not\!\!\partial\psi+4m^2\psi\!=\!0$. It remains to prove the identity  $\not\!\!\partial\!\!\not\!\!\partial=-\partial_{\text{initial}}\#\delta_{\text{initial}}$, where $\partial_{\text{initial}}$ and $\delta_{\text{initial}}$ are the boundary and coboundary operators respectively on the initial grid $I^4_N$.

Take a non-boundary vertex $v$ of $I^4_N$. By the identity $\gamma^k\gamma^l+\gamma^l\gamma^k=2g^{kl}:=\deltaup^{kl}(4\deltaup^{k0}-2)$
we get
\vspace{-0.3cm}
\begin{align*}
  [\not\!\!\partial\!\!\not\!\!\partial\psi](v)
  &= \sum\nolimits_{k=0}^3 \gamma^{k} [\not\!\!\partial\psi(v+\mathrm{e}_k)-
  \not\!\!\partial\psi(v-\mathrm{e}_k)]\\
  &= \sum\nolimits_{k,l=0}^3 \gamma^{k}\gamma^{l}
  [\psi(v+\mathrm{e}_k+\mathrm{e}_l)-
  \psi(v+\mathrm{e}_k-\mathrm{e}_l)-
  \psi(v-\mathrm{e}_k+\mathrm{e}_l)+
  \psi(v-\mathrm{e}_k-\mathrm{e}_l)]
  \\
  &=
  \sum\nolimits_{k=0}^3 g^{kk}[\psi(v+2\mathrm{e}_k)-2\psi(v)
  +\psi(v-2\mathrm{e}_k)]
  =-[\partial_{\text{initial}}\#\delta_{\text{initial}}\psi](v).\\[-1.6cm]
\end{align*}
\end{proof}

\vspace{-0.2cm}
\begin{remark}
  The actions from Corollaries~\ref{cor-Dirac} and~\ref{cor-Dirac-gauge} can be written as $\mathcal{S}[\psi,U]=\epsilon\mathcal{L}'[\psi,U]$,
  where $$\mathcal{L}'[\psi,U]=\mathrm{Re}\,\mathrm{Tr}\left[\bar\psi\frown (i\!\not \!\!D_A\psi-m\psi)-\tfrac{1}{2}\#F^*\frown F\right]
  $$
  and $\not \!\!D_A\psi:=\gamma \frown D_A\psi+\gamma \spleen \bar D_A\psi$ so that $\not \!\!D_0=\not\!\!\partial$. But in contrast to $\mathcal{L}[\psi,U]$, the Lagrangian $\mathcal{L}'[\psi,U]$ is nonlocal with respect to the gauge group field $U$.
\end{remark}

\begin{proof}[Proof of Corollary~\ref{cor-probability-conservation}]
Since $\mathcal{L}[e^{-it}\psi]=\mathcal{L}[\psi]$ for each $t\in\mathbb{R}$, we have~\eqref{eq-invariance} with $\Delta=-i\psi$. Then by a version of Theorem~\ref{th-Noether} for complex-valued fields,
Lemma~\ref{l-Dirac-derivatives}
for $A=0$, and the identity  $\psi^*\frown\phi^*=(\phi\smile\psi)^*$ from Lemma~\ref{l-identities}, we have the conserved current
\begin{multline*}
j[\psi]=
\mathrm{Re}\left[\tfrac{\partial\mathcal{L}[\psi]}{\partial(\delta\psi)}
\frown \Delta\right]=
\mathrm{Re}\left[
(-i\gamma^0\gamma\smile \psi)^*\frown (-i\psi)
\right]=\\=
\mathrm{Re}\left[\left(i\psi^*\smile(-i\gamma^0\gamma\smile \psi)\right)^*\right]=
\mathrm{Re}\left[\bar\psi\smile\gamma\smile \psi\right].
\end{multline*}
The conservation of $j^5[\psi]$ is proved analogously, only take $\Delta=i\gamma^5\psi$ and apply the identities $(\gamma^5)^*=\gamma^5$ and $\gamma^5\gamma^0=-\gamma^0\gamma^5$. The conservation of $T[\psi]$ follows from Theorem~\ref{th-energy-conservation}, 
Lemma~\ref{l-Dirac-derivatives},
and the identity $(\gamma^0\gamma)^*=\gamma^0\gamma$.
\end{proof}


\begin{proof}[Proof of Proposition~\ref{th-Dirac-approximation}]
As $\psiup$ is $C^1$, we get $\psiup(\min e)\!\rightrightarrows\!\psiup(\max e)$, $N\delta\psi_N(e)\!\rightrightarrows\!\partial_l\psiup(\max e)$, and
\begin{multline*}
j_N(e)=\mathrm{Re}[\bar\psi_N\smile\gamma\smile \psi_N](e)
= \mathrm{Re}\left[
\bar\psiup(\min e)\gamma^l\psiup(\max e)
\right]\\
\rightrightarrows \mathrm{Re}\left[
\bar\psiup(\max e)\gamma^l\psiup(\max e)
\right]
= \mathrm{j}^l(\max e).
\end{multline*}

For $v\!=\!\max h$, by a version of Proposition~\ref{prop-quadratic-lagrangian} for vector-valued fields and 
Lemma~\ref{l-Dirac-derivatives},
\begin{align*}
  (-1)^lN\langle T_N,h\rangle_k
  &=
  \tfrac{N}{2}\mathrm{Re}\left[
  ([-i\gamma_0\gamma\smile\psi_N]^*(v+\mathrm{e}_l)+
  [-i\gamma_0\gamma\smile\psi_N]^*(v+\mathrm{e}_l-2\mathrm{e}_k))\right]
  \delta\psi_N(v-\mathrm{e}_k)
    \\&-
  N\deltaup^l_k\mathrm{Re}\left[
  \bar\psi_N\frown \left(i\gamma\frown\delta\psi_N-\tfrac{m}{N}\psi_N\right)
  \right](v)\\
  &\rightrightarrows \mathrm{Re}\left[i\bar\psiup\gamma^l \partial_k \psiup
  -\deltaup^l_k\left(i\bar\psiup\gamma^n \partial_n \psiup-m\bar\psiup\psiup\right)\right](v)
  =\mathrm{T}_k{}^l(v).\\[-1.4cm]
\end{align*}
\end{proof}

\begin{proof}[Proof of Corollary~\ref{prop-gauge-invariance-Dirac}]
This follows from Proposition~\ref{l-gauge-covariance}.
\end{proof}

\begin{proof}[Proof of Corollary~\ref{cor-Dirac-charge-conservation}]
By Corollary~\ref{prop-gauge-invariance-Dirac} and Theorem~\ref{th-charge-conservation}
we get the conserved covariant current
$$
j[\psi]=
\left(\tfrac{\partial\mathcal{L}[\psi,U]}{\partial(D_A\psi)}
\frown \psi\right)^*=
\left(
(-i\gamma^0\gamma\smile \psi)^*\frown \psi
\right)^*=
-\bar\psi\smile i\gamma\smile \psi.
$$
By Lemma~\ref{l-identities} we get $\mathrm{Pr}_{T_{1}G}D^*_A j[\psi]
=D^*_A \mathrm{Pr}_{T_{U}G} j[\psi]=0$. Take $G=U(n)$ so that $-i\cdot 1\in T_{1}G$. Then
$
\partial\langle j[\psi],iU\rangle
=\mathrm{Re}\mathrm{Tr}[(-i)D^*_A j[\psi]]=0.
$
\end{proof}

\begin{proof}[Proof of Corollary~\ref{cor-Dirac-full}]
For fixed $\psi\in C^0({I}^d_N;\mathbb{C}^{4\times n})$ the Lagrangian $\mathcal{L}[\psi,U]=:\mathcal{L}[U]$ from Corollary~\ref{cor-Dirac-gauge} is local with respect to $U$. By Lemma~\ref{l-lagrangian-functional-derivative-charge} and row~6 of Table~\ref{tab-derivatives} we get
$\frac{\partial\mathcal{L}[U]}{\partial U}=j[\psi]^*$ and $\frac{\partial\mathcal{L}[U]}{\partial (F[U])}=\# F^*$, where $j[\psi]$ is given by Corollary~\ref{cor-Dirac-charge-conservation}.
Let $U$ be stationary for $\mathcal{S}[\psi,U]=\epsilon \mathcal{L}[U]$.
By Theorem~\ref{th-Euler-Lagrange-gauge} $U$ satisfies the Yang--Mills equation from Corollary~\ref{cor-gauge} with $j=j[\psi]$.
Again by Theorem~\ref{th-Euler-Lagrange-gauge} $U$ is stationary for $\mathcal{S}[U]$ from Proposition~\ref{l-Wilson}, i.e., $U$ is generated by $j=j[\psi]$. The reciprocal assertion is proved analogously.
\end{proof}

\endarxiv

\footnotesize

\small
\vspace{-0.2cm}
\noindent
\textsc{Mikhail Skopenkov\\
HSE University and
King Abdullah University of Science and Technology}
\\
\texttt{mikhail.skopenkov\,@\,gmail$\cdot $com} \quad \url{https://users.mccme.ru/mskopenkov/}

\end{document}

\newpage

\section*{Appendix. Omitted results.}

\subsection{Feynmann checkerboard}

Let us conclude this section by a \emph{nonexample}: a well-known model \emph{not} obtained by the discretization algorithm from Section~\ref{ssec-main-tools} but
still fitting the general formalism of Section~\ref{ssec-statements}. This will be also an example of \emph{nonconvergence} to the continuum limit.

\subsubsection*{Checkerboard photon}

\mscomm{
A toy model using checkerboard paths.

Equation of motion: $2\phi(x,t)+\phi(x,t+2)-\phi(x-1,t+1)-\phi(x+1,t+1)=0$.

Lagrangian $\mathcal{L}[\phi]=\#\delta\phi\frown\delta\phi+
(\phi-\tau\frown\delta\phi)\frown\phi$, where $\tau(e)=\begin{cases}
                 1, & \mbox{if } e\parallel Ot \\
                 0, & \mbox{if } e\perp Ot.
               \end{cases}$

A convergence theorem for $x=o(t)$.

A nonconvergence theorem for large $x$.

NB! The Lagrangian describes the initial Feynmann's function $\phi(x,t)$, without the normalization $2^{-t/2}\phi(x,t)$. A modification is probably required!
}

\subsubsection*{Checkerboard electron}

\mscomm{
A toy model using checkerboard paths.
}

For $\psi\in C_k({M};\mathbb{C}^{n\times 1})$ denote
\begin{equation}\label{eq-covar-boundarydual}
  \bar D \psi=(D^*\psi^*)^*=\partial\psi+\psi\frown A
\end{equation}

\begin{lemma}\label{l-cap-delta}
 For each $\phi\in C^k({M};\mathbb{C}^{n\times 1})$, $\psi\in C_l({M};\mathbb{C}^{1\times n})$ and $U\in C^1({M};\mathbb{C}^{n\times n})$ we have
 \begin{equation}\label{eq-cap-delta}
  \psi\frown D \phi=(\bar D \psi)\frown \phi-(-1)^{k}\partial(\psi\frown \phi).
 \end{equation}
\end{lemma}

\begin{proof}
This is obtained by adding the identities $\psi\frown(A\smile \phi)=(\psi\frown A)\frown \phi$ and $\psi\frown \delta \phi=\partial \psi\frown \phi-(-1)^{k}\partial(\psi\frown \phi)$ using~\eqref{eq-covar-coboundary} and \eqref{eq-covar-boundarydual}.

\mscomm{(Beware: Identity for $\partial (\psi\frown\phi)$ taken from Hatcher, the signs in Fomenko--Fuchs are different!)}
\end{proof}

\begin{lemma}[Lagrangian functional derivative]\label{l-Lagrangian-functional-derivative}
For a local Lagrangian $\mathcal{L}\colon C^k({M};\mathbb{C}^n)\times C^1({M};\mathbb{C}^{n\times n})\to C_0({M};\mathbb{R})$
and fields $\phi,\Delta\in C^k({M};\mathbb{C}^n)$ we have
$$
 \left.\frac{\partial \mathcal{L}[\phi+ t\Delta]}{\partial t}\right|_{ t=0}=
 \mathrm{Re}\,\left[\left(\frac{\partial\mathcal{L}[\phi]}{\partial\phi}+
 \bar D\frac{\partial\mathcal{L}[\phi]}{\partial(D\phi)}\right)
 \frown \Delta
 -(-1)^k\partial\left(
 \frac{\partial\mathcal{L}[\phi]}{\partial(D\phi)}
 \frown\Delta\right)\right].
$$
\end{lemma}

\begin{proof} Take a vertex $v\in {M}$. By \eqref{eq-local}, \eqref{eq-Lagrangian-derivatives}, and~\eqref{eq-cap-delta} we have
\begin{align*}
\left.\frac{\partial \mathcal{L}[\phi+ t\Delta]}{\partial t}(v)\right|_{ t=0}
&=
\left.\frac{\partial}{\partial t} L_v([\phi+ t\Delta](\mathrm{e}_{v,k}),[D\phi
+D\,t\Delta](\mathrm{e}_{v,k+1}))\right|_{ t=0}
\\&=
\mathrm{Re}\sum_{j=1}^{p(v,k)}
\frac{\partial }{\partial\phi_j}
L_v(\phi(\mathrm{e}_{v,k}),D\phi(\mathrm{e}_{v,k+1}))
\left.\frac{\partial}{\partial t}
[\phi+ t\Delta](\mathrm{e}_{v,k,j})\right|_{ t=0}
\\&+
\mathrm{Re}\sum_{j=1}^{p(v,k+1)}
\frac{\partial }{\partial\phi'_{j}}
L_v(\phi(\mathrm{e}_{v,k}),D\phi(\mathrm{e}_{v,k+1}))
\left.\frac{\partial}{\partial t}
[D\phi+D\,t\Delta](\mathrm{e}_{v,k+1,j})
\right|_{ t=0}
\\&=
\mathrm{Re}\sum_{j=1}^{p(v,k)}
\frac{\partial \mathcal{L}[\phi]}{\partial\phi}(\mathrm{e}_{v,k,j})
\Delta(\mathrm{e}_{v,k,j})+
\mathrm{Re}\sum_{j=1}^{p(v,k+1)}
\frac{\partial \mathcal{L}[\phi]}{\partial(D\phi)}(\mathrm{e}_{v,k+1,j})
D\Delta(\mathrm{e}_{v,k+1,j})
\\&=
\mathrm{Re}\left[\frac{\partial \mathcal{L}[\phi]}{\partial\phi}\frown
\Delta+
\frac{\partial \mathcal{L}[\phi]}{\partial(D\phi)}\frown
D\Delta\right](v)
\\&=
\mathrm{Re}\left[
  \left(\frac{\partial\mathcal{L}[\phi]}{\partial\phi}+
  \bar D \frac{\partial\mathcal{L}[\phi]}{\partial(D\phi)}\right)
  \frown \Delta
  -(-1)^k\partial\left(
  \frac{\partial\mathcal{L}[\phi]}{\partial(D\phi)}
  \frown\Delta\right)\right](v).
\end{align*}
\end{proof}

\begin{lemma}\label{l-nondegeneracy}
  Let $\phi\in C^k({M};\mathbb{C}^{n\times 1})$. If $\epsilon\mathrm{Re}[\psi\frown \phi]=0$ for each $\psi\in C_k({M};\mathbb{C}^{1\times n})$, then $\phi=0$.
\end{lemma}

\begin{proof} Take $\psi=\phi^*$. Then $0=\epsilon\mathrm{Re}[\phi^*\frown \phi]=\sum_{f}|\phi(f)|^2$, where the sum is over all the $k$-dimensional faces $f$ of ${M}$. Thus $\phi=0$.
\end{proof}

\begin{proof}[Proof of Theorems~\ref{th-Euler-Lagrange} and~\ref{th-Euler-Lagrange-covar}]
A field $\phi$ is \edit{R1P4}{an extremal}, if and only if for each field $\Delta$ 
we have
\begin{multline*}
0=\left.\frac{\partial \mathcal{S}[\phi+ t\Delta]}{\partial t}\right|_{ t=0}=
\epsilon \left.\frac{\partial \mathcal{L}[\phi+ t\Delta]}{\partial t}\right|_{ t=0}=
\\
\epsilon \mathrm{Re}\left[\left(\frac{\partial \mathcal{L}[\phi]}{\partial\phi} +\bar D\frac{\partial\mathcal{L}[\phi]}{\partial(D\phi)}\right)\frown \Delta\right]
 -(-1)^k\epsilon \partial\mathrm{Re}\left[
  \frac{\partial\mathcal{L}[\phi]}{\partial(D\phi)}
  \frown\Delta\right]
=
\epsilon \mathrm{Re}\left[\left(\frac{\partial \mathcal{L}[\phi]}{\partial\phi} +\bar D\frac{\partial\mathcal{L}[\phi]}{\partial(D\phi)}\right)
\frown \Delta\right]
\end{multline*}
The latter two equalities follow from Lemma~\ref{l-Lagrangian-functional-derivative} and the identity $\epsilon\partial=0$.
Since $\Delta$ is arbitrary, by Lemma~\ref{l-nondegeneracy} and \eqref{eq-covar-boundarydual} the resulting equation is equivalent to~\eqref{eq-Euler-Lagrange-covar}. In the particular case of the unit gauge group field  we get~\eqref{eq-Euler-Lagrange}.
\end{proof}

\begin{theorem}[Discrete Noether theorem] 
\label{th-Noether-covar}
   Assume that a local Lagrangian $\mathcal{L}$ is invariant under an infinitesimal transformation $\phi\mapsto\phi+ t\Delta[\phi]$ up to complete divergence, i.e.,
  \begin{equation}\label{eq-invariance-divergence}
    \left.\frac{\partial\mathcal{L}[\phi+ t\Delta[\phi]]}
    {\partial t}\right|_{ t=0}=\partial \mathcal{J}[\phi]
  \end{equation}
  for some $\Delta[\phi]\in C^k(I^d_N;V)$ and $\mathcal{J}[\phi]\in C^1(I^d_N;\mathbb{R})$.
  Then
  $$
  j[\phi]:=(-1)^{k}\frac{\partial\mathcal{L}[\phi]}{\partial(D\phi)}\frown \Delta[\phi]+\mathcal{J}[\phi]
  $$
  is a conserved current, i.e., for each classical field $\phi\in C^k(I^d_N;V)$ we have $\partial j[\phi]=0$.
\end{theorem}

\begin{proof}[Proof of Discrete Noether's Theorem~\ref{th-Noether}]
By the invariance, Lemma~\ref{l-Lagrangian-functional-derivative}, and Theorem~\ref{th-Euler-Lagrange} we get
\begin{multline*}
  0=\left.\frac{\partial \mathcal{L}[\phi+ t\Delta[\phi]]}{\partial t}\right|_{ t=0}
  -\partial\mathcal{J}=
  \left(\frac{\partial\mathcal{L}[\phi]}{\partial\phi}
  +D^*\frac{\partial\mathcal{L}[\phi]}{\partial(D\phi)}
  \right)
  \frown \Delta[\phi]
  -\partial\left(
  (-1)^{k}\frac{\partial\mathcal{L}[\phi]}{\partial(D\phi)}
  \frown\Delta[\phi]+\mathcal{J}\right)=
  -\partial j[\phi].
\end{multline*}
\end{proof}

Finally, let us present a self-contained proof of the formula for
$\partial(\phi\spleen  \psi)$ (alternatively, it can be deduced from the formula for $\delta(\phi\smile\psi)$ by pairing with an arbitrary field $\Delta$ and applying Lemma~\ref{l-nondegeneracy-general} and the identities from the previous paragraph).
By Definitions~\ref{def-boundary} and~\ref{def-cap-general} it follows that
\begin{align*}
  [\delta\phi](a\dots c) &=  \sum_{b:\dim(a\dots b)=1,\dim(b\dots c)=k}\langle a,b,c\rangle\phi(b\dots c)
  - (-1)^{k}\sum_{b:\dim(a\dots b)=k,\dim(b\dots c)=1}\langle a,b,c\rangle\phi(a\dots b);\\
  [\partial\phi](b\dots c) &=\sum_{a:\dim(a\dots b)=1,\dim(a\dots c)=k}\langle a,b,c\rangle\phi(a\dots c)
  - (-1)^{k}\sum_{d:\dim(b\dots d)=k,\dim(c\dots d)=1}\langle b,c,d\rangle\phi(b\dots d),
\end{align*}
where the sums are over all the vertices such that there exist faces $a\dots b,b\dots c\subset a\dots c$ or $b\dots c,c\dots d\subset b\dots d$.
Thus
\begin{align*}
  [\partial(\phi\spleen  \psi)](b\dots c) &=\sum_{a:\dim(a\dots b)=1,\dim(a\dots c)=k-l}\langle a,b,c\rangle[\phi\spleen  \psi](a\dots c)\\
  &- \sum_{d:\dim(b\dots d)=k-l,\dim(c\dots d)=1}(-1)^{k-l}\langle b,c,d\rangle[\phi\spleen  \psi](b\dots d)\\
  &=\sum_{a,d:\dim(a\dots b)=1,\dim(a\dots d)=k,\dim(c\dots d)=l}\langle a,b,c\rangle\langle a,c,d\rangle\phi(a\dots d)\psi(c\dots d)\\
  &- \sum_{d,e:\dim(b\dots e)=k,\dim(c\dots d)=1,\dim(d\dots e)=l}(-1)^{k-l}\langle b,c,d\rangle\langle b,d,e\rangle\phi(b\dots e)\psi(d\dots e)\\
  &=\sum_{a,d:\dim(a\dots b)=1,\dim(a\dots d)=k,\dim(c\dots d)=l}  \langle b,c,d\rangle  \langle a,b,d\rangle  \phi(a\dots d)\psi(c\dots d)\\
  &-\sum_{d,e:\dim(b\dots e)=k,\dim(c\dots d)=l,\dim(d\dots e)=1}  (-1)^{k}\langle b,c,d\rangle  \langle b,d,e\rangle  \phi(b\dots e)\psi(c\dots d)\\
  &+\sum_{d,e:\dim(b\dots e)=k,\dim(c\dots d)=l,\dim(d\dots e)=1}  (-1)^{k}\langle b,c,d\rangle  \langle b,d,e\rangle  \phi(b\dots e)\psi(c\dots d)\\
  &- \sum_{d,e:\dim(b\dots e)=k,\dim(c\dots d)=1,\dim(d\dots e)=l}  (-1)^{k-l}\langle b,c,e\rangle  \langle c,d,e\rangle  \phi(b\dots e)\psi(d\dots e)\\
  &=  [\partial \phi\spleen  \psi+(-1)^{k}\phi\spleen  \delta \psi](b\dots c).
\end{align*}

Now we proceed to the proof of the results from Section~\ref{sec-general}.

\begin{lemma}[Action functional derivative]\label{l-action-functional-derivative-simple}
For a local Lagrangian $\mathcal{L}\colon C^k({M};\mathbb{C}^{1\times n})\times C^1({M};\mathbb{C}^{n\times n})\to C_0({M};\mathbb{R})$
and fields $\phi\in C^1({M};\mathbb{C}^{1\times n})$, $\Delta\in C^1({M};\mathbb{C}^{1\times n})$ we have
$$
 \left.\frac{\partial \mathcal{S}[\phi+ t\Delta,U]}{\partial t}\right|_{ t=0}=
 \left\langle
 \left(\frac{\partial\mathcal{L}[\phi,U]}{\partial \phi}\right)^*+
 D^*_A\left(\frac{\partial\mathcal{L}[\phi,U]}{\partial(D_A\phi)}\right)^*,
 \Delta\right\rangle.
$$
\end{lemma}

\begin{proof}
  Literally as in the proof of Lemma~\ref{l-Lagrangian-functional-derivative} with $\delta\phi$ replaced by $D_A\phi$ and the operator $\mathrm{Re}\,\mathrm{Tr}$ applied to each summand we get
  \begin{equation*}
  \left.\frac{\partial \mathcal{L}[\phi+ t\Delta,U]}{\partial t}
  \right|_{ t=0}
  =
  \mathrm{Re}\,\mathrm{Tr}\,
  \left[
    \frac{\partial \mathcal{L}[\phi,U]}{\partial \phi}\frown\Delta+
 \frac{\partial \mathcal{L}[\phi,U]}{\partial(D_A\phi)}\frown D_A\Delta
 \right].
  \end{equation*}
  Applying the functional $\epsilon$ and
  using Lemma~\ref{l-cap-delta-covar} we get
  $$
  \left.\frac{\partial \mathcal{S}[\phi+ t\Delta,U]}{\partial t}
  \right|_{ t=0}
  =
 \left\langle
    \left(\frac{\partial \mathcal{L}[\phi,U]}{\partial \phi}\right)^*,\Delta\right\rangle+
 \left\langle
 \left(\frac{\partial \mathcal{L}[\phi,U]}{\partial(D_A\phi)}\right)^*, D_A\Delta
 \right\rangle
 =
 \left\langle
 \left(\frac{\partial\mathcal{L}[\phi, U]}{\partial \phi}\right)^*+
 D^*_A\left(\frac{\partial\mathcal{L}[\phi,U]}{\partial(D_A\phi)}\right)^*,
 \Delta\right\rangle.
$$
\end{proof}

\begin{proof}[Proof of Theorem~\ref{th-Euler-Lagrange-covar}]
  A  field $\phi$ is \edit{R1P4}{an extremal}, if and only if $\left.\frac{\partial \mathcal{S}[\phi+ t\Delta,U]}{\partial t}
  \right|_{ t=0}
  =0$ for each $\Delta\in C^k({M},\mathbb{C}^{1\times n})$.
  By Lemmas~\ref{l-action-functional-derivative-simple} and~\ref{l-nondegeneracy-general} this is equivalent to~\eqref{eq-Euler-Lagrange-covar}.
\end{proof}

Denote by
$C^1({M};T_U G)$ the set of all $\Delta\in C^1({M};\mathbb{C}^{n\times n})$ such that $\Delta(e)$ belongs to the tangent space $T_{U(e)}G$ for each edge $e$. The following lemma is proved completely analogously to Lemma~\ref{l-action-functional-derivative} with $\phi$ and $D\phi$ replaced by $U$ and $F[U]$ respectively, using that
\begin{align*}
  \left.\frac{\partial}{\partial t} F[U+ t\Delta]
  \right|_{ t=0}&=\left.\frac{\partial}{\partial t} [\delta (U+t\Delta-1)+(U+t\Delta-1)\smile(U+t\Delta-1)]
  \right|_{ t=0}\\
  &=\delta\Delta+(U-1)\smile \Delta+\Delta\smile(U-1)=D\Delta.
\end{align*}


\begin{lemma}[Action functional derivative]\label{l-action-functional-derivative}
For a local Lagrangian $\mathcal{L}\colon C^1({M};\mathbb{C}^{n\times n})\to C_0({M};\mathbb{R})$
and fields $U\in C^1({M};G)$, $\Delta\in C^1({M};T_U G)$ we have
$$
 \left.\frac{\partial \mathcal{S}[U+ t\Delta]}{\partial t}\right|_{ t=0}=
 \left\langle
 \left(\frac{\partial\mathcal{L}[U]}{\partial U}\right)^*+
 D^*_A\left(\frac{\partial\mathcal{L}[U]}{\partial(F[U])}\right)^*,
 \Delta\right\rangle.
$$
\end{lemma}

\begin{lemma}\label{l-nondegeneracy-gauge}
  Let $\phi\in C_1({M};\mathbb{C}^{n\times n})$. If $\langle \phi,\Delta\rangle=0$ for each $\Delta\in C^1({M};T_U G)$, then $\mathrm{Pr}_{T_U G}\phi=0$.
\end{lemma}

\begin{proof} Take $\Delta=\mathrm{Pr}_{T_U G}\phi$. Then $0=\langle \phi,\mathrm{Pr}_{T_U G}\phi\rangle=
\sum_{e}\langle \phi(e),\mathrm{Pr}_{U(e)}\phi(e)\rangle =
\sum_{e}\langle\mathrm{Pr}_{U(e)}\phi(e),\mathrm{Pr}_{U(e)}\phi(e)\rangle $, where the sums are over all edges $e$,
because $\mathrm{Pr}_{U(e)}\colon \mathbb{C}^{n\times n}\to T_{U(e)}G$ is an orthogonal projection. Since the pairing $\langle \cdot,\cdot\rangle$ on $\mathbb{C}^{n\times n}$ is nondegenerate, it follows that $\mathrm{Pr}_{T_U G}\phi=0$.
\end{proof}

\begin{proof}[Proof of Theorem~\ref{th-Euler-Lagrange-gauge}]
  A gauge group field $U$ is \edit{R1P4}{an extremal}, if and only if $\left.\frac{\partial \mathcal{S}[U+ t\Delta]}{\partial t}
  \right|_{ t=0}
  =0$ for each $\Delta\in C^1({M},T_{U}G)$.
  By Lemmas~\ref{l-action-functional-derivative} and~\ref{l-nondegeneracy-gauge} this is equivalent to~\eqref{eq-Euler-Lagrange-gauge}.
\end{proof}


\begin{proof}[Proof of Proposition~\ref{prop-global-momentum-conservation}]
\mscomm{Rewrite completely!!!}
It suffices to prove that the flux of the $k$-th component across the boundary of a $d$-dimensional face vanishes. For that it suffices to define the fluxes through the boundaries of the \emph{bottom} and \emph {top halves} of the face, and prove that they vanish. We do it for the bottom half only.

The \emph{flux} of the $0$-th component of a partially symmetric type $(1,1)$ tensor $T$ \emph{across the bottom half} of a face $f\parallel \mathrm{e}_0$ is
$$
\sum\limits_{e:e\subset f,e\ni\min f,e\parallel \mathrm{e}_k}T(e\times e_{-,l}).
$$
Denote by $f+\vec{ t$ the parallel translation of a face $f$ along a vector $\vec{ t$. The \emph{flux across the bissector} $f+\mathrm{e}_0/2$, where the faces $f\perp \mathrm{e}_0$, is
$$
\sum\limits_{e:e\subset f,e\ni\min f,e\perp \mathrm{e}_0}\left[T(e\times [e+\mathrm{e}_0])-T(e_{+,0}\times e_{+,0})\right].
$$

Now take a $d$-dimensional cell $g_{+,0}$ with bottom face $g$.
Since the faces of $g_{+,0}$ parallel to $\mathrm{e}_0$ are in bijection with the faces of $g$, the total flux across the bottom halves of vertical faces of $\partial g_{+,0}$ equals
\begin{align*}
  &\sum_{\substack{f:f\subset \partial g_{+,0},\\ \dim f=d-1,\\f\parallel \mathrm{e}_0}}
  \sum_{\substack{e:e\subset f,\\ e\ni\min f,\\ e\parallel \mathrm{e}_0}} T(e\times e_{-,l})
  =
  \sum_{\substack{e:e\subset g,\\e\ni\min g,\\e\perp \mathrm{e}_0}}
  \sum_{\substack{l\ne 0:\\ \mathrm{e}_l\perp e}} \left[T([e_{+,0}+\mathrm{e}_l]\times [e_{+,0}+\mathrm{e}_l]_{-,l})+T(e_{+,0}\times (e_{+,0})_{-,l})\right]\\
  &=\sum_{\substack{e:\min g\in e\subset g}}
   \sum_{\substack{f:e\subset f\subset g,\\\dim f=\dim e+1}}
      \langle e,\partial f\rangle \left[T([e_{+,0}+\mathrm{e}_l]\times f)-T(e_{+,0}\times [f-\mathrm{e}_l])\right],\mscomm{Check signs!!!}\\
\intertext{where in the second line we take $l$ such that $\mathrm{e}_l\parallel f$ and $\mathrm{e}_l\perp e$. Add and subtract the expression $\langle e,\partial f\rangle T(e_{+,0}\times f)$ from each summand, and add the fluxes across the face $g$ and the bissector $g+\mathrm{e}_0/2$}
   &\sum_{\substack{e:\min g\in e\subset g}} \left[T(e\times e)-T(e_{+,0}\times e_{-,0})+T([e+\mathrm{e}_0]\times e)-T(e_{+,0}\times e_{+,0})\right].\\
\intertext{We get the total flux across the boundary of the bottom half of $g_{+,0}$:}
   &\sum_{\substack{e:\min g\in e\subset g}}
   [T(\partial e_{+,0}\times e)-T(e_{+,0}\times \delta e)]
   = \sum_{\substack{e:\min g\in e\subset g}}
   [\partial T](e_{+,0}\times e)=0.
\end{align*}
Here $\partial T=0$ because the tensor is conserved apart the boundary.
\end{proof}

\section{Integral charge conservation}

First state integral charge conservation. Let us define current flux across a hypersurface not transversal to edges. Informally, we just shift the hypersurface slightly in the $(-1,\dots,-1)$-direction.

\begin{definition} A \emph{hyperface} is a $(d\!\!-\!\!1)$-dimensional face of~$I^d_N$. The \emph{flux} of a current $j$ across a hyperface $h\perp\mathrm{e}_l$ \emph{in a positive normal direction} is  $(-1)^l j(h-\mathrm{e}_0-\dots-\mathrm{e}_{d-1})$. The flux is a well-defined function on hyperfaces not contained in the coordinate hyperplanes. Usually it is denoted by $[I^d_N]\frown j$. 

Assume that $d\ge 2$. Let $\pi$ be an oriented piecewise-linear hypersurface consisting of hyperfaces disjoint with the coordinate hyperplanes.
For each hyperface $h\subset \pi$ denote
\begin{equation}\label{eq-orientation}
\langle h,\pi\rangle=\begin{cases}+1, &\text{if the orientations of $\pi$ and $h$ agree},\\
-1, &\text{if the orientations of $\pi$ and $h$ are opposite}.
\end{cases}
\end{equation}
The latter notation is also used, if $\pi$ and $h$ have arbitrary dimension $k>0$.
The \emph{flux} of $j$ across $\pi$ is the sum of the fluxes over all the hyperfaces $h$ of $\pi$ with the coefficients $\langle h,\pi\rangle$.

A current $j$ is \emph{conserved apart $\partial{I}^d_N$}, if $\partial j(v)=0$ for each vertex $v\notin\partial{I}^d_N$.
\end{definition}

\begin{proposition}[Integral charge conservation] \label{prop-global-charge-conservation}
  A current is conserved apart the boundary of the grid ${I}^d_N$, where $d\ge 2$, if and only if the flux of the current across each closed oriented hypersurface consisting of hyperfaces disjoint with the coordinate hyperplanes vanishes. 
\end{proposition}

\begin{proof}[Proof of Proposition~\ref{prop-global-charge-conservation}] This is sufficient to prove in the case when the hypersurface is the boundary of a $d$-dimensional face $h$ disjoint with the coordinate hyperplanes. Then the flux of $j$ across $\partial h$ equals $[\partial j](\min h)$. The latter expression vanishes for each $h$, if and only if $j$ is conserved apart~$\partial I^d_N$.
\end{proof}

\begin{proof}[Second proof of Theorem~\ref{th-charge-conservation}]
  Take arbitrary $\Delta\in C^0({M},T_1 G)$.
  By Lemma~\ref{l-infinitesimal-gauge} it follows that
  $$
  0=\left.\frac{\partial}{\partial t} \mathcal{S}[\phi+ t \phi\smile\Delta,U+ t D_A\Delta]
  \right|_{ t=0}
  =
  \left.\frac{\partial}{\partial t} \mathcal{S}[\phi+ t \phi\smile\Delta,U]
  \right|_{ t=0}+
  \left.\frac{\partial}{\partial t} \mathcal{S}[\phi,U+ t D_A\Delta]
  \right|_{ t=0}.
  $$
  Here the first, and hence the second, summand vanishes for each \edit{R1P4}{extremal} $\phi$ because $\phi\smile\Delta$ is a possible variation of $\phi$. On the other hand, by Lemmas~\ref{l-lagrangian-functional-derivative-charge} and~\ref{l-cap-delta-covar} we have
  $$
  \left.\frac{\partial}{\partial t} \mathcal{S}[\phi,U+ t D_A\Delta]
  \right|_{ t=0}
  =
  \epsilon\,\mathrm{Re}\,\mathrm{Tr}\left[
  \left(\frac{\partial\mathcal{L}[\phi,U]}{\partial(D_A\phi)}
  \frown \phi\right)\frown D_A \Delta\right]
  =
  \langle j[\phi,U],D_A \Delta\rangle=\langle D^*_A j[\phi,U],\Delta\rangle.
  $$
Thus $\langle D^*_A j[\phi,U],\Delta\rangle=0$ for each
$\Delta\in C^0({M},T_1 G)$. By Lemmas~\ref{l-nondegeneracy-gauge} and~\ref{l-identities} we get $D^*_A\mathrm{Pr}_{T_U G} j[\phi,U]=\mathrm{Pr}_{T_1 G} D^*_A j[\phi,U]=0$, as required.
\end{proof}


\begin{thebibliography}{99}

\vspace{-0.3cm}

\bibitem{Arnold-etal-10}
D.N. Arnold, R.S. Falk, R. Winther. Finite element exterior calculus: from Hodge theory to numerical stability. Bull. Amer. Math. Soc. (N.S.), 47:281--354, 2010.





\bibitem{bender-etal-94}
C.~Bender, L.~Mead, K.~Milton,
Discrete time quantum mechanics,
Computers Math. Appl. 28:10-12 (1994),
279-317.

\bibitem{Berbatov-22} \edit{R12P1}{K.~Berbatov, P.D.~Boom, A.L.~Hazel, A.P.~Jivkov, Diffusion in multi-dimensional solids using Forman's combinatorial differential forms, Appl. Math. Model. 110 (2022), 172--192.}








\bibitem{Bobenko-Skopenkov-13} A.I.~Bobenko, M.B.~Skopenkov, Discrete Riemann surfaces: linear discretization and its convergence, J. Reine Angew. Math. 2016:720 (2016) 217--250, \href{http://arxiv.org/abs/1210.0561}{arXiv:1210.0561}.



\bibitem{Bossavit-03} A.~Bossavit, Extrusion, contraction: their discretization via Whitney forms, Int J Computation Maths. in Electrical and Electronic Eng. 22:3 (2003), 470-480.











\bibitem{Chelkak-Smirnov-16} D.Chelkak, A.Glazman, S.Smirnov, Discrete stress-energy tensor in the loop O(n) model, 
    \href{http://arxiv.org/abs/1604.06339}
    {arXiv:1604.06339}.

\bibitem{Chelkak-Smirnov-08} D.~Chelkak, S.~Smirnov, \textrm{Discrete complex analysis on isoradial graphs}, \textit{Adv. Math.} 228 (2011), 1590-1630. 

\bibitem{Ciarlet-78} P.~G.~Ciarlet, The finite element method for elliptic problems, North-Holland, Amsterdam, 1978. 





\bibitem{Connes} A.~Connes, M.~Marcoli, Noncommutative geometry, quantum fields and motives, Amer. Math. Soc., 785pp.


\bibitem{Courant-friedrichs-Lewy-28}
R.~Courant, K.~Friedrichs, H.~Lewy, \"Uber die partiellen Differenzengleichungen der mathematischen Physik, Math. Ann. 100 (1928), 32--74. English transl.: IBM Journal (1967), 215--234.

\bibitem{Creuz-70} M.~Creutz, Quarks, Gluons and Lattices, Cambridge Univ. Press, 1983 - Science - 169 pp.




\bibitem{Dimakis-etal-94}
A. Dimakis, F. M\"uller-Hoissen, Discrete differential calculus, graphs, topologies and gauge theory, J. Math. Phys. 35 (1994), 6703-35.

\bibitem{Dorodnitsyn-04} V.A.~Dorodnitsyn, Group properties of difference equations, Fizmatlit, Moscow, 2001, 240pp.











\bibitem{Forman-02} \edit{R12P1}{R.~Forman. Combinatorial Novikov--Morse theory, Int. J. Math. 13:04 (2002), 333--368.}










\bibitem{Gawlik-Mullen-Pavlov-Marsden-Desbrun-10}
E.S. Gawlik, P. Mullen, D. Pavlov, J.E. Marsden, M. Desbrun, Geometric, variational discretization of continuum theories, Physica D: Nonlinear Phenomena 240:21 (2011), 1724-1760.

\bibitem{Desbrun-etal-08} E. Grinspun, M. Desbrun, K. Polthier, P. Schr\"oder, A. Stern, Discrete differential geometry: an applied introduction. SIGGRAPH 2006 course notes.


\bibitem{Gross-Kotiuga-04}
P.W. Gross, P.R. Kotiuga, Electromagnetic theory and computation: a topological approach, Cambridge Univ. Press, 2004 - Mathematics - 278 pp.








\bibitem{Hydon-Mansfield-04} {P.~Hydon, E.~Mansfield, A variational complex for difference equations, J.Found.Comput.Math. 4:2 (2004), 187-217.}










\bibitem{Kraus-15} M. Kraus, O. Maj, Variational integrators for nonvariational partial differential equations, Physica D: Nonlinear Phenomena 310 (2015) 37-71.

\bibitem{Kron-44} G. Kron, Equivalent circuit of the field equations of Maxwell-I, Proc. I.R.E., 32:5 (1944), 289-99.









\bibitem{Lusternik-26} \mbox{L.Lusternik,\"Uber einige Anwendungen der direkten Methoden in Variationsrechnung,Sb.Math+33:2(1926),173-202.}

\bibitem{Maldacena-16}
{J.~Maldacena, The symmetry and simplicity of the laws of physics and the Higgs boson, Europ. J. Phys. 37:1(2016).}



\bibitem{Mansfield-etal-17} E.~Mansfield, T.~Pryer, Noether-type discrete conserved quantities arising from a finite element approximation of a variational problem, J. Found. Comput. Math. 17:3 (2017) 729--762.

\bibitem{Mardsen-etal-98} J.E. Marsden, G.W. Patrick, S. Shkoller, Multisymplectic Geometry, Variational Integrators, and Nonlinear PDEs,
    Comm. Math. Physics 199:2 (1998), 351--395.
























\ifarxiv{}{\bibitem{Skopenkov-preprint} M.~Skopenkov,  Discrete field theory: symmetries and conservation laws, preprint (2023), \href{https://arxiv.org/abs/1709.04788v4}{ arXiv:1709.04788v4}.}

\bibitem{Skopenkov-Ustinov-20} M.~Skopenkov, A.~Ustinov, Feynman checkers: towards algorithmic quantum theory, Russian Math. Surveys 77:3(465) (2022), 73--160.





\bibitem{Suzuki-16} H.Suzuki, Energy-momentum tensor on the lattice: recent developments, Proc. 34th International Symposium on Lattice Field Theory (2016). University of Southampton, UK. \href{https://arxiv.org/abs/1612.00210}{arXiv:1612.00210}.


\bibitem{Teixeira-13} F.L. Teixeira, Differential forms in lattice field theories: an overview, ISRN Math.Phys.2013(2013),16pp.


\bibitem{Tonti-13} E.~Tonti, The mathematical structure of classical and relativistic physics: a general classification diagram, Springer Sci. \& Business Media, 2013 - Science - 514 pp.



\bibitem{Werness-14} \mbox{B.M.Werness,Discrete analytic functions on non-uniform lattices without global geometric control,\href{https://arxiv.org/abs/1511.01209}{arXiv:1511.01209}.}


\bibitem{Whitney-38} H. Whitney, On Products in a Complex, Ann. Math. (Second Series) 39:2 (1938), 397--432. 

\bibitem{Whitney-57} H. Whitney, Geometric integration theory, Princeton Univ. Press, Princeton, NJ, 1957.

\bibitem{Wilson-08} \edit{R12P1}{S.O. Wilson, Conformal cochains, Trans. Amer. Math. Soc. 360 (2008), 5247--5264. Addendum: Trans. Amer. Math. Soc. 365 (2013), 5033--5033.}

\end{thebibliography}
\end{document}